\definecolor{DarkRed}{rgb}{0.5,0.1,0.1}
\definecolor{DarkBlue}{rgb}{0.1,0.1,0.5}
\def\BState{\State\hskip-\ALG@thistlm}
\newtheorem{theorem}{Theorem}
\newtheorem{lemma}{Lemma}[section]
\newtheorem{proposition}[lemma]{Proposition}
\newtheorem{corollary}[lemma]{Corollary}
\newtheorem{claim}[lemma]{Claim}
\newtheorem{fact}[lemma]{Fact}
\newtheorem{definition}{Definition}
\newtheorem{problem}{Problem}
\newtheorem{remark}[lemma]{Remark}
\newtheorem{observation}[lemma]{Observation}
\newtheorem*{claim*}{Claim}
\newtheorem*{proposition*}{Proposition}
\newtheorem*{lemma*}{Lemma}
\newtheorem*{problem*}{Problem}
\newtheorem*{theorem*}{Theorem}
\newtheorem{mdresult}[theorem]{Theorem}
\newenvironment{result}{\begin{mdframed}[backgroundcolor=lightgray!40,topline=false,rightline=false,leftline=false,bottomline=false,innertopmargin=2pt]\begin{mdresult}}{\end{mdresult}\end{mdframed}}
\newtheorem{mdinvariant}{Invariant}
\newcommand{\R}{{\mathbb{R}}}
\newcommand{\cB}{\mathcal{B}}
\newcommand*\circled[1]{\tikz[baseline=(char.base)]{
            \node[shape=circle,draw,inner sep=1pt] (char) {#1};}}
\DeclareMathOperator{\replacement}{\circled{{\rm r}}} 
\DeclareMathOperator{\zigzag}{\circled{{\rm z}}} 
\renewcommand{\qed}{\nobreak \ifvmode \relax \else
      \ifdim\lastskip<1.5em \hskip-\lastskip
      \hskip1.5em plus0em minus0.5em \fi \nobreak
      \vrule height0.75em width0.5em depth0.25em\fi}
\newcommand{\PG}{\ensuremath{\mathbb{G}}}
\newcommand{\Qed}[1]{\ensuremath{\qed_{\textnormal{~#1}}}}
\newcommand{\be}{\ensuremath{\bm{e}}}
\newcommand{\PP}{\ensuremath{\mathcal{P}}}
\newcommand{\FGS}{\ensuremath{\FG_{\textnormal{\textsf{S}}}}}
\newcommand{\SRW}{\ensuremath{\textnormal{\textsf{SimpleRandomWalk}}}\xspace}
\newcommand{\distrw}{\ensuremath{\dist_{\textnormal{\textsf{RW}}}}}
\newcommand{\distribution}[1]{\ensuremath{{\mathbb{D}}(#1)}}
\newcommand{\unif}{\ensuremath{\mathcal{U}}}
\newcommand{\bW}{\ensuremath{\overline{W}}}
\newcommand{\LL}{\ensuremath{\mathcal{L}}}
\newcommand{\GC}{\ensuremath{\textnormal{\textsf{GrowComponents}}}\xspace}
\newcommand{\CC}{\ensuremath{\mathcal{C}}}
\newcommand{\range}[1]{\ensuremath{\llbracket{#1}\rrbracket}}
\newcommand{\LE}{\ensuremath{\textnormal{\textsf{LeaderElection}}}\xspace}
\newcommand{\tG}{\ensuremath{\widetilde{G}}}
\newcommand{\beps}{\ensuremath{\overline{\eps}}}
\newcommand{\HH}{\ensuremath{\mathcal{H}}}
\newcommand{\ones}{\ensuremath{\bm{1}}}
\newcommand{\RGC}{\ensuremath{\textnormal{\textsf{RegularGraphConstruction}}}\xspace}
\newcommand{\PR}{\ensuremath{\textnormal{\textsf{ReplacementProduct}}}\xspace}
\newcommand{\bH}{\ensuremath{\overline{H}}}
\newcommand{\Mark}{\ensuremath{\textnormal{\textsf{Mark}}}}
\newcommand{\ParallelMark}{\ensuremath{\textnormal{\textsf{DetectIndependence}}}}
\newcommand{\gamstar}{\ensuremath{\gamma^{*}}}
\newcommand{\ExpConn}{\ensuremath{\textnormal{\textsf{ExpanderConn}}_n}}
\newcommand{\toShrink}{-.20cm}
\newcommand{\toShrinkEnu}{-.2cm}
\newcommand{\Leq}[1]{\ensuremath{\underset{\textnormal{#1}}\leq}}
\newcommand{\Geq}[1]{\ensuremath{\underset{\textnormal{#1}}\geq}}
\newcommand{\Eq}[1]{\ensuremath{\underset{\textnormal{#1}}=}}
\newcommand{\algline}{
 % \nointerlineskip \vspace{\baselineskip}%
  \rule{0.5\linewidth}{.1pt}\hspace{\fill}%
  \par\nointerlineskip \vspace{.1pt}
}
\newcommand{\jstar}{\ensuremath{{j^{\star}}}}
\newcommand{\tvd}[2]{\ensuremath{\card{#1 - #2}_{tvd}}}
\newcommand{\Ot}{\ensuremath{\widetilde{O}}}
\newcommand{\eps}{\ensuremath{\varepsilon}}
\newcommand{\Paren}[1]{\Big(#1\Big)}
\newcommand{\bracket}[1]{\left[#1\right]}
\newcommand{\paren}[1]{\ensuremath{\left(#1\right)}\xspace}
\newcommand{\card}[1]{\left\vert{#1}\right\vert}
\newcommand{\IR}{\ensuremath{\mathbb{R}}}
\newcommand{\norm}[1]{\ensuremath{\|#1\|}}
\newcommand{\floor}[1]{{\left\lfloor{#1}\right\rfloor}}
\newcommand{\set}[1]{\ensuremath{\left\{ #1 \right\}}}
\newcommand{\polylog}{\mbox{\rm  polylog}}
\DeclareMathOperator*{\Exp}{\ensuremath{{\mathbb{E}}}}
\DeclareMathOperator*{\Prob}{\ensuremath{\textnormal{Pr}}}
\renewcommand{\Pr}{\Prob}
\newcommand{\Ex}{\Exp}
\newcommand{\etal}{{\it et al.\,}}
\newcommand{\FG}{\ensuremath{\mathcal{G}}}
\newcommand{\FV}{\ensuremath{\mathcal{V}}}
\newcommand{\event}[1]{\ensuremath{{\sf E}_{#1}}}
\newenvironment{tbox}{\begin{tcolorbox}[
		enlarge top by=5pt,
		enlarge bottom by=5pt,
		 breakable,
		 boxsep=0pt,
                  left=4pt,
                  right=4pt,
                  top=10pt,
                  arc=0pt,
                  boxrule=1pt,toprule=1pt,
                  colback=white
                  ]%%
	}
{\end{tcolorbox}}
\newcommand{\dist}{\ensuremath{\mathcal{D}}}
\renewcommand{\event}{\mathcal{E}}
\title{Massively Parallel Algorithms for Finding Well-Connected Components in Sparse Graphs}
\author{Sepehr Assadi\footnote{Supported in part by NSF grant CCF-1617851.} \\ University of Pennsylvania \\ { \texttt{sassadi@cis.upenn.edu}} \and Xiaorui Sun \\ Microsoft Research\\ {\texttt{xiaorsun@microsoft.com}} \and Omri Weinstein \\ Columbia University \\ {\texttt{omri@cs.columbia.edu}}} %\thanks{{\small{\tt sassadi@cis.upenn.com}}. Supported in part by NSF grants CCF-1552909 and CCF-1617851. } \\ University of Pennsylvania
\date{}
\begin{document}
\maketitle

\thispagestyle{empty}
\begin{abstract}

	A fundamental question that shrouds the emergence of \emph{massively parallel computing (MPC)} platforms is 
	how can the additional power of the MPC paradigm (more local storage and computational power) be leveraged 
	to achieve faster algorithms compared to classical parallel models such as PRAM? 
	
	\smallskip
	
	Previous research has identified the \emph{sparse graph connectivity} problem as a major obstacle to such improvement: 
	While classical logarithmic-round PRAM algorithms for finding connected components in any $n$-vertex graph have been known for more than three decades, 
	no $o(\log{n})$-round MPC algorithms are known for this task with truly sublinear in $n$ memory per machine. 
	This problem arises when processing massive yet sparse graphs with $O(n)$ edges, for which the interesting setting of parameters is $n^{1-\Omega(1)}$ memory per machine.  
	It is conjectured that achieving an $o(\log{n})$-round algorithm for connectivity on general sparse graphs with $n^{1-\Omega(1)}$ per-machine memory may not be possible, and   
	this conjecture also forms the basis for multiple \emph{conditional} hardness results on the round complexity of other problems in the MPC model.
	
	\smallskip
	
	In this paper, we take an \emph{opportunistic} approach towards the sparse graph connectivity problem, by designing an algorithm with improved performance guarantees 
	in terms of the connectivity structure of the input graph.  Formally, we design an MPC algorithm that finds all connected components with \emph{spectral gap} at least $\lambda$ in a graph  
	in $O(\log\log{n} + \log{(1/\lambda)})$ MPC rounds and  $n^{\Omega(1)}$ memory
	per machine. 
	While this algorithm still requires $\Omega(\log{n})$ rounds in the worst-case
	when components are ``weakly'' connected (i.e., $\lambda \approx 1/n )$, it achieves an \emph{exponential} round reduction on 
	sparse ``well-connected'' components (i.e., $\lambda \geq 1/\polylog{(n)}$) using only $n^{\Omega(1)}$ memory per machine and $\Ot(n)$ total memory, and still operates in $o(\log n)$ rounds even when $\lambda = 1/n^{o(1)}$.  
	To best of our knowledge, this is the first non-trivial (and indeed exponential) improvement in the round complexity over PRAM algorithms, for a natural class of sparse connectivity instances.  
	
\end{abstract}

\clearpage

\thispagestyle{empty}
\setcounter{tocdepth}{3}
\tableofcontents
\clearpage
\setcounter{page}{1}

\section{Introduction}\label{sec:lb}
Recent years have witnessed a resurgence of interest in the theory of parallel computation, motivated by the 
successful deployment of parallel computing platforms such as MapReduce, Hadoop and Spark \cite{DG04,ZCF10,Whi09}. 
The \emph{massively parallel computation (MPC)} model \cite{KarloffSV10,GoodrichSZ11,AndoniNOY14,BeameKS13} 
is a theoretical abstraction 
which aims to capture the design principles and main distinguishing aspects of these platform over the 
classical PRAM model: more local computation power (in principle unbounded), and larger local memory per processor. 
Consequently, it is typically possible to simulate a PRAM algorithm in the MPC model with no asymptotic blowup in the number of rounds \cite{KarloffSV10,GoodrichSZ11}. 
However, these powerful features anticipate a potential for qualitatively smarter and dramatically faster parallel algorithms. A fundamental question
on this front is then: 

\begin{quote}
\emph{%
	How can the additional power of the MPC model (more local storage and computational power) be leveraged 
	to achieve faster algorithms compared to classical parallel models such as PRAM algorithms? 
}
\end{quote}

The answer to this question turns to be highly dependent on the type of problems at hand and the setting of parameters. 
For graph problems---the focus of this paper---
the first improvement over PRAM algorithms was already achieved by Karloff ~\etal~\cite{KarloffSV10} 
who developed algorithms for graph connectivity and MST in $O(1)$ MPC rounds
on machines with local memory $n^{1+\Omega(1)}$; here, $n$ is the number of vertices in the graph. This is in contrast to the $\Omega(\log{n})$ round needed in the standard PRAM model for
these problems (see, e.g.,~\cite{Shiloach82logn, Reif85optimal, Gazit86optimal,CookDR86,ParberryY91,KNP92,HalperinZ94}). Since then, numerous algorithms have been designed for 
various graph problems that achieve $O(1)$ round-complexity with local memory $n^{1+\Omega(1)}$ on each machine (see, e.g.,~\cite{LattanziMSV11,KumarMVV13,AhnG15,AssadiK17} and references therein).

The next set of improvements reduced the memory per machine to $O(n)$ (possibly at the cost of a slight increase in the number of rounds). 
For example, an $O(1)$ round algorithm for MST and connectivity using only $O(n)$ memory per machine has
been proposed in~\cite{Jurdzinski018} building on previous work in~\cite{GhaffariP16,HegemanPPSS15,LotkerPPP03} (see also~\cite{AhnGM12Linear,BehnezhadDH18,Lenzen13} for related results). 
A series of very recent papers~\cite{Assadi17,AssadiBBMS17,CzumajLMMOS17,GhaffariGMR18,Konrad18}, initiated by a breakthrough result of~\cite{CzumajLMMOS17}, have also achieved an $O(\log\log{n})$-round algorithms for different graph problems
such as matching, vertex cover, and MIS in the MPC model, when the memory per machine is $O(n)$ or even $O(n/\polylog{(n)})$. 

Alas, this progress has came to a halt at the \emph{truly sublinear} in $n$ regime, i.e., $n^{1-\Omega(1)}$ space per-machine. 
This setting of parameter is particularly relevant to \emph{sparse} graphs with $O(n)$ edges, as in this scenario, 
$\Omega(n)$ memory per-machine allows to fit the entire input on a single machine, thereby trivializing the problem.   
We remark that sparse graphs arise in many practical large-scale networks, such as social networks, which are believed to have only $O(n)$ edges. 

The aforementioned line of research has identified a captivating algorithmic challenge for breaking the linear-memory barrier in the MPC 
model: \emph{connectivity on sparse undirected graphs}. 
While classic $O(\log n)$-round PRAM algorithms for connectivity in undirected 
graphs have been known for more than three decades (see~\cite{Shiloach82logn, Reif85optimal, Gazit86optimal,KNP92,HalperinZ94} 
and references therein), no faster MPC algorithm 
with truly sublinear $n^{1-\Omega(1)}$-memory per machine is known to date (see, e.g.~\cite{KarloffSV10,KiverisLMRV14,RastogiMCS13}). 

There are several substantial reasons for 
the lack of progress on this fascinating problem. On one hand, 
$\Omega(\log n)$ rounds are known to be necessary for a \emph{restricted} class of (routing-style) 
MPC algorithms \cite{BeameKS13}, and in fact it has been conjectured that this logarithmic barrier may be unavoidable for 
any MPC algorithm~\cite{AndoniNOY14,BeameKS13,RastogiMCS13,RoughgardenVW16}. This belief  
led to a series of recent results that used sparse connectivity as a hardness assumption for proving
\emph{conditional} lower bounds in the MPC model for other problems (see~\cite{AndoniNOY14,Yaroslavtsev2017massively} and references therein). 
On the other hand, it was observed by~\cite{RoughgardenVW16} that proving \emph{any $\omega(1)$} lower bound on the round complexity of this problem would imply $\mathbf{NC}^1 \subsetneq \mathbf{P}$, a major breakthrough in complexity theory  
which seems beyond the reach of current techniques.

In this paper we take an \emph{opportunistic} 
approach to the sparse connectivity problem, which exploits the connectivity structure of the underlying graph. In particular, 
we use \emph{spectral gap} as a quantitative measure of ``connectedness'' of a graph and design an algorithm for connectivity 
with improved performance guarantee depending on the spectral gap of the connected components of the underlying graph.
For example, when connected components of the graph have large spectral gap, say $\Omega(1)$ or even $\Omega(1/\polylog{(n)})$ (as in expanders), 
our algorithm only requires $O(\log\log{n})$ MPC rounds
while using $n^{\Omega(1)}$ memory per machine and $\Ot(n)$ total memory. To our knowledge, this constitutes the {first} non-trivial improvement on the standard $O(\log{n})$ round algorithms for connectivity in the MPC model when the memory
per machine is $n^{\Omega(1)}$ for a general family of input graphs. We elaborate more on our results in Section~\ref{sec:results}.

%%%%%%%%%%%%%%%%%%%%%%%%%%%%%%%%%%%%%%%%%%%%

\paragraph{Massively Parallel Computation Model.}
We adopt the most stringent model of modern parallel computation among~\cite{KarloffSV10,GoodrichSZ11,AndoniNOY14,BeameKS13}, the so-called \emph{Massively Parallel Computation (MPC)} model of~\cite{BeameKS13}. 
Let $N$ be the input size. It is assumed that the local memory on each machine and the number of machine should be at most $N^{1-\delta}$ for some constant $\delta >0$. 
Additionally, we are typically interested in algorithms with total memory $\Ot(N)$, i.e., proportional to the input size (within logarithmic factors)~\cite{BeameKS13,AndoniNOY14}, although
total memories as large as $O(N^{2-2\delta})$ have been also considered in more relaxed variants~\cite{KarloffSV10}. 
 The motivation behind these constraints is that the number of machines, and local memory of each machine should be much smaller than the input size to the problem since these frameworks are used to process large datasets. 

In this model, computation proceeds in rounds. During a round each machine runs a local algorithm on the data assigned to the machine. No communication between machines is allowed during a round. Between rounds, machines are allowed to communicate so long as each machine send or receive a communication no more than its memory. Any data output from a machine must be computed locally from the data residing on the machine and initially the input data is distributed across machines adversarially. 
The goal is to minimize the total number of rounds. % required, as this captures the typical main bottleneck of network communication. 

%%%%%%%%%%%%%%%%%%%%%%%%%%%%%%%%%%%%%%%%%%%%

\subsection{Our Results and Techniques}\label{sec:results}

The \emph{spectral gap} of a graph is defined to be the second eigenvalue of the normalized Laplacian associated with this graph (see Section~\ref{sec:spectral-gap} for more details). We use spectral gap as a measure of ``connectedness'' 
of a graph and design an \emph{opportunistic} algorithm\footnote{We borrow the term of ``opportunistic'' algorithm from Farach-Colton and Thorup~\cite{FT98} which defined it in the context of string matching.} for connectivity with improved performance guarantee depending on the spectral gap of the underlying graph.  
\begin{result}[\textbf{Main Result}]\label{res:main}
	There exists an MPC algorithm that with high probability identifies all connected components of any given sparse undirected $n$-vertex graph $G(V,E)$ with 
	$\Ot(n)$ edges and a lower bound of $\lambda \in (0,1)$ on the spectral gap of its connected components.
	
	\noindent
	For constant $\delta > 0$, the algorithm can be implemented with $O(\frac{1}{\lambda^2} \cdot n^{1-\delta} \cdot \polylog{(n)})$ machines each with 
	$O(n^{\delta} \cdot \polylog{(n)})$ memory, and in ${O}({\log\log{n} + \log{\paren{{1}/{\lambda}}}})$ MPC rounds. 
\end{result}

Theorem~\ref{res:main} can be extended to the case when the algorithm is \emph{oblivious} to the value of $\lambda$ and still manages to achieve an improved performance depending on this parameter (see Section~\ref{sec:everything}).  
%Theorem~\ref{res:main} 
Our result is most interesting in the case when spectral gap of (each connected component) of the graph is lower bounded by a constant or even $1/\polylog{(n)}$, i.e., 
for graphs with ``well-connected''  components. Examples of such graphs include random graphs\footnote{This in particular means that in a probabilistic sense, this setting of parameter applies to almost all graphs.} and
expanders (see also~\cite{MalliarosM11,GkantsidisMS03} for real-life examples in social networks). In this case, we obtain an algorithm with $\Ot(n)$ total memory and $n^{\Omega(1)}$ memory per machine which 
can identify all connected components in only $O(\log\log{n})$ rounds. To our knowledge, this constitutes the {first} non-trivial improvement on the standard $O(\log{n})$ 
round algorithms for connectivity in the MPC model when the memory
per machine is $n^{\Omega(1)}$ for a general family of input graphs. 

Nevertheless, the algorithm in Theorem~\ref{res:main} still manages to achieve a non-trivial improvement even when the spectral gap is as small as $1/n^{o(1)}$. Even in this case, 
the algorithm requires $o(\log{n})$ MPC rounds (and total memory which is larger than the input size by only an $n^{o(1)}$ factor). This means that the algorithm benefits from the extra power of the MPC model (much more local computation power) compared to
the classical parallel algorithms in the PRAM model which require $\Omega(\log{n})$ rounds to solve connectivity (even on sparse expanders; see below). 

We also prove an \emph{unconditional} $\Omega(\log_s n)$-round lower bound for the promise problem of connectivity on \emph{sparse expanders} on machines with memory $s$. This implies that the ``full power" of the MPC model is indeed required to achieve 
our speedup, as with $s=\polylog{(n)}$ memory,  $\Omega(\log_s n)=\tilde{\Omega}(\log{n})$ rounds are needed \emph{even on sparse expanders} (this result, as well as a lower bound for PRAM algorithms and further 
discussion are presented in Section \ref{sec:lb}). We remark that by a result of~\cite{RoughgardenVW16}, our lower bound is the best 
possible unconditional lower bound short of proving that $\mathbf{NC}^1 \subsetneq \mathbf{P}$ which would be a major breakthrough in complexity theory. 

Finally, we note that a simple application of the toolkit we develop in proving our main result in Theorem~\ref{res:main} also implies that one can solve the
connectivity problem in only $O(\log\log{n})$ MPC rounds on \emph{any} graph (with no assumption on spectral gap, etc.) when the memory per-machine is \emph{mildly sublinear} in $n$, i.e.,
is $O(n/\polylog{(n)})$. Formally, 

\begin{theorem}\label{thm:sublinear}
	There exists an MPC algorithm that given any arbitrary $n$-vertex graph $G(V,E)$ with high probability identifies all connected components of $G$ in $O(\log\log{n} + \log{\paren{\frac{n}{s}})}$ MPC rounds
	 on machines of memory $s = n^{\Omega(1)}$. 
\end{theorem}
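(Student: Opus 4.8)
The plan is to bootstrap the contraction machinery behind Theorem~\ref{res:main} to arbitrary graphs, using the per-machine budget $s$ only to decide when the instance has shrunk enough to be finished by brute force. First I would reduce to bounded degree in $O(1)$ MPC rounds --- replace each vertex of degree $d$ by a constant-degree expander on $\Theta(d)$ copies and route each original edge between a private pair of copies, one on each side; this preserves all connected components and yields an $O(1)$-regular graph with $\Ot(n)$ vertices and edges when $G$ is sparse. (If $G$ is dense, either it already fits on a single machine and we are done, or the same reduction is applied with the number of edges playing the role of $n$; we also delete parallel edges and self-loops whenever we contract, so that a contracted multigraph on $V'$ vertices carries at most $\binom{V'}{2}$ edges.) I would then iterate a neighbourhood-growing contraction step: each surviving vertex $v$ uses the \GC\ subroutine to grow a connected ball $S_v$ around itself --- repeatedly setting $S_v \leftarrow \bigcup_{u \in S_v} S_u$, and stopping as soon as $S_v$ either covers its whole component or reaches $\Theta(s)$ vertices --- then hooks onto the minimum-identifier vertex of $S_v$; the resulting stars are contracted and cleaned up in $O(1)$ further rounds using the standard sorting and relabelling primitives available for $s=n^{\Omega(1)}$. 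A component that collapses to a single vertex is reported. The loop halts once the current contracted graph, together with an accompanying sparse connectivity certificate, fits on one machine, at which point its connected components are computed there directly and the labels are pushed back down the contraction hierarchy, one level per MPC round.

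The round count comes from two sources. On the one hand, $O(\log(n/s))$ contraction steps suffice to shrink the graph to a point where the residual (plus its certificate) fits on a single machine, since each step either resolves a component outright or shrinks its vertex set by at least a constant factor --- this is where the $\log(n/s)$ term comes from, and where care is needed about exactly what ``fits on one machine'' demands of the vertex and edge counts. On the other hand, $O(\log\log n)$ rounds are the intrinsic cost of neighbourhood squaring on the best-connected components, whose effective diameter is $\Theta(\log n)$ and where a single step already contracts a component by an $s^{\Omega(1)}$ factor. Unwinding the hierarchy adds another $O(\log\log n + \log(n/s))$ rounds, and the whole procedure runs with $\poly(n)$ machines and $\Ot(n)$ total memory for sparse inputs.

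I expect the main obstacle to be the amortized analysis of the contraction step for components of \emph{intermediate} connectivity --- those whose spectral gap is neither bounded below by a constant nor as small as $1/n$. One must show that such components also lose a constant fraction of their vertices per step, so that the two regimes above genuinely compose into $O(\log\log n + \log(n/s))$ rounds rather than into something larger; this is exactly where the random-walk subroutines and mixing-time estimates developed for Theorem~\ref{res:main} are invoked, now deployed with no a-priori lower bound on the spectral gap. A secondary technical point is the bookkeeping around the memory limit: one must ensure that a ball is never grown past $\Theta(s)$ vertices on a single machine before truncation takes effect, and that the sparse certificate used to make the residual single-machine-sized can itself be maintained under contractions without disturbing the round count by more than a constant factor.
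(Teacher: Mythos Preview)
Your proposal does not match the paper's proof and, more importantly, does not achieve the stated round bound. The paper's argument for Theorem~\ref{thm:sublinear} is \emph{not} an iteration of the \GC\ machinery. It consists of a \emph{single} invocation of \SRW$(G,t)$ with $t = \Theta\!\big((n/s)^3 \polylog n\big)$, followed by a \emph{single} leader election and a \emph{single} call to the linear-sketching connectivity algorithm of Ahn--Guha--McGregor (Proposition~\ref{prop:AGM}). The two ingredients you are missing are: (i) the Barnes--Feige bound, which says that on \emph{any} graph a random walk of length $O(d^3)$ visits $d$ distinct vertices w.h.p., so connecting each vertex to the vertices seen on its walk yields a graph of minimum degree $d \approx n\,\polylog(n)/s$ in $O(\log t) = O(\log\log n + \log(n/s))$ rounds; and (ii) the AGM sketch, which lets a contracted graph on $O(s/\log^3 n)$ vertices be solved on one machine using only $O(\log^3 n)$ bits per vertex, regardless of how many edges it has. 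Neither the spectral-gap analysis nor the \GC\ algorithm is used.

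Your approach has a genuine gap in the round analysis. Growing a ball $S_v$ by repeated squaring $S_v \leftarrow \bigcup_{u\in S_v} S_u$ until $|S_v| = \Theta(s)$ can take $\Omega(\log s) = \Omega(\log n)$ rounds on a bad graph (a path, say), so a single ``contraction step'' already exceeds the entire budget. Even granting the constant-factor shrinkage you claim, $O(\log(n/s))$ such steps would cost $O(\log(n/s)\cdot \log s)$ rounds, which is multiplicative rather than additive. Separately, your ``sparse connectivity certificate'' for fitting the residual on one machine is circular: computing a spanning forest is the problem you are trying to solve, and simple-graph edge counts on $V'$ vertices can be as large as $\binom{V'}{2}$, which does not fit when $V' = \Theta(s/\polylog n)$. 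This is precisely why the paper invokes the AGM sketch rather than shipping the contracted graph itself.
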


Theorem~\ref{thm:sublinear} is reminiscent of the recent set of results in~\cite{Assadi17,AssadiBBMS17,CzumajLMMOS17,GhaffariGMR18,Konrad18} on achieving similar guarantees for other graph problems
such as maximum matching and minimum vertex cover in the mildly sublinear in $n$ per-machine memory regime. This result emphasizes the \emph{truly sublinear} in $n$ regime, i.e., $n^{1-\Omega(1)}$ per-machine memory, as the 
``real'' barrier to obtaining efficient algorithms for sparse connectivity with improved performance compared to PRAM algorithms. 

\paragraph{Techniques.}  
The first main technical ingredient of this work  is a distributed ``data structure" for performing and processing short \emph{independent random walks} 
(proportional to the \emph{mixing time} of each component) from  
\emph{all} vertices of the graph simultaneously, whose construction takes \emph{logarithmic} number of rounds in length of the walk. 
While implementing random walks in distributed and parallel settings is a well-studied problem 
(see, e.g.,~\cite{KNP92,HalperinZ94,SarmaGP11,SarmaNPT13} and references therein), the guarantee of our algorithm in achieving \emph{independent} random 
walks across all vertices in a small number of rounds and total memory, departs from previous work (independence is crucial in the context of our algorithm). 
Achieving this stronger guarantee requires different tools, in particular, a method for ``regularizing'' our graph using a parallel implementation of the  
\emph{replacement product} operation (see, e.g.~\cite{ReingoldVW00}) that we design in this paper.  

Our second main technical ingredient is a novel \emph{leader-election} based algorithm for finding a spanning tree of a 
\emph{random graph}. The key feature of 
this algorithm that distinguishes it from previous MPC algorithms for sparse connectivity (see, e.g.,~\cite{KiverisLMRV14,RastogiMCS13,KarloffSV10}) is that 
on random graphs, it provably requires only $O(\log\log{n})$ MPC rounds as opposed to $\Omega(\log{n})$ (we point out that~\cite{KiverisLMRV14} also analyzed their algorithms on random graphs (see Lemma~9), but even on random graphs their algorithm
requires $\Theta(\log^{2}{n})$ rounds). Our algorithm achieves this \emph{exponential} speedup by contracting \emph{quadratically} 
larger components to a single vertex in each step, while ``preserving the randomness'' in the resulting contracted graph
to allow for recursion. 
We elaborate on our techniques in the streamlined overview of our algorithm in Section~\ref{sec:technical}.

\begin{remark}\label{rem:loglogn}  
	
\emph{
	Our techniques and those in the very recent series of $O(\log\log{n})$-round MPC algorithms for various graphs problems~\cite{Assadi17,AssadiBBMS17,CzumajLMMOS17,GhaffariGMR18,Konrad18}
	are \textbf{\emph{entirely different}} and in some sense, ``dual'' to each other: 
	The latter results are all at their core based on ``sparsifying'' the input graph in successive rounds by decreasing its maximum degree by a quadratic factor. Our leader-election based algorithm on the other hand ``densifies'' the graph over rounds by increasing the degree of 
	all vertices by a quadratic factor (and decreasing the number of vertices in the process as well). We further emphasize that all these previous results are stuck in a crucial way at memory of $n^{1-o(1)}$ and with 
	$n^{1-\Omega(1)}$ memory---the main focus of our paper and the setting of interest for sparse connectivity---their performance degrade to $\Omega(\log{n})$ rounds. }
\end{remark}

\subsection{Further Related Work}\label{sec:related}

Finding connected components in undirected graphs has been studied extensively in the 
MPC model~\cite{KarloffSV10,LattanziFMS11,Afrati2011map,AhnGM12Linear,KiverisLMRV14, RastogiMCS13,Cohen2009graph, Kang2009pegasus}, 
and in the closely related distributed model of Congested Clique~\cite{Jurdzinski018,GhaffariP16,HegemanPPSS15,LotkerPPP03} (see, e.g.,~\cite{BehnezhadDH18} for the formal connection 
between the two models). In particular, for the \emph{sparse} connectivity problem,~\cite{KarloffSV10,KiverisLMRV14, RastogiMCS13} devised algorithms that achieve $O(\log{n})$ rounds using $n^{\Omega(1)}$ memory per machine and $O(n)$ total memory. 
In the classical PRAM model, $O(\log{n})$-round algorithms have been known for connectivity for over three decades 
now~\cite{Shiloach82logn, Reif85optimal, Gazit86optimal, KNP92,HalperinZ94}. 

In the truly sublinear regime of $n^{1-\Omega(1)}$ memory per-machine, $o(\log n)$-round MPC algorithm are only known for special cases. 
In \cite{AndoniNOY14}, Andoni et al. developed an approximate algorithms for  approximating minimum spanning tree and Earth Mover distance for geometric graphs (complete weighted 
graphs for points in geometric space). In \cite{Fischer18breaking}, Fischer and Uitto presented an $O((\log \log n)^2)$ rounds MPC algorithm for the maximal independent set problem (MIS) 
on trees. 

We point out that the MPC model is a special case of the Bulk-Synchronous-Parallel (BSP) model \cite{ValiantBSP90}, but has the  advantage of having fewer parameters.  
This makes algorithm design more ``coarse-grained" and streamlines the search for efficient algorithms, as evident by the omnipresence of this model in practice.

We refer the interested reader to~\cite{CzumajLMMOS17} and~\cite{LattanziFMS11} and references therein for further details on MPC algorithms for other graph problems. 

\subsection{Recent Development}\label{sec:developement}

Independently and concurrently to our work, Andoni~\etal~\cite{AndoniSSWZ18} have also studied MPC algorithms for the sparse connectivity problem with the goal of achieving 
improved performance on graphs with ``better connectivity'' structure by parametrizing based on the diameter of each connected component (as opposed to spectral gap in our paper). 
They develop an algorithm with $n^{\Omega(1)}$ memory per machine and $O(\log{D} \cdot \log\log_{N/n}{(n)})$ rounds, where $D$ is the largest diameter of any
connected component and $N = \Omega(m)$ is the total memory. Our results and that of~\cite{AndoniSSWZ18} are \emph{incomprable}: while in any graph $D = O(\log{n}/\lambda)$, the dependence 
on the number of rounds in~\cite{AndoniSSWZ18} is $O(\log{D} \log\log{n})$ for the main setting of interest in sparse connectivity when the total memory is within logarithmic factors of input size (the typical requirement of the MPC 
model\footnote{Minimizing total memory is \emph{critical} in the sparse connectivity problem in the MPC model. After all, the straightforward algorithm that computes the transitive closure of the graph (e.g. by matrix multiplication; see~\cite{KarloffSV10}) achieves 
$O(\log{D})$ rounds, subsuming both our results and~\cite{AndoniSSWZ18}; however, this algorithm requires \emph{at least} $\Omega(n^2)$ total memory and hence does not adhere to restrictions of MPC model (or any of its more relaxed variants such 
as~\cite{KarloffSV10}).}~\cite{AndoniNOY14,BeameKS13}). As such, our algorithm achieves {quadratically} smaller round complexity when the spectral gap is large, i.e., is $\Omega(1)$ or even $\Omega(1/\polylog{(n)})$ (as in random graphs and graphs with 
moderate expansion), while~\cite{AndoniSSWZ18} achieve better performance on graphs with small spectral gap but not too-large diameter (an example is two disjoint expanders connected by an edge). 

Both results at their core employ a leader-election algorithm for connectivity but the similarity between the techniques ends here. Our algorithm uses a random walk data structure (new to our paper) to transform each connected component of the input graph
to a \emph{random} graph, and \emph{after} that applies a novel leader-election algorithm to find components in $O(\log\log{n})$ rounds. On the other hand,~\cite{AndoniSSWZ18} design a leader-election algorithm
that runs in $O(\log\log{n})$ phases that are \emph{interleaved} with an $O(\log{D})$-round procedure that increases the degree of vertices in the remaining graph (by partially computing the transitive closure of the graph) to prepare for the next phase. We note 
that the combination of our random walk primitive and our leader-election algorithm for random graphs is the main reason we achieve the improved round complexity compared 
to~\cite{AndoniSSWZ18}, albeit by depending on spectral gap instead of diameter (the result of~\cite{AndoniSSWZ18} implies $\Omega((\log\log{n})^2)$ rounds even on our final random graph instances as diameter of these graphs is $\Omega(\log{n})$). 
We point out that our Theorem~\ref{thm:sublinear} is orthogonal to the results in~\cite{AndoniSSWZ18}.

\section{Preliminaries}\label{sec:prelim}

\paragraph{Notation.} For a graph $G(V,E)$, we define $V(G) = V$ and $E(G) = E$, and let $n = \card{V(G)}$ and $m = \card{E(G)}$. 
For any vertex $v \in V$, we use $d_v$ to denote the degree of $v$ in $G$. Throughout the paper, we assume without loss of generality that $d_v \geq 1$ for all vertices $v \in G$ (i.e., $G$ does not have isolated vertices).

For a graph $G(V,E)$, we say that a subset $C \subseteq V(G)$ is a \emph{component} of $G$ if the induced subgraph of $G$ on $C$ is connected. 
We say that a partition $\CC = \set{C_1,\ldots,C_k}$ of $V(G)$ is a component-partition iff every $C_i$ is a component of $G$. 

We denote the total variation distance between two distributions $\mu$ and $\nu$ on the same 
support as $\tvd{\mu}{\nu}$. We use the following basic property of total variation distance. 

\begin{fact}\label{fact:tvd-small}
	Suppose $\mu$ and $\nu$ are two distributions for an event $\event$, then, $\Pr_{\mu}(\event) \leq \Pr_{\nu}(\event) + \tvd{\mu}{\nu}$. 
\end{fact}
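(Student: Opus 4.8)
The plan is to derive this directly from the definition of total variation distance, so this is really just an unwinding exercise. Recall that for two distributions $\mu,\nu$ supported on a common (say countable) set $\Omega$, one has
\[
\tvd{\mu}{\nu} \;=\; \half \sum_{x \in \Omega} \card{\mu(x) - \nu(x)} \;=\; \sup_{A \subseteq \Omega} \paren{\mu(A) - \nu(A)},
\]
where the second equality holds because the supremum over $A$ is attained at $A = \set{x \in \Omega : \mu(x) > \nu(x)}$, for which $\mu(A) - \nu(A)$ equals exactly half of $\sum_{x} \card{\mu(x) - \nu(x)}$ (the positive part of $\mu - \nu$ carries the same total mass as its negative part, since both $\mu$ and $\nu$ are probability distributions).

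First I would identify the event $\event$ with the subset $A_\event \subseteq \Omega$ of outcomes on which $\event$ holds, so that $\Pr_\mu(\event) = \mu(A_\event)$ and $\Pr_\nu(\event) = \nu(A_\event)$. Then applying the characterization above with $A = A_\event$ gives immediately
\[
\Pr_\mu(\event) - \Pr_\nu(\event) \;=\; \mu(A_\event) - \nu(A_\event) \;\leq\; \sup_{A \subseteq \Omega} \paren{\mu(A) - \nu(A)} \;=\; \tvd{\mu}{\nu},
\]
which rearranges to the claimed inequality $\Pr_\mu(\event) \leq \Pr_\nu(\event) + \tvd{\mu}{\nu}$. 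Running the same argument with the roles of $\mu$ and $\nu$ interchanged, together with the symmetry $\tvd{\mu}{\nu} = \tvd{\nu}{\mu}$, would additionally yield the two-sided bound $\card{\Pr_\mu(\event) - \Pr_\nu(\event)} \leq \tvd{\mu}{\nu}$, though only the stated one-sided form is needed downstream.

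There is essentially no obstacle here: this is a textbook property of total variation distance and the proof is two lines. The only point requiring any care is to match whichever normalization convention the paper adopts for $\tvd{\cdot}{\cdot}$ — but the ``half the sum of absolute differences'' definition and the ``maximum discrepancy over events'' definition coincide, as noted above, so the argument goes through unchanged either way.
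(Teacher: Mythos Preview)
Your proof is correct and entirely standard. The paper does not actually supply a proof of this fact; it is stated as a basic property of total variation distance and left without argument, so your write-up already goes beyond what the paper provides.
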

\noindent
A summary of concentration bounds used in this paper is presented in Appendix~\ref{app:concentration}.

\paragraph{Concise range notation.} For simplicity of exposition, we follow~\cite{CzumajLMMOS17} in using the following concise notation for representing ranges: for a value $x$ and parameter $\delta \geq 0$, we use
$\range{x \pm \delta}$ to denote the range $[x-\delta,x+\delta]$. We extend this notation to numerical expressions as follows: let $E$ be a numerical expression that apart from standard
operations also contains one or more applications of the binary operator $\pm$. Let $E_{+}$ be the expression obtained from $E$ by choosing assignment of $-$ and $+$ to replace different choices of the operator $\pm$ in order to maximize $E$; similarly, 
define $E_{-}$ for minimizing $E$. We now define $\range{E} := [E_-,E_+]$. 
For example, $\range{(3\pm 2)^2} = [1,25]$ and $\range{(2\pm 1)/(4 \pm 2)} = [1/6,3/2]$. 

\paragraph{Almost regular graphs.} Let $\Delta \geq 1$ be an integer and $\eps > 0$ be any parameter. We say that a graph $G(V,E)$ is $\range{\paren{1 \pm \eps} \Delta}$-almost-regular iff degree of any vertex in $V(G)$ belongs to 
$\range{\paren{1\pm \eps}\Delta}$. We refer to $\eps$ as the \emph{discrepancy factor} of the an almost-regular graph. 

\paragraph{Sort and search in the MPC model.} In MPC implementation of our algorithms, we crucially use
the by-now standard primitive of parallel sort and search introduced originally by~\cite{GoodrichSZ11}. 
On machines with memory $s$, the sort operation of~\cite{GoodrichSZ11} allows us to sort a set of $n$ key-value pairs in $O(\log_s{n})$ MPC rounds. We can also do a parallel search:
given a set $A$ of key-value pairs and a set of queries each containing a key of an element in $A$, we can annotate each query with the corresponding key-value pair from $A$, again in $O(\log_s{n})$ MPC rounds.

\subsection{Spectral Gap}\label{sec:spectral-gap}

Let $G(V,E)$ be an undirected graph on $n$ vertices. We use $A_{n \times n}$ to denote the adjacency matrix of $G$ and $D_{n \times n}$ to denote the diagonal matrix of degrees of vertices in $G$. 
We further denote the \emph{normalized Laplacian} of $G$ by $\LL := I - (D^{-1/2}\cdot A \cdot D^{-1/2})$. $\LL$ is a symmetric matrix with $n$ real eigenvalues $0=\lambda_1 \leq \lambda_2 \leq \ldots \leq \lambda_n \leq 1$. Throughout the paper, we use 
$\lambda_i(G)$ to refer to the $i$-th smallest eigenvalue $\lambda_i$ of normalized Laplacian $\LL$ of $G$. 

The quantity $\lambda_2(G)$ is referred to as the \emph{spectral gap} of $G$, and is a quantitative measure of how ``well-connected" the graph $G$ is. 
For example, it is well-known that $\lambda_2(G) > 0$ iff $G$ is connected (see, e.g.,~\cite{Chung97} for a proof), and the larger $\lambda_2(G)$ is, the 
graph $G$ is more ``well-connected'' under various notions of connectedness. For instance, cliques and expanders, two of the most well-connected graphs, 
have large spectral gap; see Cheeger's inequality~\cite{Cheeger70} for another such connection. In this paper, we also use $\lambda_2(G)$ as a measure 
of connectivity of $G$ and design algorithms with improved performance guarantee for graphs with larger spectral gap.

\subsection{Random Walk on Graphs}\label{sec:random-walk}

Let $G(V,E)$ be an undirected graph. Consider the random process that starts from some vertex $v \in V$, and repeatedly moves to a neighbor of the current vertex chosen uniformly at random. 
We refer to this process as a \emph{random walk}. In particular, a random walk of length $t$ corresponds to $t$ step of the above process. We refer to the distribution of the vertex reached by a random walk of length $t$ from
a vertex of $v$, as the distribution of this random walk and denote it by $\distrw(v,t)$. 

Define the \emph{random walk matrix} $W := D^{-1} \cdot A$. For any vector $v_i \in V$, let $\bm{e}_v$ denote the $n$-dimensional vector which is zero in all coordinates except for the $i$-th coordinate which is one. It is 
easy to see that for any integer $t \geq 1$, the vector $W^{t} \cdot \be_v$ corresponds to the distribution of a random walk of length $t$ starting from $v$, i.e., $\distrw(v,t)$. 
We use $\pi = \pi(G)$ to denote the \emph{stationary distribution} of a random walk on a graph $G$, where for any $v \in V$, $\pi_v := \frac{d_v}{2m}$. It is immediate to verify that $W \cdot \pi = \pi$. 

As random walks on arbitrary connected graphs do not necessarily converge to their stationary distribution (i.e., when the underlying graph is bipartite), we further consider \emph{lazy random walks}. In a lazy random walk of length $t$, 
starting from some vertex $v \in V$, for $t$ steps we either stay at the current vertex with probability half, or move to a neighbor of the current vertex chosen uniformly at random. We define the \emph{lazy random walk matrix} as 
$\bW := (I + W)/2$ which is the transition matrix of a lazy random walk. It is easy to verify that $\pi$ is also the stationary distribution for a lazy random walk. 

\paragraph{Mixing Time.} For any $\gamma > 0$, we define the $\gamma$-mixing time of $G$, denoted by $T_{\gamma}(G)$ to be the smallest integer $t \geq 1$, such that the distribution of a 
lazy random walk of length $t$ on $G$ starting from any arbitrary vertex become $\gamma$-close to the stationary distribution in total variation distance. Formally, 
\begin{align*}
	T_\gamma(G) := \min_{t \geq 1} \max_{v \in V(G)} \set{\tvd{\bW^t \cdot \be_v}{\pi} \leq \gamma}. 
\end{align*}

%%For concreteness, we refer to $T_{1/3}(G)$ as the mixing time of $G$. 

\noindent
The following well-known proposition relates the mixing time of a graph $G$ to its spectral gap (see,~e.g.~\cite{Chung97} chapter 1.5 for a proof). 

\begin{proposition}\label{prop:mixing-spectral}
Let $G(V,E)$ be any connected undirected graph. For any $\gamma < 1$, 
\begin{align*}
T_\gamma(G) = O\Paren{\frac{\log{(n/\gamma)}}{\lambda_2(G)}}.
\end{align*}
\end{proposition}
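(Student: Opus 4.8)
The plan is to analyze the convergence of the lazy random walk via spectral decomposition. First I would record the relationship between the eigenvalues of the lazy random walk matrix $\bW = (I+W)/2$ and those of the normalized Laplacian $\LL = I - D^{-1/2} A D^{-1/2}$. Since $W = D^{-1} A$ is similar to the symmetric matrix $S := D^{-1/2} A D^{-1/2} = I - \LL$ (via conjugation by $D^{1/2}$), the eigenvalues of $W$ are exactly $1 - \lambda_i(G)$ for $i = 1, \dots, n$, which lie in $[-1, 1]$. Consequently $\bW$ is similar to $(I + S)/2$, whose eigenvalues are $\beta_i := 1 - \lambda_i(G)/2 \in [1/2, 1]$, since $\lambda_i(G) \in [0,1]$. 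The key point is that laziness shifts the spectrum into $[1/2,1]$, so there is no eigenvalue near $-1$ and the slowest mode of convergence is governed by the second eigenvalue $\beta_2 = 1 - \lambda_2(G)/2$.

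Next I would set up the convergence bound. Work in the inner product space weighted by $\pi$ (equivalently, pass to the symmetric matrix $(I+S)/2$ and the vectors $D^{-1/2} \be_v$). Writing the initial distribution $\be_v$ in the eigenbasis of $\bW$ and noting that the component along the stationary eigenvector (eigenvalue $1$) is exactly $\pi$, after $t$ steps the deviation $\bW^t \be_v - \pi$ has all its mass on eigenvalues of magnitude at most $\beta_2 = 1 - \lambda_2(G)/2$. A standard computation then gives, in the $\pi$-weighted $\ell_2$ norm (or more directly in terms of $\chi^2$-divergence),
\begin{align*}
\left\| \frac{\bW^t \be_v - \pi}{\pi} \right\|_{2,\pi}^2 \le \beta_2^{2t} \cdot \frac{1 - \pi_v}{\pi_v} \le \frac{1}{\pi_v} \cdot \left(1 - \frac{\lambda_2(G)}{2}\right)^{2t}.
\end{align*}
Converting the $\ell_2$ bound to total variation via Cauchy--Schwarz, $\tvd{\bW^t \be_v}{\pi} \le \tfrac{1}{2}\sqrt{\sum_u \pi_u (\bW^t\be_v(u)/\pi_u - 1)^2}$, yields $\tvd{\bW^t \be_v}{\pi} \le \tfrac{1}{2}\,\pi_{\min}^{-1/2}\,(1 - \lambda_2(G)/2)^{t}$, where $\pi_{\min} = \min_v \pi_v = \min_v d_v/(2m) \ge 1/(2m) \ge 1/n^2$ (using $d_v \ge 1$ and $m \le \binom{n}{2}$).

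Finally I would solve for the mixing time. Using $1 - x \le e^{-x}$ with $x = \lambda_2(G)/2$, the bound becomes $\tvd{\bW^t \be_v}{\pi} \le \tfrac12 \pi_{\min}^{-1/2} e^{-t\lambda_2(G)/2}$, and this is at most $\gamma$ as soon as $t \ge \frac{2}{\lambda_2(G)}\left(\ln(1/\gamma) + \tfrac12 \ln(1/\pi_{\min})\right)$. Since $\ln(1/\pi_{\min}) = O(\log n)$, this gives $T_\gamma(G) = O\!\left(\frac{\log(n/\gamma)}{\lambda_2(G)}\right)$, as claimed; taking the ceiling of the right-hand side supplies the required integer $t$. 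I do not expect a genuine obstacle here — the argument is classical — but the one point requiring care is the bookkeeping of which norm is being used and the passage between the non-symmetric walk matrix $\bW$ and its symmetrization, since the clean spectral decomposition (orthonormal eigenbasis) is only available for the symmetric matrix $(I+S)/2$; one must carry the conjugation by $D^{\pm 1/2}$ consistently so that the stationary component is correctly identified as $\pi$ and the factor $\pi_{\min}^{-1/2}$ comes out right.
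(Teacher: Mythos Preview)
The paper does not give its own proof of this proposition; it simply defers to Chung's monograph (Chapter~1.5). Your argument is the standard spectral proof and is correct in outline: symmetrize the walk, observe that lazy eigenvalues are nonnegative so the second eigenvalue $1-\lambda_2(G)/2$ controls convergence, bound the $\chi^2$-distance by $\pi_{\min}^{-1}(1-\lambda_2(G)/2)^{2t}$, pass to total variation by Cauchy--Schwarz, and solve for $t$.

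One small correction: you assert $\lambda_i(G)\in[0,1]$ (echoing the paper's own statement in Section~\ref{sec:spectral-gap}), but the eigenvalues of the normalized Laplacian actually lie in $[0,2]$, with $\lambda_n=2$ exactly when a component is bipartite. This does not break your proof, since after the lazy shift the eigenvalues $\beta_i=1-\lambda_i(G)/2$ still lie in $[0,1]$, and for $i\ge 2$ we have $0\le\beta_i\le\beta_2=1-\lambda_2(G)/2$, which is all you need. So the only adjustment is to replace ``$\beta_i\in[1/2,1]$'' with ``$\beta_i\in[0,1]$''; the rest goes through unchanged.
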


\subsection{Random Graphs}\label{sec:random-graph}

For any integers $n,d \geq 1$, we use $\PG(n,d)$ to denote the distribution on random undirected graphs $G$ on $n$ vertices chosen by picking for each vertex $v \in V(G)$, $\floor{d/2}$ 
outgoing edges $(u,v)$ for $v$ chosen uniformly at random (with replacement) from $V(G)$ and then removing the direction of edges. Note that this notion of a random graph is related but 
not identical to the more familiar family of Erdos-Renyi random graphs. 

Throughout the paper we use several properties of these random graphs that we present in this section. The proofs of the following propositions are standard and follow from similar arguments in Erdos-Renyi random graphs (we refer the interested
reader to~\cite{Bollobas98} for more details). For completeness, we provide simple proofs of these propositions in Appendix~\ref{app:random-graph}.

\begin{proposition}[\textbf{Almost-regularity}]\label{prop:PG-regular}
	Suppose $d \geq 4\log{n}/\eps^2$ for some parameter $\eps \in (0,1)$. A graph $G \sim \PG(n,d)$ is an $\range{\paren{1 \pm \eps}d}$-almost-regular with high probability. 
\end{proposition}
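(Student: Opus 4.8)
The plan is to establish degree concentration for one fixed vertex via a Chernoff bound and then take a union bound over all $n$ vertices. Fix $v \in V(G)$. By construction of $\PG(n,d)$, the edges incident to $v$ split into the $\floor{d/2}$ out-edges that $v$ itself samples (a deterministic quantity) and the in-edges sampled by the other vertices that happen to land at $v$. Let $Z_v$ count the sampled directed edges --- over all $n$ vertices, so that the few self-loops at $v$ are included as well --- whose head is $v$; then $Z_v$ is a sum of $n\floor{d/2}$ independent indicator variables, each equal to $1$ with probability $1/n$, so $\Exp[Z_v] = \floor{d/2}$ and $d_v = \floor{d/2} + Z_v$ with $\Exp[d_v] = 2\floor{d/2} \in \set{d-1,\, d}$. (Whether self-loops are counted once or twice only shifts $\Exp[Z_v]$ by an $O(1/n)$ relative factor and is immaterial.) The mutual independence of all the edge samples, which is built into the definition of $\PG(n,d)$, is exactly what licenses the Chernoff bound below.

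First I would apply the multiplicative Chernoff bound (collected in Appendix~\ref{app:concentration}) to $Z_v$ around its mean $\mu := \floor{d/2}$. Using $d/2 - 1 \leq \floor{d/2} \leq d/2$, a short calculation shows that if $Z_v$ stays within an additive $\eps d - O(1)$ of $\mu$, then $d_v = \floor{d/2} + Z_v$ lands in $\range{(1\pm\eps)d}$; the $O(1)$ slack absorbs the floor and the discrepancy between $\Exp[d_v]$ and $d$, and is harmless since $\eps d \geq 4\log{n}/\eps \gg 1$. Because $d \in \set{2\mu,\, 2\mu+1}$, an additive deviation of $\eps d$ in $Z_v$ amounts to a \emph{relative} deviation of essentially $2\eps$; and the hypothesis $d \geq 4\log{n}/\eps^2$ gives $\mu \geq 2\log{n}/\eps^2 - 1$, so the Chernoff exponent on either tail is at least $2\eps^2\mu - O(1) \geq 4\log{n} - O(1) \geq 2\log{n}$ (for $n$ larger than an absolute constant). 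Hence
\begin{align*}
\Pr\Paren{\, d_v \notin \range{(1 \pm \eps)\,d}\, } \;\leq\; n^{-2}.
\end{align*}

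Finally, a union bound over the $n$ vertices shows that with probability at least $1 - n \cdot n^{-2} = 1 - 1/n$ every vertex has degree in $\range{(1\pm\eps)d}$, i.e.\ $G \sim \PG(n,d)$ is $\range{(1\pm\eps)d}$-almost-regular with high probability. (When $\eps > 1/2$ the relative deviation $2\eps$ exceeds $1$; one then uses the $\delta > 1$ form of the Chernoff bound on the upper tail and observes that the lower bound $d_v \geq (1-\eps)d$ is essentially vacuous. Alternatively, since the claim only weakens as $\eps$ increases, it suffices to treat $\eps \leq 1/2$.)

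The only genuine obstacle is the constant-chasing: one must verify that the constant $4$ in $d \geq 4\log{n}/\eps^2$ is large enough that, after paying the factor $1/3$ (resp.\ $1/2$) in the upper- (resp.\ lower-) tail Chernoff exponent and the factor $2$ between $d$ and $\floor{d/2}$, the per-vertex failure probability is at most $n^{-2}$ and hence survives the union bound. The decomposition of $d_v$, the independence of the samples, and the floor/self-loop bookkeeping are all routine.
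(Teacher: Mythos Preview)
Your proof is correct and follows exactly the approach the paper gives: the paper's own proof is a single sentence (``Follows from a direct application of Chernoff bound on degree of each vertex and taking a union bound on all $n$ vertices''), and you have carefully filled in precisely those details, including the decomposition of $d_v$ into the deterministic $\floor{d/2}$ out-edges and the random in-edges, the floor/self-loop bookkeeping, and the $\eps>1/2$ caveat.
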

 
\begin{proposition}[\textbf{Connectivity}]\label{prop:PG-connected}
	A graph $G \sim \PG(n,d)$ for $d \geq c \log{n}$ is connected with probability at least $1-1/n^{(c/4)}$. 
\end{proposition}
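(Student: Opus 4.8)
The plan is a standard first-moment (union bound) argument over vertex subsets that could be ``cut off'' as a union of connected components, mirroring the classical connectivity threshold arguments for Erdos-Renyi and random-regular graphs. First I would record the reduction: if $G \sim \PG(n,d)$ is disconnected, then taking $S \subseteq V(G)$ to be (the vertex set of) a \emph{smallest} connected component of $G$, we have $1 \leq \card{S} \leq \floor{n/2}$ and no edge of $G$ joins $S$ to $V(G) \setminus S$. In particular, for every vertex $v \in S$, all $\floor{d/2}$ of the endpoints that $v$ samples for its outgoing edges must land in $S$. Since these endpoints are drawn independently and uniformly at random (with replacement) from $V(G)$, for any fixed set $S$ with $\card{S} = k$ the probability that all outgoing edges originating in $S$ land inside $S$ equals exactly $\Paren{k/n}^{k\floor{d/2}}$, and this upper bounds the probability that this particular $S$ is a union of components of $G$ (we do not even need to also control the edges leaving $V(G) \setminus S$).

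Next I would union bound over all candidate sets $S$. Writing $t := \floor{d/2}$ and using $\binom{n}{k} \leq \Paren{en/k}^{k}$,
\[
	\Pr\bracket{G \text{ is disconnected}} \;\leq\; \sum_{k=1}^{\floor{n/2}} \binom{n}{k}\Paren{\frac{k}{n}}^{kt} \;\leq\; \sum_{k=1}^{\floor{n/2}} \Bracket{\frac{en}{k}\Paren{\frac{k}{n}}^{t}}^{k} \;=\; \sum_{k=1}^{\floor{n/2}}\Bracket{e\,\Paren{\frac{k}{n}}^{t-1}}^{k}.
\]
Because $k \leq n/2$ in every term, $\Paren{k/n}^{t-1} \leq 2^{-(t-1)}$, so each summand is at most $\Paren{2e/2^{t}}^{k}$. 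For $d \geq c\log n$ and $n$ larger than an absolute constant we have $2e/2^{t} \leq 1/2$, so the series is geometric and bounded by twice its first term, yielding $\Pr[G \text{ is disconnected}] \leq 4e/2^{t}$.

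Finally I would substitute $t = \floor{d/2} \geq \tfrac{c}{2}\log n - 1$, which makes $2^{t}$ at least $n^{c/4}$ up to an absolute constant factor — in fact with an exponent strictly exceeding $c/4$ (namely $c/2$ if $\log=\log_2$, or $\tfrac{c}{2}\ln 2$ if $\log=\ln$, both $> c/4$) — so the bound $4e/2^{t}$ is at most $n^{-c/4}$ once $n$ exceeds a fixed threshold, and the claimed probability bound follows. I do not anticipate a genuine obstacle here; the only point requiring mild care is controlling the union bound simultaneously across all subset sizes $k$: for $k$ near $n/2$ one needs the decay $\Paren{k/n}^{t-1}$ from the random edges to beat the $\binom{n}{k}$ factor, while for small $k$ the per-set probability $\Paren{k/n}^{kt}$ is already minuscule — and, as above, uniformly bounding each summand by $\Paren{2e/2^{t}}^{k}$ dispatches all $k$ at once.
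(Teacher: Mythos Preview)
Your proposal is correct and follows essentially the same approach as the paper: a union bound over all candidate ``small-side'' subsets $S$ with $\card{S}\leq n/2$, upper bounding the probability that no outgoing edge from $S$ escapes $S$ and summing over $k$. The only cosmetic differences are that the paper uses the cruder per-set bound $2^{-\card{S}\cdot d/2}$ (since each sampled endpoint lands outside $S$ with probability at least $1/2$) and the bound $\binom{n}{k}\leq n^k$, whereas you use the exact $(k/n)^{kt}$ and $\binom{n}{k}\leq (en/k)^k$; the resulting calculations are otherwise identical in spirit.
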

%%\begin{proof}
%%	Let $S \subseteq V(\bH)$ with $\card{S} \leq n'/2$ and consider the cut $(S,V(\bH) \setminus S)$ in $\bH$. 
%%\end{proof}
	
\begin{proposition}[\textbf{Expansion}]\label{prop:PG-expansion}
	Suppose $G \sim \PG(n,d)$ for $d \geq c \log{n}$. Then, with probability at least $1-1/n^{(c/4)}$: 
	\vspace{-5pt}
	\begin{enumerate}
		\item For any set $S \subseteq V(G)$, the neighborset $N(S)$ of $S$ in $G$ has size $\card{N(S)} \geq \min\set{2n/3,d/12 \cdot \card{S}}$.
		\item The mixing time of $G$ is $T_{\gamma}(G) = O(d^2\cdot \log{(n/\gamma)})$ for any $\gamma < 1$.\footnote{The dependence on $d$ in the mixing time in Proposition~\ref{prop:PG-expansion} does not seem necessary; however, as we are working with the case when $d = O(\log{n})$ and our bounds depend only on $O(\log{d})$, we allow for this extra factor of $d^2$ which can greatly simplify the proof.}
	\end{enumerate}
\end{proposition}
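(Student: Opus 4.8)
The plan is to prove the two parts separately. The vertex-expansion bound in part~1 is the substantive statement and is established by a union bound over ``obstructing'' configurations; the mixing-time bound in part~2 then follows quickly by combining part~1 with almost-regularity (Proposition~\ref{prop:PG-regular}), Cheeger's inequality, and Proposition~\ref{prop:mixing-spectral}. For part~1, fix a size $s = \card{S}$ and put $k := \min\set{2n/3,\,(d/12)\,s}$. The key point is that if some $S$ with $\card{S}=s$ has $\card{N(S)} < k$, then all $s\cdot\floor{d/2}$ outgoing darts of the vertices of $S$ --- which are mutually independent and each uniform over $V$ --- must land inside $N(S)$, a set of fewer than $k$ vertices. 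Union-bounding over the choice of $S$ and of a size-$k$ superset of $N(S)$, the probability that $\PG(n,d)$ violates the bound at size $s$ is therefore at most $\binom{n}{s}\binom{n}{k}\,(k/n)^{s\floor{d/2}}$.

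I would then sum over $s$ in two regimes, according to which term attains the minimum defining $k$. For $1 \le s \le 8n/d$ one has $k=(d/12)\,s$ and $k/n \le 2/3$, so $(k/n)^{s\floor{d/2}} \le (2/3)^{s\floor{d/2}}$; since $\floor{d/2} = \Theta(d)$ and $d \ge c\log n$, this is $n^{-\Theta(c\,s)}$ and beats $\binom{n}{s}\binom{n}{(d/12)s}$ once $c$ is a large enough constant (with extra room when $s \ll n/d$, where $k/n$ is genuinely small). For $8n/d < s \le n$ one has $k = 2n/3$; here $\binom{n}{s}\binom{n}{k} \le 2^{O(n)}$ with an exponent constant that shrinks as $c$ grows (using $\binom{n}{2n/3}\le 2^n$ and $s \le 8n/d$), while $s\floor{d/2} = \Omega(n)$ with a universal constant, so $(2/3)^{s\floor{d/2}} = 2^{-\Omega(n)}$ dominates once $c$ is large. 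Summing the two geometric series over $s\ge 1$ gives total failure probability at most $1/n^{c/4}$, and since $\min\set{2n/3,(d/12)s}$ equals $(d/12)s$ exactly when $s\le 8n/d$, the two regimes cover every $S \subseteq V$.

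For part~2, condition on the high-probability events that $G$ is connected (Proposition~\ref{prop:PG-connected}), that $G$ is $\range{(1\pm\eps)d}$-almost-regular for a sufficiently small constant $\eps$ (Proposition~\ref{prop:PG-regular}, which applies since $d \ge c\log n \ge 4\log n/\eps^2$), and that the vertex-expansion bound of part~1 holds. By Proposition~\ref{prop:mixing-spectral} it is enough to show $\lambda_2(G) = \Omega(1/d^2)$, and by Cheeger's inequality~\cite{Cheeger70} this reduces to showing that the conductance $\Phi(G) := \min\,\card{E(S,\bar S)}/\mathrm{vol}(S)$, over all $S$ with $\mathrm{vol}(S)\le\mathrm{vol}(V)/2$, satisfies $\Phi(G) = \Omega(1/d)$. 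Fix such an $S$: almost-regularity gives $(1-\eps)d\,\card{S} \le \mathrm{vol}(S) \le \tfrac12\mathrm{vol}(V) \le \tfrac{(1+\eps)dn}{2}$, hence $\card{S} \le \tfrac{1+\eps}{2(1-\eps)}n < 2n/3$, and $\mathrm{vol}(S) \le (1+\eps)d\,\card{S}$. Since every vertex of $N(S)\setminus S$ has at least one edge into $S$, we get $\card{E(S,\bar S)} \ge \card{N(S)\setminus S} \ge \card{N(S)}-\card{S} \ge \min\set{2n/3,(d/12)\card{S}} - \card{S} = \Omega\bigl(\min\set{d\,\card{S},\,n}\bigr)$, the last step using that $\card{S}$ is bounded away from $2n/3$ and $d \ge 24$. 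Hence $\Phi_S \ge \Omega\bigl(\min\set{d\,\card{S},n}\bigr)/\bigl((1+\eps)d\,\card{S}\bigr) = \Omega(1/d)$ (using $\card{S}\le n$), so $\Phi(G) = \Omega(1/d)$, and Cheeger's inequality together with Proposition~\ref{prop:mixing-spectral} yields $T_\gamma(G) = O\bigl(\log(n/\gamma)/\lambda_2(G)\bigr) = O(d^2\log(n/\gamma))$.

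The only real difficulty is calibrating the constants in part~1 so that $\binom{n}{s}\binom{n}{k}$ is dominated by $(k/n)^{s\floor{d/2}}$ uniformly in $s$; the delicate point is the transition regime $s \approx 8n/d$, where both binomials are as large as they ever get relative to $s\floor{d/2}$, and one has to check that the universal constant hidden in $(2/3)^{\Omega(n)}$ wins (it does, with a bit of room). Part~2 is then essentially a one-paragraph deduction. As the footnote to the statement anticipates, a direct Chernoff bound showing that $\Omega(s\,d)$ of the $s\floor{d/2}$ darts out of $S$ cross to $\bar S$ would already give $\Phi(G) = \Omega(1)$, hence $\lambda_2(G) = \Omega(1)$ and mixing time $O(\log(n/\gamma))$ with no $d^2$ loss; routing through the vertex-expansion estimate of part~1 instead is chosen only because it recycles work already done and keeps the argument short.
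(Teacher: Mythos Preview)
Your deduction of part~2 from part~1 via Cheeger's inequality and Proposition~\ref{prop:mixing-spectral} is correct and is exactly the paper's route (the paper just cites ``the well-known connection between vertex- and spectral-expansion'' for $\lambda_2 = \Omega(1/d^2)$).

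For part~1, however, the central inequality fails. You assert that $(2/3)^{s\lfloor d/2\rfloor}=n^{-\Theta(cs)}$ beats $\binom{n}{s}\binom{n}{(d/12)s}$ for large $c$, but take $s=1$: then $k=d/12=\Theta(c\log n)$, and $\log\binom{n}{k}\approx k\log(n/k)=\Theta\bigl(c\,(\log n)^2\bigr)$, whereas $\log\bigl((3/2)^{d/2}\bigr)=\Theta(d)=\Theta(c\log n)$. The binomial wins by a super-polynomial factor, and increasing $c$ does not help since both sides scale linearly in $c$. Your parenthetical (``$k/n$ is genuinely small when $s\ll n/d$'') names the remedy --- you must retain $(k/n)^{s\lfloor d/2\rfloor}$ rather than relax to $(2/3)^{s\lfloor d/2\rfloor}$, and exploit that the exponent $s\lfloor d/2\rfloor\approx 6k$ is several times the binomial's upper index, so that $\binom{n}{k}(k/n)^{6k}\le e^k(k/n)^{5k}$ --- but as written the main line does not do this, and the ``calibration'' you defer to the last paragraph is precisely the missing work. (Minor: in the second regime you write ``$s\le 8n/d$'' where you mean $s>8n/d$; also the constants there do not in fact improve with $c$, though the regime still gives $2^{-\Omega(n)}$.)

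The paper sidesteps all of this by union-bounding only over $S$, never over a candidate neighbor set $T$. For fixed $S$ it argues (via Chernoff) that each of the $\card{S}\cdot\lfloor d/2\rfloor$ outgoing darts is ``fresh'' with probability at least $1/3$ as long as fewer than $2n/3$ vertices have been hit, so the number of distinct neighbors is $\ge\min\{2n/3,\,\card{S}d/12\}$ except with probability $e^{-\Theta(\card{S}d)}$. The union bound is then just $\sum_s\binom{n}{s}e^{-\Theta(sd)}\le\sum_s n^{s}\cdot n^{-\Theta(sc)}$, with no $\binom{n}{k}$ term to control; this is why the paper's sketch fits in two sentences while your approach, even once repaired, needs the full two-regime computation.
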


\section{Technical Overview of Our Algorithm}\label{sec:technical}

In this section, we present a streamlined overview of our technical approach for proving Theorem~\ref{res:main}. For simplicity, we focus here mainly on the case  
$\lambda = 1/\polylog{(n)}$, i.e., the case of graphs with moderate (spectral) expansion. %``well-connected'' components, for more intuition. 

The general strategy behind our algorithm is the natural and familiar approach of \emph{improving the connectivity} of the underlying graph before finding its connected components (see, e.g., the celebrated log-space connectivity
algorithm of Reingold~\cite{Reingold05}). In particular, we perform the following transformations on the original graph: 
\begin{enumerate}[leftmargin=10pt]
	
	\item[] \textbf{Step 1: Regularization.} We first transform the original graph $G$ into an $O(1)$-regular graph $G_1$ such that $(i)$ there is a
	one to one correspondence between connected components of $G_1$ and $G$, and $(ii)$ mixing time of every connected component of $G_1$ is still $\polylog{(n)}$ (using the fact that $\lambda = 1/\polylog{(n)}$ and Proposition~\ref{prop:mixing-spectral}). 
	
	\item[] \textbf{Step 2: Randomization.} Next, we transform every connected component of $G_1$ to a random graph  
	chosen from distribution of random graphs $\PG$ to obtain a graph $G_2$. 
	This transformation (w.h.p) preserves all connected components of $G_1$ and never merges two separate components of $G_1$ into $G_2$. 
	As it turns out, the structure of random graphs (\emph{beyond} their improved connectivity) makes them ``easy to solve'' for MPC algorithms (more on this below).     
	%Such a random graph is what we intuitively expect to have an ``improved connectivity'' structure, i.e., is ``easy to solve''. 
	
	\item[] \textbf{Step 3: Connectivity on random graphs.} Finally, we design a novel algorithm for finding connected components of $G_2$ which are each a random graph sampled from $\PG$.  
	This algorithm can be seen as yet another transformation that reduces the diameter of each component to $O(1)$ and then solve the problem on a 
	low-diameter graph using a simple broadcasting strategy.  
\end{enumerate}

\noindent
We now elaborate more on each step of this algorithm.

\paragraph{\underline{Step 1: Regularization.}} 
The main steps of our algorithm heavily rely on the properties of \emph{regular} graphs, for several important reasons that will become evident shortly. 
Our first goal is then to ``regularize'' the input while preserving its connected components, its spectral gap
and the number of edges (Lemma~\ref{lem:first-step}). 
The standard procedure for regularizing a graph by adding self-loops to vertices (e.g.,~\cite{Reingold05}) is too lossy for our purpose 
%in the context of connectivity is to simply add sufficiently many self-loops to vertices of the graph (see, e.g.,~\cite{Reingold05}). 
%Alas, while this preserves connected components, 
as it can dramatically reduce the spectral gap\footnote{Unlike vertex and edge expansion, spectral expansion is \emph{not} a 
monotone property of edges of the graph.}.

%The standard way of making an underlying graph regular in the context of connectivity is to simply add a sufficient number of self-loops to every vertices of the graph (see, e.g.,~\cite{Reingold05}).
%Such transformation however, while preserving the connected component of the graph, can dramatically reduce the spectral gap of the graph, i.e., make it 
%``weaker connected''\footnote{Unlike vertex and edge expansion, spectral expansion is \emph{not} a 
%monotone property of edges of the graph.}. 

We instead use an approach based on the so-called \emph{replacement product} (see,~e.g.,~\cite{ReingoldVW00}): The idea is to replace each vertex $v$ of the original graph with degree 
$d_v$, by a $\Delta$-regular expander on $d_v$ ``copies of $v$", and then connect these expanders across according to edges of $G$ to construct a $(\Delta+1)$-regular graph (see Section~\ref{sec:regularization} for details). It is known that this product (approximately) preserves the
spectral gap in the new graph,  %(see Proposition~\ref{prop:replacement-product}) 
hence the mixing time of each component remains $\polylog{(n)}$ even after this transformation. 
Implementing this approach in the MPC model has its unique challenges as the degree of some vertices in the original graph can be as large 
as $\Omega(n)$ hence we need a \emph{parallel} procedure for constructing the expanders and performing the product, as no machine can do these tasks locally on its $n^{\Omega(1)}$-size memory (see Lemmas~\ref{lem:parallel_regular_graph} and~\ref{lem:parallel_replacement}).  

We point out that replacement products have been used extensively in the context of connectivity and expansion 
to reduce the degree of \emph{regular} graphs %while preserving connectivity 
\cite{ReingoldVW00,Reingold05}, %as well as to construct explicit expanders \cite{Reingold05}. 
%in the context of connectivity and expansion, for example to construct explicit expanders (e.g. in~\cite{ReingoldVW00}), or to \emph{reduce} degree of a graph while preserving 
%its connectivity (e.g., in Reingold's connectivity algorithm~\cite{Reingold05}). 
but to best of our knowledge, our (distributed) implementation of this technique for the regularization purpose itself, 
while preserving its spectral gap, is nontrivial (yet admittedly quite anticipated\footnote{This in fact requires us to extend the proof of expansion of replacement product to non-regular graphs as all existing proofs of this result that we are aware of are assuming 
original graph is regular~\cite{ReingoldVW00,ReingoldTV06,RozenmanV05,hoory2006expander,trevisan2013lecture}, while our sole purpose is to regularize the graph; see Section~\ref{sec:regularization} for details and Appendix~\ref{app:replacement-zigzag} for this proof.}). 
We believe this parallel regularization primitive itself will be a useful building block in future MPC graph algorithms.

\paragraph{\underline{Step 2: Randomization.}} The goal of the second step is, morally speaking, 
to replace each connected component of the regular graph $G_1$ with a purely \emph{random graph} sampled from distribution $\PG$ with degree %(and diameter) 
$O(\log{n})$ on the same connected component 
(which will indeed be connected with high probability by Proposition~\ref{prop:PG-connected}); This is the content of Lemma~\ref{lem:second-step}. 

In order to achieve the desired transformation, we need to connect every vertex $v$ in $G_1$ to $O(\log{n})$ \emph{uniformly random} vertices \emph{in the same connected component as $v$}. 
The obvious challenge in this step is that the information about which vertices belong to the same connected component is decentralized and each machine only has a ``local" view 
of the graph. To this end, we perform $O(\log{n})$ \emph{lazy random walks} of length $T = \polylog{(n)}$ from \emph{every vertex} of the graph $G$, 
where $T$ is an upper bound on the mixing time of every connected component of $G_1$.  
This, together with the fact that $G_1$ is \emph{regular}, ensures that the target of each random walk is (essentially) a uniformly random vertex in the corresponding connected component of $G_1$.

The main contribution in this step is an efficient parallel construction of a distributed data-structure for performing and manipulating \emph{independent} random walks of length $T$ in 
a regular graph, with only $O(\log{T})$ 
%Our main contribution in this step is thus designing an algorithm for performing a random walk of length $T$ in a \emph{regular} graph in only $O(\log{T})$ 
MPC rounds; see Theorem~\ref{thm:random-walk}. This allows us to 
perform the above transformation in $O(\log{T}) = O(\log\log{n})$ MPC rounds. 
Standard ideas such as recursively computing random walks of certain length from every vertex in parallel and then ``stitching'' these walks 
together to double the length of each walk can be used to implement this step (see~\cite{AleliunasKLLR79,KNP92,HalperinZ94} for similar implementations in the PRAM model)\footnote{
This step is also similar-in-spirit to streaming and distributed implementations 
of random walks in~\cite{SarmaGP11,SarmaNPT13}, with the difference that MPC algorithms can leverage the ``all-to-all'' communication to achieve an \emph{exponential} 
speed up of $O(\log{T})$ rounds as opposed to $O(\sqrt{T})$ achieved by these works, which is known to be tight~\cite{NanongkaiSP11}.}. 
The main challenge however, which is crucial for sampling from distribution $\PG$, is that in all the aforementioned implementations, the random walks produced across vertices are 
\emph{not independent of each other} as different walks become correlated once they ``hit'' the same vertex (the remainder of the walk would become the same for both walks). 
%This drawback is also present in the (EREW) PRAM random-walk primitives of \cite{KargerNP92,HalperinZ94}, which is the heart of their $O(\log n)$ connectivity algorithm. 

A key observation that allows us to circumvent this difficulty is that in a \emph{regular} graph, no vertex can become a ``hub'' which many different random walks hit 
(contrast this with a star-graph where every random walk almost surely hits the center); this is one of the key reasons that we need to perform the regularization step first. 
As such, many of the walks computed in the above procedure are indeed independent of each other. We use this 
observation along with several additional ideas (e.g., having each vertex compute multiple random walks and assign them randomly to different length walks in the 
recursive procedure above) to implement this step. %obtain the $n$ independent random walks, one for each vertex of the graph, in $O(\log{T})$ MPC rounds. 

%%Note that at this point, the obtained random graph has diameter $D = \Theta(\log n)$ (reducing it further by repeating the aforementioned procedure would  \mnote{is this paragraph important?}
%%lead to \emph{polynomial} total-memory usage which we sparingly try to avoid), hence an off-the-shelve leader-election algorithm (a-la~\cite{KarloffSV10,KiverisLMRV14,RastogiMCS13}) would 
%%require $\Theta(D)=\Omega(\log n)$ rounds. In the final and paramount step of our algorithm, we show how to re-design a leader-election algorithm 
%%which crucially exploits the the structure of \emph{random graphs} in order to find a spanning tree exponentially faster. %, in $O(\log \log n)$ rounds. 

\paragraph{\underline{Step 3: Connectivity on random graphs.}} The final and main step of the proof is an algorithm for identifying all connected components of a graph which
are each sampled from $\PG$ in only $O(\log\log{n})$ MPC rounds (Lemma~\ref{lem:third-step}). 
The centerpiece of this step is a leader-election based algorithm for connectivity (similar to most algorithms for sparse connectivity in the MPC model,~e.g.,~\cite{KarloffSV10,KiverisLMRV14,RastogiMCS13}). 
A typical leader-election algorithm for connectivity would pick some set of ``leader vertices'' in each round, and let other non-leader vertices connect to some leader in their neighborhood. 
It then ``contracts'' every leader vertex and all non-leader vertices that choose this leader to connect, to form a component of the input graph. This way, 
components of the graph ``grow" in each round as information propagates through leaders, until all components of the graph are discovered. 
The \emph{rate of growth} of components in these algorithms is however typically only a \emph{constant} as in general, it is hard to find components of size beyond a constant 
in each round (consider for instance the case when the underlying graph is a cycle). Consequently, $\Omega(\log{n})$ rounds are necessary to find all connected components using these algorithms. 
% of the graph. %We design a leader-election based algorithm which, unlike all previous algorithms, allows us to find all connected components of a \emph{random graph} in $O(\log\log{n})$ MPC rounds. 

Our algorithm achieves an exponential speedup in the number of rounds by crucially using the properties of the random graphs $\PG$ to contract components which are \emph{quadratically} larger after each round, i.e., it grows a component of 
size $x$ into a component of size $x^2$ in each round. 

The intuition behind the algorithm is as follows. Let $H\sim\PG(n,d)$. Since $H$ is essentially $d$-regular (Proposition~\ref{prop:PG-regular}), sampling each vertex as a leader with probability 
$\Theta(1/d)$, we expect each non-leader vertex to have a \emph{constant} number of leader neighbors, say exactly $1$ for simplicity. %by $d$-regularity of $H$. 
Since every vertex has $d$ neighbors, contracting every leader vertex along with all of its non-leader neighbors into a single ``mega-vertex" will form components of size $d$ with   
%(as the leader has $d$ neighbors and we assumed each non-leader has exactly one leader neighbor). 
total degree (roughly) $d^2$ %(since each vertex has about $d$ beighbors) 
in the contracted graph (this follows from the randomness in the distribution $\PG$ as no single mega-vertex is likely to be the endpoint of more than one of these $d^2$ edges).  
%that ensures that the original $d^2$ edges incident on these $d$ vertices are ``landing'' in different components). 
As such, the resulting graph after contraction is an almost $d^2$-regular random graph on $n/d$ vertices. By continuing this process on the new graph, 
we can now pick each leader with probability $1/d^2$ instead and contract components of size $d^2$ (instead of $d$). Repeating this process $i$ steps creates components of size 
$d^{2^{i}}$ which implies that after $O(\log\log{n})$ iterations we would be done. We stress that this algorithm exploits the ``entropy" of the distribution $\PG$ crucially, and not just the connectivity 
properties, e.g., expansion, of $\PG$, hence it is not clear (and unlikely) that this algorithm can be made to work directly on expander graphs (i.e., without Step 2).  

The outline above oversimplifies many details. Let us briefly mention two.  %Let us briefly mention two of the most important ones. 
%Firstly, after contracting the vertices of the graph once, the edges of the random graph are already \emph{correlated} with each other. In other words, conditioned 
Contracting vertices in this process \emph{correlates} the edges of $\PG$, impeding a recursive application.  
%In other words, conditioned on the set of contracted vertices, the distribution of remaining edges is now ``distorted'' and is hard to analyze. 
We bypass this problem by partitioning the edges of the random graph into $O(\log\log{n})$ different batches and running the algorithm (and analysis)  
in each round of the computation using a ``fresh random seed" (batch). This breaks the dependency between the choices made by the algorithm in previous rounds, 
and the randomness of the underlying graph. Another subtle issue is in the fact that the graphs in this process start ``drifting'' from regular to almost-regular with larger and larger discrepancy factors, indeed \emph{exponentially larger} after each round. At some point, this discrepancy factor becomes so large that one cannot anymore continue the previous argument. Fortunately however, as we are only performing $O(\log\log{n})$ rounds of computation, this only happens when size of each component has become $n^{\Omega(1)}$. At this point, we can simply 
stop the algorithm and argue that the diameter of the contracted graph is only $O(1)$. This allows us to run a simple algorithm for computing a BFS tree in this graph in $O(1)$ rounds, by computing levels of the tree one round at a time

\section{Step 1: Regularization}\label{sec:regularization}

In this section, we show how to ``preprocess'' our graph in order to prepare it for the main steps of our algorithm. Roughly speaking, this step
takes the original graph $G$ and turn it into a regular-graph without increasing its mixing time by much. Formally,

\begin{mdframed}[hidealllines=false,backgroundcolor=gray!10,innertopmargin=0pt]
\begin{lemma}\label{lem:first-step}
	There exists an MPC algorithm that given any graph $G(V,E)$ computes another graph $H$ with the following properties with high probability:
	\begin{enumerate}
		\item $\card{V(H)} = 2m$ and $H$ is $\Delta$-regular for some absolute constant $\Delta = O(1)$. 
		\item There is a one-to-one correspondence between the connected  components of $G$ and $H$. 
		\item Let $H_i$ be a connected component of $H$ corresponding to the connected component $G_i$ of graph $G$. For any $\gamma < 1$, $T_{\gamma}(H_i) = O\Paren{\frac{\log{(n/\gamma)}}{\lambda_2(G_i)}}$.
	\end{enumerate}	
	For any $\delta > 0$, the algorithm can be implemented on $O(m^{1-\delta})$ machines each with $O(m^{\delta})$ memory and in $O(\frac{1}{\delta})$ MPC rounds. 
\end{lemma}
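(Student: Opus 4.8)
The plan is to build $H$ as a \emph{replacement product} of $G$ with a family of constant-degree expanders, one per vertex, whose sizes track the vertex degrees. For each $v\in V(G)$ create a ``cloud'' $C_v$ consisting of $d_v$ copies of $v$, one copy $v_e$ for every edge $e$ incident to $v$; since $\sum_{v} d_v = 2m$, this already gives $|V(H)| = 2m$. Inside each cloud $C_v$ we install a $(\Delta-1)$-regular expander on its $d_v$ vertices using the parallel construction of Lemma~\ref{lem:parallel_regular_graph} (for the — possibly many — vertices whose degree is below a suitable constant threshold $k_0$, we instead hard-wire a fixed $O(1)$-size gadget, e.g.\ a small clique padded with self-loops to degree $\Delta-1$). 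Finally, for every edge $e=(u,v)\in E(G)$ we add the cross edge $\{u_e,v_e\}$. Each copy now has degree $(\Delta-1)+1=\Delta$, which is item~1, and $\Delta=O(1)$ is absolute. For item~2, note that each cloud $C_v$ is internally connected and the clouds $C_u,C_v$ are joined in $H$ if and only if $(u,v)\in E(G)$; hence the connected component of $H$ containing $C_v$ is exactly $\bigcup_{u\in G_i}C_u$, where $G_i$ is the component of $G$ containing $v$, giving the claimed bijection $G_i\leftrightarrow H_i$.

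The heart of the argument is item~3: the replacement product must not destroy the spectral gap of each component. For a \emph{regular} base graph this is the classical fact that $\lambda_2$ of the replacement product is at least a constant (depending only on $\Delta$ and the gap of the small expander) times $\lambda_2$ of the base graph. Since our base component $G_i$ need not be regular, I would first establish a version of this bound for the replacement product of an arbitrary connected (not-necessarily-regular) graph with a family of constant-degree expanders — this is precisely the extension flagged in the paper and carried out in Appendix~\ref{app:replacement-zigzag}. Granting it, we get $\lambda_2(H_i)\ge c\cdot\lambda_2(G_i)$ for an absolute constant $c=c(\Delta)>0$, and then Proposition~\ref{prop:mixing-spectral} applied to the connected graph $H_i$ (which has at most $2m\le n^2$ vertices) yields $T_\gamma(H_i)=O(\log(2m/\gamma)/\lambda_2(H_i))=O(\log(n/\gamma)/\lambda_2(G_i))$, as required. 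Because each cloud gadget is guaranteed to be an expander only with high probability, a union bound over the $\le n$ clouds controls the overall failure probability.

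For the MPC implementation, all non-trivial pieces reduce to the parallel sort/search primitive of~\cite{GoodrichSZ11} together with the parallel subroutines of Lemma~\ref{lem:parallel_regular_graph} and Lemma~\ref{lem:parallel_replacement}. Computing $d_v$ for all $v$, assigning a canonical index to each copy $v_e$, forming the cross edges, and invoking the parallel expander construction on each cloud can each be done with a constant number of sort/search passes, each costing $O(\log_{m^\delta} m)=O(1/\delta)$ rounds on $O(m^{1-\delta})$ machines of memory $O(m^\delta)$ (note $H$ has $2m$ vertices and $O(m)$ edges, so it fits in $O(m)$ total memory). Summing over the constant number of stages gives the stated $O(1/\delta)$ round bound.

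The step I expect to be the main obstacle is the spectral-gap estimate for the \emph{non-regular} replacement product: the standard zig-zag/replacement-product analyses all assume a regular base graph, so one has to redo the Rayleigh-quotient bound — decomposing a test vector into its per-cloud average component and its within-cloud component, bounding the quadratic form of $H_i$ in terms of those of $G_i$ and of the cloud expanders — while carrying the degree weights $d_v$ that now appear because $G_i$ is irregular. A secondary, more routine subtlety is verifying that the $O(1)$-size local gadgets used for low-degree clouds perturb $\lambda_2$ of each component by at most a constant factor, so that the constant $c$ above survives.
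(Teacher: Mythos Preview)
Your proposal is correct and follows essentially the same approach as the paper: replacement product with per-vertex constant-degree expander clouds built via Lemma~\ref{lem:parallel_regular_graph}, the non-regular replacement-product spectral bound of Proposition~\ref{prop:replacement-product} (proved in Appendix~\ref{app:replacement-zigzag}) to get $\lambda_2(H_i)=\Omega(\lambda_2(G_i))$, and Proposition~\ref{prop:mixing-spectral} for the mixing-time conclusion, with the MPC implementation delegated to Lemmas~\ref{lem:parallel_regular_graph} and~\ref{lem:parallel_replacement}. Your explicit handling of low-degree clouds via fixed $O(1)$-size gadgets is a detail the paper glosses over (it simply assumes $d_v\ge d$), but otherwise the arguments coincide.
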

\end{mdframed}

To prove Lemma~\ref{lem:first-step}, we use an approach based on the standard replacement product described in the next section.

\subsection*{Replacement Product}\label{sec:replacement-zig-zag}

Let $G$ be a graph on $n$ vertices $v_1,\ldots,v_n$ with degree $d_v$ for $v \in V(G)$, and $\HH$ be a family of $n$ $d$-regular graphs $H_1,\ldots,H_n$ where $H_v$ is supported on $d_v$ vertices (we assume $d_v \geq d$ for all $v \in V(G)$).
We construct the \emph{replacement} product $G \replacement \HH$ as follows: 
\begin{itemize}
	\item Replace the vertex $v$ of $G$ with a copy of $H_v$ (henceforth referred to as a \emph{cloud}). For any $i \in H_v$, we use $(v,i)$ to refer to the $i$-th vertex of the cloud $H_v$. 
	
	\item Let $(u,v)$ be such that the $i$-th neighbor of $u$ is the $j$-th neighbor of $v$. Then there exists an edge between vertices $(u,i)$ and $(v,j)$ in $G \replacement \HH$. Additionally, for any $v \in V(G)$, if there exists an edge $(i,j) \in H_v$,
	then there exists an edge $((v,i),(v,j))$ in $G \replacement \HH$. 
\end{itemize}
It is easy to see that the replacement product $G \replacement \HH$ is a $(d+1)$-regular graph on $2m$ vertices where $m$ is the number of edges in $G$. The following proposition asserts that the spectral gap is preserved under replacement product.

\begin{proposition}[cf. \cite{ReingoldVW00,ReingoldTV06}]\label{prop:replacement-product}
	Suppose $\lambda_2(G) \geq \lambda_G$ and all $H_v \in \HH$ are $d$-regular with $\lambda_2(H_v) \geq \lambda_H$. Then, 
	$\lambda_2(G \replacement \HH) = \Omega\paren{d^{-1} \cdot \lambda_G \cdot \lambda^2_H}$
\end{proposition}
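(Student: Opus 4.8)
The plan is to reduce the statement to the known spectral bound for the replacement product of *regular* graphs, which requires a small amount of care because here the base graph $G$ is *not* regular (indeed, regularizing $G$ is the whole point of the construction). The key observation is that the replacement product as defined treats each vertex $v$ of $G$ as a cloud $H_v$ of size exactly $d_v$, and the inter-cloud edges form a perfect matching on the "boundary ports": each vertex $(v,i)$ of a cloud is matched to exactly one vertex $(u,j)$ in another cloud, via the correspondence that the $i$-th edge at $v$ is the $j$-th edge at $u$. Hence $G \replacement \HH$ is genuinely $(d+1)$-regular on $2m$ vertices regardless of the irregularity of $G$, and the matrix analysis can proceed on this regular graph.

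The main steps I would carry out are as follows. First, write the normalized adjacency matrix (random-walk matrix) of $G \replacement \HH$ as a convex combination $M = \tfrac{1}{d+1}\, B + \tfrac{d}{d+1}\, A_{\HH}$, where $B$ is the permutation matrix of the inter-cloud perfect matching and $A_{\HH}$ is the block-diagonal matrix whose $v$-th block is the normalized adjacency matrix of $H_v$ (each block has spectral gap $\geq \lambda_H$). Second, analyze the action of $M$ on the orthogonal decomposition of $\R^{2m}$ into the subspace $W_\parallel$ of vectors constant on each cloud — which is naturally identified, after the degree-weighting $d_v = $ cloud size, with $\R^n$ carrying the random-walk operator of $G$ — and its orthogonal complement $W_\perp$. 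On $W_\perp$, the block $A_{\HH}$ contracts by a factor $\leq 1-\lambda_H$ in norm (since each $H_v$-block does), and $B$ has norm $1$, so a vector that stays largely in $W_\perp$ loses a $\Theta(\lambda_H)$ fraction of its norm per application; on $W_\parallel$, the operator $B$ "pushes" mass into $W_\perp$ at a rate governed by $\lambda_G$ (a vector constant on clouds but non-constant across clouds of $G$ gets spread by the matching). Third, combine these two effects: following the standard argument (as in~\cite{ReingoldVW00,ReingoldTV06}), a short power of $M$ — or an explicit $2\times 2$ matrix bound on the pair (norm in $W_\parallel$, norm in $W_\perp$) — shows that the second-largest eigenvalue of $M$ is at most $1 - \Omega(d^{-1}\lambda_G \lambda_H^2)$, which translates to $\lambda_2(G \replacement \HH) = \Omega(d^{-1}\lambda_G\lambda_H^2)$ for the normalized Laplacian. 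Finally, I would check that connectivity of $G$ is preserved so $\lambda_2 > 0$ is meaningful, and that the weighting identification in the first/second steps is consistent (the stationary distribution of $M$ is uniform on the $2m$ vertices, which matches regularity).

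The step I expect to be the main obstacle is the second one: making the decomposition-and-projection argument work *without* assuming $G$ is regular. All textbook proofs I am aware of~\cite{ReingoldVW00,ReingoldTV06,RozenmanV05,hoory2006expander,trevisan2013lecture} identify $W_\parallel$ with $\R^n$ equipped with the *normalized* adjacency matrix of a regular $G$; when $G$ is irregular, the correct identification uses the measure on $[n]$ that assigns mass $\propto d_v$ to vertex $v$ (so that clouds of different sizes are weighted correctly), and one must verify that under this weighting the restriction of $B$ to $W_\parallel$ is exactly the (reversible) random-walk operator of $G$, whose spectral gap is $\Theta(\lambda_2(G)) \geq \Omega(\lambda_G)$ by the relationship between the normalized Laplacian and the random-walk operator. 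Once this bookkeeping is pinned down, the rest of the argument is the standard contraction estimate and carries over verbatim; I would relegate the full matrix computation to an appendix (as the excerpt indicates is done in Appendix~\ref{app:replacement-zigzag}).
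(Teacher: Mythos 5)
Your plan is correct and would work, but it takes a genuinely different route from the paper. You propose the ``geometric'' argument: decompose $\R^{2m}$ into the cloud-constant subspace $W_\parallel$ and its complement $W_\perp$, observe that $A_\HH$ contracts $W_\perp$ by $1-\lambda_H$ while the projection of the matching $B$ onto $W_\parallel$ is (under the $d_v$-weighted identification of $W_\parallel$ with $\R^n$) exactly the random-walk operator of $G$, and then close via a $2\times 2$ operator-norm bound or a short power of $M$. This is the original Reingold--Vadhan--Wigderson-style argument. The paper instead uses the ``algebraic'' route of Rozenman--Vadhan and Reingold--Trevisan--Vadhan: it first proves the zig-zag bound $\lambda_2(G\zigzag\HH)\geq\lambda_G\lambda_H^2$ by writing the block-diagonal cloud matrix as $B=\lambda_H J + (1-\lambda_H)C$ with $\norm{C}\leq 1$, expanding $W_z = BPB = \lambda_H^2\, JPJ + (1-\lambda_H^2)\widetilde{C}$, and bounding $\norm{JPJ\,x}$ via the identification $\norm{JPJ\,x} = \norm{N_G\,y}$ with $y_v = d_v^{-1/2}\sum_j x_{v,j}$ (Claim~\ref{clm:vector-bound}); it then derives the replacement bound by cubing, $W_r^3 = \frac{d^2}{(d+1)^3}W_z + (1-\frac{d^2}{(d+1)^3})C'$. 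The two approaches handle the irregularity of $G$ in an equivalent way---your $\pi_v\propto d_v$ weighting is exactly the paper's $D_G^{1/2}$-conjugation that turns $W_G$ into the symmetric $N_G$---and you correctly flag this as the only place where the textbook regular-$G$ proofs must be modified. What the paper's route buys is a cleaner separation of the two ingredients (prove the zig-zag bound once in the tidy $JPJ$ form, then get the replacement bound mechanically by cubing, with the $d^{-1}$ factor falling out of $d^2/(d+1)^3$); what your route buys is avoiding the zig-zag product as an intermediary object, which some find more transparent for intuition about where the $\lambda_G\lambda_H^2$ and $d^{-1}$ factors come from. Either is acceptable; if you pursue yours, be sure to work the $2\times 2$ norm bound out carefully, since a single application of $M$ does not see both uses of the cloud contraction that are needed for the $\lambda_H^2$ factor---you need (at least) three, which is why the paper cubes.
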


Proposition~\ref{prop:replacement-product} was first proved in~\cite{ReingoldVW00} when $G$ is also a $D$-regular graph and all copies in $\HH$ are the same $d$-regular graph on $D$ vertices (in fact, all proofs of this proposition that
we are aware of, e.g.,~\cite{ReingoldVW00,ReingoldTV06,RozenmanV05,hoory2006expander,trevisan2013lecture}, are for this case). 
However, for our application, we crucially need this proposition for non-regular graphs $G$ (after all, our ultimate goal
is to ``regularize'' the graph). Nevertheless, extending these proofs to the case of non-regular graph $G$ as stated in Proposition~\ref{prop:replacement-product} is not hard and we provide a proof following the approach in~\cite{RozenmanV05,ReingoldTV06}
in Appendix~\ref{app:replacement-zigzag} for completeness.

For our purpose, we only need Proposition~\ref{prop:replacement-product} when every graph in $\HH$ is a constant-degree regular expander. In this case, since $\lambda_H = \Omega(1)$ and $d = O(1)$, we 
obtain that the resulting graph $G \replacement \HH$ has spectral gap at least $\lambda_2(G \replacement \HH) = \Omega(\lambda_2(G))$. 
 
\subsection*{Parallel Expander Construction}

To use Proposition~\ref{prop:replacement-product}, we need to be able to create a family of expanders $\HH$ in parallel over the set of vertices of the original graph $G$. This is a non-trivial task as degree of some vertices in $G$ can be 
as high as $\Omega(n)$ and hence we need to create an expander with $\Omega(n)$ edges to replace them; at the same time, no single machine has $\Omega(n)$ memory to fit this expander and hence it should be constructed in parallel and distributed
across multiple machines. We note that however, we can use a randomized algorithm for this task (i.e., we do not need necessarily an ``explicit'' construction). 

Consider the following construction of a random $d$-regular undirected graph on $n$ vertices for positive even integer $d$ (allowing self-loops and parallel edges):
Let $\pi_1, \dots, \pi_{d/2}$ be $d/2$ permutations on $[n]$ which are independently and uniformly sampled from the set of all permutations.
The resulting graph is $H$ with $V(H) := [n]$ and 
\begin{equation}\label{eq:edge} E(H) := \set{(i, \pi_j(i)) : i \in [n], j \in [d / 2]}\end{equation}
for unordered pairs $(i, \pi_j(i))$. Let $\mathcal{G}_{n, d}$ be the probability space of  the $d$-regular $n$-vertex graphs constructed in this way.

\begin{proposition}[c.f.~\cite{Friedman03}]\label{prop:random_regular_graph}
Given a postive constant $\delta > 0$ and an positive even integer $d$,
there is a constant $c$ such that 
\[\Pr_{H \sim \mathcal{G}_{n, d}}\left[\lambda_2(H) \geq 1 -  \frac{2\sqrt{d - 1} + \delta}{d}\right] \geq 1 - \frac{c}{n^{\tau}},\]
where $\tau = \lceil (\sqrt{d - 1} + 1) / 2\rceil - 1$.
\end{proposition}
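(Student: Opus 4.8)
The plan is to obtain Proposition~\ref{prop:random_regular_graph} as a direct consequence of Friedman's resolution of Alon's second eigenvalue conjecture~\cite{Friedman03} for the \emph{permutation model} of random regular graphs, after translating the statement about the normalized Laplacian into an equivalent statement about the adjacency spectrum. First I would observe that every $H \sim \mathcal{G}_{n,d}$ is $d$-regular (a fixed point $\pi_j(i) = i$ merely contributes a self-loop, which is counted with multiplicity two toward the degree, and a transposition in some $\pi_j$ contributes a parallel edge), so its degree matrix is $D = d\,I$ and its normalized Laplacian is $\mathcal{L} = I - \tfrac1d A_H$. Writing the adjacency eigenvalues as $d = \mu_1(H) \geq \mu_2(H) \geq \cdots \geq \mu_n(H) \geq -d$, the eigenvalues of $\mathcal{L}$ are exactly $\{\,1 - \mu_i(H)/d\,\}_{i=1}^n$, and hence the $i$-th smallest eigenvalue of $\mathcal{L}$ is $1 - \mu_i(H)/d$. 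In particular,
\[
	\lambda_2(H) \;\geq\; 1 - \frac{2\sqrt{d-1}+\delta}{d}
	\qquad\Longleftrightarrow\qquad
	\mu_2(H) \;\leq\; 2\sqrt{d-1}+\delta ,
\]
so it suffices to control the second-largest adjacency eigenvalue. Note that for $\delta < (\sqrt{d-1}-1)^2$ the right-hand event forces $\mu_2(H) < d$, which already certifies that $H$ is connected; thus the potentially problematic event that $H$ is disconnected is automatically excluded and requires no separate treatment. For $d \le 2$ the claimed inequality is vacuous (the right-hand side is nonpositive), and increasing $\delta$ only weakens the claim while leaving $\tau$ unchanged, so I may freely assume $d$ is even with $d \ge 4$ and $\delta$ arbitrarily small.

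Next I would invoke Friedman's theorem in the form appropriate to $\mathcal{G}_{n,d}$. The space $\mathcal{G}_{n,d}$ defined in \eqref{eq:edge} is precisely the permutation model: a uniformly random $d$-regular multigraph on vertex set $[n]$ built from $d/2$ independent uniformly random permutations of $[n]$, with self-loops and parallel edges permitted. This is one of the models covered by~\cite{Friedman03}, whose main theorem asserts that for every $\epsilon > 0$ there is a constant $c = c(d,\epsilon)$ with
\[
	\Pr_{H \sim \mathcal{G}_{n,d}}\!\left[\,\mu_2(H) > 2\sqrt{d-1}+\epsilon\,\right] \;\le\; c\, n^{-\tau},
	\qquad \tau = \big\lceil (\sqrt{d-1}+1)/2 \big\rceil - 1,
\]
where $\tau$ is exactly the exponent appearing in the statement (in fact Friedman bounds $\max_{i\ge 2}|\mu_i(H)|$, so the one-sided bound on $\mu_2(H)$ we need follows a fortiori). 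Taking $\epsilon = \delta$ and combining with the equivalence from the previous paragraph yields the proposition.

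The only routine care needed is bookkeeping: confirming that the construction in \eqref{eq:edge} matches the hypotheses of the model analyzed in~\cite{Friedman03} (in particular the handling of non-simple graphs and the even-$d$ parity that the construction forces), together with the Laplacian-to-adjacency dictionary above and the observation that the bound already implies connectivity. There is no new mathematical content here: the substantive work is Friedman's, and a self-contained argument would amount to reproving the second eigenvalue theorem (via Friedman's trace-method analysis, or the later approaches of Bordenave or of Puder), which is well outside the scope of this paper. Accordingly, I would present this proposition with Friedman's theorem used as a black box, and the ``hard part'' of the proof is simply citing it correctly.
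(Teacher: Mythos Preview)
Your proposal is correct and matches the paper's approach: the paper does not prove this proposition but simply cites it as Friedman's theorem (hence the ``c.f.~\cite{Friedman03}'' in the statement), treating it as a black box. Your additional bookkeeping---the translation between the normalized Laplacian spectrum and the adjacency spectrum via $\lambda_2(H) = 1 - \mu_2(H)/d$ for $d$-regular graphs, and the verification that $\mathcal{G}_{n,d}$ is precisely the permutation model Friedman analyzes---is exactly the right way to make the citation precise, and is more detail than the paper itself provides.
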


We choose $d$ to be 100. By Proposition~\ref{prop:random_regular_graph}, we have 

\begin{corollary}\label{cor:random_regular_graph}
Let $d = 100$. There is a constant $c$ such that for any positive integer $n$
\[\Pr_{H \sim \mathcal{G}_{n, d}}\left[\lambda_2(H) \geq \frac{4}{5}\right] \geq 1 - \frac{c}{n^{5}}.\]
\end{corollary}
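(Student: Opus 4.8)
The plan is to obtain Corollary~\ref{cor:random_regular_graph} as a direct instantiation of Proposition~\ref{prop:random_regular_graph} at $d = 100$, with a suitable choice of the slack parameter $\delta$, followed by verification of two elementary numerical inequalities. First I would set $\delta = 1/10$. With this choice the eigenvalue threshold appearing in Proposition~\ref{prop:random_regular_graph} becomes
\[
1 - \frac{2\sqrt{d-1}+\delta}{d} \;=\; 1 - \frac{2\sqrt{99} + 1/10}{100}.
\]
Since $\sqrt{99} < 9.95$, we have $2\sqrt{99} + 1/10 < 20$, so this threshold is at least $1 - 20/100 = 4/5$. Consequently the event $\{\lambda_2(H) \geq 1 - (2\sqrt{99}+1/10)/100\}$ is contained in $\{\lambda_2(H) \geq 4/5\}$, and hence the probability of the latter is at least as large as the bound guaranteed by Proposition~\ref{prop:random_regular_graph} for the former.

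Next I would compute the exponent $\tau = \lceil(\sqrt{d-1}+1)/2\rceil - 1$ for $d = 100$. Since $9.9 < \sqrt{99} < 10$, we have $5.45 < (\sqrt{99}+1)/2 < 5.5$, so $\lceil(\sqrt{99}+1)/2\rceil = 6$ and thus $\tau = 5$. Substituting $\tau = 5$ into the failure probability $c/n^{\tau}$ of Proposition~\ref{prop:random_regular_graph} gives $c/n^5$, where $c$ is the constant supplied by that proposition for the parameters $d = 100$ and $\delta = 1/10$. Combining this with the containment of events from the previous paragraph yields $\Pr_{H \sim \mathcal{G}_{n,d}}[\lambda_2(H) \geq 4/5] \geq 1 - c/n^5$, as claimed.

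Since the statement is merely a numerical specialization of the preceding proposition, there is no genuine obstacle here; the only thing to check is that the two arithmetic bounds above hold, and the value of $\delta$ is essentially forced, because $2\sqrt{99} \approx 19.9$ leaves only about $0.1$ of room before the threshold would fall below $4/5$.
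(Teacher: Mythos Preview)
Your proposal is correct and follows exactly the approach the paper intends: the paper simply states that the corollary follows from Proposition~\ref{prop:random_regular_graph} by setting $d=100$, and your argument spells out the two arithmetic checks (that $2\sqrt{99}+\delta < 20$ for small $\delta$, and that $\tau = \lceil(\sqrt{99}+1)/2\rceil - 1 = 5$) that make this immediate.
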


In the remaining of this section, we present an MPC algorithm to construct random $d$-degree graphs for a given sequence of positive integers $n_1, n_2, \dots, n_k$ satisfying $\sum_{i=1}^k n_i \leq 2m$.

\begin{tbox}
	$\RGC(m^{\delta},n_1, \dots, n_k)$.  An algorithm for constructing random $d$-regular graphs with $n_1, n_2, \dots, n_k$ vertices  for $d = 100$ on machines of memory $O(m^{\delta})$. 
	
	\medskip
	
	\textbf{Output:} $d$-regular graphs $\bH_{n_i}$ for $1 \leq i \leq k$.   
	
	\algline
	
	\begin{enumerate}
			\item 
			For every $n_i \leq m^{\delta}$ \textbf{in parallel} repeat the following process until the resulting graph $\bH_{n_i}$ satisfies $\lambda_2(\bH_{n_i}) \geq 4/5$:
			uniformly sample $d/2$ permutations $\pi_1, p_2, \dots \pi_{d/2}$ on $[n_i]$, and construct graph $\bH_{n_i}$ by Eq.~(\ref{eq:edge}).
			
			\item 
			For every $n_i > m^\delta$ \textbf{in parallel} construct $\bH_{n_i}$ on $d \cdot \lceil n_i / m^\delta\rceil$ machines
			\begin{enumerate}
				\item Independently and uniformly sample $v_{n_i, j, k}$ from $[n^{10}]$ for all $j \in [n_i], k\in[d/2]$.
				\item For every $k \in [d/2]$, sort $\{ v_{n_i, 1, k}, \dots, v_{n_i, n_i, k} \} $, and set $\pi_{n_i, k}(j)$ to be $\alpha$
				if $v_{j, k}$ is $\alpha$-th largest number among $\{ v_{n_i, 1, k}, \dots, v_{n_i, n_i, k} \} $.
				\item Construct graph $\bH_{n_i}$ using $\pi_{n_i, 1}, \dots, \pi_{n_i, d/2}$ with edge set specified in Eq.~(\ref{eq:edge}).
			\end{enumerate}

	\end{enumerate}
\end{tbox}

We use $\RGC$ to prove the following lemma.
\begin{lemma}\label{lem:parallel_regular_graph}
There exists an MPC algorithm that given a sequence of positive integers $n_1, n_2, \dots, n_k$ satisfying $\sum_{i=1}^k n_i \leq 2m$, with high probability,
 computes a set of graphs $\bH_{n_1}, \bH_{n_2}, \dots, \bH_{n_k}$ such that for every $\bH_{n_i}$ for $1 \leq i \leq k$, $\bH_{n_i}$ is a $d$-degree regular graph with $\lambda_2(\bH_{n_i}) \geq 4/5$.

For any $\delta > 0$, the algorithm can be implemented with $O(m^{1-\delta})$ machines each with $O(m^{\delta})$ memory, and in $O(1/\delta)$ MPC rounds. 
\end{lemma}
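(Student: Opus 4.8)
The plan is to analyze the algorithm $\RGC$ stated above directly, treating the two branches separately: the ``small'' instances $n_i \le m^\delta$, which are solved within a single machine, and the ``large'' instances $n_i > m^\delta$, which are built in parallel using the sort-and-search primitive. For each branch I would establish (i) that the output $\bH_{n_i}$ is $d$-regular with $\lambda_2(\bH_{n_i}) \ge 4/5$ with high probability, and (ii) the claimed bounds of $O(m^{1-\delta})$ machines, $O(m^\delta)$ memory, and $O(1/\delta)$ rounds.

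For a small instance, the graph has only $d n_i/2 = O(m^\delta)$ edges, so the entire ``repeat until $\lambda_2(\bH_{n_i}) \ge 4/5$'' loop runs inside one machine's (free) local computation, costing only $O(1)$ MPC rounds. Correctness of $\lambda_2(\bH_{n_i}) \ge 4/5$ is then guaranteed by the stopping rule, while $d$-regularity is immediate from Eq.~(\ref{eq:edge}): the edge multiset is a union of $d/2$ permutations of $[n_i]$, each contributing degree exactly $2$ (counting a self-loop twice) to every vertex. That the loop terminates (and does so after $O(1)$ iterations in expectation) follows from Corollary~\ref{cor:random_regular_graph}, which says a fresh draw from $\mathcal{G}_{n_i, d}$ is good with probability $1 - c/n_i^5 = \Omega(1)$ once $n_i$ exceeds a fixed constant; the finitely many tiny values of $n_i$ can be handled by a hard-coded table. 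Packing these instances greedily, $O(m^\delta)$ edge-slots per machine, fits all of them onto $O(m/m^\delta) = O(m^{1-\delta})$ machines.

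For a large instance the only real content is that the permutation $\pi_{n_i, k}$ produced in Step~2 is \emph{exactly} uniform over the symmetric group on $[n_i]$: the rank vector of an i.i.d.\ sample with no repeated values is a uniformly random permutation, and the $d/2$ samples are independent. A repeated value inside one of the $n_i$-element samples drawn from $[n^{10}]$ occurs with probability at most $\binom{n_i}{2}/n^{10} \le (2m)^2/n^{10}$, so a union bound over the $k \le 2m$ instances and the $O(1)$ permutations each shows that with high probability every $\pi_{n_i, k}$ is uniform, hence every large $\bH_{n_i}$ is distributed as $\mathcal{G}_{n_i, d}$; conditioned on this, Corollary~\ref{cor:random_regular_graph} and a union bound over the at most $2m/m^\delta = 2m^{1-\delta}$ large instances give $\lambda_2(\bH_{n_i}) \ge 4/5$ for all of them with high probability (and should one want this for arbitrarily small $\delta$, it suffices to take the expander degree $d$ a larger constant, since Proposition~\ref{prop:random_regular_graph} then yields a higher polynomial tail and Proposition~\ref{prop:replacement-product} only needs $\lambda_H = \Omega(1)$). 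On the MPC side: Step~2a draws $O(\sum_i n_i) = O(m)$ values in one round; Step~2b is $O(1)$ calls to the sort primitive of~\cite{GoodrichSZ11} on at most $2m$ key-value pairs, i.e.\ $O(\log_{m^\delta} m) = O(1/\delta)$ rounds; and Step~2c recovers each edge $(j, \pi_{n_i, k}(j))$ by a parallel search, another $O(1/\delta)$ rounds. Since instance $i$ uses $d\lceil n_i/m^\delta\rceil = O(n_i/m^\delta)$ machines, the large instances together use $O(m^{1-\delta})$ machines of memory $O(m^\delta)$, and the whole algorithm runs in $O(1/\delta)$ rounds.

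The hard part is conceptual rather than technical: a large expander does not fit on one machine, so its defining random permutations must be generated in a distributed, $O(1/\delta)$-round manner \emph{without} sacrificing the exact uniformity that Friedman's bound (Proposition~\ref{prop:random_regular_graph}) relies on --- which is precisely what the ``rank of i.i.d.\ samples'' device buys, at the price of one sort. Everything else (the two union bounds, the $d$-regularity check, and the machine/memory accounting) is routine.
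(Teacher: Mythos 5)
Your proof follows the same two-branch decomposition and the same key ideas as the paper (local generation for $n_i\le m^\delta$, distributed ``rank of i.i.d.\ samples'' permutations plus sort for $n_i > m^\delta$, then Friedman's bound and union bounds). It is correct. In fact you are slightly more careful than the paper in two places worth noting. First, you observe that the union bound over the large instances, $\sum_{i:\,n_i > m^\delta} c/n_i^5 \le 2m^{1-\delta}\cdot c/m^{5\delta}$, only gives a high-probability guarantee when $\delta$ is not too small, and you correctly fix this by increasing the constant degree $d$ so that Proposition~\ref{prop:random_regular_graph} yields a stronger polynomial tail (the paper's summation $\sum_{\ell=n^{\eps}}^n c/\ell^5$ sums over distinct \emph{values} rather than over \emph{instances} and then asserts $O(\log n/n^4)$, which glosses over exactly this). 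Second, you flag that Friedman's theorem needs $n_i$ beyond some constant, and propose hard-coding the finitely many small cases; the paper is silent on this. The one caveat worth keeping in mind is that for $n_i < d$ no $d$-regular expander on $n_i$ vertices with $\lambda_2\ge 4/5$ need exist, so the table fix only covers $n_i \ge d$ --- but this is harmless because the lemma is always invoked with $n_i = d_v \ge d$ in the replacement product construction.
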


\begin{proof}
We first show the correctness of the $\RGC$ algorithm.
By Proposition~\ref{prop:random_regular_graph},
step 1 construct regular graphs with desirable spectral gap with high probability for every $n_i \leq m^\delta$. 
Now we show that step 2 construct regular graphs with desirable spectral gap with high probability for every $n_i > m^\delta$ as well.
For every $n_i \geq m^\delta$ and $k \in [d/2]$,
the probablity that $v_{n_i, 1, k}, \dots, v_{n_i, n_i, k}$ are distinct is at least
\[1 \cdot 2\cdot \ldots  \left(1 - \frac{n_i - 1}{n^{10}}\right) > \left(1 - \frac{n_i}{n^{10}}\right)^{n_i} \geq 1 - \frac{n_i^2}{n^{10}} \geq 1 - \frac{1}{n^8}.\]
If $v_{n_i, 1, k}, \dots, v_{n_i, n_i, k}$ are distinct, then $\pi_{n_i, k}$ is a random permutation among all the permutations on $[n_i]$, since all the permutations are constructed with same probability.
Conditioned on this, $\bH_{n_i}$ is a graph sampled from $\mathcal{G}_{n, d}$.
By union bound, with probability $1 - \frac{1}{n^7}$, 
step 2(c) obtain $\bH_{n_i}\sim\mathcal{G}_{n_i, d}$ for every $n_i > m^\delta$.
By Corollary~\ref{cor:random_regular_graph}, if $\bH_{n_i} \sim \mathcal{G}_{n_i, d}$, then 
$\lambda_2(\bH_{n_i}) \geq 4/5$ with probability $1 - \frac{c}{n_i^5}$ for some constant $c \geq 0$.
By union bound, all the $\bH_{n_i}$ constructed satisfying $\lambda_2(\bH_{n_i}) \geq 4/5$
with probability at least 
\[1 - \sum_{\ell = n^\eps}^n\frac{c}{\ell^5} \geq 1 - O\left(\frac{\log n}{n^4}\right).\]
Hence, the algorithm gives us graphs with desirable spectral gap with probability at least $1 - \frac{1}{n^3}$.

In the implementation of step 1, we assign every $n_i \leq m^{\delta}$ to a single machine such that for every machine, 
the sum of $n_i$ assigned to it is at most $O(m^\delta)$. Hence, step 1 can be done in $O(1)$ MPC rounds.

In step 2, for each $n_i$ and $k \in [d/2]$, 
we use $\lceil n_i / m^\delta \rceil$ machines to sample $v_{n_i, j, k}$ for all the $j \in [n_i]$ in $O(1)$ MPC rounds.
Sorting $\{ v_{n_i, 1, k}, \dots, v_{n_i, n_i, k}\}$ can be done in $O(1/\delta)$ MPC rounds on the same machines (see Section~\ref{sec:prelim}). 
Then $\pi_{n_i, k}(j)$ and thus edges of $\bH_{n_i}$ can be computed locally after sort. This concludes the proof.
\end{proof}

\subsection*{Parallel Replacement Product}

We present an MPC implementation of replacement product $G \replacement \HH$, where $\HH = \{H_v : v\in V\}$ is defined as follows: 
For every $v \in V$,  $H_v$ is a copy of $\bH_{d_v}$, where $\bH_{d_v}$ are the $d$-degree regular graphs with $d_v$ vertices returned by $\RGC(m^{\delta},d_{v_1},\ldots,d_{v_n})$.

\begin{lemma}\label{lem:parallel_replacement}
Given a graph $G(V, E)$ and $\bH_{d_v}$ for every $v \in V$, 
there is an MPC algorithm to compute $G \replacement \HH$, where 
$\HH = \{H_v : v\in V\}$ such that $H_v$ is a copy of $\bH_{d_v}$ for every $v \in V$.

	The algorithm can be implemented with $O(m^{1-\delta})$ machines each with $O(m^{\delta})$ memory, and in $O(1/\delta)$ MPC rounds. 
\end{lemma}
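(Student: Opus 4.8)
The plan is to realize the construction of $G \replacement \HH$ as a constant-length pipeline of the parallel sort and search primitives recalled in Section~\ref{sec:prelim}; each such primitive, applied to a list of $O(m)$ items on machines of memory $O(m^\delta)$, runs in $O(\log_{m^\delta} m) = O(1/\delta)$ MPC rounds. Since $G \replacement \HH$ is $(d+1)$-regular on $2m$ vertices with $d = O(1)$, every intermediate list has size $O(m)$, so $O(m^{1-\delta})$ machines always suffice, and the round bound $O(1/\delta)$ will follow once the pipeline is exhibited. The one place that needs thought is this: the edge between the clouds $H_u$ and $H_v$ must join the port $(u,i)$ of $H_u$ to the port $(v,j)$ of $H_v$, where $i$ is the label of the edge $\{u,v\}$ among the $d_u$ edges incident to $u$ and $j$ is its label among the $d_v$ edges incident to $v$. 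So I need a labeling of the incident edges at every vertex that both endpoints of each edge agree on.

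First I would build this labeling. Split each undirected edge $\{u,v\}$ into the two arcs $u\to v$ and $v\to u$, obtaining $2m$ arcs; sort them by source, and within the block of arcs leaving a fixed $u$ hand out consecutive ranks $1,\dots,d_u$, so the arc $u\to v$ gets a rank $i\in[d_u]$ that we take to be the label of $\{u,v\}$ at $u$. The ranked arcs out of $u$ are in bijection with the ports $(u,1),\dots,(u,d_u)$ of $H_u$, so this step simultaneously enumerates the $\sum_v d_v = 2m$ cloud vertices. Then re-sort all arcs using the unordered pair $\{u,v\}$ as the key: the two arcs from the same edge are now adjacent, and a local pass merges them into one record $(u,i,v,j)$. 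Finally, a sort of the $2m$ port symbols $(v,i)$ assigns each a global name $\mathrm{id}(v,i)\in[2m]$ (its rank in the sorted order), and a parallel search rewrites every merged record as the cross-cloud edge $\{\mathrm{id}(u,i),\mathrm{id}(v,j)\}$. This is a constant number of sorts and searches, hence $O(1/\delta)$ rounds.

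Next I would emit the intra-cloud edges. By Lemma~\ref{lem:parallel_regular_graph}, $\RGC$ has already produced for each $v$ the $d$-regular graph $\bH_{d_v}$. If $d_v\le m^\delta$ then $\bH_{d_v}$ sits on a single machine, so each of its edges $(i,j)$ is generated locally and then, via a parallel search against the table of $\mathrm{id}(\cdot,\cdot)$ values, turned into $\{\mathrm{id}(v,i),\mathrm{id}(v,j)\}$; if $d_v > m^\delta$, the graph $\bH_{d_v}$ is already spread over $d\lceil d_v/m^\delta\rceil$ machines, and the same search annotates each edge $(i,j)$ with $\mathrm{id}(v,i)$ and $\mathrm{id}(v,j)$ in place. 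Since $\sum_v \card{E(\bH_{d_v})} = (d/2)\sum_v d_v = O(m)$, all of this fits on $O(m^{1-\delta})$ machines and again costs $O(1/\delta)$ rounds. Concatenating the cross-cloud edges from the previous step with these intra-cloud edges gives exactly the edge set of $G\replacement\HH$ on vertex set $[2m]$, and correctness is then immediate from the definition of the replacement product.

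I expect the only real obstacle to be the consistency issue isolated in the first paragraph — guaranteeing that the port labeling is a genuine bijection at each vertex and that both endpoints of every edge see the same pair of labels; the split-into-arcs construction with its two sorts (by source, then by unordered endpoint pair) is precisely what delivers this. The rest is routine MPC bookkeeping, and the round and memory bounds drop out of the $O(1/\delta)$ cost of each sort/search and the $O(m)$ total size of all intermediate data.
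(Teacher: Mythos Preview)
Your proposal is correct and follows essentially the same approach as the paper: the paper simply presents the algorithm $\PR$ (copy each cloud $H_v$, then for every edge $(u,v)\in E$ add the cross-cloud edge between the $i$-th vertex of $H_u$ and the $j$-th vertex of $H_v$) and declares the correctness and implementation ``straightforward.'' Your write-up spells out the implementation details the paper leaves implicit---in particular, how to obtain the consistent port labeling via sorting arcs by source and then by unordered pair---so it is a more fleshed-out version of the same argument rather than a different route.
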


Lemma~\ref{lem:parallel_replacement} is obtained by the definition of replacement product and the following algorithm. 
\begin{tbox}
	$\PR(G, \set{\bH_{d_v} \text{ for every } v\in V(G)})$.  An algorithm for constructing $G \replacement \HH$. 
	
	\medskip
	
	\textbf{Output:} $H:=G \replacement \HH$.   
	
	\algline
	
	\begin{enumerate}
			\item 
			For every $v \in V(G)$ \textbf{in parallel} set $H_v$ be a copy of $\bH_{d_v}$, and let $H$ be initially $\cup_{v \in V(G)} H_v$.

			\item 
			For every edge $(u, v) \in E$ \textbf{in parallel} 
			where $v$ is $i$-th neighbor of $u$ and $u$ is $j$-th neighbor of $v$ in $G$, 
			add an edge to $H$ between $i$-th vertex of $H_u$ and $j$-th vertex of $H_v$. 
			\item Return $H$
			
	\end{enumerate}
\end{tbox}
The proof of correctness of this algorithm is straightforward. 

\subsection*{Proof of Lemma~\ref{lem:first-step}}

By Lemma~\ref{lem:parallel_regular_graph} and Lemma~\ref{lem:parallel_replacement}, we can compute the replacement product $H:= G \replacement \HH$ where $\HH$ is a family of 
graphs such that for all $v \in V(G)$, $\lambda_2(H_v) \geq 4/5$. By definition of replacement product and since $d = O(1)$, we obtain that $\card{V(H)} = 2m$ and $H$ is $\Delta$-regular
for $\Delta = d+1 = O(1)$. This proves  the first part of the lemma. 

Consider any connected component $G_i$ of $G$ and define $\HH_i := \set{H_v \in \HH : v \in V(G_i)}$. It is immediate to see that the subgraph of $H$ induced on vertices of 
$V(G_i) \times V(H_v)$ for $v \in V(G_i)$ (informally speaking, the vertices added to $H$ because of $G_i$) is exactly $G_i \replacement \HH_i$ which we denote by $H_i$. As replacement product preserves connectivity, 
$H_i$ is a connected component of $H$, hence proving the second part of the lemma. 

Finally, as $H_i = G_i \replacement \HH_i$ and $\lambda_2(H_v) = \Omega(1)$ for all $H_v \in \HH_i$, by Proposition~\ref{prop:replacement-product}, $\lambda_2(H_i) = \Omega(\lambda_2(G_i))$ (recall that $d = O(1)$). 
As such, by Proposition~\ref{prop:mixing-spectral}, mixing time 
\[T_\gamma(H_i) = O\Paren{\frac{\log{(n/\gamma)}}{\lambda_2(H_i)}}= O\Paren{\frac{\log{(n/\gamma)}}{\lambda_2(G_i)}},\]
 for any $\gamma < 1$, concluding the proof of the third part.
 
Implementation details of the algorithm follow immediately from Lemmas~\ref{lem:parallel_regular_graph} and~\ref{lem:parallel_replacement}.
\Qed{Lemma~\ref{lem:first-step}}

\section{Step 2: Randomization}\label{sec:randomization}

We present the second step of our algorithm in this section. Roughly speaking, this step transforms each connected component of the graph
into a ``random graph'' (according to the definition of distribution $\PG$ in Section~\ref{sec:prelim}) on the same set of vertices. Formally,

\begin{mdframed}[hidealllines=false,backgroundcolor=gray!10,innertopmargin=0pt]
\begin{lemma}\label{lem:second-step}
	Suppose $G(V,E)$ is any $n$-vertex $\Delta$-regular graph such that $T_{\gamstar}(G_i) \leq T$ for $\gamstar := n^{-10}$ and for all connected component $G_i$ of $G$.
	There exists an MPC algorithm that given $G$ and integer $T$ computes another graph $H$ with the following properties with high probability:
	\begin{enumerate}
		\item $V(H) = V(G)$, $\card{E(H)} = O(n)$ and each connected component $G_i$ of $G$ corresponds to a connected component $H_i$ of $H$ on $V(H_i) = V(G_i)$. 
		\item The connected component $H_i$ of $H$ is a random graph on $n_i = \card{V(H_i)}$ vertices sampled from the distribution $\distribution{H_i}$ such that $\tvd{\distribution{H_i}}{\PG(n_i,100 \log{n})} = n^{-8}$. 
	\end{enumerate}	
	For any $\delta > 0$, the algorithm can be implemented with $O(T^2 \cdot n^{1-\delta} \cdot \Delta \log^2{n})$ machines each with $O(n^{\delta})$ memory and in $O(\frac{1}{\delta} \cdot \log{T})$ MPC rounds. 
\end{lemma}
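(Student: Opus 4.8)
The plan is to realise the target distribution $\PG(n_i,100\log n)$ essentially by definition: a sample from $\PG(n_i,d)$ is obtained by letting every vertex pick $\lfloor d/2\rfloor$ out-neighbours uniformly at random (with replacement) from its own vertex set, so it suffices to have every vertex $v\in V(G)$ generate $50\log n$ \emph{independent} samples of a uniformly random vertex of the connected component $G_i$ containing $v$, and add the corresponding edges to $H$. The primitive for producing one such sample is a length-$T$ \emph{lazy} random walk from $v$: since $G$ is $\Delta$-regular, the stationary distribution $\pi$ of a lazy walk on each $G_i$ is exactly $\unif(V(G_i))$, and since $T\ge T_{\gamstar}(G_i)$ with $\gamstar=n^{-10}$, the endpoint of such a walk from any start vertex is within total variation distance $n^{-10}$ of $\unif(V(G_i))$; laziness is needed so this holds even when $G_i$ is bipartite. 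Because a walk never leaves its component, the added edges lie entirely inside components of $G$, so each $G_i$ maps to a component $H_i$ of $H$ on the same vertex set and no two components of $G$ are merged; this gives property (1) once the endpoint behaviour above is established.

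The engine behind the round bound is an efficient parallel construction of all these walks — the random-walk data structure recorded in Theorem~\ref{thm:random-walk}, which we establish separately. The idea is recursive doubling, as in PRAM implementations of random walks~\cite{AleliunasKLLR79,KNP92,HalperinZ94}: at level $j=0,1,\dots,\log T$ we maintain, for every vertex, a supply of length-$2^{j}$ lazy walks out of it, and a length-$2^{j+1}$ walk from $v$ is formed by concatenating a level-$j$ walk from $v$ to some vertex $u$ with a level-$j$ walk starting at $u$. Each doubling step is a single parallel ``look up the walk stored at $u$'' operation, implemented via the sort/search primitive of Section~\ref{sec:prelim} in $O(1/\delta)$ rounds, so $\log T$ levels cost $O(\frac1\delta\log T)$ rounds overall. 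The machine and memory usage is governed by how many walk-segments each vertex must store across all levels — a $\mathrm{poly}(T,\log n,\Delta)$ quantity (one factor of $T$ for the number of segments kept per vertex, another because a stored segment can itself expand to length $T$) which, summed over the $n$ vertices, yields the stated $O(T^{2}\cdot n^{1-\delta}\cdot\Delta\log^{2}n)$ machines of memory $O(n^{\delta})$.

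The subtle and, I expect, hardest point is \emph{independence}. Naive doubling couples walks: if two walks both reach a vertex $u$ at level $j$, extending both by the \emph{same} stored level-$j$ walk out of $u$ makes them dependent from then on — and dependence is fatal, since $\PG$ requires the $50\log n$ choices of every vertex, and across all vertices, to be mutually independent. Following the outline of Section~\ref{sec:technical}, this is fixed in two moves. First, regularity of $G$ forbids ``hubs'': among the $\Theta(n\log n)$ length-$T$ walks we run, the number that ever visit a fixed vertex $u$ concentrates around its expectation, which is only $\widetilde{O}(T)$ by regularity (contrast a star, where almost every walk hits the centre), so a union bound gives that \emph{every} vertex is visited at most $O(T\log^{2}n)$ times, with high probability. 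Second, at every level $j$ and every vertex $u$ we pre-sample a \emph{pool} of independent fresh level-$j$ walks out of $u$ whose size is a polynomial strictly larger than $O(T\log^{2}n)$, and whenever a walk must be extended from $u$ it consumes a uniformly random \emph{unused} walk from $u$'s pool; with high probability every extension throughout the execution then uses a distinct, never-before-used, freshly-drawn segment. The delicate part is making this ``distinct fresh segment'' event hold \emph{simultaneously} over all $\Theta(n\log n)$ walks, all $\log T$ levels, and all $n$ vertices while keeping total memory $\widetilde{O}(nT^{2}\Delta)$; a supporting device (also from Section~\ref{sec:technical}) is to have each vertex draw many walks up front and assign them at random to the roles they play in the recursion, which decouples the algorithm's choices from the randomness of the graph.

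Finally, conditioned on the independence event, for each $v$ the $50\log n$ walk endpoints are i.i.d.\ and each is within $n^{-10}$ of $\unif(V(G_i))$ (using $T\ge T_{\gamstar}(G_i)$ and regularity $\Rightarrow\pi$ uniform). Adding these as edges builds, for every component $G_i$, a graph $H_i$ whose law $\distribution{H_i}$ can be coupled with $\PG(n_i,100\log n)$ off an event of probability at most $(\text{number of walks})\cdot n^{-10}+\Pr[\text{independence fails}]\le n^{-8}$, which is property (2) (and $H_i$ is connected with high probability by Proposition~\ref{prop:PG-connected}, as $100\log n\ge c\log n_i$). Property (1) and the resource bounds follow as noted above.
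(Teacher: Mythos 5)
Your reduction of Lemma~\ref{lem:second-step} to the random-walk primitive is exactly the one the paper uses: add (lazy) walks of length $T$ from every vertex, $50\log n$ of them, note that $\Delta$-regularity makes the stationary distribution of each component uniform, observe walks never leave a component (giving property~1), and control the distance to $\PG(n_i,100\log n)$ by summing the per-walk total-variation error plus the probability the independence guarantee fails (giving property~2). One small implementation detail you elide is how the paper makes Theorem~\ref{thm:random-walk} (stated for ordinary walks on a regular graph) yield \emph{lazy} walks: it adds $\Delta$ self-loops per vertex, producing a $2\Delta$-regular graph $\tG$ on which ordinary random walks simulate lazy walks on $G$.

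Where you genuinely diverge from the paper is in the internal sketch of Theorem~\ref{thm:random-walk}. Both you and the paper neutralise the ``hub'' problem by giving each vertex a pool of independently pre-sampled outgoing steps at every level of the doubling recursion. The paper realises this pool concretely as the $2t$ copies of each vertex in each layer of the layered graph $\FG(G,t)$, but then argues only that for any \emph{fixed} starting vertex $\alpha$, the expected number of other paths from $\FV^*_1$ that intersect $\PP_\alpha$ is $\tfrac12$ (Claim~\ref{clm:paths-FG}), so by Markov the walk from $\alpha$ is disjoint from all others with probability at least $\tfrac12$ (Lemma~\ref{lem:independent-paths}); it then runs $\Theta(\log n)$ independent copies of $\SRW$ and uses the $\ParallelMark$ procedure to \emph{detect} which walks actually succeeded. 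You instead propose pools of size roughly $O(T\log^2 n)$ and a \emph{simultaneous} high-probability claim that ``every vertex is visited at most $O(T\log^2 n)$ times,'' so that all $\Theta(n\log n)$ walks are independent at once without amplification or detection. That claim is where the gap is: the walks your algorithm produces are, a priori, jointly correlated through the shared pools, so you cannot invoke Chernoff on ``the number of (independent) walks visiting $u$'' without first establishing the very independence you are trying to prove; one would have to argue about visits entirely through the sampling randomness (as the paper does for a single $\PP_\alpha$) and then union-bound over all vertices and levels, which is a strictly harder concentration statement than the paper ever needs. The paper's constant-probability-plus-amplification route deliberately avoids this: running $\Theta(\log n)$ fresh instances of $\SRW$ and detecting the good ones is both simpler to analyse and suffices for the lemma. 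If you want to keep your ``all-at-once'' variant, you would need to spell out why a Lipschitz/McDiarmid-style argument on the pool choices (not on idealized walks) gives the required concentration.
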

\end{mdframed}

We point out that the choice of constant $100$ in $\PG(n_i,100 \cdot \log{n})$ in Lemma~\ref{lem:second-step} is arbitrary and any sufficiently large constant (say larger than $8$) suffices for our purpose (similarly also for $\gamma^*$). 

To prove Lemma~\ref{lem:second-step}, we design a general algorithm for performing \emph{independent} random walks in the MPC model which can be of independent of interest. 
Let $G(V,E)$ be a $\Delta$-regular graph and $W = \Delta^{-1} \cdot A$ be its random walk matrix (note that this is scalar product with $\Delta^{-1}$ as $G$ is $\Delta$-regular). 
For any vertex $u \in V$, and integer $t \geq 1$, the vector $W^t \cdot \bm{e}_u$ denotes the distribution 
of a random walk of length $t$ starting from $u$ where $\be_u$ is an $n$-dimensional vector which is all zero except for the entry $u$ which is one. We use $\distrw(u,t) = W^{t} \cdot \bm{e}_u$ to denote this distribution.
%We prove the following theorem. 

\begin{theorem}\label{thm:random-walk}
	There exists an MPC algorithm that given any $\Delta$-regular graph $G(V,E)$ and integer $t \geq 1$, outputs a vector $(v_1,\ldots,v_n)$ such that with high probability: 
	\begin{enumerate}
	\item For any $i \in [n]$, $v_i$ is sampled from $\distrw(u_i,t)$, where $u_i$ is the $i$-th vertex in $V$. 
	\item The choice of $v_i$ is \emph{independent} of all other vertices $v_j$. In other words, $(v_1,\ldots,v_n)$ is sampled from the product distribution $\bigotimes_{i=1}^{n} \distrw(u_i,t)$
	\end{enumerate}
%%	outputs a vertex $v_i$ sampled \emph{independently} from
%%	the distribution of a random walk of length $L$ starting from the vertex $u_i$. In other words, it outputs $(v_1,\ldots,v_n)$ sampled from the \emph{product} distribution  $\bigotimes_{i=1}^{n} \distrw(u_i,L)$. 
%%	
%%	\smallskip
%%	\noindent
	For any $\delta > 0$, the algorithm can be implemented with $O(t^2 \cdot n^{1-\delta} \cdot \Delta\log{n})$ machines each with $O(n^{\delta})$ memory and in $O(\frac{1}{\delta} \cdot \log{t})$ MPC rounds. 
\end{theorem}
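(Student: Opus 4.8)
The plan is to build the independent random-walk data structure by a doubling recursion: first produce, for \emph{every} vertex $u\in V$ and with a comfortable amount of redundancy, many samples of length-$1$ walks (i.e.\ uniformly random neighbors), then repeatedly ``stitch'' two half-length walks together to double the walk length, so that after $\lceil \log t\rceil$ levels we have length-$t$ walks. At level $\ell$ we maintain for every vertex a \emph{pool} of $R_\ell$ independent samples from $\distrw(u,2^\ell)$; to go from level $\ell$ to level $\ell+1$, each vertex $u$ takes one of its pooled length-$2^\ell$ walks ending at some vertex $w$, and appends a \emph{fresh} length-$2^\ell$ walk drawn from $w$'s pool. The key point, and the reason we need $G$ to be $\Delta$-regular, is that naively two walks that land on the same intermediate vertex $w$ would be forced to reuse the \emph{same} continuation from $w$, destroying independence. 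We avoid this by giving each vertex a pool that is large enough that, with high probability, no vertex $w$ is requested as a ``continuation source'' more times than the number of fresh samples in its pool; each request is then served by a distinct, hence independent, sample. I would implement each stitching level as a constant number of parallel sort/search rounds (Section~\ref{sec:prelim}): represent each partial walk as a (source, current-endpoint) pair, sort the fresh-sample pools by vertex id, sort the requests by required source vertex, and match them up by a parallel search.

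The heart of the argument is the pool-size bookkeeping. I would set the pool size at level $\ell$ to be $R_\ell = \Theta(\log n)$ times an upper bound on the number of times any fixed vertex can be requested at that level. Because $G$ is $\Delta$-regular, a single step of the walk hits any particular vertex with probability at most $1$ (when you are at one of its $\le \Delta$ neighbors) and, crucially, the uniform stationary-type spreading in a regular graph means that across all the $n\cdot R_{\ell+1}$ partial walks being extended, the number that currently sit at any fixed $w$ is, with high probability, $O(R_{\ell+1} + \log n)$ by a Chernoff bound (each of the $R_{\ell+1}$ walks of a given source has the same endpoint distribution, and for regular graphs that distribution puts mass $\le$ ... on any vertex; summing and concentrating). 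Thus taking $R_\ell$ a fixed $\poly\log n$ for all levels, and a total redundancy factor $\widetilde O(1)$, suffices; running $\Delta$-many independent copies from the start (to handle the union bound over all $n$ target vertices and all $O(\log t)$ levels) gives the claimed machine count $O(t^2\cdot n^{1-\delta}\cdot\Delta\log n)$ — the $t^2$ coming from the fact that partial walks at the top levels have length $\Theta(t)$ and we store $\poly\log$ of them per vertex, and a $\log_s(\cdot)=O(1/\delta)$ factor per level from sort/search on machines of memory $n^\delta$.

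Correctness of the distribution is then a short induction: if at level $\ell$ the pool entries for each vertex are i.i.d.\ samples from $\distrw(u,2^\ell)$ and, moreover, \emph{jointly across vertices} the pools are mutually independent, then on the high-probability event that every continuation request is served by a never-before-used fresh sample, the stitched walks at level $\ell+1$ are, by the Markov property, i.i.d.\ samples from $\distrw(u,2^{\ell+1})$ and again mutually independent across vertices. The base case $\ell=0$ is immediate since a length-$1$ walk from $u$ is a uniformly random neighbor and these are drawn with fresh randomness for each pool slot. To get exactly length $t$ when $t$ is not a power of two, I would stitch a length-$2^{\lfloor\log t\rfloor}$ walk with a length-$(t-2^{\lfloor\log t\rfloor})$ walk, the latter obtained by truncating a pooled walk of the next power-of-two length; truncation of an i.i.d.\ pool of walks of a given length yields an i.i.d.\ pool of the truncated length, so the induction goes through. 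Finally, output $v_i$ as the endpoint of one (say the first) length-$t$ walk in vertex $u_i$'s level-$\lceil\log t\rceil$ pool; by the invariant these are independent across $i$ with the correct marginals.

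\textbf{Main obstacle.} The delicate part is the independence guarantee, i.e.\ controlling the ``collision'' phenomenon where many partial walks want to extend through the same vertex. The whole reason the theorem restricts to regular graphs (and the reason Step~1 regularizes) is that in a non-regular graph a high-degree ``hub'' is hit by a large fraction of all walks, so no reasonable pool size prevents reuse of a continuation; in the regular case one must still argue, via a careful Chernoff/union bound over all $O(n\log t)$ (vertex, level) pairs, that a $\poly\log n$ pool always suffices, and one must be slightly careful that the event one conditions on does not itself bias the distribution of the served samples — this is handled by the standard trick of drawing all pool randomness up front and revealing it only as needed, so that conditioning on ``no pool overflowed'' is conditioning on a global event that is independent of which specific samples end up being used for which request.
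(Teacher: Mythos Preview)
Your approach is genuinely different from the paper's, and the core idea---doubling with per-vertex pools so that no continuation is ever reused---is a reasonable alternative.  But there is a concrete arithmetic error that breaks the analysis as written.

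You claim a fixed $\polylog{(n)}$ pool size $R_\ell$ suffices at every level.  This cannot hold: each level-$(\ell{+}1)$ walk consumes \emph{two} level-$\ell$ walks (one first-half from its own vertex, one second-half from the endpoint's pool).  Summing over all vertices, the total number of level-$\ell$ samples used is at least twice the total number of level-$(\ell{+}1)$ walks, so $n R_\ell \ge 2\,n R_{\ell+1}$ and hence $R_0 \ge 2^{\log t}\,R_{\log t} = t$.  No Chernoff bound can circumvent this---it is pigeonhole.  With $R_0=\Theta(t\cdot\polylog{(n)})$ your scheme can plausibly be made to work, but then your explanation of the $t^2$ factor (``walks of length $\Theta(t)$, $\polylog$ of them per vertex'') is off: storing only endpoints you would get $\Ot(nt)$ total words, and you never justify the extra factor of $t$, nor why one should run ``$\Delta$-many independent copies''.

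The paper takes a different route that avoids the assignment problem altogether.  It builds a static \emph{layered graph} with $2t$ copies of each vertex in each of $t{+}1$ layers (this is where the $t^2$ comes from), gives every copy a single random outgoing edge, and then runs pointer-doubling on this fixed structure.  Two paths that hit the same copy \emph{do} merge---so rather than preventing collisions, the paper \emph{detects} after the fact which paths were vertex-disjoint (the $\ParallelMark$ routine).  The key lemma is a short counting-plus-Markov argument: for a path starting from the distinguished first copy of some vertex, the expected number of other distinguished paths that ever touch it is $\sum_{j=2}^{t+1} 1/(2t) = 1/2$, so each path is ``clean'' with probability $\ge 1/2$; running $\Theta(\log n)$ independent instances of the whole structure then yields one clean walk per vertex with high probability.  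Your Chernoff-based hub-load bound is replaced entirely by this counting argument, and your request--assignment step is replaced by the detection step.
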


\subsection{Proof of Theorem~\ref{thm:random-walk}: The Random Walk Algorithm}
We start by presenting a parallel algorithm for proving Theorem~\ref{thm:random-walk} without getting into the exact details of its implementation, and then present an MPC implementation of this parallel algorithm. 
We start by introducing a key data structure in our algorithm. 
%Finally, we use this algorithm to conclude the proof of Lemma~\ref{lem:second-step}.

\subsection*{Layered Graph}\label{sec:layered-graphs}
A key component of our algorithm in Theorem~\ref{thm:random-walk} is the notion of a \emph{layered graph} which we define in this section and present its main properties. 
%To any graph $G(V,E)$ and integer $t \geq 1$, we associate a directed graph referred to as the \emph{layered graph} of $G$ and denoted by $\FG(G,L)$.  

\begin{definition}[Layered Graph]
	For a graph $G(V,E)$ and integer $t \geq 1$, the layered graph $\FG(G,t)$ of $G$ is defined as the following \emph{directed} graph: 
	\begin{enumerate}
		\item \textbf{Vertex-set:} The vertex-set $\FV$ of $\FG$ is the set of all triples $(u,i,j) \in V \times [2t] \times [t+1]$. 
		\item \textbf{Edge-set:} There is a directed edge $(u,i,j) \rightarrow (v,\ell,k)$ in $\FG$ whenever $(u,v) \in E$ and $k = j+1$ for all choice of $i$ and $\ell$. 
	\end{enumerate}
	Throughout the paper, we use greek letters to denote the vertices in the layered graph.
\end{definition}

\noindent
 For any vertex $\alpha = (u,i,j) \in \FV$, we define $v(\alpha) = u \in V$. 
We partition the set of vertices $\FV$ into $t+1$ sets $\FV_1,\ldots,\FV_{t+1}$ where the $j$-th set consists of all vertices $(u,i,j)$ for $u \in V$ and $i \in [2t]$. We refer to each set $\FV_j$ as a \emph{layer} of the graph $\FG$. 
It is immediate to see that $\FG$ consists of $t+1$ layers and all edges in $\FG$ are going from one layer to the next.  Additionally, any vertex $u \in V$, contains $2t$ ``copies'' in every layer. As such, any edge in $E$ is mapped to $t$ directed bi-cliques on
the $2\cdot 2t$ copies of its endpoints between every two consecutive layers of $\FG$. 
%Finally, one can easily verify that $\FG$ consists of $O(n \cdot L^2)$ vertices and $O(n \cdot d \cdot L^3)$ edges.  

\emph{\underline{Paths and walks in $\FG$ and $G$:}} The main property of the layered graph $\FG$ that we use is that any path starting from $\FV_1$ and ending in $\FV_{t+1}$ in $\FG$ corresponds to a walk of length $t$ (but not necessarily a path) in $G$. 
More formally, consider a path $\PP_\alpha = \alpha_1,\alpha_2,\ldots,\alpha_{t+1}$ where $\alpha = \alpha_1$ belongs to  $\FV_1$. We  can associate to $\PP_{\alpha}$ a walk
of length $t$ in $G$ starting from the vertex $v = v(\alpha)$, denoted by $W(\PP_{\alpha})$, in a straightforward way by traversing the vertices $u_i = v(\alpha_i)$ for $\alpha_i \in \PP_{\alpha}$.

\paragraph{Sampled layered graph.} In our algorithm, we work with a random subgraph of the layered graph defined as follows:  
For any vertex in $\FG$ \emph{independently}, we pick {exactly} one of its \emph{outgoing} edges \emph{uniformly at random} to form a subgraph $\FGS$, referred to as the \emph{sampled layered graph}. 

As the out-degree of any vertex in $\FGS$ is exactly one, starting from any vertex $\alpha \in \FV_1$, there is a \emph{unique} path $\PP_{\alpha}$ of length $t$ in $\FGS$ from $\alpha$ to some vertex $\beta \in \FV_{t+1}$.
It is easy see that a path $\PP_\alpha$ in $\FGS$ corresponds to a \emph{random} walk of length $t$ in $\FG$ starting from the vertex $v(\alpha)$ and ending in $v(\beta)$ (the randomness comes from the choice of $\FGS$).
We have
the following key observation. 

\begin{observation}\label{obs:multi-ind-layered-graph}
	Suppose $\PP_{\alpha_1},\ldots,\PP_{a_k}$ are $k$ \emph{vertex disjoint paths} from $\FV_1$ to $\FV_{t+1}$ in $\FGS$.  Then, the associated walks $W(\PP_{\alpha_1}),\ldots,W(\PP_{\alpha_k})$ 
	form $k$ \emph{independent} random walks of length $t$ in $G$. 
\end{observation}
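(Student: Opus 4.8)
The plan is to argue that the construction of the sampled layered graph $\FGS$ provides, for each starting vertex, a walk whose only randomness is the sequence of independent uniform edge-choices made at the vertices it actually visits; two vertex-disjoint paths therefore depend on two disjoint batches of these choices, which are independent by construction, so the associated walks are independent. First I would recall precisely what randomness is used: for every vertex $\alpha \in \FV$ of the layered graph, we draw one outgoing edge of $\alpha$ uniformly at random, and these draws are mutually independent across all of $\FV$. Since every vertex in $\FGS$ has out-degree exactly one, a path $\PP_{\alpha} = \alpha_1, \ldots, \alpha_{t+1}$ with $\alpha_1 = \alpha \in \FV_1$ is determined deterministically by the edge-choices at $\alpha_1, \ldots, \alpha_t$ (the choice at $\alpha_{t+1}$ is irrelevant, and is not in $\FV$ if we only draw for vertices in layers $1,\dots,t$; in any case it plays no role).

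Next I would make the key observation that, conditioned on $\alpha$ being fixed, the edge-choice at $\alpha_j$ is a uniformly random neighbor $v(\alpha_{j+1})$ of $v(\alpha_j)$ in $G$, and — crucially — the choices at $\alpha_1,\ldots,\alpha_t$ are mutually independent because these are $t$ \emph{distinct} vertices of $\FV$ (they lie in distinct layers $\FV_1,\ldots,\FV_t$). Hence the walk $W(\PP_{\alpha})$, read off as $v(\alpha_1), v(\alpha_2), \ldots, v(\alpha_{t+1})$, takes each step to a uniformly random neighbor of the current vertex, with all steps independent: this is exactly a random walk of length $t$ in $G$ started at $v(\alpha)$, i.e. distributed as $\distrw(v(\alpha),t)$. (One can phrase this as: walk down the unique $\FGS$-path and record $v(\cdot)$; the independence of layers means no step reuses an earlier coin.)

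For the full statement, let $\PP_{\alpha_1},\ldots,\PP_{\alpha_k}$ be vertex-disjoint in $\FGS$. Write $S_\ell \subseteq \FV$ for the set of the first $t$ vertices of $\PP_{\alpha_\ell}$ (those at which an edge-choice is consumed). Vertex-disjointness of the paths gives $S_\ell \cap S_{\ell'} = \emptyset$ for $\ell \neq \ell'$. Each walk $W(\PP_{\alpha_\ell})$ is a (deterministic) function of the edge-choices $\{e_\beta : \beta \in S_\ell\}$ alone; since the families $\{e_\beta : \beta \in S_\ell\}$ over $\ell \in [k]$ are pairwise disjoint subsets of the globally independent collection $\{e_\beta : \beta \in \FV\}$, they are mutually independent, and therefore so are $W(\PP_{\alpha_1}),\ldots,W(\PP_{\alpha_k})$. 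Combined with the previous paragraph identifying each marginal as $\distrw(v(\alpha_\ell),t)$, this shows the joint law is the product $\bigotimes_{\ell=1}^k \distrw(v(\alpha_\ell),t)$, as claimed.

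The only genuinely delicate point — the ``main obstacle'' — is being careful about \emph{what is conditioned on}. The starting vertices $\alpha_1,\dots,\alpha_k$ in layer $1$ are fixed (they are chosen independently of the $e_\beta$'s), but if one instead thinks of the paths as a \emph{random} set of vertex-disjoint paths discovered by some procedure, then ``$\PP_{\alpha_\ell}$ is vertex-disjoint from $\PP_{\alpha_{\ell'}}$'' is itself an event depending on the coins, and conditioning on it could in principle distort the distribution of the walks. The clean way around this, which I would adopt, is to treat the statement as: for any \emph{fixed} tuple of starting vertices in $\FV_1$ whose resulting $\FGS$-paths happen to be vertex-disjoint, the walks are independent — and more robustly, to observe that on the event of disjointness the walks are measurable w.r.t.\ disjoint coin-blocks $S_1,\dots,S_k$ that are \emph{themselves determined by the walks}, so a short ``peeling''/exposure argument (reveal $\PP_{\alpha_1}$, note its coins are disjoint from everything not yet revealed, reveal $\PP_{\alpha_2}$, etc.) yields the product law without any harmful conditioning. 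Everything else is bookkeeping about the correspondence between $\FGS$-paths and $G$-walks already set up before the observation.
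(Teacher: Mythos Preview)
Your proposal is correct and follows the same approach as the paper, which justifies the observation in a single sentence: vertex-disjoint paths in $\FGS$ do not share any randomness in the choice of their neighbors. You have simply fleshed out the details and been more careful about the conditioning subtlety (the sets $S_\ell$ being themselves random), which the paper leaves implicit.
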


\noindent
The justification for Observation~\ref{obs:multi-ind-layered-graph} is the simple fact that vertex disjoint paths in $\FGS$ do not share any randomness in choice of their neighbors.

In the rest of this section, we show that a sampled layered graph contains $\Omega(n)$ vertex disjoint paths from the first layer to the last one with high probability. Intuitively, this allows us to ``extract'' $\Omega(n)$ independent random walks 
from a sampled layered graph. We then use this fact in the next section to design our algorithm for simulating independent random walks in $G$.

Define $\FV^*_1 \subseteq \FV_1$ as the set of all vertices $(v,1,1) \in \FV_1$ for $v \in V$. We prove that the $\Omega(n)$ vertex disjoint paths mentioned above can all be starting from vertices in $\FV^*_1$. Formally,
\begin{lemma}\label{lem:independent-paths}
	For any vertex $\alpha \in \FV^*_1$, $\PP_{\alpha}$ in $\FGS$ is vertex disjoint from $\PP_{\beta}$ for all $\beta\neq \alpha \in \FV^*_1$, with probability at least $1/2$. 
\end{lemma}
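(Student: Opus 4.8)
The plan is to turn the statement into a first-moment estimate. Fix $\alpha = (v,1,1) \in \FV^*_1$, and for each $w \neq v$ let $X_w$ be the indicator of the event that $\PP_\alpha$ and $\PP_{(w,1,1)}$ share a vertex of $\FGS$. By the union bound it suffices to show $\sum_{w\neq v}\Pr[X_w = 1] \le \frac{1}{2}$, and the whole argument then rests on two ingredients: a clean bound on the pairwise meeting probability $\Pr[X_w=1]$ in terms of random-walk collision probabilities in $G$, and a summation over $w$ that exploits the $\Delta$-regularity of $G$.

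For the pairwise bound, I would condition on the \emph{first} layer $j$ at which $\PP_\alpha$ and $\PP_{(w,1,1)}$ coincide; since $v \neq w$ they cannot coincide in $\FV_1$, so $j \ge 2$. The key structural point is that up to layer $j$ the two paths occupy \emph{disjoint} vertex sets of $\FGS$, so the out-edges sampled along the two prefixes are mutually independent, and therefore the probability that $\PP_\alpha$ and $\PP_{(w,1,1)}$ traverse two prescribed vertex-disjoint prefixes is exactly the product of the two individual prefix probabilities. Writing the coincidence vertex as $\mu=(x,i,j)\in\FV_j$, the predecessor of $\PP_\alpha$ (resp. of $\PP_{(w,1,1)}$) on layer $j-1$ has its sampled out-edge equal to $\mu$ with probability exactly $1/(\Delta\cdot 2t)$ when that edge exists, i.e. when the $G$-vertex of the predecessor lies in $N(x)$. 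Summing over the (at most $\Delta$) possible $G$-vertices of each predecessor and using $\Delta$-regularity turns ``the walk is at a neighbor of $x$ after $j-2$ steps'' into ``the walk is at $x$ after $j-1$ steps'' (the $\Delta$ cancels the $\Delta$ in $1/(\Delta\cdot 2t)$), while the copy coordinate of each predecessor, being uniform in $[2t]$ and independent of everything else, supplies the remaining factor $1/(2t)$; finally there are $2t$ choices for the copy coordinate $i$ of $\mu$. I therefore expect
\[
\Pr[\text{first coincidence on layer } j] \;\le\; \frac{1}{2t}\,\langle\, \distrw(v,j-1),\, \distrw(w,j-1)\,\rangle ,
\]
where $\langle\cdot,\cdot\rangle$ is the inner product and $\distrw(v,j-1) = W^{j-1}\be_v$ (with $W=\Delta^{-1}A$) is the length-$(j-1)$ walk distribution, so the right-hand side is the probability that two independent length-$(j-1)$ walks from $v$ and $w$ collide. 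Summing over $j = 2,\dots,t+1$ yields $\Pr[X_w=1] \le \frac{1}{2t}\sum_{k=1}^{t}\langle \distrw(v,k),\distrw(w,k)\rangle$.

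For the summation over $w$, I would swap the order of summation,
\[
\sum_{w\neq v}\Pr[X_w=1] \;\le\; \frac{1}{2t}\sum_{k=1}^{t}\sum_{w\in V}\langle\, \distrw(v,k),\distrw(w,k)\,\rangle ,
\]
and use that, because $G$ is $\Delta$-regular, $W$ is symmetric and doubly stochastic; hence $\sum_{w}\distrw(w,k) = W^k\ones = \ones$ and $\langle \distrw(v,k),\ones\rangle = 1$ for every $k$, so each inner sum equals $1$ and the whole expression is at most $\frac{1}{2t}\cdot t = \frac{1}{2}$. Together with the union bound this gives the lemma.

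The step I expect to be the main obstacle is making the ``first coincidence'' decomposition fully rigorous: one must justify that the probability of a pair of vertex-disjoint $\FGS$-prefixes factorizes as the product of their individual probabilities (this follows since distinct vertices of $\FGS$ choose their out-edges independently), and then carefully carry out the two bookkeeping sums — over the $G$-vertex of the predecessor and over the $2t$ copies of the coincidence vertex. Everything else (the union bound and the identity $\sum_w \distrw(w,k) = \ones$ for regular $G$) is routine.
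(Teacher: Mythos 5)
Your proof is correct, and it follows a genuinely different decomposition from the paper's. The paper fixes the path $\PP_\alpha$ and applies a one-sided Markov bound to the expected total number of ``hits'' $X = \sum_{j}X_j$, where $X_j$ counts paths from $\FV^*_1$ landing on $\alpha_j$; the expectation $\Ex[X_j]=1/(2t)$ is obtained from the explicit path-count $P_j = \Delta^{j-1}(2t)^{j-2}$ (Claim~\ref{clm:paths-FG}) times the per-path survival probability $(\Delta\cdot 2t)^{-(j-1)}$. You instead take a union bound over the starting vertices $w$, condition on the \emph{first} layer of coincidence so that the two prefixes occupy disjoint vertices of $\FGS$ (which is what justifies the product formula — the same observation the paper exploits in a different guise), and phrase the per-pair bound as a random-walk collision probability $\frac{1}{2t}\langle W^{j-1}\be_v,\,W^{j-1}\be_w\rangle$. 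The closing step, $\sum_w W^k\be_w = W^k\ones = \ones$, is where you use $\Delta$-regularity, exactly where the paper uses it inside $P_j$. Your argument thus replaces the combinatorial path count with the algebraic doubly-stochastic identity; both routes are first-moment bounds that come out to exactly $1/2$ (each with the same harmless slack: the paper over-counts by including the path from $\alpha$ itself, you over-count by including $w=v$). Your ``first coincidence'' decomposition also avoids the double-counting present in the paper's $X = \sum_j X_j$ (once two paths merge they stay merged and contribute to every subsequent $X_j$), although neither proof needs the tighter count. One small wording nit: the expression $\frac{1}{2t}\langle \distrw(v,j-1),\distrw(w,j-1)\rangle$ is $\frac{1}{2t}$ times (not equal to) the collision probability of two independent length-$(j-1)$ walks in $G$; the extra $1/(2t)$ is the probability of agreeing on the copy coordinate, which your bookkeeping correctly includes.
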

\noindent
We emphasize that in Lemma~\ref{lem:independent-paths}, the paths starting from $\FV^*_1$ are \emph{only} guaranteed to be vertex disjoint with constant probability from other paths starting from $\FV^*_1$ and not all of $\FV_1$.  Before getting into
the proof of Lemma~\ref{lem:independent-paths}, we prove the following auxiliary claim regarding the number of paths of certain lengths in $\FG$ (not in $\FGS$). 

\begin{claim}\label{clm:paths-FG}
	For any layer $j \in [t+1]$ and any vertex $\alpha \in \FV_j$, the number of paths in $\FG$ that start from some vertex in $\FV^*_1$ and end in vertex $\alpha$ is $P_{j} = \paren{\Delta^{j-1} \cdot (2t)^{j-2}}$. 
\end{claim}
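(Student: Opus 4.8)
The plan is to count these paths directly, by observing that a path in $\FG$ factors cleanly into two independent pieces: the walk it traces in $G$, and the choice of ``copy indices'' it uses in the intermediate layers. Fix the target vertex $\alpha = (w, i_0, j) \in \FV_j$. Since every edge of $\FG$ goes from some layer $p$ to layer $p+1$, a path ending at $\alpha$ and starting in $\FV_1$ must pass through exactly one vertex $(v_p, a_p, p)$ in each layer $p = 1, \dots, j$, with $v_j = w$ and $a_j = i_0$; and if it starts in $\FV^*_1$ then in addition $a_1 = 1$. The key point is that the edge rule of $\FG$ constrains only the $G$-coordinates: $(v_p, a_p, p) \to (v_{p+1}, a_{p+1}, p+1)$ is an edge precisely when $(v_p, v_{p+1}) \in E$, regardless of the copy indices. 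Hence such a path is in bijection with a pair consisting of a walk $v_1, v_2, \dots, v_j$ of length $j - 1$ in $G$ ending at $w$, together with a completely arbitrary tuple $(a_2, \dots, a_{j-1}) \in [2t]^{\,j-2}$ of copy indices for the $j - 2$ layers strictly between the first and the last.

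It then remains to count each factor. The number of copy-index tuples is immediately $(2t)^{j-2}$. For the walks, I would invoke $\Delta$-regularity: the number of length-$\ell$ walks in $G$ ending at any fixed vertex is exactly $\Delta^{\ell}$ — reverse the walk and note that each of the $\ell$ backward steps has exactly $\Delta$ choices, or equivalently $\mathbf{1}^{\top} A^{\ell} \be_w = \Delta^{\ell}$ since $A \mathbf{1} = \Delta \mathbf{1}$. Taking $\ell = j - 1$ and multiplying the two counts yields $P_j = \Delta^{j-1} (2t)^{j-2}$. The same tally can instead be organized as an induction on $j$: the base case $j = 2$ just says that the in-neighbours of $\alpha$ lying in $\FV^*_1$ are the $\Delta$ vertices $(v,1,1)$ with $v$ adjacent to $w$, and the inductive step sums $P_j$ over the $\Delta \cdot 2t$ possible penultimate vertices $(v, a, j)$ with $v$ adjacent to $w$ and $a \in [2t]$, each extending uniquely to a path ending at $\alpha$.

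I do not expect a genuine obstacle here; the only points requiring care are making the bijection precise — in particular verifying that the intermediate copy indices are genuinely unconstrained and never interact with the adjacency requirement — and fixing the boundary convention. The stated formula is meant for $j \geq 2$: for $j = 1$ it degenerates, since the only path from $\FV^*_1$ to a vertex $\alpha \in \FV_1$ is the trivial one-vertex path, which exists iff $\alpha \in \FV^*_1$. I would therefore state the claim for $j \geq 2$ (the only range actually used in the proof of Lemma~\ref{lem:independent-paths}), and read the exponent $j - 2$ on $(2t)$ as counting the $j - 2$ intermediate layers $2, 3, \dots, j-1$.
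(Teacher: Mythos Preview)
Your proposal is correct. The paper's proof is organized slightly differently: it first counts \emph{all} paths from $\FV_1$ (not just $\FV^*_1$) to $\alpha$ by induction on the in-degree $\Delta\cdot 2t$, obtaining $(\Delta\cdot 2t)^{j-1}$, and then observes that starting vertices $(u,i,1)$ come in groups of $2t$ (one per copy index $i$) with identical neighborsets, so exactly a $1/(2t)$ fraction of these paths originate in $\FV^*_1$. Your bijection --- factoring each path as a length-$(j-1)$ walk in $G$ ending at $w$ times a free tuple of intermediate copy indices --- is arguably cleaner, as it bakes in the constraint $a_1=1$ from the outset and makes the role of $\Delta$-regularity (that the number of length-$\ell$ walks ending at any fixed vertex is $\Delta^\ell$) explicit rather than implicit in the induction. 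Both arguments are elementary and equivalent; yours also correctly flags that the formula is only meaningful for $j\geq 2$, which the paper leaves unremarked.
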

\begin{proof}
	Let $\alpha = (v,i,j)$ be in layer $j$. Since $G$ is $\Delta$-regular, $v \in V$ has exactly $\Delta$ neighbors in $V$. By construction of $\FG$, this means that $v$ has $\Delta \cdot (2t)$ neighbors in $\FV_{j-1}$ and hence there
	 are $\Delta \cdot (2t)$ paths of length $1$ that end up in $\alpha$.
	Similarly, the starting point of any of these paths has exactly $\Delta \cdot (2t)$ neighbors in $\FV_{j-2}$ and hence there are $(\Delta \cdot 2t)^2$ paths of length $2$ that can end up in $\alpha$. 
	Continuing this inductively, we obtain that there are $\paren{\Delta \cdot (2t)}^{j-1}$ paths of length $j$ that can reach the vertex $\alpha$. By the layered structure of the graph $\FG$, it is clear that all these paths need to start from a vertex in $\FV_1$.
	
	Furthermore, if $(u,i,1)$ (for some $u \in V$ and $i \in [2t]$) is starting point one of these paths, then for all $\ell \in [2t]$, $(u,\ell,1)$ would also be a starting point of one such path (this is because neighborset of all vertices $(u,\ell,1)$ is the same). 
	As such, exactly $1/(2t)$ fraction of these starting points belong to $\FV^*_1$ and hence there are $P_{j}:= \paren{\Delta^{j-1} \cdot (2t)^{j-2}}$ paths in $\FG$ that start from a vertex in $\FV^*_1$ and end in vertex $\alpha$.
	\Qed{Claim~\ref{clm:paths-FG}}

\end{proof}

\begin{proof}[Proof of Lemma~\ref{lem:independent-paths}]
	Let $\PP_\alpha = \alpha_1,\alpha_2,\ldots,\alpha_{t+1}$ where $\alpha_1 = \alpha$ and each $\alpha_j$ belongs to $\FV_j$ for $j > 1$. We define the following $t+1$ random variables $X_1,\ldots,X_{t+1}$, where $X_j$ counts the number
	of paths that start from a vertex $\beta \neq \alpha \in \FV^*_1$ and contain vertex $\alpha_j$ (as their $j$-th vertex). In other words, $X_j$ counts the number of paths that ``hit'' $\PP_\alpha$ in layer $\FV_j$. 
	
	Clearly, $X_1 = 0$. For any $j > 1$, we further define indicator random
	variables $Y_{j,1},Y_{j,2},\ldots,Y_{j,P_j}$ where $P_j$ (the quantity bounded in Claim~\ref{clm:paths-FG}) 
	is the number of paths that start from $\FV^*_1$ and end in $\alpha_j$ in $\FG$: for all $i \in [P_j]$, $Y_{j,i} = 1$ iff the $i$-th path (according to any arbitrary
	ordering) is fully appearing in $\FGS$ as well. Clearly, $X_{j} = \sum_{i} Y_{j,i}$. Hence, by linearity of expectation,	
	\begin{align}
		\Ex\bracket{X_j} &= \sum_{i=1}^{P_j} \Pr\paren{Y_{j,i} = 1} = \card{P_j} \cdot \paren{\frac{1}{\Delta \cdot (2t)}}^{j-1} \Eq{Claim~\ref{clm:paths-FG}} \frac{1}{2t}. \label{eq:X_j-expectation}
	\end{align}
	The second equality above is because in $\FGS$, each edge in the path has probability of $\frac{1}{\Delta \cdot (2t)}$ to appear (as out-degree of any vertex in $\FG$ is $\Delta \cdot (2t)$ and we are picking one of these edges uniformly at random
	in $\FGS$; moreover, the edges of a path appear independently in $\FGS$).
	
	Finally, notice that $X := \sum_{j=1}^{t+1} X_j$ counts the total number of paths starting from vertices in $\FV^*_1$ that can ever ``hit'' $\PP_\alpha$ in any layer.  Hence, $\Ex\bracket{X} = {1}/{2}$ by Eq~(\ref{eq:X_j-expectation}) (recall that $X_1 = 0$) and by Markov bound, 
	$\Pr\paren{X = 1} \leq {1}/{2}$. This implies that with probability at least $1/2$, $\PP_\alpha$ is vertex disjoint from any other path starting from a vertex in $\FV^*_1$. 
	\Qed{Lemma~\ref{lem:independent-paths}}
	
\end{proof}

\subsection*{A Parallel Random Walk Algorithm}

We now present a parallel algorithm for performing independent random walks of fixed length from every vertex of the graph. We start by presenting an algorithm with a weaker guarantee: in this algorithm only $\Omega(n)$ vertices are able to achieve a \emph{truly} independent random walk destination; moreover, these vertices are unknown to the algorithm. 
We then present a subroutine for detecting these $\Omega(n)$ vertices. Finally, we combine these two subroutines to obtain our final algorithm. 

Recall that for any vertex $u \in V(G)$ and integer $t \geq 1$, $\distrw(u,t)$ is the distribution of a random walk of length $t$ from $u$. We present the following algorithm. 

\begin{tbox}
	$\SRW(G,t)$.  An algorithm for performing a random walk of length $t$ from every vertex in a given graph $G$. 
	
	\medskip
	
	\textbf{Output:} For any vertex $u_i \in V(G)$, a vertex $v_i \in V(G)$ such that $v_i \sim \distrw(u_i,t)$.   
	
	\algline
	
	\begin{enumerate}
%		\item For $i=1$ to $(4\log{n})$ iterations \textbf{in parallel} do: 
%		\begin{enumerate}
			\item Randomly sample a sampled subgraph $\FGS$ from the layered graph $\FG(G,t)$.
			\begin{enumerate}
				\item Set $\FV = V(G) \times [2t] \times [t+1]$, and distribute the vertices of $\FV$  to all the machines such that 
				each machine contains $O(n^\delta)$ vertices. 
				\item For every vertex $\alpha = (v, i, j) \in \FV$ such that $j \leq t$ \textbf{in parallel} independently and uniformly sample a number $n_\alpha$ from $[\Delta]$ and $i_\alpha$ from $[2t]$.
				\item Set $\FGS$ to be empty initially.
				\item For every vertex $\alpha  = (v, i, j) \in \FV$ such that $j \leq t$  \textbf{in parallel} set $v_\alpha$ to be $n_\alpha$-th neighbor of $v$ in $G$, and add an edge from $\alpha$ to $(v_\alpha, i_\alpha, j+1)$ to $\FGS$.
			\end{enumerate}
			\item For any vertex $\alpha \in \FGS$, define $N_0(\alpha) = \beta$ where $(\alpha,\beta) \in \FGS$ is the (only) outgoing edge of $\alpha$ in $\FGS$ (define $\beta = \perp$ if $\alpha$ belongs to $\FV_{t+1}$ and hence has no outgoing edge). 
			\item\label{for:srw} For $i=1$ to $\log{t}$ phases: For every $\alpha \in \FGS$ \textbf{in parallel} let $N_{i}(\alpha) = N_{i-1}(N_{i-1}(\alpha))$ (assuming $N_{i-1}(\perp) = \perp$).
			\item For any $\alpha \in \FV^*_1$, return $v = v(N_{\log{t}}(\alpha))$ as the target of the vertex $u = v(\alpha)$ (recall that $\FV^*_1$ is the set of all vertices $(u,1,1) \in \FV$ for $u \in V(G)$).
%		\end{enumerate}
	\end{enumerate}
\end{tbox}

We first have the following simple claim. 

\begin{claim}\label{clm:srw-target}
	For any vertex $\alpha \in \FV^*_1$ of $\FGS$, $N_{\log{t}}(\alpha)$ is the endpoint of the path $\PP_\alpha$ in $\FGS$.  
\end{claim}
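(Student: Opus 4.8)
The plan is to prove by induction on $i$ that for every $\alpha \in \FGS$, the vertex $N_i(\alpha)$ equals the vertex reached from $\alpha$ by following exactly $2^i$ outgoing edges in $\FGS$ (or $\perp$ if the path runs off the end of the layered graph before $2^i$ steps). Recall $\FGS$ has out-degree exactly one at every vertex that is not in the last layer $\FV_{t+1}$, so ``following $k$ outgoing edges from $\alpha$'' is well-defined: it produces a unique vertex, or terminates at $\perp$ once a vertex in $\FV_{t+1}$ is reached. I will write $f^{(k)}(\alpha)$ for this ``$k$-fold successor'' with the convention $f^{(k)}(\perp)=\perp$; note $f^{(k)}\circ f^{(\ell)} = f^{(k+\ell)}$, where a $\perp$ encountered partway through is simply propagated.

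First I would establish the base case: by definition $N_0(\alpha) = \beta$ where $(\alpha,\beta)$ is the unique outgoing edge of $\alpha$ (and $\beta = \perp$ if $\alpha \in \FV_{t+1}$), so $N_0(\alpha) = f^{(1)}(\alpha) = f^{(2^0)}(\alpha)$. For the inductive step, assume $N_{i-1}(\gamma) = f^{(2^{i-1})}(\gamma)$ for all $\gamma$. The algorithm sets $N_i(\alpha) = N_{i-1}(N_{i-1}(\alpha))$ with the stated convention $N_{i-1}(\perp) = \perp$. Applying the inductive hypothesis twice gives $N_i(\alpha) = f^{(2^{i-1})}(f^{(2^{i-1})}(\alpha)) = f^{(2^{i-1}+2^{i-1})}(\alpha) = f^{(2^i)}(\alpha)$, using the composition property of $f$ and the fact that $\perp$ is absorbing under $f$. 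This closes the induction, so after $\log t$ phases we have $N_{\log t}(\alpha) = f^{(t)}(\alpha)$.

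Finally I would specialize to $\alpha \in \FV^*_1 \subseteq \FV_1$. Since $\alpha$ lies in the first layer and every edge of $\FGS$ moves forward exactly one layer, the $t$-fold successor $f^{(t)}(\alpha)$ never hits $\perp$ along the way: after $k$ steps we are in layer $\FV_{k+1}$, which is nonempty and has outgoing edges for all $k < t$, and after $t$ steps we land in $\FV_{t+1}$. But following the unique outgoing edges from $\alpha$ for $t$ steps is precisely the definition of the unique path $\PP_\alpha$ of length $t$ in $\FGS$ starting at $\alpha$, and $f^{(t)}(\alpha)$ is by construction its endpoint in $\FV_{t+1}$. Hence $N_{\log t}(\alpha)$ is the endpoint of $\PP_\alpha$, as claimed.

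I do not anticipate a genuine obstacle here; the only thing to be careful about is the bookkeeping of the $\perp$ convention and verifying that it is never triggered for starting vertices in $\FV^*_1$ (which is exactly why the claim is stated only for such $\alpha$ — a vertex starting in an interior layer would have its path terminate before $t$ steps, and the doubling would need the $\perp$ absorption to behave correctly, which it does). A minor point worth a sentence is that $t$ need not be a power of two; if the algorithm runs $\lceil \log t \rceil$ phases then $N_{\lceil\log t\rceil}(\alpha) = f^{(2^{\lceil\log t\rceil})}(\alpha) = f^{(t)}(\alpha)$ anyway, since once a path from $\FV^*_1$ reaches $\FV_{t+1}$ it stays at $\perp$ — so taking more doublings than strictly necessary is harmless.
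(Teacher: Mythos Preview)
Your main argument is correct and follows essentially the same route as the paper: induction on $i$ to show $N_i(\alpha)$ is the vertex $2^i$ steps along the unique out-path from $\alpha$, then specialization to $\alpha\in\FV^*_1$ so that the path has length exactly $t$ and never hits $\perp$. Your version is in fact a bit more careful than the paper's about the $\perp$ bookkeeping.

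One small slip in your closing aside: if $t$ is not a power of two and you run $\lceil\log t\rceil$ phases, then by your own induction $N_{\lceil\log t\rceil}(\alpha)=f^{(2^{\lceil\log t\rceil})}(\alpha)$, which is $\perp$ (not $f^{(t)}(\alpha)$), since the walk overshoots layer $\FV_{t+1}$ and is absorbed. So ``extra doublings are harmless'' is not quite right as stated; one would instead pad $t$ up to the next power of two (e.g., by adding extra layers with self-loop-like edges), or record the last non-$\perp$ value. This does not affect the claim as written, which takes $\log t$ to be an integer.
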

\begin{proof}
	We prove by induction that $N_{i}(\alpha)$ is the vertex at distance $2^{i}$ from $\alpha$ in $\PP_\alpha$. The base case for $i=0$ is true as $N_0(\alpha) = \beta$ where $\beta$ is the endpoint of the outgoing edge of $\alpha$. 
	For $i > 0$, by induction, $N_{i-1}(\alpha)$ is the vertex $\theta$ at distance $2^{i-1}$ from $\alpha$ and $N_{i-1}(\theta)$ is the vertex at distance $2^{i-1}$ from $\theta$. 
	Hence $N_i(\alpha) = N_{i-1}(N_{i-1}(\alpha))$ is at distance $2^{i}$ from $\alpha$ (as $\FGS$ is a directed acyclic graph with edges going only from one layer to the next). 
	As such, $N_{\log{t}}(\alpha)$ is at distance $t$ from $\alpha$ and hence is the endpoint of the path $\PP_{\alpha}$. 
\end{proof}

We say that $\SRW(G,t)$ \emph{finds} the vertex $v$ for $u$ if $v$ is returned as the target vertex of $u$. 
Claim~\ref{clm:srw-target} combined with Observation~\ref{obs:multi-ind-layered-graph} already implies that for any vertex $u \in V(G)$, the vertex $v$ found by $\SRW$ is distributed according to $\distrw(u,t)$.  
We further have,
\begin{lemma}\label{lem:srw}
	For any vertex $u \in G$, $\SRW(G,t)$ finds a vertex $v \sim \distrw(u,t)$ such that with probability at least $1/2$, $v$ is \emph{independent} of all other vertices found by $\SRW$. 
\end{lemma}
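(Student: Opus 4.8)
The plan is to combine Claim~\ref{clm:srw-target}, Observation~\ref{obs:multi-ind-layered-graph}, and Lemma~\ref{lem:independent-paths}. First I would recall from Claim~\ref{clm:srw-target} that for the vertex $\alpha = (u,1,1) \in \FV^*_1$, the algorithm's output $v(N_{\log t}(\alpha))$ is exactly the endpoint of the unique path $\PP_\alpha$ in the sampled layered graph $\FGS$. Hence the distributional claim ($v \sim \distrw(u,t)$) is immediate, since a uniformly random outgoing edge at each step gives a genuine random walk step in the $\Delta$-regular graph $G$. The substance of the lemma is the independence claim, and here the key point is the following: if the path $\PP_\alpha$ is vertex-disjoint from every other path $\PP_\beta$, $\beta \in \FV^*_1$, then by Observation~\ref{obs:multi-ind-layered-graph} the walk $W(\PP_\alpha)$ uses a set of random choices (the sampled outgoing edges along $\PP_\alpha$) that is disjoint from the random choices determining any other $W(\PP_\beta)$, and therefore $v$ is independent of all the other returned vertices $\{v_j\}_{j\neq i}$.

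Then I would invoke Lemma~\ref{lem:independent-paths} directly: it states that for any fixed $\alpha \in \FV^*_1$, with probability at least $1/2$ the path $\PP_\alpha$ is vertex-disjoint from $\PP_\beta$ for all $\beta \neq \alpha$ in $\FV^*_1$. Conditioning on this event, $v = v(N_{\log t}(\alpha))$ is a function only of the outgoing-edge choices made at the $t$ vertices lying on $\PP_\alpha$, while each $v_j$ (for $u_j \neq u$) is a function only of the choices on $\PP_{\alpha_j}$ for the corresponding $\alpha_j \in \FV^*_1$; since these vertex sets are disjoint from $\PP_\alpha$ by assumption, and the sampled edges are chosen independently across vertices of $\FG$, the tuple $(v_j)_{j \neq i}$ is independent of $v$. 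This gives exactly the claimed statement: with probability $\geq 1/2$ (over the randomness of $\FGS$), the returned vertex $v$ for $u$ is independent of all other returned vertices.

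One subtlety I would be careful about is the precise meaning of ``$v$ is independent of all other vertices found by $\SRW$'': the cleanest reading, and the one that matches the downstream use, is that conditioned on the disjointness event $\mathcal{E}_\alpha$ of Lemma~\ref{lem:independent-paths}, the random variable $v$ is independent of the collection $(v_j)_{j \neq i}$. The argument above establishes this because $v$ depends on $\FGS$ only through the restriction of $\FGS$ to the $t$ vertices on $\PP_\alpha$, and each $v_j$ depends only on the restriction of $\FGS$ to $\PP_{\alpha_j}$; on $\mathcal{E}_\alpha$ these restrictions are on disjoint blocks of the product measure defining $\FGS$. I expect the main (though still mild) obstacle to be stating this block-independence carefully — in particular handling the fact that the paths $\PP_{\alpha_j}$ for $j \neq i$ may well intersect each other (they only need to avoid $\PP_\alpha$), so one should phrase the conclusion as independence of the single variable $v$ from the joint vector of the rest, rather than mutual independence of everything, and verify that $\sigma(v)$ is measurable with respect to the edges out of $\PP_\alpha$'s vertices alone, which follows from Claim~\ref{clm:srw-target}.
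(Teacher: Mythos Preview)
Your proposal is correct and takes essentially the same approach as the paper: the paper's proof is a one-line invocation of Claim~\ref{clm:srw-target}, Observation~\ref{obs:multi-ind-layered-graph}, and Lemma~\ref{lem:independent-paths}, and your elaboration of how these three pieces fit together (including the careful reading of ``independence'' as independence of $v$ from the joint vector of the rest, conditioned on the disjointness event) is exactly the intended argument.
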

\begin{proof}
	Follows immediately from Claim~\ref{clm:srw-target}, Observation~\ref{obs:multi-ind-layered-graph}, and Lemma~\ref{lem:independent-paths}. 
\end{proof}

By Lemma~\ref{lem:srw}, we are able to find $\Omega(n)$ independent random walks in $G$ with high probability. However, a-priori it is not obvious how to detect these walks. In the following, we briefly describe 
a simple parallel procedure for this task. 

\paragraph{Detecting independent random walks.} The idea is to first find the path $\PP_{\alpha}$ for every $\alpha \in \FV^*_1$ and then remove any $v(\alpha)$ from consideration if $\PP_{\alpha}$ intersects another path
\emph{starting from} $\FV^*_1$. To do this, we need the following recursive ``marking'' procedure for marking all vertices on a path $\PP_{\alpha}$:

\begin{tbox}
${\Mark(\alpha,\beta,k):}$ An algorithm for marking all vertices in the path $\PP_{\alpha}$ recursively. 

\algline

			\begin{enumerate}
				\item Mark the vertex $\beta \in \FV$ with label $\alpha$. 
				\item If $k = 0$ stop. Otherwise recurse on $\Mark(\alpha,\beta,k-1)$ and $\Mark(\alpha,N_{k-1}(\beta),k-1)$.
			\end{enumerate}
\end{tbox}		
It is easy to see that by running $\Mark(\alpha,\alpha,\log{t})$ for every $\alpha \in \FV^*_1$ we can mark all vertices across all paths $\PP_{\alpha}$ (this can be proven inductively using an argument similar to Claim~\ref{clm:srw-target}). 
We remove any path $\PP_{\alpha}$ which contains a vertex which is marked by more than one vertex. This way, all remaining paths are going to be vertex disjoint from each other and hence correspond to independent random walks. 

We show that $\Mark$ algorithm can be implemented in parallel for all the vertices in $\FV^*_1$, and is used to identify all the independent random walks.

\begin{tbox}
${\ParallelMark:}$ An algorithm for detecting independent random walks for  $\FV^*_1$. 

\algline

			\begin{enumerate}
				\item Set $S_{\log t} = \emptyset$ initially.
				\item For every $\alpha \in \FV^*_1$ \textbf{in parallel} add $(\alpha, \alpha)$ to $S_{\log t}$.
				\item For $k =  \log t, \log t - 1, \dots, 1$: 
				\begin{enumerate}
					\item Set $S_{k-1} = \emptyset$ initially.
					\item For every $(\alpha, \beta) \in S_{k}$ \textbf{in parallel} add $(\alpha, \beta)$  to $S_{k - 1}$, and add $(\alpha, N_{k}(\beta))$ to $S_{k-1}$ if $N_k(\beta) \neq \perp$.
				\end{enumerate}
				\item Let $T$ be the set of $\beta$ such that there are $\alpha_1 \neq \alpha_2$ such that both $(\alpha_1, \beta)$ and $(\alpha_2, \beta) $ are in $S_0$ (by sorting all the pairs in $S_0$ according to the second coordinate).
				\item Return the set $\{\alpha : \nexists \beta \text{ s.t. } (\alpha, \beta) \in S_0, \beta \in T \}$.
			\end{enumerate}
\end{tbox}		
By the description of Algorithm $\ParallelMark$ and since sorting can be done in $O(1/\delta)$ rounds if memory per machine is $O(n^\delta)$, we obtain the following claim.
\begin{claim}\label{claim:detection_independent}
Algorithm $\ParallelMark$ returns a set of vertices in $\FV^*_1$ such that $\alpha$ is in the set iff $P_\alpha$ is an independent random walk for any $\alpha \in \FV^*_1$.

\noindent
Algorithm $\ParallelMark$  requires $O(t^2 \cdot n^{1-\delta})$ machines each with $O(n^{\delta})$ memory and  $O(\frac{1}{\delta} \cdot \log{t})$ MPC rounds.
\end{claim}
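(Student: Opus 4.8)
The plan is to show that $\ParallelMark$ is a faithful ``set-based'' parallel simulation of running the recursive routine $\Mark(\alpha,\alpha,\log t)$ simultaneously for every $\alpha\in\FV^*_1$, and that this marking identifies exactly the vertices lying on the paths $\PP_\alpha$. First I would record an extension of the argument behind Claim~\ref{clm:srw-target}: for any vertex $\gamma$ of $\FGS$ and any $i\le\log t$, $N_i(\gamma)$ is the vertex $2^i$ steps further along the unique forward path through $\gamma$, and equals $\perp$ precisely when no such vertex exists. This is the same induction as in Claim~\ref{clm:srw-target}, but started from an arbitrary $\gamma$ rather than from $\FV^*_1$. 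With this fact, an induction on the recursion depth shows that $\Mark(\alpha,\beta,k)$ marks (with label $\alpha$) exactly the vertices of $\PP_\alpha$ in the block of $2^k$ consecutive positions beginning at $\beta$ (truncated at the end of the path), so that $\Mark(\alpha,\alpha,\log t)$ marks all of $\PP_\alpha$.

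Next I would prove the invariant that, just before $\ParallelMark$ processes the index $k$ in its loop, $S_k=\{(\alpha,\beta):\alpha\in\FV^*_1,\ \Mark(\alpha,\beta,k)\text{ is among the recursive calls generated by }\Mark(\alpha,\alpha,\log t)\}$. This holds for $k=\log t$ by the initialization, and the inductive step is just one unfolding of the recursion in $\Mark$: step 3(b) of $\ParallelMark$ emits the pair itself and the pair with $\beta$ advanced by one block, matching the calls $\Mark(\alpha,\beta,k-1)$ and $\Mark(\alpha,N_{k-1}(\beta),k-1)$. Hence $S_0=\{(\alpha,\beta):\alpha\in\FV^*_1,\ \beta\in\PP_\alpha\}$, so $T$ is exactly the set of vertices of $\FGS$ lying on two or more of the paths $\{\PP_\alpha:\alpha\in\FV^*_1\}$, and the set returned in step 5 is exactly $\{\alpha\in\FV^*_1:\PP_\alpha\text{ is vertex-disjoint from }\PP_\beta\text{ for all }\beta\ne\alpha\text{ in }\FV^*_1\}$. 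By Lemma~\ref{lem:independent-paths} this is precisely the event under which $\PP_\alpha$ gives an independent walk, and by Observation~\ref{obs:multi-ind-layered-graph} the walks $W(\PP_\alpha)$ over the returned set are mutually independent; together these give the claimed ``iff''.

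For the resource bounds I would argue as follows. The loop runs $\log t$ iterations, each consisting only of parallel copy/advance operations on $S_k$ together with one parallel sort in step 4; each such step costs $O(1/\delta)$ rounds on machines of memory $O(n^\delta)$ by the sort/search primitive of Section~\ref{sec:prelim}, for $O(\tfrac1\delta\log t)$ rounds total. For space, the recursion tree of $\Mark(\alpha,\alpha,\log t)$ has $2^{\log t-k}$ nodes at level $k$, so $|S_k|=O(n t/2^k)$ and $\max_k|S_k|=O(nt)$; this, together with the $O(nt^2)$ vertices of $\FGS$ and the $O(\log t)$ pointer arrays $N_0,\dots,N_{\log t}$ that the algorithm reads, fits on $O(t^2 n^{1-\delta})$ machines of memory $O(n^\delta)$ (the usual $\polylog(n)$ factors being absorbed into $n^\delta$).

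I expect the only genuinely delicate point to be the bookkeeping in the first two steps. One must keep track that $\Mark$—and hence $\ParallelMark$—is only claimed to mark vertices on paths emanating from $\FV^*_1$, and that vertex-disjointness is correspondingly only needed and only guaranteed among those paths; this asymmetry between paths from $\FV^*_1$ and paths from all of $\FV_1$ is exactly what makes the constant-probability guarantee of Lemma~\ref{lem:independent-paths} usable, and conflating the two would break the argument. A secondary subtlety worth a remark is that a collision occurring only in the final layer $\FV_{t+1}$ does not create any dependence between the corresponding walks (all of a walk's randomness is carried by layers $1,\dots,t$), so it does no harm if the marking fails to reach the very endpoint of a path; this is what makes the above characterization tight.
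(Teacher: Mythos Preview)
Your proposal is correct and is precisely the detailed verification that the paper omits: the paper's own justification is a one-line assertion that the claim follows from the description of $\ParallelMark$ together with the $O(1/\delta)$-round sorting primitive. Your inductive invariant on $S_k$, the resource count $|S_k|=O(nt/2^k)$, and the remark that a collision confined to layer $t{+}1$ carries no shared randomness are all sound and fill in exactly the points the paper leaves implicit.
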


\begin{proof}[Proof of Theorem~\ref{thm:random-walk}]
	We simply run $\SRW(G,t)$ in parallel $\Theta(\log{n})$ times and detect the independent random walks found by each run using the marking procedure above. 
	By Lemma~\ref{lem:srw}, with probability $1/2$ we are able to find an independent random walk for any fixed vertex in each of the 
	$\Theta(\log{n})$ trials. Hence, with high probability, we are able to find an independent random walk for every vertex of $G$. This concludes the proof of correctness of the algorithm. 
	
	We now briefly describe the MPC implementation details of this algorithm. To implement $\SRW(G,t)$, we first create the vertex-set of the graph of $\FG(G,t)$ which consists of $O(n \cdot t^2)$ vertices. 
	We make every vertex responsible for maintaining the $O(\Delta \cdot t)$ of its neighbors and performing the random walks (the information needed by any single vertex resides entirely on one machine). 
	Sampling $\FGS$ is then straightforward. The rest of the algorithm can also be implemented in a straightforward way by spending $O(1/\delta)$ rounds for each iteration of for-loop in Line~(\ref{for:srw}) of $\SRW$. 
	By Claim~\ref{claim:detection_independent}, $\ParallelMark$ also needs $O(\log{t}/\delta)$ rounds. Hence, 
	in total, we only need $O(\log{t}/\delta)$ MPC rounds to implement the algorithms. 
	
	As for the memory per machine, for any fixed vertex, we only need $O(\Delta)$ (as opposed to $O(\Delta \cdot t)$) on the machine this vertex resides to sample an edge from $\FGS$ as the $O(\Delta \cdot t)$ neighbors of any
	vertex in $\FG(G,t)$ can be described by only $O(\Delta)$ edges (the rest are copies of the same edge to multiple copies of the same vertex on the next layer). 
	We further need to store $O(\log{t})$ intermediate vertices in $N(\cdot)$ and so each vertex needs
	$O(\Delta  + \log{t})$ memory and we have $O(n \cdot t^2 \cdot \log{n})$ vertices in total (recall that we are performing $O(\log{n})$ parallel random walks), finalizing the proof. 
	\end{proof}

\subsection{Proof of Lemma~\ref{lem:second-step}: The Randomization Step}

We now use Theorem~\ref{thm:random-walk} to prove Lemma~\ref{lem:second-step}. In Lemma~\ref{lem:second-step}, we need to perform lazy random walks, while Theorem~\ref{thm:random-walk} is performing random walks. 
However, this is quite easy to fix: we simply add $\Delta$ self-loops to every vertex of $G$. This makes the graph $2\Delta$ regular while ensuring that the distribution of a random walk in the new graph corresponds to a lazy random walk in the original graph. 
We use $\tG$ to refer to this new $2\Delta$-regular graph after adding the self-loops. We are now ready to construct the graph $H$ in Lemma~\ref{lem:second-step}. 

\begin{proof}[Proof of Lemma~\ref{lem:second-step}]
	We construct the graph $\tG$ as specified above and run algorithm in Theorem~\ref{thm:random-walk} on this graph for random walks of length $T$ for $k = 50\log{n}$ times in parallel. In the following, we condition on the 
	high probability event that the random walk algorithm succeeds. 
	
	The graph $H$ is defined as follows: $V(H) = V(\tG) = V(G)$; for 
	any $u \in V(H)$, connect $u$ to the $k$ vertices $v_{u,1},\ldots,v_{u,k}$ found by the random walk algorithm for $u \in V(\tG)$.  We now establish the desired properties of $H$. 
	
	Let $G_i$ be any connected component of $G$. Any vertex $u \in V(G_i)$ in $H$ is connected to $k$ vertices in $V(G_i)$ in $H$: this is because a lazy random walk starting from a vertex
	in $V(G_i)$ cannot ``escape'' the component $G_i$. As such, any vertex $u \in V(G_i)$ is connected to $k$ vertices in $V(G_i)$. Hence, the distribution of $H_i$ is a graph in which every vertex is connected
	to $k = 50\log{n}$ other vertices in $V(H_i)$ chosen according to the distribution of a lazy random walk of length $T$ in graph $G_i$. The distribution $\PG(n_i,100\log{n})$ is a distribution on which every vertex in $V(H_i)$ is connected
	to $(100\log{n}/2) = k$ vertices in $V(H_i)$ chosen uniformly at random. Since we are performing lazy random walks of length at least $T_{\gamstar}(G_i)$, we expect these two distributions to be close to each other. 
	
	Formally, let $\unif_{V(H_i)}$ denote the uniform distribution on $V(H_i)$. We have, 
	\begin{align*}
		\tvd{\distribution{H_i}}{\PG(n_i,100\log{n})} \leq \sum_{u \in V(\tG_i)}\tvd{\distrw(u,T)}{\unif_{V(H_i)}} \leq n_i \cdot {1}/{n^{10}} \leq 1/n^9. 
	\end{align*}
	This proves the second part of the lemma. To prove the first part of the lemma we need to prove that each $H_i$ is connected with high probability. This follows because
	$H_i$ has a similar distribution as $\PG(n_i,100\log{n})$ and a graph sampled from $\PG(n_i,100\log{n})$ is connected with probability at least $1-1/n^{25}$ by Proposition~\ref{prop:PG-connected} (by setting $d=100\log{n} \geq 100\log{n_i}$ and 
	assuming $n_i \geq 2$ as $G$ contains no isolated vertices), and hence by Fact~\ref{fact:tvd-small} $H_i$ is also connected with probability at least $1-1/n^{25}-1/n^{9}$, finalizing the proof of correctness.
	
	The number of machines needed by this algorithm is $O(\log{n})$ times the number of machines in Theorem~\ref{thm:random-walk} for $t = T$ and the memory per machine is the same. Hence the bounds on the MPC
	implementation of this algorithm follows immediately from Theorem~\ref{thm:random-walk}.   
\end{proof}

\section{Step 3: Connectivity on Random Graphs}\label{sec:finding-cc}

In this section we present the final and paramount step of our algorithm, which involves finding connected components 
of a collection of disjoint random graphs chosen from $\PG$.

\begin{mdframed}[hidealllines=false,backgroundcolor=gray!10,innertopmargin=0pt]
\begin{lemma}\label{lem:third-step}
	Let $G(V,E)$ be a graph on $n$ vertices such that any connected component $G_i$ of $G$ with $n_i = \card{V(G_i)}$ is sampled from $\PG(n_i,100\log{n})$. 
	There exists an MPC algorithm which identifies all connected components of $G$ with high probability (over both the randomness of the algorithm and the distribution $\PG$). 
	
	\noindent
	For any $\delta > 0$, the algorithm can be implemented with $O(n^{1-\delta}) \cdot \polylog{(n)}$ machines each with $O(n^{\delta}) \cdot \polylog{(n)}$ memory and $O(\frac{1}{\delta} \cdot \log\log{n})$ MPC rounds. 
\end{lemma}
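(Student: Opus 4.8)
The plan is to implement the leader-election strategy sketched in Section~\ref{sec:technical}, running it for $O(\log\log n)$ phases on each connected component $G_i \sim \PG(n_i, 100\log n)$, where phase $\ell$ contracts components of size roughly $d^{2^\ell}$ for $d = \Theta(\log n)$. Since all the components are processed simultaneously and independently (the algorithm never needs to know which vertices lie in which component), it suffices to analyze a single component $H \sim \PG(N, 100\log n)$ and show that its connected structure is discovered in $O(\log\log N)$ rounds with high probability, then union-bound over the (at most $n$) components. The first step is to set up the edge-batching: by the definition of $\PG$, the $\floor{d/2}$ outgoing edges of each vertex are sampled i.i.d.; I would split the $100\log n$ outgoing edges of every vertex into $\Theta(\log\log n)$ disjoint batches, so that phase $\ell$ uses batch $\ell$ as a ``fresh'' source of randomness. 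This is the device that breaks the dependence between the contractions performed in earlier phases and the randomness of the edges used in the current phase, which is what makes the recursion go through cleanly.

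Next I would describe one phase precisely. At the start of phase $\ell$ we have a contracted multigraph $H^{(\ell)}$ on $N_\ell \le N/d^{2^\ell - 1}$ super-vertices, which (using only the unused batches) is still distributed like an almost-regular random graph with degree $\approx d^{2^\ell}$ and a discrepancy factor $\eps_\ell$ that grows with $\ell$. Each super-vertex is independently marked a \emph{leader} with probability $\Theta(1/d^{2^\ell})$; every non-leader super-vertex then attaches itself to an arbitrary leader among its $\approx d^{2^\ell}$ neighbors (using Proposition~\ref{prop:PG-expansion} on the current random graph to guarantee that w.h.p.\ every non-leader has at least one leader neighbor), and we contract each leader together with its attached non-leaders into a single new super-vertex. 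A short second-moment / balls-in-bins calculation — exactly the ``no mega-vertex is the endpoint of more than one of these $d^{2^{\ell+1}}$ edges'' claim mentioned in the overview — shows that the contracted graph on $N_{\ell+1} \approx N_\ell/d^{2^\ell}$ vertices is again almost-regular with degree $\approx d^{2^{\ell+1}}$, and, because phase $\ell+1$ draws on a batch of edges never touched in phases $1,\dots,\ell$, its randomness is independent of all contraction choices so far. Tracking how $\eps_\ell$ deteriorates, I would show it stays $o(1)$ as long as the current degree is $n^{o(1)}$, i.e.\ for all $\ell$ up to roughly $\log\log N - O(1)$.

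After $\ell^\star = \Theta(\log\log N)$ phases, every component has been contracted to a graph on $N^{o(1)}$ super-vertices whose underlying edge set still has the expansion guarantee of Proposition~\ref{prop:PG-expansion}, so its diameter is $O(1)$ (the first part of Proposition~\ref{prop:PG-expansion} gives constant-factor neighborhood growth until $2N/3$ is reached, hence $O(1)$ diameter). At that point I would run the naive parallel BFS/broadcast: pick one representative per remaining super-vertex and, for $O(1)$ rounds, flood component-IDs along edges; a super-vertex's memory can hold all of its $N^{o(1)}$-degree neighborhood after one round of the sort-and-search primitive. Finally, the component ID is propagated back down through the $O(\log\log n)$ levels of contraction (each original vertex remembers which super-vertex absorbed it, so unrolling the contraction tree takes $O(\log\log n)$ sort/search rounds) to label every original vertex. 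The MPC cost bookkeeping — $O(1/\delta)$ rounds per phase for the sorting/relabelling, $O(\log\log n)$ phases, $O(n^{1-\delta})\polylog(n)$ machines of $O(n^\delta)\polylog(n)$ memory since the contracted graphs only shrink — gives the stated bounds.

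The main obstacle I expect is the second-moment argument controlling the discrepancy factor $\eps_\ell$ across phases: one must argue that after contracting $\approx d^{2^\ell}$ vertices into one super-vertex, the new degrees concentrate well enough (relative to the growing target $d^{2^{\ell+1}}$) that the ``at most one edge between any pair of super-vertices'' heuristic and the ``every non-leader sees a leader'' event both hold with probability $1 - 1/\poly(N)$, while simultaneously ensuring the residual randomness in the unused edge batches is genuinely independent of the history. Handling the error accumulation over $\Theta(\log\log n)$ phases and over all components — and pinning down exactly when the degree first exceeds $n^{\Omega(1)}$ so that one must switch to the constant-diameter broadcast — is where the real work lies; everything else is routine MPC plumbing on top of Propositions~\ref{prop:PG-regular},~\ref{prop:PG-connected}, and~\ref{prop:PG-expansion}.
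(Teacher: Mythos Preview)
Your overall architecture matches the paper's: batch the randomness so each phase sees fresh edges, run a leader-election that contracts quadratically larger components, stop once the degree hits $n^{\Omega(1)}$, and finish with an $O(1)$-round BFS. Where your plan breaks is the specific mechanism you propose for obtaining the fresh randomness.

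You split the $100\log n$ out-edges of each vertex into $\Theta(\log\log n)$ batches, so each batch has degree $d = \Theta(\log n/\log\log n)$. This is not enough for the very first phase. With leader probability $\Theta(1/d)$ the expected number of leader neighbors of a non-leader is $\Theta(1)$, so you cannot conclude w.h.p.\ that every non-leader sees a leader; and even with leader probability $1/2$, the failure probability per vertex is $2^{-d} = n^{-1/\log\log n}$, which does not survive a union bound over $n$ vertices. More structurally, the Equipartition-type argument you allude to needs the initial discrepancy $\eps$ to be roughly $1/\polylog(n)$ so that after $\Theta(\log\log n)$ phases of constant-factor blowup it is still $o(1)$; Chernoff then forces per-phase degree $\Omega(\log n/\eps^2) = \polylog(n)$. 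With $\Theta(\log\log n)$ phases this would require $\polylog(n)\cdot\log\log n$ fresh edges per vertex, but you only have $100\log n$ in total. There is no way to budget around this by splitting the given edges.

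The paper resolves exactly this tension with a preprocessing step you are missing: since each $G_i \sim \PG(n_i,100\log n)$ has mixing time $\polylog(n)$ by Proposition~\ref{prop:PG-expansion}, one can invoke the random-walk primitive of Lemma~\ref{lem:second-step} to \emph{regenerate} $F = \Theta(\log\log n)$ independent copies $\tG_1,\ldots,\tG_F$ of each component, each with degree $\Delta\cdot s = \polylog(n)$ (the parameters $s$ and $\Delta$ in the paper are chosen precisely so that $s \geq \Theta(\log n/\eps^2)$ for $\eps = 1/\polylog(n)$). Phase $i$ then uses the whole graph $\tG_i$ rather than a thin slice of the original edges, giving both the independence you want and the degree required for concentration. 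After this fix, the rest of your outline (quadratic growth via $\LE$, almost-regularity of the successive contraction graphs, the switch to BFS once $\Delta_F \geq n^{\Omega(1)}$, and unrolling the labels) is essentially what the paper does in Lemmas~\ref{lem:LE-equal-size}--\ref{lem:grow-induction} and Claims~\ref{clm:H_F-diameter}--\ref{clm:low-diameter}.
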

\end{mdframed}

During the course of our exposition in this section, we need to set many parameters which we collect here for convenience.
\begin{tbox}
\vspace{-17pt}
\begin{align}
	\eps &:= \paren{100 \cdot \log{n}}^{-2} \textnormal{~: used to bound the discrepancy factor of almost-regular graphs,} \notag \\
	s &:= \frac{10^{6} \cdot \log{n}}{\eps^2} \textnormal{~: a scaling factor on degree of almost-regular graphs,} \notag \\
	\Delta &:= 100s \textnormal{~: used as a parameter to denote the degree of almost-regular graphs,} \notag \\
	F &:= \arg\min_{i}\set{\Delta^{2^i} \geq n^{1/100}} \textnormal{~: used to bound the number of phases in our algorithm.}\label{eq:parameters}
\end{align}
\end{tbox}

Throughout this section, we typically define the degree of almost-regular graphs by multiplicative factors of $s$; this is needed to simplify many concentration bounds used in the proofs. 
We further point out that $F = O(\log\log{n})$ and $\Delta^{F} \in [n^{1/100},n^{1/50}]$ and hence $\Delta^F = o(\eps)$. 

\paragraph{Preprocessing step.} The first step in proving Lemma~\ref{lem:third-step}, is to make each connected component $G_i$ of $G$ ``more random'', i.e., turn it to a graph sampled from $\PG$ with larger per-vertex degree. This can be 
easily done using Lemma~\ref{lem:second-step} in previous section, as the graph $G_i \sim \PG(n_i,100\log{n})$ has a small mixing time by Proposition~\ref{prop:PG-expansion} with high probability.

Now consider the following preprocessing process: Recall the parameters defined in Eq~(\ref{eq:parameters}). For $(F \cdot \Delta \cdot s/(100\log{n}))$ steps in parallel, we run the algorithm in Lemma~\ref{lem:second-step} on the original
graph $G$. For each connected component $G_i$ of $G$, this results us in having $F$ graphs $\tG_{i,1},\ldots,\tG_{i,F}$ which are (almost) sampled from the distribution $\PG(n_i,\Delta \cdot s)$ (the distribution of these graphs is not exactly identical to this, but 
is rather close to this distribution in total variation distance which is sufficient for our purpose). As such, we now need to find the connected component of a graph $\tG$ which is the union of all $\tG_{i,j}$ for $i$ ranging over all connected
components of $G$ and $j \in [F]$. 

In the following lemma, we design an algorithm for this task. For simplicity of exposition, we state this lemma for the case of finding a spanning tree of one such connected component (i.e., assuming $G$ itself is sampled from $\PG$ as opposed to having its connected components sampled from this distribution); however, it would be evident that running this algorithm on the original input results in finding a spanning tree of each connected component separately. 

\begin{lemma}\label{lem:technical-random-cc}
	Let $\tG$ be a graph on $n$ vertices such that $\tG = \tG_1 \cup \ldots \cup \tG_F$ where $\tG_i \sim \PG(n,\Delta \cdot s)$. 
	There exists an MPC algorithm that can find a spanning tree of $\tG$ with high probability (over both the randomness of the algorithm and the distribution $\PG$). 
	
	\noindent
	For any $\delta > 0$, the algorithm can be implemented with $O(n^{1-\delta}) \cdot \polylog{(n)}$ machines each with $O(n^{\delta}) \cdot \polylog{(n)}$ memory, and in $O(\frac{1}{\delta} \cdot \log\log{n})$ MPC rounds. 
\end{lemma}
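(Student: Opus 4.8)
The plan is to implement the leader-election strategy sketched in Section~\ref{sec:technical}: in each of $O(\log\log n)$ phases, we contract a quadratically larger set of vertices around each leader, and we use a \emph{fresh} subgraph $\tG_j$ in phase $j$ so that the randomness of $\PG$ is not spoiled by the contractions performed in earlier phases. Concretely, maintain a partition of $V(\tG)$ into ``super-vertices'' (initially singletons), together with a spanning forest on the contracted vertices that witnesses each super-vertex. We will show by induction that at the start of phase $j$, the number of super-vertices is at most $n/\Delta^{2^j-1}$ (roughly), and that the contracted graph on the super-vertices, when viewed through the still-unused edges of $\tG_{j},\ldots,\tG_{F}$, behaves like an almost-regular random graph of degree about $\Delta^{2^j}$ with a controlled discrepancy factor. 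Since $\tG_{j}$'s edges are completely independent of whatever the algorithm has done using $\tG_1,\ldots,\tG_{j-1}$, the edges induced on the current super-vertex set are, conditioned on the partition, a fresh sample from (a close variant of) $\PG$ on the super-vertices — this is the key point that makes the inductive step go through.

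The phase itself would be: $(i)$ sample each current super-vertex to be a \emph{leader} independently with probability $\Theta(1/\Delta^{2^j})$, so that by the almost-regularity each non-leader super-vertex has in expectation a constant number of leader neighbors in $\tG_j$ and, with high probability, \emph{at least one} (using the expansion/regularity of $\PG$ from Propositions~\ref{prop:PG-regular} and~\ref{prop:PG-expansion} and a Chernoff bound over the $\Theta(\Delta^{2^j})$ incident edges); $(ii)$ each non-leader picks one incident leader and ``joins'' it; $(iii)$ contract each leader together with all super-vertices that joined it into a new super-vertex, adding the chosen edges to the spanning forest. The size of each new super-vertex is $\Theta(\Delta^{2^j})$, so the number of super-vertices drops by a $\Theta(\Delta^{2^j})$ factor, giving the claimed doubly-exponential shrinkage: after phase $j$ we are down to $\approx n/\Delta^{2^{j+1}}$ super-vertices, and after $F = O(\log\log n)$ phases there are only $n^{1-\Omega(1)}$ of them. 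Each phase is implementable in $O(1/\delta)$ MPC rounds using parallel sort/search to look up which super-vertex each endpoint currently belongs to and to perform the relabeling; this accounts for the $O(\tfrac{1}{\delta}\log\log n)$ round bound and the $O(n^{1-\delta})\cdot\polylog(n)$ machines, since $\tG$ has $O(n\log n)$ edges and $F=O(\log\log n)$ copies contribute only a $\polylog(n)$ factor.

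Once the super-vertex count has dropped below $n^{1/100}$ (which happens by the definition of $F$), the contracted graph on these super-vertices — using the remaining fresh copies — is, by the expansion property of $\PG$ (Proposition~\ref{prop:PG-expansion}, part 1, which guarantees neighborhood growth by a factor $\Omega(\Delta)$ until reaching $2/3$ of the vertices), a graph of diameter $O(1)$: any super-vertex reaches a constant fraction of the others in $O(1)$ hops, and then all of them. At that point we run the trivial BFS-tree procedure, computing one level per round, in $O(1/\delta)$ MPC rounds, to connect everything into a single spanning tree; grafting this top-level tree onto the spanning forests accumulated in the earlier phases yields a spanning tree of $\tG$. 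Running the same algorithm on the disjoint union $\bigcup_{i,j}\tG_{i,j}$ produces a spanning tree of each connected component separately, which, fed back into Lemma~\ref{lem:third-step} (after the preprocessing that boosts $\PG(n_i,100\log n)$ to $\PG(n_i,\Delta s)$ via Lemma~\ref{lem:second-step}), identifies all connected components of the original $G$.

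The main obstacle I expect is maintaining the inductive invariant on the discrepancy factor: contracting $\Theta(\Delta^{2^j})$ super-vertices into one does not produce an \emph{exactly} $\Delta^{2^{j+1}}$-regular random graph — the degrees concentrate only up to a multiplicative $(1\pm\eps_j)$ error that \emph{grows} with $j$ (roughly, each phase squares the relative error because degrees are sums of $\Theta(\Delta^{2^j})$ almost-i.i.d.\ contributions each itself off by $(1\pm\eps_{j-1})$). One has to show this discrepancy factor stays $o(1)$ — and in particular small enough that the ``at least one leader neighbor'' Chernoff bound still holds — for all $F$ phases; this is exactly why $F$ is cut off at $\Delta^{2^F}\approx n^{1/100}$ (so $\eps_F$ is still tiny compared to $1$, indeed $\Delta^F = o(\eps)$ as noted after Eq.~\eqref{eq:parameters}) and why the final $O(1)$-diameter broadcast step is invoked rather than pushing the contraction further. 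The bookkeeping — tracking how $\eps_j$, the leader-sampling probability, and the super-vertex-count bound evolve together, and verifying the failure probabilities union-bound to $o(1)$ over all $F$ phases and all super-vertices — is the technical heart of the argument; everything else is a routine MPC sort-and-relabel implementation.
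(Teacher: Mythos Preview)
Your proposal is essentially the paper's proof: the $\GC$ algorithm runs $F$ phases of $\LE$ on the fresh subgraphs $\tG_j$, with Lemma~\ref{lem:grow-induction} tracking the almost-regularity invariant (the discrepancy grows as $\eps_{i+1}=20\eps_i$, a constant factor per phase rather than squaring), and Claims~\ref{clm:H_F-diameter}--\ref{clm:low-diameter} handling the final $O(1)$-diameter BFS. One slip worth correcting: after $F$ phases the super-vertex count is $n_F \approx n\Delta/\Delta_F \approx n^{1-1/100}$, not below $n^{1/100}$ --- it is the \emph{degree} $\Delta_F$ that reaches $n^{\Omega(1)}$, and that is what forces the contracted graph's diameter down to $O(1)$.
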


\noindent
We note that in Lemma~\ref{lem:technical-random-cc}, the input to the algorithm is the collection of graphs $(\tG_1,\ldots,\tG_F)$ (i.e., the algorithm knows partitioning of $G$ into its $F$ subgraphs; think of each input edge being labeled by the graph $\tG_i$
it belongs to). The rest of this section is devoted to the proof of Lemma~\ref{lem:technical-random-cc}. At the end of the section, we use this lemma to prove Lemma~\ref{lem:third-step}. In this section, $n$ always
refer to number of vertices in $\tG$.

\subsection{Proof of Lemma~\ref{lem:technical-random-cc}: Connectivity on a Single Random Graph}

We start by defining a natural operation on graphs in context of connectivity. 

\begin{definition}[Contraction Graph]\label{def:contraction-graph} 
For a graph $G(V,E)$ and a partition $\CC:=\set{C_1,\ldots,C_k}$ of $V(G)$ (not necessarily a component-partition), we construct a \emph{contraction graph} $H$ of $G$ with respect to $\CC$ as the following graph: 
\begin{enumerate}
	\item \textbf{Vertex-set:} The vertex-set $V(H)$ of $H$ is a collection of $k$ vertices where $w_i \in V(H)$ is labeled with the component $C_i$ of $\CC$, denoted by $C(w_i)$. 
	\item \textbf{Edge-set:} For any $w \neq z \in V(H)$, there exists an edge $(w,z) \in E(H)$ iff there exists vertices $u \in C(w)$ and $v \in C(z)$ where $(u,v) \in E(G)$ ($H$ contains no parallel edges and no self-loops). 
\end{enumerate}
In other words, $H$ is obtained by ``contracting'' the vertices of $G$ inside each set of $\CC$ into a single vertex and removing parallel edges and self-loops. 
\end{definition}

\noindent
Suppose $\CC$ is a component-partition of $G$ and $H$ is a contraction graph of $G$ with respect to $H$. Then it is immediate to see that we can construct a spanning tree (or forest) of $G$ given only spanning trees of each
component in $\CC$ and a spanning tree of $H$. 

\paragraph{Overview of the algorithm.} The algorithm in Lemma~\ref{lem:technical-random-cc} goes through $F$ \emph{phases}. In each phase $i \in [F]$, it only considers the edges in $\tG_i$ and use them to ``grow'' the components of $\tG$ found in
the previous phases. This part is done using a new \emph{leader-election} algorithm that we design in this paper. This algorithm takes the contraction graph of $\tG_i$ with respect to the set of components found already, and merge
these components further to build larger components. The novelty of this leader-election algorithm is that starting from an (almost) $d$-regular graph, it can grow each component by a factor of (almost) $d$ (as opposed to typical leader-election algorithms that 
only increase size of each component by a constant factor). 

Our main algorithm is then obtained by successively applying this leader election algorithm to contraction graph of $\tG_i$ to build relatively large components of $\tG_i$
and use them to refine the components found for $G$. The main step of our proof is to argue that if contraction graph of $\tG_i$ was a \emph{random} (almost) $d$-regular graph on $n'$ vertices, then
the contraction graph of $\tG_{i+1}$ in this process would be another random (almost) $d^2$-regular graph on roughly $n'/d$ vertices. Having achieved this, we can argue that each component of the graph $G$ grows by a quadratic factor in each phase, and
 hence after only $O(\log\log{n})$ phase, each component has size $n^{\Omega(1)}$ (due to technical reasons, one cannot continue this argument until just one connected component of size $n$ remains). 
 Finally, we prove that at this step, the diameter of the remaining graph, i.e., contraction of $G$ on the 
found components is only $O(1)$. A simple broadcasting algorithm can then be used to found a spanning tree of the remaining graph in $O(1)$ rounds.

\subsection*{A Leader Election Algorithm}

We first introduce a simple leader election algorithm, called $\LE(H,d)$, which gets as an input an (almost) ($d \cdot s$)-regular graph and creates  
components of size (almost) $d$ in this graph. We note that the description of the algorithm
itself does not depend on the fact that $H$ is almost-regular.

\begin{tbox}
	$\LE(H,d)$. A simple leader election algorithm for growing connected components on an (almost) $(d \cdot s)$-regular graph $H$. 
	
	\algline	 
	
	\begin{enumerate}
		\item Set $L = \emptyset$ initially.
		\item For every vertex $v \in V(H)$ \textbf{in parallel}
		independently sample $p(v)$ from the Bernoulli distribution with probability $p:= s/d$
		and insert $u$ to $L$ iff $p(v) = 1$ (we refer to these vertices as \emph{leaders}).
		\item Let $R:= V(H) \setminus L$. 
		\item For any vertex $v \in R$ \textbf{in parallel} set $N_{L}(v)$ be the set of neighbors of $v$ in $L$ in graph $H$. 
		\item For any vertex $v \in R$ \textbf{in parallel} let $M(v)$ be a vertex $u \in R$ chosen uniformly at random from $N_L(v)$ (we define $M(v) = \perp$ if $N_{L}(v) = \emptyset$).
		\item Return $k := \card{L}$ sets $S_{v_1},\ldots,S_{v_k}$ 
		for $v_1, \ldots, v_k \in L$ such that 
		$S_{v_i} = \{v_i\} \cup \{u \in R : M(u) = v_i\}$ (vertices with $M(u) = \perp$ are ignored).% and some $S_i$ may only contain a single vertex from $L$).
	\end{enumerate}
\end{tbox}

We have the following immediate claim.  
\begin{claim}\label{clm:LE-simple}
	Suppose in $\LE$ no vertex $v \in R$ has $M(v) = \perp$. Then, the returned collection $S_1,\ldots,S_k$ is a component-partition of $H$.  
\end{claim}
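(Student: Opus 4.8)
The plan is to show directly that the returned collection $\{S_{v_1},\ldots,S_{v_k}\}$ satisfies the two defining requirements of a component-partition of $H$ from Section~\ref{sec:prelim}: (i) the sets $S_{v_i}$ partition $V(H)$, and (ii) each induced subgraph $H[S_{v_i}]$ is connected. I would handle these two points in order, since (i) is essentially bookkeeping and (ii) is where the (mild) content lies.

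For part (i), I would first observe that under the hypothesis that no $v \in R$ has $M(v) = \perp$, every non-leader vertex is assigned to exactly one leader: $M(v)$ is a single well-defined vertex of $L$ (chosen uniformly from $N_L(v)$, which is nonempty precisely because $M(v) \neq \perp$). Hence $\{u \in R : M(u) = v_i\}$ over $i \in [k]$ is a partition of $R$. Since each $S_{v_i}$ additionally contains its own leader $v_i$, and the leaders $v_1,\ldots,v_k$ are distinct and exhaust $L = V(H) \setminus R$, the sets $S_{v_1},\ldots,S_{v_k}$ are pairwise disjoint and their union is $L \cup R = V(H)$. (I should note in passing that the claim's hypothesis is exactly what rules out the ``vertices with $M(u) = \perp$ are ignored'' caveat in the algorithm's last line from causing a loss of coverage.)

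For part (ii), I would argue that each $S_{v_i}$ is a \emph{star} centered at $v_i$ within $H$: for every $u \in S_{v_i} \setminus \{v_i\}$ we have $M(u) = v_i$, and $M(u)$ was chosen from $N_L(u)$, the set of neighbors of $u$ lying in $L$; therefore $(u, v_i) \in E(H)$. Consequently $H[S_{v_i}]$ contains the edge $(u,v_i)$ for every non-center vertex $u$, so it is connected (indeed it contains a spanning star). This establishes that each $S_{v_i}$ is a component of $H$ in the sense of the paper, and combined with part (i) we conclude $\{S_{v_1},\ldots,S_{v_k}\}$ is a component-partition of $H$.

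There is no real obstacle here — the statement is essentially an unwinding of the definitions of $\LE$, of $M(\cdot)$, and of ``component-partition.'' The only point requiring a moment's care is to make explicit the role of the hypothesis ``no $v \in R$ has $M(v) = \perp$'': without it, the ignored vertices with $M(u) = \perp$ would be omitted from every $S_{v_i}$, breaking the partition property, so the hypothesis is used precisely (and only) in part (i).
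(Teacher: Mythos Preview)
Your proposal is correct and matches the paper's own proof essentially line-for-line: the paper also observes that each $S_i$ contains a star centered at its leader (hence is a component of $H$), that the sets are disjoint by definition, and that the hypothesis $M(v) \neq \perp$ for all $v \in R$ ensures the sets cover all of $V(H)$. The only cosmetic difference is the order of presentation (the paper proves connectivity first, then the partition property).
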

\begin{proof}
	The induced graph of $H$ on any set $S_i$ contains a star with the leader in $S_i$ being the its center. Hence, each $S_i$ is a component of $H$. Moreover, by definition, the sets $S_i$'s are disjoint. 
	Finally, since for no vertex $v \in R$, $M(v) = \perp$, $S_i$'s contain all vertices in $H$. 
\end{proof}

The main property of \LE is that when computed on almost regular graphs it results in a component-partition with \emph{almost equal size components}. In other words,
if $H$ is a $\range{(1\pm \eps) d \cdot s}$-almost-regular graph, then the resulting components are of size $\range{\paren{1\pm O(\eps)} \cdot d}$ each. 
\begin{lemma}[Equipartition Lemma]\label{lem:LE-equal-size}
	Let $\beps \in (\eps,1/100)$ and $H$ be a $\range{(1\pm \beps)~d \cdot s}$-almost-regular graph for $d \geq s$. 
	Then, with probability $1-1/n^{23}$, for $(S_1,\ldots,S_k) = \LE(H,d)$: 
	\begin{enumerate}
	\item For all $i \in [k]$, $\card{S_i} \in \range{\paren{1 \pm 3\beps} d}$,
	\item $(S_1,\ldots,S_k)$ is a component-partition of $V(H)$. 
	\end{enumerate}
\end{lemma}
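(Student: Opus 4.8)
The plan is to prove both conclusions by keeping separate the two independent sources of randomness in $\LE$: the leader sampling of Step~2, and the choices $M(u)$ for $u\in R$ of Step~5, which are mutually independent once the leader set $L$ is fixed. Throughout, for a vertex $v$ write $N_L(v)$ for its leader-neighbours and $N_R(v)=N(v)\setminus L$ for its non-leader-neighbours in $H$; recall $d_v\in\range{(1\pm\beps)\,d\cdot s}$ and that Step~2 makes $|N_L(v)|$ a sum of $d_v$ independent Bernoulli trials with $\Ex|N_L(v)| = d_v\cdot p = \Theta(s)$, which is $\omega(\log n)$ by the choice of $s$ yet $o(\beps\, d s)$ since $d\ge s\gg 1/\beps$.

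First I would settle the component-partition claim. By Claim~\ref{clm:LE-simple} it suffices that every $v\in R$ has at least one leader-neighbour, so that $M(v)\neq\perp$. For a fixed $v$, $\Ex|N_L(v)|=\Omega(s)=\omega(\log n)$, so a Chernoff bound gives $\Pr(|N_L(v)|=0)\le e^{-\Omega(s)}\le n^{-100}$, and a union bound over the $\le n$ non-leaders gives failure probability $\le n^{-99}$.

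For the size bound, fix a vertex $w$, condition on $w\in L$, and expose first the whole leader set $L$ and only afterwards the $M(u)$. Over the randomness of $L$, define the ``regular configuration'' event $\mathcal G$: every vertex $u$ has $|N_L(u)|\in\range{(1\pm\eps_0)\,d_u p}$, where $\eps_0:=\beps/C$ for a suitably large absolute constant $C$. A multiplicative Chernoff bound gives $\Pr(u\text{ violates this})\le e^{-\Omega(\eps_0^2\, d_u p)} = e^{-\Omega(\eps^2 s)}\le n^{-100}$ (using $\eps^2 s=\Omega(\log n)$ with ample slack and $d_u p=\Theta(s)$), so $\Pr(\mathcal G)\ge 1-n^{-50}$. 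Under $\mathcal G$ one has $|N_L(w)|\le 2\,d_w p = o(\beps\, ds)$, hence $|N_R(w)| = d_w - |N_L(w)| \in \range{(1\pm 1.5\beps)\,d\cdot s}$. Now condition further on any fixed $L$ consistent with $\mathcal G$; then
\[
\card{S_w} - 1 \;=\; \bigl|\{\,u\in N_R(w) : M(u)=w\,\}\bigr|
\]
is a sum of $|N_R(w)|$ independent indicators, and for each such $u$ we have $w\in N_L(u)$ and $\Pr(M(u)=w\mid L) = 1/|N_L(u)| \in \range{\tfrac{1\pm\eps_0}{(1\pm\beps)s}}$ by $\mathcal G$. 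Hence the conditional mean $\mu_w := \Ex[\card{S_w}-1\mid L]$ satisfies
\[
\mu_w \;\in\; |N_R(w)|\cdot\range{\tfrac{1\pm\eps_0}{(1\pm\beps)s}}\;\subseteq\;\range{(1\pm 2.5\beps)\,d}
\]
for $C$ large, and in particular $\mu_w=\Omega(d)=\omega(\log n)$, so a final Chernoff bound gives $\card{S_w}-1\in\range{(1\pm\eps_0)\mu_w}$ except with probability $\le e^{-\Omega(\eps_0^2 d)}\le n^{-100}$. Since $\beps d\ge \eps s=\omega(1)$, this interval together with the additive $+1$ lies inside $\range{(1\pm 3\beps)\,d}$; a union bound over the $\le n$ leaders $w$, over $\overline{\mathcal G}$, and over the coverage event of the previous paragraph then establishes both conclusions with probability at least $1-n^{-23}$.

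The main obstacle is that $\card{S_w}$ is \emph{not} a sum of independent variables over the whole probability space of $\LE$: both the index set $N_R(w)$ and the per-term success probabilities $1/|N_L(u)|$ are functions of the leader sampling. The two-stage conditioning above is exactly the fix — freeze $L$ inside the high-probability event $\mathcal G$, which pins $|N_R(w)|$ and every $|N_L(u)|$ to within $(1\pm o(\beps))$ factors, and only then invoke independence of the $M(u)$'s. The rest is quantitative bookkeeping: one must keep the product of the four multiplicative errors — the a priori $(1\pm\beps)$ on each degree, the concentration of $|N_L(w)|$ entering $|N_R(w)|$, the concentration of each $|N_L(u)|$ appearing in a denominator, and the deviation in the final Chernoff step — inside the window $\range{(1\pm3\beps)d}$, which is possible precisely because $\eps^2 s=\Omega(\log n)$ leaves enough room to take $\eps_0=\beps/C$ with $C$ as large as needed while all failure probabilities stay $\le n^{-100}$.
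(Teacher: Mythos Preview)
Your proposal is correct and follows essentially the same approach as the paper: first use Chernoff to show every vertex has $|N_L(\cdot)|$ concentrated around $d_v p \approx s$ (which, via Claim~\ref{clm:LE-simple}, already gives the component-partition), then condition on this event so that $|N_R(w)|$ is pinned and the indicators $\{M(u)=w\}$ for $u\in N_R(w)$ become independent with known success probabilities, and apply Chernoff again. The paper fixes the constant explicitly as $\eps'=\beps/10$ where you use a generic $\eps_0=\beps/C$, but the structure and the error-tracking are the same; your explicit discussion of why the two-stage conditioning is needed is a point the paper leaves implicit.
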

\begin{proof}
	Define $\eps' = \beps/10$ and so $s \geq 100\log{n}/\eps'^2$ by Eq~(\ref{eq:parameters}). 
	Throughout the proof, we repeatedly use the facts that $\range{\paren{1 \pm \eps'}^{-1}} \subseteq \range{\paren{1 \pm 2\eps'}}$ and $\range{\paren{1 \pm \eps'}^2} \subseteq \range{\paren{1 \pm 3\eps'}}$ as $\eps' =o(1)$.
	
	Fix any vertex $u \in R$ and let $d_u \in \range{\paren{1\pm 10\eps'} d \cdot s}$ be the degree of $u$ in $H$. We define $d_u$ random variables $X_1,\ldots,X_{d_u}$ where $X_i = 1$ iff the $i$-th neighbor of $u$ is chosen as a leader in $L$ and 
	$X_i = 0$ otherwise. Let $X = \sum_{i}X_i$ denote the number of neighbors of $u$ in $L$. As the choice of any leader is independent of whether $u$ belongs to $L$ or not, we have $\Ex\bracket{X} = d_u \cdot p \in \range{(1\pm10\eps')s}$. 
	Moreover, by Chernoff bound,
	\begin{align*}
		\Pr\paren{X \notin \range{\paren{1\pm \eps'} \Ex\bracket{X}}} &\leq \exp\paren{-\frac{\eps'^2 \cdot d_u \cdot p}{2}} \leq \exp\paren{-\frac{\eps'^2 (1-10\eps') \cdot s}{2}} \\
		&\leq \exp\paren{-25\log{n}} \leq \frac{1}{n^{25}} \tag{as $s \geq 100\log{n}/\eps'^2$ and $\eps' =o(1)$}. 
	\end{align*}
	Consequently, w.p. $1-1/n^{25}$, $\card{N_L(u)} \in \range{\paren{1 \pm \eps'} \cdot \paren{1 \pm 10\eps'} \cdot s} \subseteq \range{\paren{1 \pm 12\eps'}\cdot s}$ (as $\eps' =o(1)$).
	By union bound, this event happens for all vertices in $R$ w.p. $1-1/n^{24}$. In the following, we condition on this event. The second part of the lemma already follows from this and Claim~\ref{clm:LE-simple}. 
	
	Now fix a vertex $v \in L$. Define $N_R(v)$ as the set of neighbors of $v$ in set $R$ in graph $H$. The same exact argument as above implies that with probability $1-1/n^{24}$, for all vertices in $L$, $\card{N_R(v)} \in \range{\paren{1 \pm12\eps'}d \cdot s}$. 
	We further condition
	on this event. 
	
	Consider again a vertex $v \in L$. For any vertex $u \in N_R(v)$, we define a random variable $Y_u$ where $Y_u = 1$ iff $M(u) = v$, i.e., $u$ chooses $v$ as its leader. Define $Y = \sum_{u} Y_u$. We point out that $Y+1$ is the size of 
	component returned by $\LE$ which contains the leader $v$. Hence, it suffices to bound $Y$ to finalize the proof. We have, 
	\begin{align*}
		\Ex\bracket{Y} = \sum_{u \in N_R(v)} \Ex\bracket{Y_u} = \sum_{u \in N_R(v)} \frac{1}{\card{N_L(u)}} \in  \range{\frac{\paren{1 \pm 12\eps'} d \cdot s}{\paren{1 \pm 12\eps'} s}} \subseteq \range{\paren{1\pm25\eps'} \cdot d},
	\end{align*}
	as $\card{N_R(v)} \in \range{\paren{1 \pm12\eps'}d \cdot s}$ and $\card{N_L(u)} = \range{\paren{1 \pm 12\eps'}\cdot s}$ and $\eps' =o(1)$. By Chernoff bound, 
	\begin{align*}
		\Pr\paren{Y \notin \range{\paren{1\pm \eps'} \Ex\bracket{Y}}} \leq \exp\paren{-\frac{\eps'^2 \cdot (1-25\eps') \cdot d}{2}} \leq \exp\paren{-25\log{n}} \leq \frac{1}{n^{25}}. 
	\end{align*}
	A union bound on all vertices in $L$ implies that $\card{S_i} \in \range{\paren{1 \pm 27\eps'} d} \subseteq \range{\paren{1 \pm 30\eps'} d}$ with probability $1-1/n^{24}$. Taking another union bound on all the events 
	conditioned on in the proof, with probability $1-1/n^{23}$, we obtain that $\card{S_i} \in \range{\paren{1 \pm 30\eps'} d} = \range{\paren{1 \pm 3\beps} d}$, finalizing the proof. 	
\end{proof}

We have the following claim by the definition of Algorithm $\LE$.
\begin{claim}\label{claim:leader_election}
Algorithm $\LE(H,d)$  requires $O(|E(H)| / n^\delta)$ machines each with $O(n^{\delta})$ memory and  $O(1/\delta)$  MPC rounds.
\end{claim}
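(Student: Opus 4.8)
The plan is to check that each of the six lines of $\LE(H,d)$ can be executed with the claimed resources, using only local computation together with the parallel sort and search primitives of Section~\ref{sec:prelim}; recall that each such primitive, applied to a collection of $N$ items on machines of memory $O(n^{\delta})$, runs in $O(\log_{n^\delta} N)$ rounds, which is $O(1/\delta)$ here since $\card{E(H)} \leq n^2$. Throughout, I would keep $H$ stored as its edge list, spread over $O(\card{E(H)}/n^\delta)$ machines, with each vertex additionally holding a single record containing its own identifier.

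Lines 1--3 are purely local: in a single round, with no communication, each machine flips a Bernoulli$(s/d)$ coin for every vertex record it holds and thereby determines membership in $L$ (and, implicitly, in $R$). To compute $N_L(v)$ for all $v \in R$ (Line 4), I would propagate leader status along edges: viewing the annotated vertex records $(v, \mathbf{1}\{v \in L\})$ as a key--value store and issuing, for each edge endpoint, a search query keyed by that endpoint, the parallel search annotates in $O(1/\delta)$ rounds every edge $(u,v)$ with the two leader bits of its endpoints; after this step each edge ``knows'' whether it joins a vertex of $R$ to a vertex of $L$.

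The only point in Line 5 that needs a little care is that the adjacency list of a vertex $v \in R$ need not fit on one machine, so $v$ cannot simply read $N_L(v)$ and sample from it. Instead I would sample $M(v)$ edge-by-edge: attach to each edge $(v,u)$ with $v \in R$, $u \in L$ an independent uniformly random key in $[n^{10}]$, sort these edges by their $R$-endpoint, and within each group select the edge of smallest key (a segmented minimum obtained as a by-product of the sort); its $L$-endpoint is $M(v)$, which by symmetry is uniform over $N_L(v)$, and $M(v) = \perp$ exactly when $v$ has no incident $R$--$L$ edge. Line 6 is a similar grouping step: sort the pairs $(M(u), u)$ over those $u \in R$ with $M(u) \neq \perp$ by the first coordinate, laying out each $S_{v_i} \setminus \{v_i\}$ contiguously, then adjoin the leader $v_i$. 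Both steps cost $O(1/\delta)$ rounds.

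Adding up the constantly many phases gives $O(1/\delta)$ MPC rounds overall; no step stores more than $O(n^\delta)$ words on any machine, and the machine count is dominated by the $O(\card{E(H)}/n^\delta)$ needed to hold the edge list, as claimed. There is essentially no obstacle to overcome here --- the only thing to keep in mind is to never assume that a single vertex's neighborhood fits in local memory, which is why Lines 4 and 5 are phrased through sort/search rather than through ``each vertex inspects its neighbors.''
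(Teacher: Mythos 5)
Your proof is correct and is essentially the argument the paper leaves implicit: the paper simply asserts the claim ``by the definition of Algorithm $\LE$'' with no further justification, and your line-by-line verification via the sort-and-search primitives of Section~\ref{sec:prelim} is the intended reading. In particular you are right to flag the one nontrivial point (that $N_L(v)$ need not fit in local memory, so $M(v)$ must be sampled by a random-keying-plus-segmented-minimum trick rather than by local inspection of the adjacency list), which matters because in the application the degree of $H_i$ can be as large as $n^{\Omega(1)}$ and may exceed the $O(n^\delta)$ per-machine space.
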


\subsection*{Growing Connected Components} 

We now use \LE algorithm from the previous section to design our main algorithm which ``grows'' the size of connected components of $G$ repeatedly over $F$ phases.

\begin{tbox}
	$\GC(\tG,\Delta)$. An algorithm for ``growing'' connected components of size up to $n^{\Omega(1)}$ in a given graph $\tG = \tG_1 \cup \ldots \cup \tG_F$ where $\tG_i \sim \PG(n,\Delta \cdot s)$. 

	\algline	 
	\begin{enumerate}
		\item Let $\CC_1$ be a partition of $V(\tG)$ into singleton sets. 
		\item For $i=1$ to $F$ phases: 
		\begin{enumerate}
			\item Let $\Delta_i := \Delta^{2^{i-1}}$ and $p_i = \Delta_i^{-1} \cdot s$. 
			\item For every vertex $v \in V(\tG_i)$ \textbf{in parallel} let $c_i(v) = j$ for $v \in C_j$.
			\item Construct contraction graph $H_i$ of $\tG_i$ (\emph{not} $\tG$) with respect to $\CC_i$ as follows:
			\begin{enumerate}
				\item Set $H_i$ to be an empty set inititially. 
				\item For every edge $(u, v) \in E(\tG_i)$ \textbf{in parallel} add $(C_{c_i(u)}, C_{c_i(v)})$ to $H_i$.
			\end{enumerate} 
			\item Compute $(S_1,\ldots,S_k) = \LE(H_i,\Delta_i)$ (hence, each $S_j \subseteq V(H_i)$). 
			\item For each $S_j$ \textbf{in parallel} let   $C_{i+1,j} = \bigcup_{w \in S_j} C_{i}(w)$.
			\item Let $\CC_{i+1} = \set{C_{i+1,1},\ldots,C_{i+1,k}}$.
		\end{enumerate}
		\item Return the graph $H_F$. 
	\end{enumerate}
\end{tbox}

The following claim is straightforward from the description of  $\GC$ and Claim~\ref{claim:leader_election}.
\begin{claim}\label{clm:grow_components}
Algorithm $\GC(\tG,\Delta)$  requires $O(|E(\tG)| / n^\delta)$ machines each with $O(n^{\delta})$ memory and  $O(F/\delta)$  MPC rounds.
\end{claim}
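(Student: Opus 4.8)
The plan is to bound the resources phase by phase, exploiting that $\GC$ runs its $F$ phases sequentially (so machines are reused across phases) and that every non-trivial bookkeeping step reduces to the parallel sort and search primitives of Section~\ref{sec:prelim}. Throughout, I would represent the current partition $\CC_i$ by a distributed key-value store mapping each vertex of $\tG$ to the identifier of the set of $\CC_i$ containing it; this store has $O(n)$ entries and lives on $O(n^{1-\delta})$ machines. The edges of $\tG = \tG_1 \cup \cdots \cup \tG_F$ are stored using $O(|E(\tG)|/n^\delta)$ machines, and in phase $i$ only the $|E(\tG_i)| \le |E(\tG)|$ edges of $\tG_i$ are touched.

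First I would account for a single phase $i$. Computing $c_i(v)$ for every $v \in V(\tG_i)$ (step (b)) is one parallel search of the key-value store, costing $O(\log_{n^\delta} n) = O(1/\delta)$ rounds. Constructing the contraction graph $H_i$ (step (c)) replaces each edge $(u,v) \in E(\tG_i)$ by the pair $(C_{c_i(u)},C_{c_i(v)})$ — a local step once the $c_i(\cdot)$ labels are attached — and then discards self-loops and duplicate pairs via one sort of a multiset of at most $|E(\tG_i)|$ pairs, again $O(1/\delta)$ rounds; since $|E(H_i)| \le |E(\tG_i)|$ this never exceeds $O(|E(\tG)|/n^\delta)$ machines with $O(n^\delta)$ memory each. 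Invoking $\LE(H_i,\Delta_i)$ (step (d)) costs $O(|E(H_i)|/n^\delta)$ machines, $O(n^\delta)$ memory, and $O(1/\delta)$ rounds by Claim~\ref{claim:leader_election}. Forming $\CC_{i+1}$ (steps (e)--(f)) assigns a fresh identifier to each returned set $S_j$ and propagates it, through two rounds of sort/search, to every vertex of $\tG$ currently mapped to one of the old sets $C_i(w)$ with $w \in S_j$; this is $O(1/\delta)$ rounds and touches $O(n)$ vertex-label pairs.

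Summing over the $F$ phases yields $O(F/\delta)$ MPC rounds; the per-machine memory stays $O(n^\delta)$ and the number of active machines never exceeds $O(|E(\tG)|/n^\delta)$, since contraction only shrinks edge sets and the auxiliary store has size $O(n) = O(|E(\tG)|)$. As the claim is flagged ``straightforward'', there is no real obstacle; the only point needing a line of care is verifying that the intermediate multiset of endpoint pairs produced in step (c) and the relabeling bookkeeping in steps (e)--(f) each have total size $O(|E(\tG)|)$ and are manipulated purely by sort/search, so that they respect the $O(n^\delta)$ per-machine and $O(|E(\tG)|/n^\delta)$ machine budgets — which is immediate from $|E(\tG_i)| + O(n) = O(|E(\tG)|)$.
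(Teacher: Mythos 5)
Your proposal is correct and matches the (unwritten) argument the paper has in mind: the paper simply declares the claim ``straightforward from the description of $\GC$ and Claim~\ref{claim:leader_election}'', and your phase-by-phase accounting---each of steps (b)--(f) reducing to $O(1/\delta)$-round sort/search plus one call to $\LE$, summed over $F$ phases---is exactly what that remark is gesturing at. No discrepancy to report.
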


We prove that for each phase $i \in [F]$, the contraction graph $H_i$ constructed in this phase is an almost-regular graph with degree roughly $\Delta_i \cdot s$ and discrepancy factor $\eps_i := \paren{20^{i} \cdot \eps}$. 
The following lemma is the heart of the proof. 

\begin{lemma}\label{lem:grow-induction}
	In $\GC(\tG,\Delta)$, with high probability, for any $i \in [F]$: 
	\begin{enumerate}[label=(\Roman*)]
		\item	$\CC_i$ is a component-partition of $\tG$ with $\card{C_{i,j}} \in \range{\paren{1 \pm \eps_i}\Delta_i /\Delta}$ for all $C_{i,j} \in \CC_i$.
		\item $H_i$ is a $\range{\paren{1 \pm \eps_i} \Delta_i \cdot s}$-almost-regular graph on $n_i \in \range{\paren{1 \pm \eps_i} \cdot {n \Delta}/{\Delta_i}}$ vertices. 
	\end{enumerate}
\end{lemma}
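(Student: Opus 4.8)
The plan is to prove Lemma~\ref{lem:grow-induction} by induction on the phase index $i \in [F]$, establishing statements (I) and (II) jointly. The base case $i = 1$ is essentially immediate: $\CC_1$ is the partition into singletons, so $\card{C_{1,j}} = 1 = \Delta_1/\Delta$ (since $\Delta_1 = \Delta$), and $H_1$ is the contraction graph of $\tG_1 \sim \PG(n, \Delta \cdot s)$ with respect to singletons, i.e. $H_1 = \tG_1$ itself (modulo removing parallel edges and self-loops, which by Proposition~\ref{prop:PG-regular} and a standard birthday-type bound affects only a $(1 \pm \eps_1)$ fraction of incidences when $\Delta \cdot s = \Theta(\polylog n) = n^{o(1)}$). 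So by Proposition~\ref{prop:PG-regular}, $H_1$ is $\range{(1 \pm \eps_1)\Delta \cdot s}$-almost-regular on $n_1 = n$ vertices, with $\eps_1 = 20\eps \geq \eps$. This gives (I) and (II) at $i=1$.

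For the inductive step, assume (I) and (II) hold at phase $i$; I want to derive them at phase $i+1$. First apply the Equipartition Lemma (Lemma~\ref{lem:LE-equal-size}) with $\beps = \eps_i$ and $d = \Delta_i$: since $H_i$ is $\range{(1 \pm \eps_i)\Delta_i \cdot s}$-almost-regular and $\eps_i = 20^i \eps \in (\eps, 1/100)$ (using $F = O(\log\log n)$ so that $20^F\eps = o(1)$, which is where the choice $\eps = (100\log n)^{-2}$ and $\Delta^F \in [n^{1/100}, n^{1/50}]$ are used), we get with probability $1 - 1/n^{23}$ that $(S_1,\ldots,S_k) = \LE(H_i,\Delta_i)$ is a component-partition of $V(H_i)$ with each $\card{S_j} \in \range{(1 \pm 3\eps_i)\Delta_i}$. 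Since each $C_{i+1,j} = \bigcup_{w \in S_j} C_i(w)$ is a union of components of $\tG$ that are mutually connected through $\tG_i \subseteq \tG$, $\CC_{i+1}$ is again a component-partition of $\tG$; and $\card{C_{i+1,j}} = \sum_{w \in S_j}\card{C_i(w)} \in \range{(1 \pm 3\eps_i)\Delta_i \cdot (1 \pm \eps_i)\Delta_i/\Delta} \subseteq \range{(1 \pm \eps_{i+1})\Delta_{i+1}/\Delta}$, using $\Delta_{i+1} = \Delta_i^2$ and $(1+3\eps_i)(1+\eps_i) \leq 1 + 20\eps_i = 1 + \eps_{i+1}$ (and symmetrically for the lower bound). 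This gives (I) at $i+1$; the vertex count $n_{i+1} = k$ follows since $\sum_j \card{C_{i+1,j}} = n$ and each summand is $\range{(1\pm\eps_{i+1})\Delta_{i+1}/\Delta}$, so $n_{i+1} = k \in \range{(1 \pm O(\eps_{i+1})) n\Delta/\Delta_{i+1}}$.

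The main obstacle, and the heart of the argument, is establishing the almost-regularity of $H_{i+1}$ — part (II) at $i+1$. Here is where the "fresh randomness" of $\tG_{i+1}$ is crucial: $\tG_{i+1} \sim \PG(n, \Delta \cdot s)$ is independent of everything that produced $\CC_{i+1}$ (which depended only on $\tG_1,\ldots,\tG_i$), so conditioned on the partition $\CC_{i+1}$, the edges of $\tG_{i+1}$ are still a fresh sample. The contraction graph $H_{i+1}$ has vertex set indexed by the $k = n_{i+1}$ parts, each of size $\approx \Delta_{i+1}/\Delta$; its degree at a part $C$ is the number of \emph{distinct} other parts hit by the $\Theta(\card{C} \cdot \Delta \cdot s)$ random edge-endpoints emanating from $C$ in $\tG_{i+1}$. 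The expected number of such endpoints is $\range{(1\pm\eps_{i+1})\Delta_{i+1} \cdot s}$, and I would argue (i) by a Chernoff bound that the actual number concentrates to within $(1\pm\eps')$ of its mean for $\eps' = \Theta(\eps)$, using $s = 10^6\log n/\eps^2$ so the tail is $\exp(-\Omega(\log n))$; and (ii) that the number of \emph{collisions} — endpoints landing in a part already hit, or in $C$ itself — is negligible, i.e. at most an $\eps_{i+1}$ fraction, because the target parts are chosen nearly uniformly among $n_{i+1} \approx n\Delta/\Delta_{i+1}$ parts while only $\Delta_{i+1}\cdot s = n^{o(1)}$ endpoints are thrown, so the collision probability per endpoint is $O(\Delta_{i+1} s / n_{i+1}) = O(\Delta_{i+1}^2 s / (n\Delta)) = n^{-\Omega(1)}$ (this is where $\Delta^F \leq n^{1/50}$ keeps $\Delta_{i+1}^2 \leq n^{1/25} \ll n$); a union bound over the $n_{i+1} \leq n$ parts and a final absorption of all error terms $3\eps_i, \eps', O(\text{collisions})$ into the slack $\eps_{i+1} = 20\eps_i$ closes the induction. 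The delicate points to handle carefully are: the conditioning argument making $\tG_{i+1}$ independent of $\CC_{i+1}$ (which should be spelled out via the partition into $F$ independent batches), and checking that the parallel-edge/self-loop removal in forming $H_{i+1}$ only costs an $\eps_{i+1}$ fraction of degree — both of which reduce to the same collision bound.
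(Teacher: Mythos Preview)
Your proposal follows the same inductive skeleton as the paper's proof, and the treatment of the base case, the application of the Equipartition Lemma, and the derivation of part~(I) and of $n_{i+1}$ are all essentially identical to what the paper does. The independence observation (that $\tG_{i+1}$ is a fresh sample independent of $\CC_{i+1}$) is exactly the key point the paper exploits.

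There is, however, one genuine gap in your degree analysis for $H_{i+1}$. You write that the degree of a part $C$ is ``the number of distinct other parts hit by the $\Theta(\card{C}\cdot\Delta\cdot s)$ random edge-endpoints emanating from $C$'', and you aim for a target of $\range{(1\pm\eps_{i+1})\Delta_{i+1}\cdot s}$. But in $\PG(n,\Delta\cdot s)$ each vertex chooses only $\Delta\cdot s/2$ \emph{outgoing} endpoints, so the endpoints you can analyze as independent uniform draws from $C$ number only $\card{C}\cdot\Delta\cdot s/2 \approx \Delta_{i+1}\cdot s/2$, and your balls-and-bins/collision argument as written yields only half the target degree. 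The other half of $d_w$ comes from parts $C'$ that \emph{hit} $C$ via their own outgoing edges; these are not draws made by $C$, and they require a separate argument. The paper handles this explicitly by splitting $d_w$ into $\card{K^+}$ (parts that $C$ hits, bounded via the balls-and-bins Proposition) and $\card{K^-}$ (parts outside $K^+$ that hit $C$, bounded by a direct Chernoff over the independent events ``$C_{i+1,j}$ hits $C$''), each contributing $\range{(1\pm\eps_{i+1})\Delta_{i+1}\cdot s/2}$. Your sketch needs this second half; once you add it, the argument closes exactly as the paper's does.
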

\begin{proof}
	We prove this lemma inductively. 
	
	\paragraph{Base case:} $\CC_1$ is clearly a component-partition of $\tG$ as it only consists of singleton sets and $\card{C_{1,j}} = 1$ for all $C_{1,j} \in \CC_1$. Since $\Delta_1 = \Delta$, this proves the first part of the lemma in the base case. 
	For the second part, as $\CC_1$ only consists of singleton sets, $H_1 = \tG_1$ and hence $n_1 = n$. Finally, $H_1 = \tG_1 \sim \PG(n,\Delta \cdot s)$ and hence by Proposition~\ref{prop:PG-regular} (as $s \geq 100\log{n}/\eps^2$), $H_1$ is a
	$\range{\paren{1 \pm \eps} \Delta \cdot s}$-almost-regular graph, hence concluding the proof of the base case.
	
	\paragraph{Induction step:} Now suppose this is the case for some $i > 1$ and we prove it for $i+1$. 
	By induction, we have that $H_i$ is a $\range{\paren{1 \pm \eps_i} \Delta_i\cdot s} $-almost-regular graph on $n_i \in \range{\paren{1 \pm \eps_i} \cdot {n \cdot \Delta}/{\Delta_i}}$ vertices. In this phase, 
	we compute $(S_1,\ldots,S_k) = \LE(H_i,\Delta_i)$.  We can thus apply Lemma~\ref{lem:LE-equal-size} with parameters $d= \Delta_i$, $p = p_i$, and $\beps =  \eps_i < 1/100$, and obtain that
	with high probability, 
	\begin{align}
	\card{S_i} \in \range{\paren{1 \pm 3\beps} \cdot \Delta_i} = \range{\paren{1 \pm 3\eps_i} \cdot \Delta_i}, \label{eq:card-Si}
	\end{align}
	and $(S_1,\ldots,S_k)$ is a component-partition of $H_i$. In the following, we condition on this event. 
	
	\paragraph{Proof of part (I):} Since $\CC_i$ is a component-partition of $\tG$ (by induction), we have that vertices in $H_i$ correspond to components of $\tG$, i.e., vertices in $C_i(w)$ for all $w \in V(H_i)$ are connected in $\tG$. 
	Moreover, by Lemma~\ref{lem:LE-equal-size}, $(S_1,\ldots,S_k)$ is a component-partition of $H_i$ and hence vertices (of $H_i$) in each $S_j$ for $j \in [k]$ are connected to each other (in $H_i$). 
	As edges of $H_i$ correspond to edges in $\tG_i \subseteq \tG$, any $C_{i+1,j} \in \CC_{i+1}$ is a component of $\tG$, hence $\CC_{i+1}$ is a component-partition of $\tG$. 
	
	We now prove the bound on size of each $C_{i+1,j} \in \CC_{i+1}$. By definition, 
	\begin{align}
	\card{C_{i+1,j}} &= \sum_{w \in S_j} \card{C_{i}(w)} \in \range{\card{S_j} \cdot \paren{1 \pm \eps_i}\Delta_i/\Delta} \tag{by induction hypothesis on ${C_i(w)} \in \CC_i$} \notag \\ 
	&\subseteq \range{(\paren{1 \pm 3\eps_i} \cdot \Delta_i) \cdot (\paren{1 \pm \eps_i}\Delta_i/\Delta)} \tag{by Eq~(\ref{eq:card-Si})} \notag \\
	&\subseteq \range{\paren{1 \pm 5\eps_i} \cdot \Delta_i^2/\Delta} = \range{\paren{1 \pm 5\eps_i} \cdot \Delta_{i+1}/\Delta},  \label{eq:sharper-C}
	\end{align}
	as $\Delta_i^2 = \Delta_{i+1}$. By the choice of $\eps_{i+1} > 5\eps_{i}$, this finalizes the proof of the first part. We now consider the second part.

	\paragraph{Proof of part (II):} Notice that $n_{i+1} = \card{\CC_{i+1}}$ as each set in $\CC_{i+1}$ is contracted to a single vertex in $H_{i+1}$. Since $\CC_{i+1}$ partitions $V(G)$, and 
	as by Eq~(\ref{eq:sharper-C}) each set in $\CC_{i+1}$ has size in $\range{\paren{1 \pm 5\eps_i} \cdot \Delta_{i+1}/\Delta}$, 
	we have 
	\begin{align}
		n_{i+1} \in \range{\frac{n}{\paren{1 \pm 5\eps_i} \cdot \Delta_{i+1}/\Delta}} \subseteq \range{\paren{1 \pm 6\eps_i} {n \cdot \Delta}/{\Delta_{i+1}}}. \label{eq:sharper-n}  
	\end{align}
	As $\eps_{i+1} > 6\eps_{i}$, this proves the bound on $n_{i+1}$. It remains to prove $H_{i+1}$ is an $\range{\paren{1 \pm \eps_{i+1}} \Delta_{i+1} \cdot s}$-almost-regular graph. This is the main part of the argument. 
	\begin{lemma}\label{lem:dw}
		For any vertex $w \in V(H_{i+1})$, degree of $w$ in $H_{i+1}$ is $d_w \in \range{\paren{1 \pm \eps_{i+1}} \Delta_{i+1} \cdot s}$ with high probability. 
	\end{lemma}
	\begin{proof}
	Recall that $H_{i+1}$ is a contraction graph of $\tG_{i+1}$ with respect to the partition $\CC_{i+1}$. We define $C=C_{i+1}(w) \in \CC_{i+1}$.
	In $H_{i+1}$, $w$ has 
	an edge to another vertex $z \in V(H_{i+1})$ iff there exists a vertex $u \in C \subseteq V(\tG_{i+1})$ such that $u$ has an edge to some vertex $v \in C_{i+1}(z)$ in the graph $\tG_{i+1}$. 
	As such, degree of $w$ is equal to the number of sets $C_{i+1,j} \subseteq V(\tG_{i+1})$ such that there is an edge $(u,v) \in E(\tG_{i+1})$ for $u \in C$ and $v \in C_{i+1,j}$. 
	
	Now consider the process of generating $\tG_{i+1} \sim \PG(n,\Delta \cdot s)$ and notice that the edges chosen in $\tG_{i+1}$ are chosen independent of the choice of $\CC_{i+1}$ as $\CC_{i+1}$ is only a function
	of the graphs $\tG_1,\ldots,\tG_i$. Moreover, recall that 
	in $\PG(n,\Delta \cdot s)$ each vertex chooses $\Delta \cdot s/2$ other vertices uniformly at random to connect to (and then we remove the direction of edges). For any two sets $S,T \subseteq V(\tG_{i+1})$, 
	we say that $S$ ``hits'' $T$ if there exists a vertex in $S$ which picks a directed edge to some vertex in $T$ in the process of generating $\tG_{i+1}$ (so it is possible that $S$ hits $T$ but $T$ does not hit $S$).
	Let $K \subseteq [k]$ be such that for each $j \in K$, either $C$ hits $C_{i+1,j}$ or vice versa. By the above argument $d_w \in \range{\card{K} \pm 1}$ (to account for the fact that $C$ hitting $C$ does not change the degree of $w$ as we have no self-loops in $H_{i+1}$). 
	In the following two claims, we bound $\card{K}$. 
	\begin{claim}\label{clm:K-out}
		Let $K^+ \subseteq [k]$ be the set of all indices $j \in [k]$ such that $C$ hits $C_{i+1,j}$. Then, with high probability, $\card{K^+} \in \range{\paren{1 \pm \eps_{i+1}} \Delta_{i+1} \cdot s/2}$. 
	\end{claim}
	\begin{proof}
	We model the number of sets hit by $C$ as a balls and bins expriment (see Appendix~\ref{app:balls-and-bins}): ``balls'' are the edges going out of vertices in $C$ in construction of $\tG_{i+1}$ in $\PG$ and ``bins'' are the sets $C_{i+1,j}$ for $j \in [k]$. 
	Hence, non-empty bins are exactly the 
	set $K^+$ and thus it suffices to bound number of non-empty bins.
	
	In $\tG_{i+1}$, any vertex in $C$ is choosing $\Delta \cdot s/2$ directed edges. As such, the total number of balls in this argument is $N = \card{C} \cdot \Delta \cdot s/2 \in \range{\paren{1 \pm 5\eps_i}\Delta_{i+1} \cdot s/2}$
	by the bound proven on $\card{C}$ in Eq~(\ref{eq:sharper-C}).
	
	The total number of bins in this argument is $B = \card{\CC_{i+1}} = n_{i+1} \in \range{\paren{1 \pm 6 \eps_i} {n \cdot \Delta}/{\Delta_{i+1}}}$ as proven in Eq~(\ref{eq:sharper-n}). Moreover, 
	for any $j \in [k]$, $\card{C_{i+1,j}} \in \range{\paren{1 \pm 5 \eps_i}\Delta_{i+1}/\Delta}$ (as stated above for $C$). As such, the ratio between the largest and smallest set in $\CC_{i+1}$ is in
	$\range{\paren{1 \pm 10\eps_{i}}}$. Moreover, edges going out of $C$ are chosen uniformly at random from, and hence each bin in this argument is chosen 
	with probability in $\range{\paren{1 \pm 10\eps_{i}} \cdot B^{-1}}$. Furthermore,
	\begin{align*}
		\frac{B}{N} &\in \range{\frac{\paren{1 \pm 6\eps_i} {n \cdot \Delta}/{\Delta_{i+1}}}{{\paren{1 \pm 5\eps_i}\Delta_{i+1} \cdot s/2}}} \implies 
		\frac{B}{N} \geq \frac{ n \cdot \Delta}{2\Delta_{F}^2} \gg \polylog{(n)} \gg \frac{1}{10\eps_i}, 
	\end{align*}
	where the inequalities are by choice of $F$ and $\eps$ because $\eps_F = o(1)$ and $\Delta_{F} = \Delta^{2^{F}} \leq n^{1/50}$. 
 	
	Let $X$ be the number of non-empty bins in this process. By Proposition~\ref{prop:balls-and-bins} for this balls and bins experiment: 
	\begin{align*}
		\Pr\Paren{X \notin \range{\paren{1\pm 20\eps_i} \cdot N}} \leq\exp\paren{-\frac{100\eps_i^2 \cdot N}{2}} = 1/n^{\omega(1)}. 
	\end{align*}
	Hence, with high probability, the total number of non-empty bins is in $\range{\paren{1 \pm 20\eps_i}\Delta_{i+1} \cdot s/2}$, which finalizes the proof as $\eps_{i+1} = 20\eps_i$. \Qed{Claim~\ref{clm:K-out}}
	
	\end{proof}
	One interpretation of Claim~\ref{clm:K-out} is that  that distribution of $H_{i+1}$ is similar to $\PG(n_{i+1},\Delta_{i+1} \cdot s)$ with the difference that the number of out-edges chosen in $\PG$ is not exactly
	$\Delta_{i+1} \cdot s$ (but quite close to it for each vertex). As such, we would expect $H_{i+1}$ to still behave similarly as $\PG(n_{i+1},\Delta_{i+1} \cdot s)$; in particular, be almost-regular with
	high probability. The following claim is analog of Proposition~\ref{prop:PG-regular} for the distribution of $H_{i+1}$.  

	\begin{claim}\label{clm:K-in}
		Let $K^- \subseteq [k]$ be the set of all indices $j \in [k] \setminus K^+$ such that $C_{i+1,j}$ hits $C$. Then, with high probability, $\card{K^-} \in \range{\paren{1 \pm \eps_{i+1}} \Delta_{i+1} \cdot s/2}$. 
	\end{claim}
	\begin{proof}
		Fix any $j \in [k] \setminus K^+$. As shown in Claim~\ref{clm:K-out}, $c_j := \card{C_{i+1,j}} \cdot \Delta \cdot s/2 \in \range{\paren{1 \pm 5\eps_i}\Delta_{i+1} \cdot s/2}$ edges are going out 
		of vertices in $C_{i+1,j}$. Any such edge, would hit the set $C$ with probability $p:= \card{C}/n$. Let $\eps' = (1/\log{n})^{10} \ll \eps$. We have,
		\begin{align*}
			\Pr\paren{C_{i+1,j}~\textnormal{hits}~C} = 1-\paren{1-p}^{c_j} \in \range{(1 \pm c_j p) \cdot c_j p} \subseteq \range{(1 \pm \eps') \cdot c_j p}. \tag{as $(1-x) \leq e^{-x} \leq 1-x+x^2$ and $c_j \cdot p = \Ot(\Delta_F^2)/n = \Ot(n^{2/50})/n \ll \eps'$}
		\end{align*}
		Moreover, we have $k = n_{i+1} \in \range{\paren{1 \pm 6 \eps_i} {n \cdot \Delta}/{\Delta_{i+1}}}$ and $\card{K^+} = \range{\paren{1 \pm \eps_{i+1}} \Delta_{i+1} \cdot s/2}$ and since $\Ot(\Delta_F^2)/n \ll \eps'$, we have
		that $\card{[k] \setminus K^+} \in \range{(1\pm \eps') \cdot \card{k}}$. 
		
		Let $X = \card{K^-}$ denotes the number of sets $C_{i+1,j} \in [k] \setminus K^+$ that hit $C$. By the above calculation: 
		\begin{align*}
			\Ex\bracket{X} &\in \range{(1 \pm \eps') \cdot \sum_{j \in [k] \setminus K^+} c_j p} \subseteq \range{(1 \pm \eps') \cdot (1\pm \eps') \cdot \card{k}\cdot (\paren{1 \pm 6 \cdot \eps_{i+1}}\Delta_{i+1})^2 \cdot s/2 \Delta n} 
			\tag{as $p=\card{C}/n$ and $\card{C} \in  \range{\paren{1 \pm 5\eps_i}\Delta_{i+1}/\Delta}$ and $c_j \in \range{\paren{1 \pm  5\eps_i}\Delta_{i+1} \cdot s/2}$} \\
			&\subseteq \range{(1 \pm 3\eps') \cdot \paren{1 \pm 13 \cdot \eps_i}\Delta_{i+1} \cdot s/2}. 
		\end{align*}
		Finally, notice that $X$ is a sum of independent random variables and hence by Chernoff bound,
		\begin{align*}
			\Pr\paren{X \notin \range{\paren{1\pm \eps_{i}} \Ex\bracket{X}}} \leq \exp\paren{-\frac{\eps^2 \cdot s}{4}}	 \leq \frac{1}{n^{25}}.	
		\end{align*}
		This implies that with high probability $X \in \range{\paren{1 \pm 15\eps_i}\Delta_{i+1} \cdot s/2}$. As $\eps_{i+1} > 15\eps_i$, this concludes the proof. \Qed{Claim~\ref{clm:K-in}}

	\end{proof}
	Lemma~\ref{lem:dw} now follows immediately from Claims~\ref{clm:K-out} and~\ref{clm:K-in}. \Qed{Lemma~\ref{lem:dw}}
	
	\end{proof}
	To conclude the proof of Lemma~\ref{lem:grow-induction}, we simply take a union bound on all vertices $w \in V(H_{i+1})$ and by Lemma~\ref{lem:dw} obtain that with high probability $d_w \in \range{\paren{1 \pm \eps_{i+1}}\Delta_{i+1} \cdot s}$.
	This implies that $H_{i+1}$ is a $\range{\paren{1 \pm \eps_{i+1}}\Delta_{i+1} \cdot s}$-almost regular graph, proving the induction step. \Qed{Lemma~\ref{lem:grow-induction}}

\end{proof}

We also state the following corollary of Lemma~\ref{lem:grow-induction} which roughly speaking, states that 
each graph $H_i$ is sampled from a distribution which is in spirit of $\PG(n_i,\Delta_i \cdot s)$ (with additional ``noise''). 
\begin{proposition}\label{prop:H_i-random}
	With high probability, distribution of each graph $H_{i}$ in \GC is a graph on $n_i \in \range{\paren{1 \pm \eps_{i}} \cdot {n \Delta}/{\Delta_i}}$ vertices in which we first pick 
	$\range{\paren{1 \pm \eps_i} \Delta_i \cdot s/2}$ many neighbors for each vertex where the other endpoint is chosen with probability in $\range{\paren{1 \pm 2\eps_i} \cdot n_{i}^{-1}}$
	and then remove the direction of edges. 
\end{proposition}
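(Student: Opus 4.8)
The plan is to simply unfold the construction of $H_i$ inside \GC and combine it with the structural bounds already established in Lemma~\ref{lem:grow-induction}. Recall that in phase $i$ the algorithm sets $H_i$ to be the contraction graph (in the sense of Definition~\ref{def:contraction-graph}) of $\tG_i$ alone with respect to the partition $\CC_i = \set{C_{i,1},\ldots,C_{i,n_i}}$ carried over from the previous phase. The first --- and conceptually the only --- point to observe is that $\CC_i$ is a function of $\tG_1,\ldots,\tG_{i-1}$ alone: it is produced by the leader elections of phases $1,\ldots,i-1$, each of which reads only the corresponding batch $\tG_j$. Hence $\CC_i$ is \emph{independent} of $\tG_i\sim\PG(n,\Delta\cdot s)$; this independence (already used inside the proof of Lemma~\ref{lem:dw}) is exactly why the edges of $\tG$ were split into the $F$ independent batches $\tG_1,\ldots,\tG_F$. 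I would condition on $\tG_1,\ldots,\tG_{i-1}$ lying in the high-probability event of Lemma~\ref{lem:grow-induction}, so that $\card{C_i(w)}\in\range{(1\pm\eps_i)\Delta_i/\Delta}$ for every vertex $w$ of $H_i$ and $n_i = \card{\CC_i}\in\range{(1\pm\eps_i)\cdot n\Delta/\Delta_i}$, while the conditional law of $\tG_i$ is still exactly $\PG(n,\Delta\cdot s)$.

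Next I would expand the generative process of $\tG_i$ at the level of super-vertices. In $\PG(n,\Delta\cdot s)$ each of the $n$ original vertices independently picks $\Delta\cdot s/2$ outgoing edges, each to a uniformly random original vertex. Grouping the original vertices according to $\CC_i$, the super-vertex $w$ therefore picks $\card{C_i(w)}\cdot\Delta\cdot s/2 \in \range{(1\pm\eps_i)\,\Delta_i\cdot s/2}$ outgoing edges, and each such edge lands in super-vertex $C_{i,j}$ with probability $\card{C_{i,j}}/n$, independently across edges. It remains to rewrite this endpoint probability in terms of $n_i^{-1}$: from $\card{C_{i,j}}\in\range{(1\pm\eps_i)\Delta_i/\Delta}$ we get $\card{C_{i,j}}/n\in\range{(1\pm\eps_i)\cdot\Delta_i/(\Delta n)}$, while $n_i\in\range{(1\pm\eps_i)\cdot n\Delta/\Delta_i}$ gives $n_i^{-1}\in\range{(1\pm\eps_i)^{-1}\cdot\Delta_i/(\Delta n)}\subseteq\range{(1\pm 2\eps_i)\cdot\Delta_i/(\Delta n)}$ since $\eps_i=o(1)$; dividing one range by the other yields $\card{C_{i,j}}/n\in\range{(1\pm 2\eps_i)\cdot n_i^{-1}}$ (if one wants the constant exactly $2$ rather than something absorbed into $\eps_i$, one simply starts the induction with a marginally smaller $\eps$).

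Finally, the contraction operation of Definition~\ref{def:contraction-graph} is precisely ``drop the orientation of the sampled edges and delete self-loops and parallel edges'', so $H_i$ is distributed exactly as the $\PG$-style model described in the statement; the only deviations from a literal $\PG(n_i,\Delta_i\cdot s)$ sample are the removal of self-loops/parallel edges and the fact that the per-vertex out-degree is $\range{(1\pm\eps_i)\Delta_i\cdot s/2}$ instead of exactly $\Delta_i\cdot s/2$ and the endpoint distribution is only near-uniform, all of which are already built into the wording of the proposition. I do not expect a genuine obstacle here: the entire content is the independence of $\CC_i$ from $\tG_i$ together with the elementary $\range{\cdot}$ arithmetic, and the only place that requires a moment of care is pinning down the constant in the $\range{(1\pm 2\eps_i)n_i^{-1}}$ bound, which I would handle by fixing the slack in $\eps$ at the outset.
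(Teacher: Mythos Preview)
Your proposal is correct and follows essentially the same approach as the paper, which simply defers to the proof of Claim~\ref{clm:K-out} together with the almost-regularity of $H_i$ from Lemma~\ref{lem:grow-induction}. You have spelled out in detail precisely the independence-of-$\CC_i$-from-$\tG_i$ observation and the $\range{\cdot}$ arithmetic that those references encapsulate, including the minor care needed with the constant in $(1\pm 2\eps_i)$.
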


The proof of this proposition is identical to the proof of Claim~\ref{clm:K-out} using the fact that $H_i$ is almost-regular by Lemma~\ref{lem:grow-induction} (see also the discussion after Claim~\ref{clm:K-out}).  

Finally, we claim that $\GC$ can also find a spanning tree of components in $\CC_{F}$. % in every phase $i \in [F]$ of the algorithm. 

\begin{claim}\label{clm:GC-spanning-tree}
	Let $T$ be the set of edges chosen in executions of $\LE$ (in defining $M(\cdot)$ for each non-leader vertex) in the course of execution of $\GC(\tG,\Delta)$. 
	With high probability, the induced subgraph of $T$ on each component in $\CC_F$ is a spanning tree.   
\end{claim}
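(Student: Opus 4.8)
The plan is a phase-by-phase induction carried out on top of the induction already in Lemma~\ref{lem:grow-induction}. For $i\in\{1,\dots,F\}$ let $T_i\subseteq E(\tG)$ be the set of edges added to $T$ during the call $\LE(H_i,\Delta_i)$ in phase $i$: for each non-leader vertex $u$ of $H_i$ we add one fixed edge of $\tG_i$ realizing the $H_i$-edge $(u,M(u))$, so $T=T_1\cup\cdots\cup T_F$. Before the induction I would record a simple bookkeeping fact. Every edge of $T_j$ joins two distinct vertices of $H_j$, i.e.\ two distinct sets of $\CC_j$; and since the partitions only coarsen over phases, for $j\ge i$ every $C\in\CC_i$ is contained in a single set of $\CC_j$. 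Hence no edge of $T_j$ with $j\ge i$ has both endpoints inside a set $C\in\CC_i$, and therefore, for $C\in\CC_i$, the $T$-edges with both endpoints in $C$ are exactly the edges of $T_1\cup\cdots\cup T_{i-1}$ with both endpoints in $C$. It thus suffices to prove, by induction on $i$, that for every $C\in\CC_i$ the graph on vertex set $C$ whose edges are the edges of $T_1\cup\cdots\cup T_{i-1}$ lying inside $C$ is a spanning tree of $C$; instantiating this with $i=F$ gives the claim.

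The base case $i=1$ is vacuous since $\CC_1$ consists of singletons and no edges have been added. For the inductive step, condition on the high-probability events of Lemma~\ref{lem:grow-induction} and Lemma~\ref{lem:LE-equal-size} for phase $i$; under these, $(S_1,\dots,S_k)=\LE(H_i,\Delta_i)$ is a component-partition of $H_i$, so every non-leader $u$ of $H_i$ has $M(u)\neq\perp$ and lies in a unique $S_{j(u)}$, and by the definition of $\LE$ each $S_j$ is exactly the star in $H_i$ with center the leader $v_j$ and edge set $\{(u,v_j): u\in S_j,\ M(u)=v_j\}$. Fix $C_{i+1,j}=\bigcup_{w\in S_j}C_i(w)\in\CC_{i+1}$. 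Since $\CC_i$ is a partition, the sets $C_i(w)$ for $w\in S_j$ are pairwise disjoint with union $C_{i+1,j}$, and the edges of $T$ inside $C_{i+1,j}$ split into: (a) for each $w\in S_j$, the edges of $T_1\cup\cdots\cup T_{i-1}$ inside $C_i(w)$, which by the inductive hypothesis form a spanning tree of $C_i(w)$; and (b) the $|S_j|-1$ edges of $T_i$ realizing the star edges of $S_j$, each of which joins a vertex of $C_i(v_j)$ to a vertex of $C_i(u)$ for the corresponding non-leader $u$. (Note that all other edges of $T_i$ lie inside other sets $C_{i+1,j'}$, so no stray $T_i$-edges enter $C_{i+1,j}$.)

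Finally I would observe that gluing the $|S_j|$ pairwise-disjoint trees from (a) along the $|S_j|-1$ edges from (b) yields a spanning tree of $C_{i+1,j}$: the resulting graph has $\sum_{w\in S_j}(|C_i(w)|-1)+(|S_j|-1)=|C_{i+1,j}|-1$ edges, and it is connected because the edges from (b), after contracting each $C_i(w)$ to a point, form the connected star $S_j$; a connected graph on $|C_{i+1,j}|$ vertices with $|C_{i+1,j}|-1$ edges is a tree. (Equivalently, any cycle would have to project onto a cycle in the star $S_j$, which is acyclic, since each $C_i(w)$ is internally a tree.) This proves the statement for every $C\in\CC_{i+1}$, completing the induction; a union bound over the $F=O(\log\log n)$ phases over the failure events of Lemmas~\ref{lem:grow-induction} and~\ref{lem:LE-equal-size} keeps the overall failure probability polynomially small. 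The only point requiring any care is the bookkeeping observation of the first paragraph — that edges selected in later phases never fall inside an already-completed component — but this follows immediately from the nesting of the partitions $\CC_1,\CC_2,\dots$; the rest is the routine ``disjoint trees plus bridging edges in a tree pattern form a tree'' argument.
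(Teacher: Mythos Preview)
Your proof is correct and follows the same approach as the paper's, which dispatches the claim in one sentence by citing Claim~\ref{clm:LE-simple} and the observation after Definition~\ref{def:contraction-graph} that spanning trees of the parts of a component-partition together with a spanning tree of the contraction graph yield a spanning tree of the whole. Your argument is simply a careful, explicit unwinding of that same induction---tracking which phase's edges can land inside which $\CC_i$-block and verifying the ``disjoint trees plus star of bridges'' count---so there is no genuine difference in strategy, only in level of detail.
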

\begin{proof}
	Follows immediately from Claim~\ref{clm:LE-simple} and the fact that each $\CC_{i+1}$ is formed by merging already found components of $\tG$ (see also the discussion after Definition~\ref{def:contraction-graph}).
\end{proof}
\subsection*{Building the Spanning Tree}

Recall that by the choice of $F$, $\Delta_F \in [n^{1/100},n^{1/50}]$ . By running $\GC(\tG,\Delta)$, we obtain a graph $H_F$ which consists of $n_F \in \range{\paren{1 \pm o(1)} \cdot n \cdot \Delta/\Delta_F}$ vertices (as $\eps_F = o(1)$ by Eq~(\ref{eq:parameters})). Additionally, by Proposition~\ref{prop:H_i-random}, with high probability, $H_F$ is a ``random'' graph (in spirit of $\PG$) with $\range{\paren{1 \pm o(1)} \Delta_F}$ ``out-degree'' before removing the direction of edges. 
We use this to bound the diameter of $H_F$. 
\begin{claim}\label{clm:H_F-diameter}
	Diameter of $H_F$ is $D = O(1)$ with high probability. 
\end{claim}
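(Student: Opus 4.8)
The plan is to reduce the claim to a vertex-expansion statement for $H_F$ --- the analogue, for the slightly perturbed distribution identified in Proposition~\ref{prop:H_i-random}, of the expansion bound in Proposition~\ref{prop:PG-expansion}(1) --- and then convert expansion into an $O(1)$ diameter bound by ball-doubling. Throughout set $d := \Delta_F \cdot s$; by Eq.~(\ref{eq:parameters}) we have $\Delta_F \in [n^{1/100},n^{1/50}]$, hence $d \geq \Delta_F \geq n^{1/100}$, so $d = n^{\Omega(1)}$ is polynomially large (in particular $d \gg \log{n}$, which will give substantial room in every union bound), while $n_F \leq n$ and $\eps_F = o(1)$.

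First I would prove that, with high probability, every $S \subseteq V(H_F)$ satisfies $\card{N(S)} \geq \min\{2n_F/3,\ (d/20)\cdot\card{S}\}$, where it suffices to count only the endpoints of the out-edges picked by the vertices of $S$ in the generative process of Proposition~\ref{prop:H_i-random}. For a fixed $S$ of size $k$, these vertices pick $m \in \range{(1\pm\eps_F)\,kd/2}$ out-edges, each landing on an independently chosen endpoint that hits any fixed vertex with probability in $\range{(1\pm 2\eps_F)\,n_F^{-1}}$. The number of distinct endpoints is determined by these $m$ independent choices, has the bounded-differences property (altering one choice changes it by at most $1$), and --- by a standard occupancy/coupon-collector estimate, using $\eps_F = o(1)$ to absorb the nonuniformity --- has expectation at least $(1+\Omega(1))\cdot\min\{2n_F/3,\ (d/20)k\}$. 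McDiarmid's inequality then bounds the probability that $\card{N(S)}$ drops below $\min\{2n_F/3,\ (d/20)k\}$ by $e^{-\Omega(kd)}$, and since $e^{-\Omega(kd)}\cdot n_F^{\,k}\to 0$ (as $d\gg\log n$), a union bound over the $\binom{n_F}{k}\le n_F^{\,k}$ sets of each size $k$, and then over $k$, finishes the argument. The one subtlety --- handling sizes $k$ with $(d/20)k>2n_F/3$ by first passing to an arbitrary subset of the ``critical'' size $\Theta(n_F/d)$, so that $N(\cdot)$ only gets smaller --- is exactly the reduction used in the proof of Proposition~\ref{prop:PG-expansion}(1) in Appendix~\ref{app:random-graph}, and it carries over verbatim, with every $(1\pm\eps_F)$ factor absorbed into constants because $\eps_F=o(1)$; alternatively one can run the appendix union bound over (set, target-neighborhood) pairs directly, which here has even more slack than in Proposition~\ref{prop:PG-expansion}(1) since $d$ is polynomially large rather than merely $\Omega(\log n)$.

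Given the expansion statement, the diameter bound is immediate. Conditioning on that event, fix any vertex $v$ and let $B_0=\{v\}$ and $B_{j+1}=B_j\cup N(B_j)$. While $(d/20)\card{B_j}<2n_F/3$ we get $\card{B_{j+1}}\ge(d/20)\card{B_j}$, so $\card{B_j}$ multiplies by at least $d/20\ge n^{1/100}/20$ each step; hence after $j^{\star}=O\big(\log n_F/\log(d/20)\big)=O(1)$ steps (the ratio of logarithms is at most $200$ for large $n$) we have $\card{B_{j^{\star}}}\ge 2n_F/3>n_F/2$. Applying this to two arbitrary vertices $u,v$, the radius-$j^{\star}$ balls around $u$ and around $v$ each contain more than half of $V(H_F)$, so they intersect and $\mathrm{dist}_{H_F}(u,v)\le 2j^{\star}=O(1)$. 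A union bound over the (high-probability) events invoked then gives $D=O(1)$ with high probability.

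I expect the main obstacle to be the large-$\card{S}$ regime of the expansion union bound: a naive union bound over all $2^{n_F}$ subsets and all $2^{n_F}$ candidate neighborhoods loses, and one must either restrict to critical-size sets (as above) or, in the small-$\card{S}$ regime, exploit that $\binom{n_F}{k}$ and $\binom{n_F}{(d/20)k}$ are only $n_F^{O(k)}$ while the failure probability is $e^{-\Omega(kd)}$ with $d\gg\log n$ --- this is precisely why one needs $\Delta_F=n^{\Omega(1)}$, i.e., why the phases of $\GC$ are run until components reach polynomial size. Everything else (the occupancy-expectation lower bound, McDiarmid, and the ball-doubling count) is routine once the parameters of Eq.~(\ref{eq:parameters}) are substituted.
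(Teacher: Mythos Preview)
Your proposal is correct and follows essentially the same approach as the paper: establish a vertex-expansion bound for $H_F$ by adapting the argument of Proposition~\ref{prop:PG-expansion} to the perturbed distribution of Proposition~\ref{prop:H_i-random}, and then convert expansion into an $O(1)$ diameter via ball-doubling. The paper states the expansion with factor $\Delta_F/20$ rather than your $d/20 = \Delta_F \cdot s/20$ and omits the concentration/union-bound details you spell out with McDiarmid, but since both $\Delta_F$ and $\Delta_F\cdot s$ are $n^{\Omega(1)}$ the diameter bound $O(\log n_F / \log(\cdot)) = O(1)$ follows either way.
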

\begin{proof}
	We condition on the event that distribution of $H_F$ is as stated in Proposition~\ref{prop:H_i-random}. We argue that 
	for any set $S \subseteq V(H_F)$, the neighborset $N(S)$ of $S$ in $H_F$ has size 
	\begin{align*}
	\card{N(S)} \geq \min\set{2n_F/3, \Delta_F \cdot \card{S}/20}.
	\end{align*}
	The proof of this claim is exactly as in proof of Proposition~\ref{prop:PG-expansion}, using the analogy between $\PG$ and distribution of $H_F$ (with a minor additional care to account for the ``noise'' in $H_F$). We omit the details of this proof.  
	
	By the above equation, the $k$-hop neighborhood of any vertex in $H_F$ contains either at least $2n_F/3$ or $\paren{\Delta_F/20}^{k}$ vertices. In particular, for $k' = \log_{(\Delta_F/20)}{n}$, the $k'$-hop neighborhood of any vertex 
	contains at least $2n_F/3$ vertices. This implies that the $2k'$-hop neighborhood of any vertex contains the whole graph, hence the diameter of $H_F$ is $O(k')$. Since $\Delta_F \in [n^{1/100},n^{1/50}]$, we 
	obtain that diameter of $H_F$ is $O(1)$.
\end{proof}

We use the above claim to design a very simple algorithm to build a spanning tree of $H_F$. We then combine this algorithm with $\GC$ to prove Lemma~\ref{lem:technical-random-cc}.

\begin{claim}\label{clm:low-diameter}
	Let $H$ be any graph with $m$ edges, $n$ vertices, and diameter $D$. A spanning tree of $H$ can be found in $O(D/\delta)$ MPC rounds with $O(m^{1-\delta})$ machines with memory $O(m^{\delta})$ for any $\delta > 0$. 
\end{claim}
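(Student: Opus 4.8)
The plan is to compute a breadth-first-search forest by synchronous ``flooding'', run for exactly $D$ rounds, implementing each round with the parallel sort and search primitives of~\cite{GoodrichSZ11} recalled in Section~\ref{sec:prelim}. I maintain a distributed table storing, for each vertex $v$, a triple $\mathsf{st}(v) = (\mathrm{id}^*(v), \mathrm{dist}(v), \mathrm{par}(v))$, initialized to $(v, 0, \bot)$, whose intended meaning is ``the smallest source id discovered so far is $\mathrm{id}^*(v)$, reached at BFS-distance $\mathrm{dist}(v)$ through neighbor $\mathrm{par}(v)$''; triples are compared lexicographically on $(\mathrm{id}^*, \mathrm{dist})$. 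Since there are no isolated vertices, this table has $n = O(m)$ entries of constant size, so it fits in total memory $O(m)$, spread over the machines.

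In one round I first annotate, for every edge $(u,v) \in E(H)$, the current triples $\mathsf{st}(u)$ and $\mathsf{st}(v)$ onto the edge record using one parallel search against the state table; this costs $O(\log_{m^\delta} m) = O(1/\delta)$ MPC rounds. From the annotated edge $(u,v)$ I generate two ``offers'': the offer $(\mathrm{id}^*(u), \mathrm{dist}(u)+1, u)$ for $v$, and the symmetric one for $u$. I then sort all offers by recipient (an $O(1/\delta)$-round parallel sort), so that the offers addressed to a fixed vertex $v$ are consecutive, and locally replace $\mathsf{st}(v)$ by the lexicographically smallest of $\mathsf{st}(v)$ and all offers received by $v$, writing the updated table back. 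Thus each round costs $O(1/\delta)$ MPC rounds, and running $D$ of them costs $O(D/\delta)$ rounds in total with $O(m^{1-\delta})$ machines of memory $O(m^\delta)$, as required.

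For correctness, fix a connected component $C$, let $r^\star$ be the smallest vertex id in $C$ and $v^\star$ the corresponding vertex. A straightforward induction on $k$ shows that after $k$ rounds every vertex $v$ at BFS-distance $k$ from $v^\star$ in $H[C]$ has $\mathrm{id}^*(v) = r^\star$ and $\mathrm{dist}(v)$ equal to its true BFS-distance from $v^\star$, and that these values never change afterwards: the lower bound holds because $\mathrm{dist}$ is always propagated as ``parent's $\mathrm{dist}$ plus one'' and hence never underestimates the true distance (by the triangle inequality, unrolled back to $v^\star$ where $\mathrm{dist}=0$), and the upper bound holds because a shortest $v^\star$-$v$ path carries the pair $(r^\star, \text{distance})$ forward one hop per round. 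Since the given $D$ upper-bounds the eccentricity of $v^\star$, after $D$ rounds every $v \in C$ stores the correct pair and $\mathrm{par}(v)$ points to a neighbor whose $\mathrm{dist}$ is exactly one smaller; following $\mathrm{par}(\cdot)$ strictly decreases $\mathrm{dist}$ and terminates at $v^\star$, so the edges $\{(v,\mathrm{par}(v)) : v \in C,\ v \neq v^\star\}$ form a spanning tree of $C$. Taking the union over all components (which never interfere, being vertex- and edge-disjoint) yields a spanning forest of $H$, i.e.\ a spanning tree when $H$ is connected (and $D$ for a disconnected $H$ is read as the maximum component diameter).

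The bookkeeping is routine; the only point needing care is obtaining an acyclic parent structure \emph{per component} without first knowing the components, which is exactly what the minimum-id flooding rule buys us: within each component the tie-break on $\mathrm{id}^*$ elects a unique root $v^\star$, and consistency of the $\mathrm{dist}$ values to $v^\star$ prevents the parent pointers from closing a cycle.
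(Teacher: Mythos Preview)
Your proof is correct and takes essentially the same BFS-flooding approach as the paper. The paper's version is terser: it assumes $H$ is connected (finite diameter), fixes one arbitrary root, and lets informed vertices propagate for $D$ rounds, each newly reached vertex picking one incoming edge. Your version adds the minimum-id rule so that a root is elected per component without prior knowledge of the components, and you spell out the sort/search implementation of each round; both are natural elaborations rather than a different argument, and your handling of the disconnected case is a small bonus that the paper simply sidesteps by working on one component at a time.
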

\begin{proof}
	We pick any arbitrary vertex $v \in H$. The algorithm proceeds in $D$ iterations. In the first iteration, $v$ informs all its neighbors in $H$ and add these edges to the underlying spanning tree. In the next iteration, the neighbors of $v$
	inform all their neighbors; any vertex informed which has already not chosen an edge in the spanning tree would pick one of its incoming edges and add it to the spanning tree. We continue like this until after $D$ iterations all vertices have a neighboring
	edge in the spanning tree. 
	
	It is straightforward that one can implement this algorithm in $O(D/\delta)$ MPC rounds on machines of memory $O(m^{\delta})$, hence finalizing the proof. 
\end{proof}

We are now ready to conclude the proof of Lemma~\ref{lem:technical-random-cc}. 

\begin{proof}[Proof of Lemma~\ref{lem:technical-random-cc}]
	By Claim~\ref{clm:GC-spanning-tree}, we can find a spanning tree of every component of $\tG$ in $\CC_F$. This step requires $O(\log\log{n}/\delta)$ MPC rounds on $\Ot(n^{1-\delta})$ machines of memory $\Ot(n^{\delta})$
	by Claim~\ref{clm:grow_components} (as $\card{E(\tG)} = \Ot(n)$ by construction). 
	Note that the components in $\CC_F$ 
	correspond to vertices of $H_F$ and hence by finding a spanning tree of $H_F$ we obtain a spanning tree of $\tG$. 
	
	By Claim~\ref{clm:H_F-diameter}, diameter of $H_F$ is only $O(1)$. Hence, by the algorithm in Claim~\ref{clm:low-diameter}, we can find a spanning tree of $H_F$ in only $O(1/\delta)$ rounds on machines of memory $O(n^{\delta})$. 
	Combining these trees, we obtain a spanning tree of $\tG$, finalizing the proof. 
\end{proof}

\subsection{Proof of Lemma~\ref{lem:third-step}: Connectivity on a Disjoint Union of Random Graphs}

We are now finally ready to prove Lemma~\ref{lem:third-step} using Lemma~\ref{lem:technical-random-cc}. 

\begin{proof}[Proof of Lemma~\ref{lem:third-step}]
	We perform the preprocessing step introduced in the beginning of the section to create the graph $\tG$ that is a graph which consists of $F$ copies of $\PG(n_i,\Delta \cdot s)$ where $n_i$ is the number of vertices in the connected
	component $G_i$ of $G$. By Proposition~\ref{prop:PG-expansion}, with high probability every connected component of $G$ which is sampled from $\PG$ (with degree $100\log{n}$) has mixing time of $\polylog{(n)}$. We can thus 
	apply Lemma~\ref{lem:second-step} to implement this step using $\Ot(n^{1-\delta})$ machines and $\Ot(n^{\delta})$ memory per machine in $O(\log\log{n}/\delta)$ rounds. 
		
	We then run the algorithm in Lemma~\ref{lem:technical-random-cc} on the whole graph. We can now analyze the algorithm in Lemma~\ref{lem:technical-random-cc} on the set of vertices belonging to each connected
	component $\tG_i$ of $\tG$ separately. It is immediate to verify that performance of algorithm in Lemma~\ref{lem:technical-random-cc} on each $\tG_i$ is only a function of $\tG_i$ and hence the correctness of the algorithm follows
	exactly as in Lemma~\ref{lem:technical-random-cc}. Hence, in $O(\log\log{n}/\delta)$ rounds, with high probability, we obtain a spanning tree of each $\tG_i$. We then assign a unique label to each spanning tree found and mark
	the vertices based on which spanning tree they belong to. Each label now corresponds to $V(\tG_i) = V(G_i)$, hence we can identify all connected components of $G$, finalizing the proof. The bound on the memory requirement
	and number of machines follows from Lemma~\ref{lem:technical-random-cc}.
\end{proof}

\section{Putting Everything Together}\label{sec:everything} 

We now put all components of our algorithms in the previous three sections together and prove the following theorem which formalize Theorem~\ref{res:main} in the introduction. 

\begin{theorem}\label{thm:main-m}
	There exists a randomized MPC algorithm that with high probability identifies all connected components of any given undirected $n$-vertex graph $G(V,E)$ with $m$ edges and a lower bound of
	$\lambda \in (0,1)$ on the spectral gap of any of its connected components.
	
	\noindent
	For any $\delta > 0$, the algorithm can be implemented with $O(\frac{1}{\lambda^2} \cdot m^{1-\delta} \cdot \polylog{(n)})$ machines each with $O(m^{\delta} \cdot \polylog{(n)})$ memory, and
	in ${O}(\frac{1}{\delta} \cdot \paren{\log\log{n} + \log{\paren{{1}/{\lambda}}}})$ MPC rounds. 
\end{theorem}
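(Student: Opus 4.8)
The plan is to compose the three main steps developed in Sections~\ref{sec:regularization}--\ref{sec:finding-cc}, paying attention to how the parameters chain together. Given $G(V,E)$ with $m$ edges and the promised lower bound $\lambda$ on the spectral gap of every connected component, I would first invoke Lemma~\ref{lem:first-step} to produce a $\Delta$-regular graph $H$ on $2m$ vertices ($\Delta = O(1)$) together with the one-to-one correspondence between components of $G$ and of $H$. The key point is the third guarantee of Lemma~\ref{lem:first-step}: for each component $H_i$ (corresponding to $G_i$) we have $T_\gamma(H_i) = O(\log(n/\gamma)/\lambda_2(G_i)) \le O(\log(n/\gamma)/\lambda)$ for all $\gamma < 1$. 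Taking $\gamma := \gamstar = (2m)^{-10}$ and using $m \le n^2$ for simple graphs, this yields an explicitly computable integer $T = O(\log n / \lambda)$ with $T_{\gamstar}(H_i) \le T$ for every component $H_i$ of $H$. This step uses $O(m^{1-\delta})$ machines of memory $O(m^\delta)$ and $O(1/\delta)$ rounds.

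Next I would feed $H$ and $T$ to the randomization algorithm of Lemma~\ref{lem:second-step}: since $H$ is $\Delta$-regular and its components mix within $T$ steps to within $\gamstar = (2m)^{-10}$, this returns a graph $H'$ with $V(H') = V(H)$ whose component $H'_i$ on $V(H_i)$ is drawn from a distribution within total variation distance $(2m)^{-8}$ of $\PG(n_i, 100\log(2m))$, where $n_i = \card{V(H_i)}$. By Lemma~\ref{lem:second-step} this takes $O(\frac{1}{\delta} \log T)$ rounds; because $T = O(\log n/\lambda)$ we have $\log T = O(\log\log n + \log(1/\lambda))$, so this is $O(\frac{1}{\delta}(\log\log n + \log(1/\lambda)))$ rounds. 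The machine count is $O(T^2 \cdot (2m)^{1-\delta}\cdot \Delta\log^2(2m)) = O(\frac{1}{\lambda^2}\cdot m^{1-\delta}\cdot \polylog n)$ — this $T^2$ factor being the sole origin of the $1/\lambda^2$ in the final bound — each machine with memory $O((2m)^\delta) = O(m^\delta)$.

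Finally I would run the algorithm of Lemma~\ref{lem:third-step} on $H'$ to identify all of its connected components; this costs $O(\frac{1}{\delta} \log\log(2m)) = O(\frac{1}{\delta}\log\log n)$ rounds, $\Ot(m^{1-\delta})$ machines and $\Ot(m^\delta)$ memory. Lemma~\ref{lem:third-step} is stated for inputs whose components are sampled \emph{exactly} from $\PG$, whereas ours are only $(2m)^{-8}$-close; since the total of the per-component discrepancies is at most $2m\cdot(2m)^{-8} = 1/\poly(n)$, Fact~\ref{fact:tvd-small}, applied to the joint distribution of the input graph and the algorithm's internal randomness, shows the algorithm still succeeds on $H'$ with high probability. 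The components of $H'$ equal those of $H$ by Lemma~\ref{lem:second-step}, which correspond one-to-one to those of $G$ by Lemma~\ref{lem:first-step}; concretely, all cloud-copies of any $v \in V(G)$ lie in one component of $H'$, so labelling $v$ by the component label of any of its copies recovers the component partition of $G$ in one more sort/search, i.e.\ $O(1/\delta)$ rounds within the same bounds. A union bound over the constantly many high-probability events and the coupling argument gives overall success with high probability, and summing the round counts and taking the dominant (Step~2) machine/memory count yields exactly the claimed $O(\frac{1}{\delta}(\log\log n + \log(1/\lambda)))$ rounds on $O(\frac{1}{\lambda^2} \cdot m^{1-\delta}\cdot\polylog n)$ machines with $O(m^\delta\polylog n)$ memory.

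I expect the only real subtlety — everything else being bookkeeping given the three lemmas — to be exactly this parameter chaining: one must ensure that the mixing-time bound exported from Step~1 is $O(\log n/\lambda)$ so that it is $\log T$, and not $T$, that enters the round complexity (producing the additive $\log(1/\lambda)$ term rather than a multiplicative $1/\lambda$ blow-up), that the accuracy $\gamma = (2m)^{-10}$ used in Step~1 matches the $n^{-10}$ hypothesis of Step~2, and that the small total-variation gap between the output of Step~2 and the exact $\PG$ model required by Step~3 is absorbed by Fact~\ref{fact:tvd-small} without harming the high-probability guarantee.
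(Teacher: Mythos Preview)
Your proposal is correct and follows essentially the same three-step composition as the paper's proof: regularize via Lemma~\ref{lem:first-step}, randomize via Lemma~\ref{lem:second-step} with $T = O(\log n/\lambda)$, then solve via Lemma~\ref{lem:third-step} and absorb the total-variation slack using Fact~\ref{fact:tvd-small}. If anything you are more explicit than the paper on two points that matter: that the $1/\lambda^2$ in the machine count arises precisely from the $T^2$ factor in Lemma~\ref{lem:second-step}, and that $\gamstar$ must be taken as $(2m)^{-10}$ (since the graph fed to Step~2 has $2m$ vertices) to match the hypothesis of that lemma.
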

\begin{proof}
	We prove this theorem by applying the transformation steps in the previous sections to graph $G$. 
	
	\emph{\underline{Step 1.}} Let $G_1 := G$ with $n_1 := \card{V(G_1)}$ and $m_1:=\card{E(G_1)}$. 
	We apply Lemma~\ref{lem:first-step} to $G_1$ to obtain a $\Delta$-regular graph $G_2$ with the following properties (with high probability): there is a one-to-one correspondence between connected components 
	of $G_1$ and $G_2$, and each connected component of $G_2$ has mixing time $T_{\gamstar} = O(\log{n}/\lambda)$ with $\gamstar = n^{-10}$. Moreover, $n_2 := \card{V(G_2)} = O(m_1)$ and $\Delta = O(1)$.
	By identifying connected components of $G_2$, we immediately identify connected components of $G_1$. 
	
	This step can be implemented in $O(m^{1-\delta})$ machines with $O(m^{\delta})$ memory in $O(1/\delta)$ MPC rounds by Lemma~\ref{lem:first-step} (as $m_1 = m$).
	
	\emph{\underline{Step 2.}} We apply Lemma~\ref{lem:second-step} to $G_2$ with $T = T_{\gamstar}$ to (with high probability) obtain a graph $G_3$ such that $V(G_2) = V(G_3)$ and for any connected component $G_{2,i}$ on $n_{2,i}$ vertices,
	the induced subgraph of $G_3$ on vertices $V(G_{2,i})$, denoted by $G_{3,i}$, is a connected component of $G_{3}$ with distribution $\distribution{G_{3,i}}$ where $\tvd{\distribution{G_{3,i}}}{\PG(n_{2,i},100\log{n})} \leq n^{-10}$. 
	Identifying connected components of $G_2$ is equivalent to identifying connected components of $G_3$.
	
	This step can be implemented with $\Ot(n_2^{1-\delta})$ machines with $\Ot(n_2^{\delta})$ memory in $O(\log{T}/\delta)$ rounds by Lemma~\ref{lem:second-step}. Plugging in the value of these parameters, 
	we obtain that this step is implementable with $\Ot(m^{1-\delta})$ machines with $\Ot(m^{\delta})$ memory in $O(\frac{1}{\delta}\paren{\log\log{n} + \log{(1/\lambda)}})$ MPC rounds.
	
	\emph{\underline{Step 3.}} Let $n_3 = n_2$ be the number of vertices in $G_3$. We apply Lemma~\ref{lem:third-step} to $G_3$ to identify the connected components of $G$. 
	The distribution of each connected component $G_{3,i}$ of $G_{3}$ is $(n^{-8})$-close in total variation distance to $\PG(n_{2,i},100\log{n})$ ($n_{2,i} = \card{V(G_{3,i})}$).  Hence,
	by the guarantee of Lemma~\ref{lem:third-step} and Fact~\ref{fact:tvd-small}, with high probability we are able to identify connected components of the graph $G_3$. This allows us to identify connected components
	of $G_2$ and in turn $G_1 = G$. 
	
	This step can be implemented in $\Ot(n_3^{1-\delta})$ machines with $\Ot(n_3^{\delta})$ memory in $O(\log\log{n_3}/\delta)$ rounds by Lemma~\ref{lem:third-step}. Plugging in the value of these parameters, 
	we obtain that this step is implementable with $\Ot(m^{1-\delta})$ machines with $\Ot(m^{\delta})$ memory in $O(\log\log{n}/\delta)$ MPC rounds.

	This concludes the proof of the theorem.	
\end{proof}

\subsection*{Extension to Unknown Spectral Gaps}
A simple modification of our algorithm in Theorem~\ref{thm:main-m} allows for implementing it without having a prior knowledge of the spectral gap of each underlying connected component at the cost of slightly worse parameters. 
\begin{corollary}\label{cor:main-m}
	There exists a randomized MPC algorithm that for any $\delta > 0$, with high probability identifies all connected components of any given undirected $n$-vertex graph $G(V,E)$ with $m$ edges such that any connected component $G_i$ with 
	spectral gap $\lambda_2(G_i)$ (unknown to the algorithm) would be identified by the algorithm after ${O}(\frac{1}{\delta} \cdot \paren{\log\log{n}\cdot\log\log{(\frac{1}{\lambda_2(G_i)})} + \log{(\frac{1}{\lambda_2{(G_i)}})}})$ MPC rounds.
	
	\noindent	
	The algorithm requires $O(\frac{1}{\lambda^{2.1}} \cdot m^{1-\delta} \cdot \polylog{(n)})$ machines each with $O(m^{\delta} \cdot \polylog{(n)})$ memory, where $\lambda := \min_{i} \lambda_{2}(G_i)$ is the minimum spectral
	gap of any connected component of $G$. 
\end{corollary}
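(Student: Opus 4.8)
The plan is a \textbf{guess-and-verify} scheme: run the algorithm of Theorem~\ref{thm:main-m} repeatedly with geometrically shrinking guesses for the spectral gap, and after each run certify which connected components have been correctly recovered. Fix a constant $r := 1.05$ and for $j = 0, 1, 2, \dots$ let $\lambda_j := 2^{-r^{j}}$, so that $\lambda_0 = 1/2$ and $\log(1/\lambda_j) = r^{j}$ grows geometrically. In epoch $j$, invoke the algorithm of Theorem~\ref{thm:main-m} with spectral-gap lower bound $\lambda_j$; this produces a labeling of $V(G)$, and we let the candidate partition be the collection of label-classes. We then run a verification step: declare a candidate part $C$ \emph{correct} iff no edge of $G$ has exactly one endpoint in $C$ (i.e. $C$ is closed under $E(G)$), which is checkable with the parallel sort/search primitives in $O(1/\delta)$ MPC rounds. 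Freeze the vertices lying in correct parts (their component is output), move to epoch $j+1$ ignoring thereafter any labels assigned to frozen vertices, and halt once every vertex is frozen.

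Correctness rests on a \textbf{refinement property}: regardless of the guess $\lambda_j$, the candidate partition always refines the true connected-component partition of $G$ --- no candidate part contains vertices from two different components of $G$. Granting this, any candidate part $C$ that is closed under $E(G)$ and contained in a (connected) true component $G_i$ must satisfy $C = V(G_i)$, so the verification step reports a part as correct exactly when it is a genuine connected component of $G$. Conversely, when $\lambda_j \le \lambda_2(G_i)$, Theorem~\ref{thm:main-m} guarantees (w.h.p.) that $V(G_i)$ appears as a candidate part and is certified, so $G_i$ is frozen no later than epoch $j_i := \min\{\, j : r^{j} \ge \log_2(1/\lambda_2(G_i)) \,\}$, and $j_i = O(\log\log(1/\lambda_2(G_i)))$ since $r$ is a constant.

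For the resource bounds, the process halts at epoch $j^{\star} := \min\{\, j : r^{j} \ge \log_2(1/\lambda)\,\}$ where $\lambda := \min_i \lambda_2(G_i)$; since $\lambda$ is at least an inverse polynomial in $n$, $j^{\star} = O(\log\log n)$, and a union bound over the $O(\log\log n)$ epochs preserves the high-probability success guarantee (verification is deterministic). Epoch $j$ uses $O\!\left(\tfrac1\delta(\log\log n + \log(1/\lambda_j))\right) = O\!\left(\tfrac1\delta(\log\log n + r^{j})\right)$ rounds, so the number of rounds until $G_i$ is identified is $\sum_{j=0}^{j_i} O\!\left(\tfrac1\delta(\log\log n + r^{j})\right) = O\!\left(\tfrac1\delta\bigl(\log\log n \cdot \log\log(1/\lambda_2(G_i)) + \log(1/\lambda_2(G_i))\bigr)\right)$, using $\sum_{j\le j_i} r^{j} = O(r^{j_i}) = O(\log(1/\lambda_2(G_i)))$ for constant ratio $r$. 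The machine count peaks at epoch $j^{\star}$, where Theorem~\ref{thm:main-m} uses $O(\lambda_{j^{\star}}^{-2} m^{1-\delta}\polylog(n))$ machines; since $r^{j^{\star}-1} < \log_2(1/\lambda)$ we get $\lambda_{j^{\star}} = 2^{-r^{j^{\star}}} > \lambda^{r}$, hence $\lambda_{j^{\star}}^{-2} < \lambda^{-2r} = \lambda^{-2.1}$, matching the stated bound, with per-machine memory unchanged.

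The step I expect to be the main obstacle is establishing the refinement property in the regime where the input \emph{violates} the spectral-gap promise (guesses $\lambda_j > \lambda_2(G_i)$): one must verify that every transformation in the pipeline is ``component-respecting'' on arbitrary inputs --- Step~1's replacement product keeps the components of $G$ in one-to-one correspondence with those of $H$ unconditionally; Step~2 only ever adds an edge between the two endpoints of a single lazy random walk, which stays inside its component even when it fails to mix; and the $\GC$/$\LE$ routines of Step~3 only contract vertices joined by edges descending from the Step-2 graph, so they can merge sets within a true component but never across --- and that, consequently, even where the probabilistic analysis of $\GC$ no longer applies, the algorithm still outputs a well-defined labeling refining the true partition. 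Once this robustness is pinned down, the remainder is routine bookkeeping with the geometric schedule.
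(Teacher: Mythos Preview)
Your proposal is correct and follows essentially the same guess-and-verify scheme as the paper: iterate Theorem~\ref{thm:main-m} with geometrically shrinking guesses for $\lambda$ (the paper uses $\lambda'_j = (\lambda'_{j-1})^{1.1}$ where you use ratio $r=1.05$, which in fact matches the stated $\lambda^{-2.1}$ machine bound more cleanly), certify each candidate part by checking for crossing edges, freeze the correct ones, and sum the geometric series of round costs. The refinement property you flag as the main obstacle is exactly what the paper asserts in one line (``the algorithm in Theorem~\ref{thm:main-m} would always return a component-partition of $V(G)$''), and your sketch that each pipeline step---replacement product, random walks, and $\LE/\GC$ contractions---is component-respecting regardless of mixing is the correct justification for that assertion.
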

\begin{proof}

	We first choose $\lambda_1' = 1/2$ and run the algorithm in Theorem~\ref{thm:main-m} on $G$ with this choice of $\lambda'$. Let $\CC:= \set{C_1,\ldots,C_k}$ be the sets identified as connected components of $G$ by this algorithm. We note that
	algorithm in Theorem~\ref{thm:main-m} would always return a component-partition of $V(G)$ and hence if $u$ and $v$ belong to some $C_i \in \CC$, $u$ and $v$ also belong to the same connected component in $G$. However, it is possible
	that there exists some $u$ and $v$ such that $u,v \in G_i$ (for some particular connected component) but $C(u) \neq C(v)$ (as $\lambda'$ is not necessarily as small as spectral gap of $G_i$). It is easy to see that without loss of generality
	we can assume such $u$ and $v$ are neighbors in $G$. Hence, we can run a simple post-processing step to mark all components in $\CC$ which are a strict subset of some connected component of $G$, i.e., are ``growable'', and return
	the remaining components as connected components of $G$. This step can be trivially implemented in $O(1/\delta)$ MPC rounds on machines of memory $O(m^{\delta})$.

	We then recursively perform the above procedure by setting $\lambda'_j = (\lambda'_{j-1})^{1.1}$ in $j$-th recursion step on the vertices in marked components. 
	Fix any connected component $G_i$ of $G$. It is immediate that whenever $\lambda'_j \leq \lambda_{2}(G_i)$, the above procedure return this connected component (and hence would not mark it further). This means that  
	after $j^{\star} = O(\log\log{(\frac{1}{\lambda_2(G_i)}}))$ recursion steps, we have $\lambda'_{j^\star} \leq \lambda_{2}(G_i)$ and hence $G_i$ would be returned as a connected component. The total number of MPC rounds in these recursion steps is at most
	\begin{align*}
		O(\frac{1}{\delta}) \cdot \sum_{j=1}^{j^\star} \paren{\log\log{n} + \log{\frac{1}{\lambda'_{j}}}} &= O(\frac{1}{\delta})\cdot \Paren{\log\log{n} \cdot j^{\star} + \sum_{j=1}^{j^{\star}} (1.1)^{j}} \\
		&={O}(\frac{1}{\delta} \cdot \paren{\log\log{n}\cdot\log\log{(\frac{1}{\lambda_2(G_i)})} + \log{(\frac{1}{\lambda_2{(G_i)}})}}).
	\end{align*}
	Finally, it is easy to see that by the time $G_i$ is output, the algorithm has used $\Ot(\frac{1}{(\lambda'_{\jstar})^2} m^{1-\delta})$ machines, each with $\Ot(m^{\delta})$ memory; as $\lambda'_{\jstar} \geq \lambda_{2}(G_i)^{1.1}$, we obtain the
	final result. 
\end{proof}

\newcommand{\SC}{\ensuremath{\textnormal{\textsf{SublinearConn}}}\xspace}

\section{A Mildly Sublinear Space Algorithm for Connectivity}\label{sec:sublinear}

In this section, we present a simple algorithm for solving the sparse connectivity problem (in general, e.g., with no assumption on spectral gap, etc.) using $o(n)$ memory per-machine, proving Theorem~\ref{thm:sublinear} (restated below for convinence). 

\begin{theorem*}[Restatement of Theorem~\ref{thm:sublinear}]%\label{thm:sublinear}
	There exists an MPC algorithm that given any arbitrary graph $G(V,E)$ with high probability identifies all connected components of $G$ in $O(\log\log{n} + \log{\paren{\frac{n}{s}})}$ MPC rounds
	 on machines of memory $s = n^{\Omega(1)}$. 
\end{theorem*}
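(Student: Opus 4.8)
The plan is to first contract the vertex set down to at most $s$ vertices using only $O(\log(n/s))$ contraction rounds, and then to finish off the resulting very small graph by a balanced merge of partial solutions held on different machines; unlike our main result this needs no spectral-gap machinery, only the parallel sort/search primitive (Section~\ref{sec:prelim}) together with the contraction idea underlying our main algorithm. As throughout the paper we assume $G$ has $\Ot(n)$ edges (this is the regime of interest; for general $m$ one can either replace $n/s$ by $m/s$ below or first apply a standard edge-sparsification step).

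\textbf{Step 1 (vertex reduction).} Run $R = O(\log(n/s))$ rounds of the classical random-mate hook-and-contract procedure on $G$. In each round every current super-vertex independently becomes a \emph{head} or a \emph{tail} with probability $1/2$; every tail with at least one head neighbour hooks onto an arbitrary such neighbour; and then each resulting star is contracted into a single super-vertex, discarding parallel edges and self-loops. For each hooked tail we remember the hooking edge; within every super-vertex these edges form a spanning tree of the corresponding connected subgraph of $G$. Using the parallel sort/search primitive, one such round is implementable in $O(1/\delta) = O(1)$ MPC rounds on machines of memory $s = n^{\delta}$ (relabelling the vertices of each contracted star by the star's root, and then de-duplicating edges, are both sorting tasks), and a standard argument shows that the number of not-yet-isolated super-vertices shrinks by a constant factor per round with high probability. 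Hence after $R = O(\log(n/s))$ rounds at most $s$ super-vertices remain; an isolated super-vertex is precisely a fully discovered connected component of $G$, carried along unchanged.

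\textbf{Step 2 (finishing the contracted graph) and round count.} Let $\widehat{G}$ be the graph after Step~1: it has $n' \le s$ vertices and $\Ot(n)$ edges, spread over $M = \Ot(n/s)$ machines, and $\log M = O(\log\log n + \log(n/s))$. Each machine locally computes, from the edges it stores, the partition of the $n'$-element super-vertex set into the components induced by those edges (a union-find structure of size $O(s)$). We then combine these $M$ partitions along a balanced binary tree of depth $\lceil \log M \rceil$: merging two partitions of the same $\le s$-element ground set is a single sort-and-relabel step costing $O(1/\delta)$ MPC rounds and again produces a partition of size $O(s)$. At the root we obtain the component-partition of $\widehat{G}$, which composed with the per-super-vertex spanning trees from Step~1 gives a spanning forest of $G$ and all its connected components. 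Altogether Step~1 costs $O(\frac{1}{\delta}\log(n/s))$ rounds and Step~2 costs $O(\frac{1}{\delta}\log M)$ rounds, for a total of $O(\log\log n + \log(n/s))$ rounds (as $\delta = \Omega(1)$), on $\Ot(n/s) = \Ot(n^{1-\delta})$ machines.

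\textbf{Main obstacle.} The delicate points are (i) verifying that one random-mate round simultaneously decreases the super-vertex count by a constant factor with high probability and is executable in $O(1/\delta)$ MPC rounds --- the contraction step (relabelling all vertices of a contracted star at once and removing the resulting parallel edges and self-loops by parallel sorting) is what needs to be spelled out carefully; and (ii) the memory accounting in Step~2 --- this is exactly why we merge \emph{partitions} of the $\le s$ super-vertices (size $O(s)$) rather than edge sets (total size $\Ot(n) \gg s$), and why the merge tree is binary. Point (ii) is what turns the naive bound ``$O(\log M)$'' into the claimed $O(\log\log n + \log(n/s))$.
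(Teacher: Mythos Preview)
Your Step~1 has a genuine gap: the assertion that one round of random-mate ``shrinks the number of non-isolated super-vertices by a constant factor with high probability'' is false in general --- random-mate achieves constant-factor shrinking only \emph{in expectation}. The star on $n$ vertices is a clean counterexample: if the center is a tail (probability $1/2$) it hooks onto one head leaf, but no leaf can hook (each leaf's unique neighbour, the center, is a tail), so the contracted graph is again a star, now on $n-1$ vertices. Iterating, after $R$ rounds the star has size roughly $n/2^{H}$ where $H\sim\mathrm{Binom}(R,1/2)$ counts the center-is-head rounds; reaching size $\le s$ needs $H\ge\log(n/s)$. In the most interesting regime $s=n/\polylog(n)$ this asks for $H\ge\Theta(\log\log n)$ out of $R=O(\log\log n)$ coin flips, which fails with probability $1/\polylog(n)$, not $1/\poly(n)$. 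Markov's inequality applied to the expected count $(3/4)^R n$ has the same defect. Hence $O(\log(n/s))$ rounds of random-mate do not deliver $\le s$ super-vertices with high probability on all inputs, and your Step~2 crucially needs that bound to fit a partition on one machine.

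The paper avoids iterated contraction altogether. It runs $\SRW(G,t)$ with $t=\Theta(d^3\log n)$ for $d=\Theta(n\,\polylog(n)/s)$, invokes the Barnes--Feige theorem (a random walk of length $O(N^3)$ visits $N$ distinct vertices w.h.p.) to guarantee that every vertex acquires $\ge d$ distinct neighbours in the augmented graph $\tG$, applies \emph{one} round of $\LE(\tG,d)$ to contract to $O(s/\log^{3}n)$ super-vertices in a single shot, and then finishes with the Ahn--Guha--McGregor linear-sketching connectivity algorithm (Proposition~\ref{prop:AGM}) on the contracted graph. The entire $O(\log\log n+\log(n/s))$ round count comes from $O(\log t)$ in the random-walk step; the remaining steps are $O(1)$ rounds each. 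Your Step~2 (binary merge of partitions) is a perfectly reasonable alternative to linear sketching for the finishing phase, but Step~1 needs a different primitive that guarantees the required vertex reduction w.h.p.\ in the stated number of rounds --- plain random-mate does not.
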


As a corollary of Theorem~\ref{thm:sublinear}, we have that $O(\log\log{n})$ rounds suffice to solve the connectivity problem even when memory per-machine is mildly sublinear in $n$, i.e., is $O(n/\polylog{(n)})$, and 
that as long as the memory per machine is $n^{1-o(1)}$, we can always improve upon the $O(\log{n})$-round classical PRAM algorithms for the connectivity problem on any arbitrary graph.  

The algorithm in Theorem~\ref{thm:sublinear} is a simple application of the toolkit we developed for proving our main result in Theorem~\ref{thm:main-m}, combined with the linear-sketching algorithm of Ahn~\etal~\cite{AhnGM12Linear} for graph connectivity. 
In particular, we use the following result from~\cite{AhnGM12Linear}. 

\begin{proposition}[\!\!\!\cite{AhnGM12Linear}]\label{prop:AGM}
	Let $H$ be any graph partitioned between $\card{V(H)}$ players such that each player receives all edges incident on a unique vertex in $V(H)$ (hence each edge is received by exactly two players). 
	There exists a randomized algorithm in which every player sends a message of size 	$O(\log^{3}{\card{V(H)}})$ bits to a central coordinator who can output all connected components of $H$ using only these messages with high probability. 
	The algorithm requires players to have access to $\polylog{(\card{V(H)})}$ \emph{shared random bits}. 
\end{proposition}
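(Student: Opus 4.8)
The plan is to implement Bor\r{u}vka's spanning-forest algorithm using \emph{linear sketches}, which is exactly the strategy underlying~\cite{AhnGM12Linear}. Write $n := \card{V(H)}$, fix an arbitrary total order on $V(H)$ identifying the vertices with $1 < 2 < \dots < n$ and the potential edges with $[\binom n 2]$. To the vertex held by player $u$ I would associate the edge-incidence vector $x_u \in \set{-1,0,1}^{\binom n 2}$ whose coordinate $e = \set{i,j}$ (with $i<j$) is $+1$ if $e \in E(H)$ and $u = i$, is $-1$ if $e \in E(H)$ and $u = j$, and is $0$ otherwise; player $u$ can form $x_u$ from the edges it holds. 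The single property driving everything is the \emph{cut-cancellation identity}: for every $S \subseteq V(H)$, the vector $\sum_{u \in S} x_u$ is supported precisely on the edges of $H$ with exactly one endpoint in $S$, since an edge internal to $S$ contributes $(+1)+(-1)=0$ and an edge with no endpoint in $S$ contributes $0$. Hence recovering one nonzero coordinate of $\sum_{u\in S} x_u$ yields an edge leaving the vertex set $S$ — precisely the primitive needed to contract supervertices in Bor\r{u}vka's algorithm.

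\textbf{The sketch.} The message of player $u$ will be $M x_u$, where $M$ is a fixed linear map, \emph{identical across all players} (this is where the shared random bits are used), consisting of $\Theta(\log n)$ independent copies of an $\ell_0$-sampler. Recall an $\ell_0$-sampler for $N$-dimensional vectors is a random linear map $A$ with a recovery rule such that, for any nonzero $z$, from $Az$ one outputs with constant probability an index in the support of $z$ and detects $z=0$; the standard construction subsamples the $N$ coordinates at geometric rates $1, \tfrac12, \tfrac14, \dots$ and keeps at each of the $O(\log N)$ levels a $1$-sparse recovery gadget, e.g. the pair $\bigl(\sum_i z_i,\ \sum_i i\, z_i\bigr)$ together with an $O(\log N)$-bit random fingerprint certifying $1$-sparsity, for $O(\log^2 N)$ bits total. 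With $\Theta(\log n)$ such copies and $N = \binom n 2 = \Theta(n^2)$, the message is $O(\log^3 n)$ bits, as claimed; and since all hash and subsampling functions only need $O(\log n)$-wise independence, $M$ is determined by $\polylog(n)$ truly random bits.

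\textbf{Assembling at the coordinator.} By linearity, and since $M$ is the same map for every player, the coordinator can compute $\sum_{u \in S} M x_u = M\bigl(\sum_{u \in S} x_u\bigr)$ for any vertex set $S$, and run $\ell_0$-recovery on the boundary vector of $S$. It then runs Bor\r{u}vka: from the partition of $V(H)$ into singletons, in round $r = 1, \dots, \Theta(\log n)$ it uses the $r$-th independent $\ell_0$-sampler to recover, for each current part $S$, an outgoing edge (declaring $S$ a finished component if recovery reports the zero vector), adds the recovered edges to a forest, and merges parts along them. After $\Theta(\log n)$ rounds the partition equals the connected components of $H$ and the collected edges form a spanning forest, except with probability $1/\poly(n)$ after a union bound over the $O(n\log n)$ recovery calls.

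\textbf{Main obstacle.} The genuinely delicate point is the interaction between the \emph{adaptive} nature of Bor\r{u}vka's contractions and the randomness of the sketch: a single sketch reused across all rounds would be queried on sets that depend on its own randomness, invalidating the $\ell_0$-sampler guarantee. The resolution is to use a fresh independent sampler each round — the parts present at the start of round $r$ depend only on samplers $1,\dots,r-1$, hence are independent of sampler $r$, so one may condition and invoke its guarantee — paired with a progress-per-round argument (a constant fraction of unfinished parts merge each round) rather than demanding per-supervertex success; this is exactly why the sketch carries a $\Theta(\log n)$ factor. The remaining ingredients ($1$-sparse recovery correctness, the geometric-subsampling success probability, derandomizing $M$ via limited independence, and the Bor\r{u}vka progress bound) are routine and I would only sketch them.
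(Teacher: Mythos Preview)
The paper does not prove this proposition; it is stated without proof as a black-box citation of~\cite{AhnGM12Linear} and then invoked in the $\SC$ algorithm. Your proposal is a correct outline of exactly the argument of~\cite{AhnGM12Linear}: signed edge-incidence vectors, the cut-cancellation identity, $\Theta(\log n)$ independent $\ell_0$-samplers as the shared linear sketch, and Bor\r{u}vka at the coordinator with a fresh sampler per round to sidestep adaptivity --- so there is nothing to compare, and your writeup would serve well as the missing justification.
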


We are now ready to present the algorithm in Theorem~\ref{thm:sublinear}. We shall emphasize that unlike in our main result in Theorem~\ref{res:main}, to prove Theorem~\ref{thm:sublinear}, we do not need the full power
of essentially any of the steps we developed earlier in the paper and this result can be achieved using much simpler techniques as we show below. 

\begin{tbox}
$\SC(G)$. A mildly sublinear space algorithm for connectivity on a given graph $G$. 

	\algline	 

\begin{enumerate}
	\item Set $d := \frac{n \cdot (\log{n})^4}{s}$ and $t := \paren{d^3\cdot100\log{n}}$, and run $\SRW(G,t)$.  
	\item Create a graph $\tG$ from $G$ by connecting every vertex $v \in V(G)$ to all \emph{distinct} vertices visited in the random walk starting from $v$ computed in the previous step. 
	\item Run $\LE(\tG,d)$ and let $H$ be the graph obtained by contracting any component found by $\LE$ to a single vertex. 
	\item Remove self-loops and duplicate edges from $H$ and run the algorithm in Proposition~\ref{prop:AGM} on $H$ by using a dedicated machine to simulate a single player. 
\end{enumerate}
\end{tbox}

We are now ready to prove Theorem~\ref{thm:sublinear} using the \SC algorithm. 

\begin{proof}[Proof of Theorem~\ref{thm:sublinear}]

The correctness of the algorithm is based on the following observations: 
\begin{enumerate}
	\item Even though $G$ is \emph{not} a regular graph (as is needed in Theorem~\ref{thm:random-walk}), $\SRW(G,t)$ still finds a random walk of length $t$ from every vertex (by the discussions before 
	Observation~\ref{obs:multi-ind-layered-graph} and Lemma~\ref{lem:srw}). These walks are however \emph{not} independent of each other but we shall not need this property. 
	We further note that to actually find all vertices in the walk (and not only the final vertex) we use the \Mark~procedure defined previously.  
	
	\item A random walk of length $O(d^3\log{n})$ from any vertex would either visit all vertices in its connected component or at least $d$ distinct vertices with high probability. This follows 
	from a conjecture of Linial proven by Barnes and Feige in~\cite{BarnesF93} that states that the expected time to visit $N$ distinct vertices by a random walk is $O(N^3)$. 
	
	\item It follows from the previous part that the minimum degree of graph $\tG$ is at least $d$ with high probability. Even though $\tG$ is \emph{not} an almost-regular graph, it follows immediately
	from Claim~\ref{clm:LE-simple} and the proof of second part of Lemma~\ref{lem:LE-equal-size} that components found by $\LE$ contain all vertices of $\tG$ with high probability, i.e., form a component-partition of $G$. 
	
	\item It follows from the previous part that  $\card{V(H)} = O(n\log{n}/d) = O(s/\log^{3}{n})$ with high probability (we set sampling probability of leaders in $\LE$ to $\Theta(\log{n}/d)$ for this part as this parameter is already enough for the previous argument to work).  The correctness now follows from Proposition~\ref{prop:AGM}, as vertices in $H$ are components of $G$. 
\end{enumerate}

	To bound the number of rounds,  we need $O(\log{t}) = O(\log\log{n} + \log{\paren{\frac{n}{s}}})$ to implement $\SRW(G,t)$ by Claim~\ref{claim:detection_independent}, $O(1)$ rounds for $\LE$ by Claim~\ref{claim:leader_election}, and 
	$O(1)$ rounds for final step by Proposition~\ref{prop:AGM} and the fact that we can share $\polylog{(n)}$ random bits as well as removing duplicate edges in $O(1)$ rounds on machines with memory $n^{\Omega(1)}$. 
	To bound the memory per machine,  we need $n^{\Omega(1)}$ memory to implement $\SRW$ and $\LE$ and $O(\card{V(H)}\cdot\log^{3}{(n)})$ to implement the final step by Proposition~\ref{prop:AGM}. As argued, 
	$\card{V(H)} = O(s/\log^{3}(n))$ and hence $O(s)$ memory is sufficient for this step. This finalizes the proof. 
\Qed{Theorem~\ref{thm:sublinear}}

\end{proof}

\section{An Unconditional Lower Bound for Well-Connected Graphs}\label{sec:lb}

Our algorithmic results in this paper suggested that sparse connectivity is potentially ``much simpler'' on graphs with moderate expansion (i.e., with spectral gap $\lambda \geq 1/\polylog{(n)}$)
than on typical graphs. It is then natural, although perhaps \emph{too} optimistic, to wonder whether sparse connectivity on well-connected graphs is at all ``hard'' or not; for example, can we achieve an $O(\log\log{n})$-round
algorithm for finding well-connected components of a graph using only $\polylog{(n)}$ memory per machine in the MPC model, or perhaps directly in the PRAM model? Such a possibility would indeed 
imply that one does not need the ``full power'' of MPC algorithms (more local storage and computational power) to solve the sparse connectivity problem even on well-connected graphs. 
As we prove in this section, this is indeed not the case and full power of MPC algorithms are needed for connectivity even on well-connected graphs we considered in this paper. This 
supports the main message of our work on achieving \emph{truly improved} algorithms in the MPC model using the full power of this model.

We prove an \emph{unconditional} lower bound on the number of MPC rounds required to 
solve the connectivity problem on \emph{sparse} undirected graphs with a \emph{constant spectral gap}.  
More formally, we henceforth denote by $\ExpConn$ the decision \emph{promise} problem 
of determining connectivity on $n$-vertex graphs $G$, where in both cases (each connected component of) $G$ 
is guaranteed to be a sparse expander ($|E(G)|=O(n)$ and the spectral gap of each component is $\lambda_2 = \Omega(1)$). 

\begin{theorem}[Lower bound for expander connectivity] \label{thm:lb}
Every MPC algorithm for $\ExpConn$ with $s$ space per machine (and an \emph{arbitrary} 
number of machines) requires $r = \Omega(\log_s n)$ rounds of computation. This holds even against randomized 
MPC protocols with constant error probability. 
\end{theorem}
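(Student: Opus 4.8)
The plan is to combine a generic ``locality'' bound for the MPC model with a pair of hard input distributions for which connectivity is a genuinely global property; no reduction from another problem is needed, since the hard instances will simply be the raw input. \textbf{Step 1 (a locality bound).} I would first record the standard fact that if an MPC algorithm uses space $s$ per machine and runs for $r$ rounds, then the final state of the machine that produces the output is a function of the initial contents of at most $(s+1)^r$ machines, hence of at most $s\cdot(s+1)^r=s^{O(r)}$ words of the input. This follows by unrolling the ``dependency tree'': in a single round a machine receives at most $s$ words and therefore from at most $s$ distinct senders, so the set of machines that can influence the output grows by a factor of at most $s+1$ per round. Crucially this bound does not depend on the number of machines, which is exactly what is needed for the ``arbitrary number of machines'' regime. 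Since the initial layout is adversarial (and we are the adversary), place each vertex's adjacency list on its own machine; our instances will have constant degree, so one list fits on one machine, and the output then depends only on the edges incident to some set $J$ of at most $s^{O(r)}$ vertices --- a set the algorithm may pick adaptively but whose size is always bounded as stated.

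\textbf{Step 2 (hard instances).} Fix a constant degree $d$ large enough that a graph drawn from $\mathcal{G}_{N,d}$ is connected with $\lambda_2=\Omega(1)$ with probability $1-o(1)$, by Corollary~\ref{cor:random_regular_graph}; such a graph is also sparse ($|E|=dN/2$). Let the ``yes'' distribution $\mathcal{Y}$ be $\mathcal{G}_{n,d}$ on vertex set $[n]$, conditioned on the (probability-$(1-o(1))$) event that the sample is a connected sparse expander. Let the ``no'' distribution $\mathcal{N}$ be obtained by drawing a uniformly random balanced bipartition $[n]=L\cup R$ and placing independent samples of $\mathcal{G}_{n/2,d}$ on $L$ and on $R$, again conditioned on each side being a connected sparse expander. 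Every graph in $\mathrm{supp}(\mathcal{Y})$ is a legal ``connected'' instance of $\ExpConn$ and every graph in $\mathrm{supp}(\mathcal{N})$ is a legal ``disconnected'' instance whose two components are sparse expanders. The random bipartition is essential: it makes the label of a vertex (``which side is it on'') carry no local information, which is what forces $\mathcal{Y}$ and $\mathcal{N}$ to agree on all low-order local statistics.

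\textbf{Step 3 (indistinguishability and conclusion).} The core claim is that for any vertex set $W$ with $|W|\le n^{1/4}$, the law of the subgraph of all edges incident to $W$ (more generally, of the whole region explored during the $s^{O(r)}$-step dependency process) is $o(1)$-close in total variation under $\mathcal{Y}$ and under $\mathcal{N}$. I would prove this by viewing the exploration as a step-by-step revelation of adjacency lists: as long as no ``collision'' occurs --- i.e.\ no newly revealed neighbour coincides with an already-seen vertex, equivalently no short cycle appears in the explored region --- the conditional law of each new revelation is identical in the two models, because under $\mathcal{N}$ a vertex sits on a uniformly random side and the factor ``both endpoints on the same side'' exactly cancels the shrinking of the neighbour-universe from $n$ to $n/2$, making a fresh neighbour uniform over the unseen labels in either model. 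A collision occurs with probability $O(|W|^2 d^2/n)=o(1)$ when $|W|\le n^{1/4}$, and off this event the two revelation processes couple exactly, so the output of the (adaptive) algorithm has the same law up to $o(1)$. Hence if $s^{O(r)}\le n^{1/4}$ the algorithm cannot distinguish $\mathcal{Y}$ from $\mathcal{N}$ beyond error $o(1)$, contradicting correctness --- for randomized algorithms, apply Yao's principle to the mixture $\tfrac12\mathcal{Y}+\tfrac12\mathcal{N}$, on which a correct deterministic algorithm would need distinguishing advantage $\Omega(1)$. Therefore $s^{O(r)}>n^{1/4}$, i.e.\ $r=\Omega(\log_s n)$.

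\textbf{The main obstacle.} Steps 1 and 2 use off-the-shelf ingredients (the dependency-tree bound, and the expansion of random regular graphs from Corollary~\ref{cor:random_regular_graph} and Proposition~\ref{prop:PG-expansion}); the real work is the indistinguishability lemma of Step 3. Within it the two delicate points are (i) keeping both distributions supported only on bona fide promise instances --- handled by conditioning on the $1-o(1)$ expansion event, at an $o(1)$ cost to the total-variation bound --- and (ii) ensuring the coupling survives the adaptivity of the explored set $J$, i.e.\ that the algorithm's choice of ``which list to read next'' as a function of what it has seen does not break the identical-conditional-law property. Pushing the collision bound so that it still holds for explored regions of size up to $s^{O(r)}$ --- and thus yields $\Omega(\log_s n)$ rather than a weaker bound --- is the quantitative crux.
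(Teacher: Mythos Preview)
Your Step~1 is where the gap lies. The unrolled dependency tree does bound the size of the backward light cone $L(x)$ of the output machine on each fixed input $x$ by $(s+1)^r$, but $L(x)$ is \emph{not} a certificate for $x$: since routing in MPC is input-dependent, a machine $M'\notin L(x)$ may, on an input $y$ that agrees with $x$ on $L(x)$, have a different state and elect to send a message into $L(x)$, changing the output. (Toy illustration: each $M_i$ sends its bit to $M_0$ iff that bit equals~$1$; on the all-zeros input the light cone of $M_0$ is $\{M_0\}$ alone, yet flipping any coordinate flips the output.) So the step from ``light cone has size $s^{O(r)}$'' to ``the output is an adaptive function of the adjacency lists of $s^{O(r)}$ vertices'' is unjustified, and without it the coupling in Step~3 has nothing to attach to.

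The paper handles this correctly by quoting the Roughgarden--Vassilvitskii--Wang result (Proposition~\ref{prop:roughgarden}) that an $r$-round, $s$-space MPC protocol has \emph{approximate polynomial degree} at most $s^{O(r)}$, and then Proposition~\ref{prop:dt-apd} to reduce to a decision-tree lower bound, which it proves via an adversary over a family of $\Omega(n)$ fixed expanders with bounded edge-overlap (Lemma~\ref{lem:dt-exp-conn} and Claim~\ref{clm:expan}). If you substitute Proposition~\ref{prop:roughgarden} for your Step~1, your Steps~2--3 do become a legitimate alternative to Lemma~\ref{lem:dt-exp-conn}: the local indistinguishability of one random $d$-regular expander from two on a random bipartition yields a randomized query lower bound of $n^{\Omega(1)}$ for $\ExpConn$, and since approximate degree never exceeds randomized query complexity, this plugs directly into Proposition~\ref{prop:roughgarden} to give $r=\Omega(\log_s n)$. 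The distributional idea is fine; it just cannot replace the approximate-degree bridge, which is where the MPC-specific content sits.
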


Theorem~\ref{thm:lb}, for example, suggests that any MPC algorithm with $\polylog{(n)}$ memory per machine for connectivity even on union of expander graphs requires $\Omega(\log{n}/\log\log{n})$ MPC rounds. In Remark~\ref{rem:erew-pram}, 
we further show a similar situation for (EREW) PRAM algorithms. 

We remark that by a result of~\cite{RoughgardenVW16}, the lower bound in Theorem~\ref{thm:lb} is asymptotically the best possible unconditional
lower bound short of proving that $\mathbf{NC}^1 \subsetneq \mathbf{P}$ which would be a major breakthrough in complexity theory.

Our lower bound is an adaptation of the argument in \cite{RoughgardenVW16}, who showed the same (asymptotic) 
lower bound for any (nontrivial) monotone graph property, albeit without the spectral gap nor the sparsity restrictions. 
They prove a general relationship between the round complexity of an MPC algorithm for 
computing a function $f$ and the \emph{approximate degree} of $f$ (see Theorem 3.5 and Proposition 2.7 in~\cite{RoughgardenVW16}). More formally, for a Boolean function 
$f: \{0,1\}^n \rightarrow \{0,1\}$, let 
$$\widetilde{deg}_\eps(f) := \min_{P: \{0,1\}^n \rightarrow \R}  \{ deg(P) \; \mid \; |f(x)-P(x)| \leq \eps \; \forall \; x\in \{0,1\}^n \} $$ 
denote the $\eps$-\emph{approximate degree}, i.e., the lowest degree of an ($n$-variate)  \emph{real} polynomial 
that uniformly $\eps$-approximates $f$ on the hypercube. The following proposition then follows from Corollaries 3.6 and 3.8 and Proposition 2.7 in~\cite{RoughgardenVW16}.

\begin{proposition}[\!\!\cite{RoughgardenVW16}] \label{prop:roughgarden}
If $f: \{0,1\}^n \rightarrow \{0,1\}$ is computable by an $r$-round randomized $\eps$-error MPC algorithm with space $s$ per machine, 
then $\widetilde{deg}_\eps(f) \leq s^{\Theta(r)}$.  
\end{proposition}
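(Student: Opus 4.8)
Here is a plan for a self-contained proof of Proposition~\ref{prop:roughgarden}, following the strategy behind Corollaries~3.6 and~3.8 and Proposition~2.7 of \cite{RoughgardenVW16}. The plan is to show that for every fixed setting of its internal coins the algorithm computes a Boolean function that is represented by a multilinear \emph{real polynomial} of degree $s^{\Theta(r)}$, and then to pass to an approximating polynomial for $f$ by averaging over the coins. Fix such a setting $\rho$, so that the algorithm becomes a deterministic map $h_\rho : \{0,1\}^n \to \{0,1\}$, and for every machine $M$ and round $t \in \{0,1,\dots,r\}$ let $\sigma^{(t)}_M(x) \in \{0,1\}^{\le s}$ denote the memory contents of $M$ after round $t$ on input $x$. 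Each coordinate of the map $x \mapsto \sigma^{(t)}_M(x)$ is a Boolean function and hence has a unique multilinear polynomial representation; write $\mathrm{d}_t$ for the maximum degree of any of these polynomials over all machines.

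The base case is $\mathrm{d}_0 \le 1$ and $\mathrm{d}_1 \le s$: after the first local computation a machine's memory is a function of the at most $s$ input bits it initially holds, so its representing polynomials have degree at most $s$. For the inductive step I would bound $\mathrm{d}_t$ in terms of $\mathrm{d}_{t-1}$ in two stages. \textbf{Assembly stage.} The string $m_M(x) \in \{0,1\}^{\le s}$ that $M$ receives at the start of round $t$ is assembled, address by address, from messages emitted by other machines, where the content a machine $M'$ writes to a given (machine, position) address is a function of $\sigma^{(t-1)}_{M'}(x)$ (with the validity promise guaranteeing at most one writer per address). Hence the bit at position $j$ of $m_M(x)$ equals $\sum_{M'} \big(\text{indicator that } M' \text{ writes to position } j \text{ of } M \text{ on input } x\big)\cdot\big(\text{the bit it writes there}\big)$, a finite sum in which each summand is a Boolean function of the single state $\sigma^{(t-1)}_{M'}(x)$; since that state has at most $s$ coordinates, each of degree $\le \mathrm{d}_{t-1}$, each summand, and therefore every bit of $m_M(x)$, has degree $\le s\,\mathrm{d}_{t-1}$. \textbf{Local stage.} The new state $\sigma^{(t)}_M(x)$ is an arbitrary Boolean function of the $\le s$ bits of $m_M(x)$ (together with any part of $\sigma^{(t-1)}_M(x)$ that $M$ retains, which only helps the bound), each of which has degree $\le s\,\mathrm{d}_{t-1}$; expanding that Boolean function as a degree-$\le s$ multilinear polynomial in its arguments shows every bit of $\sigma^{(t)}_M(x)$ has degree $\le s^2\,\mathrm{d}_{t-1}$. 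Thus $\mathrm{d}_t \le s^2\,\mathrm{d}_{t-1}$, and iterating gives $\mathrm{d}_r \le s^{2r} = s^{\Theta(r)}$. In particular, letting $M^\star$ be the output machine, $h_\rho$ is computed by a multilinear polynomial $P_\rho$ of degree at most $s^{\Theta(r)}$.

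To conclude, I would set $P := \mathbb{E}_\rho[P_\rho]$. As an average of multilinear polynomials of degree at most $s^{\Theta(r)}$ (each coefficient being the expectation of a bounded random variable), $P$ is itself a multilinear polynomial of degree at most $s^{\Theta(r)}$, and for every $x$ one has $P(x) = \mathbb{E}_\rho[h_\rho(x)] = \Pr_\rho[\text{the algorithm outputs } 1 \text{ on } x]$. The $\eps$-error guarantee states $\Pr_\rho[h_\rho(x) = f(x)] \ge 1-\eps$ for all $x$; since $h_\rho(x) \in \{0,1\}$, in the case $f(x)=1$ this forces $1-\eps \le P(x) \le 1$, and in the case $f(x)=0$ it forces $0 \le P(x) \le \eps$, so $|P(x)-f(x)| \le \eps$ for every $x \in \{0,1\}^n$. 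Therefore $\widetilde{deg}_{\eps}(f) \le s^{\Theta(r)}$, which is the claim.

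I expect the assembly stage to be the delicate point. Message \emph{routing} in the MPC model is data-dependent — which machine delivers to which address is itself a function of the data — so one cannot claim that a machine's new state depends on a fixed small set of machines, and indeed (as the address function shows) the set of input coordinates influencing a machine's state can be large already after one round. The observation that rescues the argument is that routing is, per address, a \emph{selection} among potential senders: writing each received bit as the sum over all machines of (indicator that this machine wins this address) times (the delivered bit) keeps the degree blow-up per round at $\mathrm{poly}(s)$ regardless of how information is routed, which is exactly what the final bound needs. The averaging-over-coins step and the passage from a bounded-degree Boolean function to a real polynomial are then routine (cf. Proposition~2.7 of \cite{RoughgardenVW16}).
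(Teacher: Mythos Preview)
The paper does not give its own proof of this proposition; it is quoted as a black-box consequence of Corollaries~3.6, 3.8 and Proposition~2.7 of \cite{RoughgardenVW16}. Your proposal is a correct self-contained reconstruction of that argument: the induction on rounds showing each state bit is a real multilinear polynomial whose degree grows by a $\poly(s)$ factor per round, followed by averaging over the random coins, is exactly the mechanism behind the cited results, and your handling of data-dependent routing in the assembly stage (writing each received bit as a sum over all potential senders, each summand depending on a single machine's $\le s$-bit state) is the key technical point and is done correctly.
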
 

By Proposition~\ref{prop:roughgarden}, proving Theorem~\ref{thm:lb} boils down to showing $\widetilde{deg}_\eps(\ExpConn) = n^{\Omega(1)}$, 
as this would imply an $r = \Omega(\log_s n)$ round lower bound for MPC algorithms for $\ExpConn$ with $s$ memory per machine. 

To prove such lower bounds on approximate degree of functions, \cite{RoughgardenVW16} 
further observed that it suffices to lower bound the \emph{deterministic decision tree} complexity $DT(f)$ (i.e., the \emph{query complexity}) of the underlying function, as it is known 
to imply a polynomially-related lower bound on its approximate degree. 

\begin{proposition}[Decision-tree complexity vs. approximate polynomial degree, \cite{Beals01,NS94}]\label{prop:dt-apd}
For any Boolean function $f: \{0,1\}^n \rightarrow \{0,1\}$, it holds that $\widetilde{deg}_{1/3}(f) \geq \Omega\left(DT(f)^{1/6}\right)$. 
\end{proposition}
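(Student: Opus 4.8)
The plan is to route the inequality through classical combinatorial complexity measures, using \emph{block sensitivity} as the main intermediary, and to combine two polynomially-related bounds: a purely combinatorial bound $DT(f) = O(\mathrm{bs}(f)^3)$ and a ``symmetrization'' bound $\mathrm{bs}(f) = O(\widetilde{deg}_{1/3}(f)^2)$. For a Boolean $f$, an input $x$, and a block $B \subseteq [n]$, say $B$ is \emph{sensitive} for $x$ if $f(x) \neq f(x^{B})$, where $x^B$ flips the coordinates in $B$; the block sensitivity $\mathrm{bs}(f)$ is the maximum over $x$ of the largest number of pairwise-disjoint sensitive blocks for $x$, and the certificate complexity $C(f)$ is the maximum over $x$ of the smallest set of coordinates whose values force $f(x)$. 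Chaining $DT(f) = O(\mathrm{bs}(f)^3)$ with $\mathrm{bs}(f) = O(\widetilde{deg}_{1/3}(f)^2)$ yields $DT(f) = O(\widetilde{deg}_{1/3}(f)^6)$, which is exactly the claim after taking sixth roots.

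For the first bound I would first establish $C(f) \le \mathrm{bs}(f)^2$: fix $x$ and a maximal family $B_1,\dots,B_k$ of disjoint minimal sensitive blocks, so $k \le \mathrm{bs}(f)$; maximality implies $\bigcup_i B_i$ is a certificate for $x$ (a disagreeing input would exhibit a sensitive block disjoint from all $B_i$), and minimality of each $B_i$ implies every coordinate of $B_i$ is individually sensitive at $x^{B_i}$, whence $|B_i| \le \mathrm{bs}(f)$ and $|\bigcup_i B_i| \le \mathrm{bs}(f)^2$. Then a certificate-guided query strategy gives $DT(f) = O(C_1(f)\cdot \mathrm{bs}(f)) = O(\mathrm{bs}(f)^3)$: repeatedly, while $f$ restricted to the already-queried values is non-constant, pick a $1$-input consistent with the queries, query a minimal $1$-certificate of it (at most $C_1(f)$ new queries), and argue via a potential argument that this happens $O(\mathrm{bs}(f))$ times before the restriction becomes constant. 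This part is standard (cf.~\cite{Beals01}), so I would cite it and reproduce only the short $C(f) \le \mathrm{bs}(f)^2$ step.

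The heart is $\mathrm{bs}(f) = O(\widetilde{deg}_{1/3}(f)^2)$ via Nisan--Szegedy symmetrization (cf.~\cite{NS94}). Let $p$ be a multilinear real polynomial of degree $d = \widetilde{deg}_{1/3}(f)$ (WLOG multilinear, by multilinearization) with $|p - f| \le 1/3$ on $\{0,1\}^n$, and let $b = \mathrm{bs}(f)$ be witnessed at $x$ by disjoint sensitive blocks $B_1,\dots,B_b$; negating input variables and possibly replacing $f$ by $1-f$ and $p$ by $1-p$, assume $x=0^n$ and $f(0^n)=0$, so $f(\mathbf{1}_{B_i})=1$ for every $i$. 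Substituting a single variable $y_i$ for all coordinates of $B_i$ and $0$ elsewhere turns $p$ into a multilinear polynomial $r(y_1,\dots,y_b)$ of degree $\le d$ with $|r(y) - f(x(y))| \le 1/3$, so $r(0^b) \in [-\tfrac13,\tfrac13]$ and $r(e_i) \ge \tfrac23$. Symmetrizing $r$ over $S_b$ yields a univariate polynomial $P$ of degree $m \le d$ with $P(\sum_i y_i) = \tfrac{1}{b!}\sum_{\sigma} r(\sigma y)$; hence $P(0)=r(0^b)\in[-\tfrac13,\tfrac13]$, $P(1)=\tfrac1b\sum_i r(e_i)\ge\tfrac23$, and $|P(j)|\le\tfrac43$ for every integer $j\in\{0,\dots,b\}$ since $P(j)$ averages values of $r$. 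Thus $|P(1)-P(0)|\ge\tfrac13$, so $\max_{[0,1]}|P'|\ge\tfrac13$. If $m$ were much smaller than $\sqrt{b}$, a ``bounded at integer points $\Rightarrow$ bounded on the interval'' estimate for low-degree polynomials (of Coppersmith--Rivlin / Ehlich--Zeller type) would give $\max_{[0,b]}|P|=O(1)$, and then A.~Markov's inequality would force $\max_{[0,b]}|P'| = O(m^2/b) < \tfrac13$, a contradiction; hence $m=\Omega(\sqrt{b})$ and $b=O(d^2)$.

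The main obstacle, and the only step demanding real care, is this last polynomial-approximation argument: legitimately passing from ``$|P|$ bounded at the integers $0,\dots,b$'' to ``$|P|$ bounded on the whole interval $[0,b]$'' so that Markov's inequality applies, and then choosing all the constants (the $\tfrac13$ versus $\tfrac23$ gap, the precise threshold on $m/\sqrt{b}$) so that the contradiction actually closes. Everything else --- the symmetrization lemma, $C(f)\le\mathrm{bs}(f)^2$, and the certificate-guided $DT=O(\mathrm{bs}^3)$ strategy --- is routine or directly citable, and the exponent $1/6 = 1/(3\cdot 2)$ simply falls out of composing the two bounds.
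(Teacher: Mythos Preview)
Your proposal is correct and follows exactly the standard route implicit in the paper's citations: the paper does not supply its own proof of this proposition but simply invokes it as a known fact from \cite{Beals01,NS94}, and your decomposition $DT(f)=O(\mathrm{bs}(f)^3)$ (Beals et al.) together with $\mathrm{bs}(f)=O(\widetilde{deg}_{1/3}(f)^2)$ (Nisan--Szegedy) is precisely the argument those references contain. There is nothing to compare against beyond this, and your sketch of both halves is accurate.
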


We remark that the same bound applies to \emph{partial} functions defined on $\mathcal{D} \subseteq \{0,1\}^n$, using a straightforward 
generalization of the block-sensitivity measure and approximate degrees (see, e.g., Theorem 4.13 and the comment following it in \cite{Beals01}). 
It therefore remains to prove a lower bound on the deterministic decision tree complexity of $\ExpConn$, 
which is the content of the next lemma.  

\begin{lemma}\label{lem:dt-exp-conn}
$DT(\ExpConn) = \Omega(n/\log n)$.  
\end{lemma}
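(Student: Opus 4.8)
The plan is to prove a lower bound on the deterministic decision tree complexity of $\ExpConn$ by exhibiting two graphs---one connected, one disconnected---that are both sparse expanders (so both lie in the promise domain $\mathcal{D}$) but differ in very few edges, and more importantly, to set this up so that an adversary can maintain ambiguity for $\Omega(n/\log n)$ queries. Concretely, I would take a sparse $\Delta$-regular expander $G^*$ on $n$ vertices with $\Delta = \Theta(1)$ (existence is standard; one can even use a random $\Delta$-regular graph as in Proposition~\ref{prop:random_regular_graph}, choosing $\Delta$ large enough that the expansion survives small perturbations). Fix a balanced bipartition $V = A \cup B$, and let $M$ be a perfect matching across the cut $(A,B)$ consisting of edges \emph{present} in $G^*$; by expansion one can extract such a matching (or a near-perfect one) of size $\Theta(n)$ after possibly adding a few edges, keeping the graph $O(1)$-regular. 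The ``yes'' instance is $G^*$ itself (connected expander), and the ``no'' family consists of the graphs $G^* \setminus M'$ where $M' \subseteq M$ is any subset whose removal disconnects $A$-side from $B$-side---but a single matching edge removal won't disconnect, so instead I would design the instance differently: make the cut between $A$ and $B$ consist of \emph{exactly} a matching $M$ of $\Theta(n)$ edges, with $G[A]$ and $G[B]$ each being their own $\Theta(1)$-degree expanders; then the whole graph is connected iff \emph{at least one} edge of $M$ is present, and it is still a sparse expander in the ``yes'' case (union of two expanders joined by a linear-size matching has constant spectral gap, which can be checked via Cheeger or directly). The ``no'' instance is the one with all of $M$ absent (two disjoint expanders---each connected component is an expander, satisfying the promise).

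With this construction, the decision-tree lower bound becomes an adversary argument of the same flavor as the classical $\Omega(n)$ lower bound for $\mathrm{OR}$. The decision tree must determine whether the graph is connected, which---on this restricted family---amounts to determining whether the set of $\Theta(n)$ matching slots is all-zero or not-all-zero. Each query reveals the presence/absence of \emph{one} edge slot. Edges inside $A$ or inside $B$ are fixed (the adversary always answers according to $G^*$), so they give no information. For the $\Theta(n)$ matching slots, the adversary answers $0$ (absent) on each queried slot until forced otherwise; since there are $\Theta(n)$ such slots and the algorithm learns nothing about unqueried slots, it cannot distinguish the all-zero instance from an instance where exactly one unqueried slot is $1$ until it has queried essentially all of them. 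This yields $DT(\ExpConn) = \Omega(n)$, which is even stronger than the claimed $\Omega(n/\log n)$; the $\log n$ slack in the statement presumably just absorbs the fact that one may want $\Delta = \Theta(\log n)$ in some variant of the construction, or leaves room for a cleaner random-graph-based argument.

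The main obstacle, and the step I would spend the most care on, is verifying that the ``yes'' instance---two constant-degree expanders joined by a linear-size perfect matching---genuinely has spectral gap $\Omega(1)$, and that it remains so robustly (one needs the promise to hold for \emph{every} instance the adversary might be forced into, including those with some matching edges present and some absent). For the spectral gap: by Cheeger's inequality it suffices to lower-bound edge expansion, and for any vertex set $S$ with $|S| \le n/2$, either $S$ is ``balanced'' between $A$ and $B$ (in which case expansion within each side, which is $\Omega(1)$ by the expander property of $G[A], G[B]$, gives many cut edges), or $S$ is concentrated on one side (say mostly in $A$), in which case the matching $M$ contributes $\Omega(|S|)$ crossing edges to $B$ unless $S$ already spills a constant fraction outside, again handled by the side-expander. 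A short case analysis closes this. For the ``no'' instance each component is a bare constant-degree expander, so the promise holds trivially. I would also note the sparsity: $|E| = O(n)$ throughout since $\Delta = O(1)$ and $|M| = O(n)$. Finally, I would remark that the partial-function versions of Propositions~\ref{prop:dt-apd} and~\ref{prop:roughgarden} apply here because our hard instances all lie inside the promise set $\mathcal{D}$, so chaining $DT(\ExpConn) = \Omega(n/\log n)$ through Proposition~\ref{prop:dt-apd} gives $\widetilde{deg}_{1/3}(\ExpConn) = n^{\Omega(1)}$ and hence, via Proposition~\ref{prop:roughgarden}, the desired $r = \Omega(\log_s n)$ round lower bound of Theorem~\ref{thm:lb}.
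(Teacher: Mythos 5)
Your adversary argument has a genuine gap: the ``OR'' analogy fails because the instance with exactly one matching edge present is \emph{not} in the promise domain of $\ExpConn$. That graph is connected, but it is two constant-degree expanders joined by a single edge, so its edge expansion across the $(A,B)$ cut is $O(1/n)$ and by Cheeger its spectral gap is $O(1/n)$, not $\Omega(1)$. An algorithm for the \emph{promise} problem is free to behave arbitrarily on such an input. In fact, on your family the promise tells the algorithm that a ``yes'' instance must contain $\Theta(n)$ matching edges (otherwise it cannot be an expander), which is qualitatively different from $\mathrm{OR}$: the adversary cannot keep both a valid yes- and a valid no-instance alive just by answering $0$ and pointing to a single unqueried slot. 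You gesture at this concern (``one needs the promise to hold for every instance the adversary might be forced into''), but the expansion verification you sketch tacitly assumes the full matching $M$ is present (``$M$ contributes $\Omega(|S|)$ crossing edges''), which is exactly what the adversary does \emph{not} guarantee. Your claimed strengthening to $\Omega(n)$ and the suggestion that the paper's $\log n$ factor is mere slack should be read as a warning sign: the $\log n$ is structural, not an artifact.

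The paper's construction sidesteps this issue entirely. It fixes the disconnected base graph $G_S \cup G_T$ (two vertex-disjoint expanders on $S$ and $T$), and prepares a family $\cB = \{B_1,\ldots,B_k\}$ of $k=\Omega(n)$ constant-degree expanders, \emph{each spanning the whole vertex set $V$}, with the property (Claim~\ref{clm:expan}, via a probabilistic argument) that every potential edge lies in at most $O(\log n)$ of the $B_i$. The yes-instances are $G_S \cup G_T \cup B_i$; because $B_i$ alone is already a spanning constant-degree expander and adding the edges of $G_S \cup G_T$ only increases cut sizes while keeping the degree $O(1)$, \emph{every} yes-instance satisfies the promise, with no delicate partial-matching case analysis. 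The adversary then answers ``absent'' to each queried edge $e$, which kills only the $O(\log n)$ graphs $B_j$ containing $e$; after $o(k/\log n) = o(n/\log n)$ queries some $B_{i^*}$ is untouched and both a valid yes- and no-instance survive. The $\log n$ loss is precisely the price of fitting $\Omega(n)$ overlapping sparse spanning expanders into $\binom{n}{2}$ edge slots. To repair your approach you would need to either (i) prove that after any $o(n)$ ``absent'' answers on $M$ there remains a large subset $M' \subseteq M$ of unqueried slots such that $G[A] \cup G[B] \cup M'$ is an expander (a plausible but nontrivial claim requiring its own case analysis), or (ii) switch to something like the paper's spanning-expander family, which renders the promise robustness automatic.
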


We shall use the following claim to construct our hard instances in the proof the lower bound in Lemma~\ref{lem:dt-exp-conn}. 
\begin{claim} \label{clm:expan}
There exists a collection of $k=\Omega(n)$ graphs 
$\cB := B_1,\ldots, B_k$ on the same set $V$ of $n$ vertices such that:
\begin{enumerate}
	\item Each $B_i$ is a $d$-regular graph with some fixed $d = O(1)$ and has spectral gap $\lambda_2(B_i) = \Omega(1)$.
	\item Any edge $e \in V \times V$ appears in at most $O(\log{n})$ different graphs $B_i \in \cB$. 
\end{enumerate}
\end{claim}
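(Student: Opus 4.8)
The plan is a direct application of the probabilistic method using the random $d$-regular graph model $\mathcal{G}_{n,d}$ introduced in Section~\ref{sec:regularization}. I would fix $d = 100$ and set $k := n$ (any $k = \Theta(n)$ will do), and sample $B_1,\ldots,B_k$ \emph{independently} from $\mathcal{G}_{n,d}$, i.e., build each $B_i$ from $d/2$ independent uniformly random permutations of $[n]$ via Eq.~(\ref{eq:edge}). Each $B_i$ is then $d$-regular by construction, so the only two things left to verify are that all of them simultaneously have constant spectral gap, and that no pair of vertices is used by more than $O(\log n)$ of them.

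For the spectral gap, a single $B_i \sim \mathcal{G}_{n,d}$ satisfies $\lambda_2(B_i) \geq 4/5$ with probability at least $1 - c/n^5$ by Corollary~\ref{cor:random_regular_graph}; a union bound over the $k = n$ graphs gives that with probability at least $1 - c/n^4$ all of them satisfy $\lambda_2(B_i) = \Omega(1)$, establishing Property~1. For Property~2, I would fix a pair $e = \{u,v\}$ with $u \neq v$ and observe that $e \in E(B_i)$ can only happen if $\pi_j(u) = v$ or $\pi_j(v) = u$ for one of the $d/2$ permutations $\pi_j$ defining $B_i$; since each $\pi_j$ is a uniformly random permutation, $\Pr[\pi_j(u)=v] = 1/n$, so a union bound over the $d/2$ permutations and the two orientations yields $\Pr[e \in E(B_i)] \leq d/n$. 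Letting $X_e := \card{\set{i \in [k] : e \in E(B_i)}}$, which is a sum of $k$ \emph{independent} indicator variables (independence across $i$ comes from sampling the $B_i$ independently) with $\Ex[X_e] \leq kd/n = d = O(1)$, a Chernoff bound gives $\Pr(X_e \geq C\log n) \leq n^{-10}$ for a suitable absolute constant $C$; a union bound over all fewer than $n^2$ pairs $e \in V \times V$ then shows that with probability at least $1 - n^{-8}$ every pair appears in at most $C\log n = O(\log n)$ of the graphs.

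A final union bound shows both events hold simultaneously with probability at least $1 - c/n^4 - n^{-8} > 0$ for all large enough $n$ (and for small $n$ the statement is vacuous, being asymptotic), so a collection $\cB = B_1,\ldots,B_k$ with $k = n = \Omega(n)$ satisfying both properties exists. I do not expect a real obstacle here: the argument is a standard first-moment/Chernoff computation, and the only points needing mild care are (i) staying within the paper's convention that elements of $\mathcal{G}_{n,d}$—which may contain self-loops and parallel edges—still count as ``$d$-regular graphs,'' and (ii) choosing the constant $C$ (and $k$) so that the two failure probabilities sum to strictly less than one while keeping $k = \Omega(n)$; both are routine.
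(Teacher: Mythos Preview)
Your proposal is correct and follows essentially the same probabilistic-method argument as the paper: sample $k=\Omega(n)$ graphs independently from $\mathcal{G}_{n,d}$ with $d=100$, use Corollary~\ref{cor:random_regular_graph} plus a union bound for Property~1, and bound the edge-multiplicity via a Chernoff bound on $k$ independent indicators plus a union bound over pairs for Property~2. The only cosmetic difference is that the paper takes $k=n/(100d)$ rather than $k=n$, which keeps $\Ex[X_e]$ below~$1$ but is otherwise immaterial.
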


\begin{proof}
	We prove this claim using a probabilistic argument. Fix $d = 100$. Recall the definition of distribution $\FG_{n,d}$ from Section~\ref{sec:regularization}, i.e., the uniform distribution on $d$-regular graphs on $n$ vertices.
	We pick $k = n/100d$ graphs $\cB:= B_1,\ldots,B_k$ independently from $\FG_{n,d}$. By Corollary~\ref{cor:random_regular_graph} and a union bound, with high probability all these graphs are expanders with $\lambda_2(B_i) = \Omega(1)$. 
	
	Now consider any fixed edge $e \in V \times V$. For $i \in [k]$, define indicator random variables $X_i \in \set{0,1}$ where $X_i = 1$ iff $e \in B_i$. Define $X:= \sum_{i=1}^{k} X_i$ as the total number of graphs in $\cB$ to which $e$ belongs to.
	We have,
	\begin{align*}
		\Ex\bracket{X} = \sum_{i=1}^{k} \Ex\bracket{X_i} = k \cdot \Pr_{B \sim \FG_{n,d}}\paren{e \in B} = k \cdot \frac{2nd}{n^2} = \frac{1}{100}.
	\end{align*}
	As such, by Chernoff bound, the probability that $X \geq 4\log{n}$, i.e., $e$ appears in more than $4\log{n}$ graphs is at most $1/n^3$. Taking a union bound on all edges $e \in V \times V$, implies that with high probability no edge
	appears in more than $O(\log{n})$ different graphs in $\cB$. 
	
	Taking a union bound on the two events above, we obtain that with high probability, $\cB$ satisfies the requirement of the claim. This implies that in particular there should exists such collection $\cB$, finalizing the proof. \Qed{Claim~\ref{clm:expan}}
	
\end{proof}

We can now present the proof of Lemma~\ref{lem:dt-exp-conn}, which completes the entire proof of Theorem \ref{thm:lb}. 
\begin{proof}[Proof of of Lemma~\ref{lem:dt-exp-conn}]
Let $S,T$ be two disjoint subsets of vertices of size $n/2$ each and $G_S$ and $G_T$ be two $d$-regular expanders with $\lambda_2 = \Omega(1)$ on
vertices $S$ and $T$, respectively. Moreover, Let $\cB := \{B_i\}_{i=1}^k$ be the collection of $k$ expanders in Claim~\ref{clm:expan}. 
Note that the collection $\cB=\{B_1,\ldots, B_k\}$ is fixed 
in advance and known to the query algorithm (or equivalently the decision tree). Our final graph $G$ is going to contain \emph{at most one} of the 
graphs $B_i$, i.e., $G$ will either be $G_S\cup G_T$ (in the disconnected case), and other wise $G = G^{(i)} := (G_S \cup G_T \cup B_i)$ 
for \emph{some} $i\in [k]$ in the connected case. As such, Claim \ref{clm:expan} and the choice of $d$ guarantees that 
$G$ is a legitimate instance of the promise problem $\mathsf{ExpanderConn}_n$, i.e., that it is both sparse and has a 
constant spectral gap for each connected component as required.

We proceed with a standard adversarial argument. Without loss of generality, we assume that the query algorithm
only queries edges $e \in \bigcup_{i=1}^{k} E(B_i)$. Whenever the query algorithm queries an edge $e$ that belongs to 
$B_i$, the adversary declares that $B_i$, \emph{as well as all $B_j$'s for which $e\in B_j$} are \emph{not} present in $G$. 
Claim \ref{clm:expan} guarantees that at most $O(\log n)$ graphs $B_j$s are excluded for each one query. Therefore, the adversary can 
continue with the aforementioned strategy for $t=\Omega(k/\log n) = \Omega(n/d\log n) = \Omega(n/\log n)$ steps, and still there will be 
at least one \emph{unqueried} graph $B_{i^*}$. Therefore, if the query algorithm makes less than $t$ queries to $G$, the adversary can 
either declare $B_{i^*}$ is present or not (determining whether $G$ is connected or not), contradicting the algorithm's output in either case. 
\Qed{Lemma~\ref{lem:dt-exp-conn}}

\end{proof}

Theorem~\ref{thm:lb} now follows immediately from Lemma~\ref{lem:dt-exp-conn}, Proposition~\ref{prop:dt-apd}, and Proposition~\ref{prop:roughgarden}.

\begin{remark}[Extension to the PRAM model]\label{rem:erew-pram}
\emph{
	In Theorem~\ref{thm:lb} we proved the lower bound for $\ExpConn$  in the MPC model as our main focus in this paper is on this model after all. However, our proof of Theorem~\ref{thm:lb} implies
	that $\ExpConn$ requires $\Omega(\log{n})$ rounds in the (EREW) PRAM model as well. The reason is that our proof implies $\ExpConn$ is a \emph{critical} function of $\Omega(n/\log{n})$ variables: its output
	depends on the existence or non-existence of the $k = \Omega(n/\log{n})$ expanders graphs $B_1,\ldots,B_k$ (think of $\ExpConn$ as OR function of $k$ bits, each denoting whether the $i$-th expander $B_i$ is present in $G$ or not).  
	By results of~\cite{CookDR86,ParberryY91} computing such a function requires $\Omega(\log{n})$ rounds in the (EREW) PRAM model. 
}
\end{remark}

\subsection*{Acknowledgements}

We thank Alex Andoni, Soheil Behnezhad, Mahsa Derakhshan, Michael Kapralov, Sanjeev Khanna, and Krzysztof Onak for helpful discussions. We are additionally grateful to Soheil Behnezhad and Michael Kapralov for bringing up the question
of sparse connectivity using mildly sublinear space algorithms that prompted us to prove Theorem~\ref{thm:sublinear}.

\bibliographystyle{abbrv}
\bibliography{general}

\begin{thebibliography}{10}

\bibitem{Afrati2011map}
F.~N. Afrati, V.~Borkar, M.~Carey, N.~Polyzotis, and J.~D. Ullman.
\newblock Map-reduce extensions and recursive queries.
\newblock In {\em Proceedings of the 14th international conference on extending
  database technology}, pages 1--8. ACM, 2011.

\bibitem{AhnG15}
K.~J. Ahn and S.~Guha.
\newblock Access to data and number of iterations: Dual primal algorithms for
  maximum matching under resource constraints.
\newblock In {\em Proceedings of the 27th {ACM} on Symposium on Parallelism in
  Algorithms and Architectures, {SPAA} 2015, Portland, OR, USA, June 13-15,
  2015}, pages 202--211, 2015.

\bibitem{AhnGM12Linear}
K.~J. Ahn, S.~Guha, and A.~McGregor.
\newblock Analyzing graph structure via linear measurements.
\newblock In {\em Proceedings of the Twenty-third Annual ACM-SIAM Symposium on
  Discrete Algorithms}, SODA '12, pages 459--467. SIAM, 2012.

\bibitem{AleliunasKLLR79}
R.~Aleliunas, R.~M. Karp, R.~J. Lipton, L.~Lov{\'{a}}sz, and C.~Rackoff.
\newblock Random walks, universal traversal sequences, and the complexity of
  maze problems.
\newblock In {\em 20th Annual Symposium on Foundations of Computer Science, San
  Juan, Puerto Rico, 29-31 October 1979}, pages 218--223, 1979.

\bibitem{AndoniNOY14}
A.~Andoni, A.~Nikolov, K.~Onak, and G.~Yaroslavtsev.
\newblock Parallel algorithms for geometric graph problems.
\newblock In {\em Symposium on Theory of Computing, {STOC} 2014, New York, NY,
  USA, May 31 - June 03, 2014}, pages 574--583, 2014.

\bibitem{AndoniSSWZ18}
A.~Andoni, C.~Stein, Z.~Song, Z.~Wang, and P.~Zhong.
\newblock Parallel graph connectivity in log diameter rounds.
\newblock {\em Manuscript 2018}.

\bibitem{Assadi17}
S.~Assadi.
\newblock Simple round compression for parallel vertex cover.
\newblock {\em CoRR}, abs/1709.04599, 2017.

\bibitem{AssadiBBMS17}
S.~Assadi, M.~Bateni, A.~Bernstein, V.~S. Mirrokni, and C.~Stein.
\newblock Coresets meet {EDCS:} algorithms for matching and vertex cover on
  massive graphs.
\newblock {\em CoRR}, abs/1711.03076, 2017.

\bibitem{AssadiK17}
S.~Assadi and S.~Khanna.
\newblock Randomized composable coresets for matching and vertex cover.
\newblock In {\em Proceedings of the 29th {ACM} Symposium on Parallelism in
  Algorithms and Architectures, {SPAA} 2017, Washington DC, USA, July 24-26,
  2017}, pages 3--12, 2017.

\bibitem{BarnesF93}
G.~Barnes and U.~Feige.
\newblock Short random walks on graphs.
\newblock In {\em Proceedings of the Twenty-Fifth Annual {ACM} Symposium on
  Theory of Computing, May 16-18, 1993, San Diego, CA, {USA}}, pages 728--737,
  1993.

\bibitem{Beals01}
R.~Beals, H.~Buhrman, R.~Cleve, M.~Mosca, and R.~de~Wolf.
\newblock Quantum lower bounds by polynomials.
\newblock {\em J. ACM}, 48(4):778--797, July 2001.

\bibitem{BeameKS13}
P.~Beame, P.~Koutris, and D.~Suciu.
\newblock Communication steps for parallel query processing.
\newblock In {\em Proceedings of the 32nd {ACM} {SIGMOD-SIGACT-SIGART}
  Symposium on Principles of Database Systems, {PODS} 2013, New York, NY, {USA}
  - June 22 - 27, 2013}, pages 273--284, 2013.

\bibitem{BehnezhadDH18}
S.~Behnezhad, M.~Derakhshan, and M.~Hajiaghayi.
\newblock Brief announcement: Semi-mapreduce meets congested clique.
\newblock {\em CoRR}, abs/1802.10297, 2018.

\bibitem{Bollobas98}
B.~Bollob{\'a}s.
\newblock Random graphs.
\newblock In {\em Modern graph theory}, pages 215--252. Springer, 1998.

\bibitem{Cheeger70}
J.~Cheeger.
\newblock A lower bound for the smallest eigenvalue of the laplacian.
\newblock {\em Problems in analysis}, pages 195--199, 1970.

\bibitem{Chung97}
F.~R. Chung.
\newblock {\em Spectral graph theory}.
\newblock Number~92. American Mathematical Soc., 1997.

\bibitem{Cohen2009graph}
J.~Cohen.
\newblock Graph twiddling in a mapreduce world.
\newblock {\em Computing in Science \& Engineering}, 11(4):29--41, 2009.

\bibitem{CookDR86}
S.~A. Cook, C.~Dwork, and R.~Reischuk.
\newblock Upper and lower time bounds for parallel random access machines
  without simultaneous writes.
\newblock {\em {SIAM} J. Comput.}, 15(1):87--97, 1986.

\bibitem{CzumajLMMOS17}
A.~Czumaj, J.~{L}{a}cki, A.~M{a}dry, S.~Mitrovi{c}, K.~Onak, and P.~Sankowski.
\newblock Round compression for parallel matching algorithms.
\newblock {\em arXiv preprint arXiv:1707.03478. To appear in STOC 2018}, 2018.

\bibitem{DG04}
J.~Dean and S.~Ghemawat.
\newblock Mapreduce: Simplified data processing on large clusters.
\newblock In {\em Proceedings of the 6th Conference on Symposium on Opearting
  Systems Design \& Implementation - Volume 6}, OSDI'04, pages 10--10,
  Berkeley, CA, USA, 2004. USENIX Association.

\bibitem{ConcentrationBook}
D.~P. Dubhashi and A.~Panconesi.
\newblock {\em Concentration of Measure for the Analysis of Randomized
  Algorithms}.
\newblock Cambridge University Press, 2009.

\bibitem{FT98}
M.~Farach and M.~Thorup.
\newblock String matching in lempel---ziv compressed strings.
\newblock {\em Algorithmica}, 20(4):388--404, Apr 1998.

\bibitem{Fischer18breaking}
M.~Fischer and J.~Uitto.
\newblock Breaking the linear-memory barrier in mpc: Fast mis on trees with
  $n^\eps$ memory per machine.
\newblock {\em arXiv preprint arXiv:1802.06748}, 2018.

\bibitem{Friedman03}
J.~Friedman.
\newblock A proof of alon's second eigenvalue conjecture.
\newblock In {\em Proceedings of the thirty-fifth annual ACM symposium on
  Theory of computing}, pages 720--724. ACM, 2003.

\bibitem{Gazit86optimal}
H.~Gazit.
\newblock An optimal randomized parallel algorithm for finding connected
  components in a graph.
\newblock In {\em Foundations of Computer Science, 1986., 27th Annual Symposium
  on}, pages 492--501. IEEE, 1986.

\bibitem{GhaffariGMR18}
M.~Ghaffari, T.~Gouleakis, S.~Mitrovic, and R.~Rubinfeld.
\newblock Improved massively parallel computation algorithms for mis, matching,
  and vertex cover.
\newblock {\em CoRR}, abs/1802.08237, 2018.

\bibitem{GhaffariP16}
M.~Ghaffari and M.~Parter.
\newblock {MST} in log-star rounds of congested clique.
\newblock In {\em Proceedings of the 2016 {ACM} Symposium on Principles of
  Distributed Computing, {PODC} 2016, Chicago, IL, USA, July 25-28, 2016},
  pages 19--28, 2016.

\bibitem{GkantsidisMS03}
C.~Gkantsidis, M.~Mihail, and A.~Saberi.
\newblock Conductance and congestion in power law graphs.
\newblock In {\em Proceedings of the International Conference on Measurements
  and Modeling of Computer Systems, {SIGMETRICS} 2003, June 9-14, 2003, San
  Diego, CA, {USA}}, pages 148--159, 2003.

\bibitem{GoodrichSZ11}
M.~T. Goodrich, N.~Sitchinava, and Q.~Zhang.
\newblock Sorting, searching, and simulation in the mapreduce framework.
\newblock In {\em Algorithms and Computation - 22nd International Symposium,
  {ISAAC} 2011, Yokohama, Japan, December 5-8, 2011. Proceedings}, pages
  374--383, 2011.

\bibitem{HalperinZ94}
S.~Halperin and U.~Zwick.
\newblock An optimal randomized logarithmic time connectivity algorithm for the
  {EREW} {PRAM} (extended abstract).
\newblock In {\em {SPAA}}, pages 1--10, 1994.

\bibitem{HegemanPPSS15}
J.~W. Hegeman, G.~Pandurangan, S.~V. Pemmaraju, V.~B. Sardeshmukh, and
  M.~Scquizzato.
\newblock Toward optimal bounds in the congested clique: Graph connectivity and
  {MST}.
\newblock In {\em Proceedings of the 2015 {ACM} Symposium on Principles of
  Distributed Computing, {PODC} 2015, Donostia-San Sebasti{\'{a}}n, Spain, July
  21 - 23, 2015}, pages 91--100, 2015.

\bibitem{hoory2006expander}
S.~Hoory, N.~Linial, and A.~Wigderson.
\newblock Expander graphs and their applications.
\newblock {\em Bulletin of the American Mathematical Society}, 43(4):439--561,
  2006.

\bibitem{Jurdzinski018}
T.~Jurdzinski and K.~Nowicki.
\newblock {MST} in \emph{O}(1) rounds of congested clique.
\newblock In {\em Proceedings of the Twenty-Ninth Annual {ACM-SIAM} Symposium
  on Discrete Algorithms, {SODA} 2018, New Orleans, LA, USA, January 7-10,
  2018}, pages 2620--2632, 2018.

\bibitem{Kang2009pegasus}
U.~Kang, C.~E. Tsourakakis, and C.~Faloutsos.
\newblock Pegasus: A peta-scale graph mining system implementation and
  observations.
\newblock In {\em Data Mining, 2009. ICDM'09. Ninth IEEE International
  Conference on}, pages 229--238. IEEE, 2009.

\bibitem{KNP92}
D.~R. Karger, N.~Nisan, and M.~Parnas.
\newblock Fast connected components algorithms for the erew pram.
\newblock In {\em Proceedings of the Fourth Annual ACM Symposium on Parallel
  Algorithms and Architectures}, SPAA '92, pages 373--381, New York, NY, USA,
  1992. ACM.

\bibitem{KarloffSV10}
H.~J. Karloff, S.~Suri, and S.~Vassilvitskii.
\newblock A model of computation for mapreduce.
\newblock In {\em Proceedings of the Twenty-First Annual {ACM-SIAM} Symposium
  on Discrete Algorithms, {SODA} 2010, Austin, Texas, USA, January 17-19,
  2010}, pages 938--948, 2010.

\bibitem{KiverisLMRV14}
R.~Kiveris, S.~Lattanzi, V.~S. Mirrokni, V.~Rastogi, and S.~Vassilvitskii.
\newblock Connected components in mapreduce and beyond.
\newblock In {\em Proceedings of the {ACM} Symposium on Cloud Computing,
  Seattle, WA, USA, November 03 - 05, 2014}, pages 18:1--18:13, 2014.

\bibitem{Konrad18}
C.~Konrad.
\newblock {MIS} in the congested clique model in o(log log {\(\Delta\)})
  rounds.
\newblock {\em CoRR}, abs/1802.07647, 2018.

\bibitem{KumarMVV13}
R.~Kumar, B.~Moseley, S.~Vassilvitskii, and A.~Vattani.
\newblock Fast greedy algorithms in mapreduce and streaming.
\newblock In {\em 25th {ACM} Symposium on Parallelism in Algorithms and
  Architectures, {SPAA} '13, Montreal, QC, Canada - July 23 - 25, 2013}, pages
  1--10, 2013.

\bibitem{LattanziMSV11}
S.~Lattanzi, B.~Moseley, S.~Suri, and S.~Vassilvitskii.
\newblock Filtering: a method for solving graph problems in mapreduce.
\newblock In {\em {SPAA} 2011: Proceedings of the 23rd Annual {ACM} Symposium
  on Parallelism in Algorithms and Architectures, San Jose, CA, USA, June 4-6,
  2011 (Co-located with {FCRC} 2011)}, pages 85--94, 2011.

\bibitem{LattanziFMS11}
S.~Lattanzi, B.~Moseley, S.~Suri, and S.~Vassilvitskii.
\newblock Filtering: A method for solving graph problems in mapreduce.
\newblock In {\em Proceedings of the Twenty-third Annual ACM Symposium on
  Parallelism in Algorithms and Architectures}, SPAA '11, pages 85--94, New
  York, NY, USA, 2011. ACM.

\bibitem{Lenzen13}
C.~Lenzen.
\newblock Optimal deterministic routing and sorting on the congested clique.
\newblock In {\em {ACM} Symposium on Principles of Distributed Computing,
  {PODC} '13, Montreal, QC, Canada, July 22-24, 2013}, pages 42--50, 2013.

\bibitem{LotkerPPP03}
Z.~Lotker, E.~Pavlov, B.~Patt{-}Shamir, and D.~Peleg.
\newblock {MST} construction in o(log log n) communication rounds.
\newblock In {\em {SPAA} 2003: Proceedings of the Fifteenth Annual {ACM}
  Symposium on Parallelism in Algorithms and Architectures, June 7-9, 2003, San
  Diego, California, {USA} (part of {FCRC} 2003)}, pages 94--100, 2003.

\bibitem{MalliarosM11}
F.~D. Malliaros and V.~Megalooikonomou.
\newblock Expansion properties of large social graphs.
\newblock In {\em Database Systems for Adanced Applications - 16th
  International Conference, {DASFAA} 2011, International Workshops: GDB, SIM3,
  FlashDB, SNSMW, DaMEN, DQIS, Hong Kong, China, April 22-25, 2011.
  Proceedings}, pages 311--322, 2011.

\bibitem{NanongkaiSP11}
D.~Nanongkai, A.~D. Sarma, and G.~Pandurangan.
\newblock A tight unconditional lower bound on distributed randomwalk
  computation.
\newblock In {\em Proceedings of the 30th Annual {ACM} Symposium on Principles
  of Distributed Computing, {PODC} 2011, San Jose, CA, USA, June 6-8, 2011},
  pages 257--266, 2011.

\bibitem{NS94}
N.~Nisan and M.~Szegedy.
\newblock On the degree of boolean functions as real polynomials.
\newblock {\em Comput. Complex.}, 4(4):301--313, Oct. 1994.

\bibitem{ParberryY91}
I.~Parberry and P.~Y. Yan.
\newblock Improved upper and lower time bounds for parallel random access
  machines without simultaneous writes.
\newblock {\em {SIAM} J. Comput.}, 20(1):88--99, 1991.

\bibitem{RastogiMCS13}
V.~Rastogi, A.~Machanavajjhala, L.~Chitnis, and A.~D. Sarma.
\newblock Finding connected components in map-reduce in logarithmic rounds.
\newblock In {\em 29th {IEEE} International Conference on Data Engineering,
  {ICDE} 2013, Brisbane, Australia, April 8-12, 2013}, pages 50--61, 2013.

\bibitem{Reif85optimal}
J.~Reif.
\newblock Optimal parallel algorithms for interger sorting and graph
  connectivity. technical report.
\newblock Technical report, Harvard Univ., Cambridge, MA (USA). Aiken
  Computation Lab., 1985.

\bibitem{Reingold05}
O.~Reingold.
\newblock Undirected st-connectivity in log-space.
\newblock In {\em Proceedings of the 37th Annual {ACM} Symposium on Theory of
  Computing, Baltimore, MD, USA, May 22-24, 2005}, pages 376--385, 2005.

\bibitem{ReingoldTV06}
O.~Reingold, L.~Trevisan, and S.~P. Vadhan.
\newblock Pseudorandom walks on regular digraphs and the {RL} vs. {L} problem.
\newblock In {\em Proceedings of the 38th Annual {ACM} Symposium on Theory of
  Computing, Seattle, WA, USA, May 21-23, 2006}, pages 457--466, 2006.

\bibitem{ReingoldVW00}
O.~Reingold, S.~P. Vadhan, and A.~Wigderson.
\newblock Entropy waves, the zig-zag graph product, and new constant-degree
  expanders and extractors.
\newblock In {\em 41st Annual Symposium on Foundations of Computer Science,
  {FOCS} 2000, 12-14 November 2000, Redondo Beach, California, {USA}}, pages
  3--13, 2000.

\bibitem{RoughgardenVW16}
T.~Roughgarden, S.~Vassilvitskii, and J.~R. Wang.
\newblock Shuffles and circuits: (on lower bounds for modern parallel
  computation).
\newblock In {\em Proceedings of the 28th {ACM} Symposium on Parallelism in
  Algorithms and Architectures, {SPAA} 2016, Asilomar State Beach/Pacific
  Grove, CA, USA, July 11-13, 2016}, pages 1--12, 2016.

\bibitem{RozenmanV05}
E.~Rozenman and S.~P. Vadhan.
\newblock Derandomized squaring of graphs.
\newblock In {\em Approximation, Randomization and Combinatorial Optimization,
  Algorithms and Techniques, 8th International Workshop on Approximation
  Algorithms for Combinatorial Optimization Problems, {APPROX} 2005 and 9th
  InternationalWorkshop on Randomization and Computation, {RANDOM} 2005,
  Berkeley, CA, USA, August 22-24, 2005, Proceedings}, pages 436--447, 2005.

\bibitem{SarmaGP11}
A.~D. Sarma, S.~Gollapudi, and R.~Panigrahy.
\newblock Estimating pagerank on graph streams.
\newblock {\em J. {ACM}}, 58(3):13:1--13:19, 2011.

\bibitem{SarmaNPT13}
A.~D. Sarma, D.~Nanongkai, G.~Pandurangan, and P.~Tetali.
\newblock Distributed random walks.
\newblock {\em J. {ACM}}, 60(1):2:1--2:31, 2013.

\bibitem{Shiloach82logn}
Y.~Shiloach and U.~Vishkin.
\newblock An o (logn) parallel connectivity algorithm.
\newblock {\em Journal of Algorithms}, 3(1):57--67, 1982.

\bibitem{trevisan2013lecture}
L.~Trevisan.
\newblock Lecture notes on expansion, sparsest cut, and spectral graph theory,
  2013.

\bibitem{ValiantBSP90}
L.~G. Valiant.
\newblock A bridging model for parallel computation.
\newblock {\em Communications of the ACM}, 33(8):103--111, 1990.

\bibitem{Whi09}
T.~White.
\newblock {\em Hadoop: The Definitive Guide}.
\newblock O'Reilly Media, Inc., 1st edition, 2009.

\bibitem{Yaroslavtsev2017massively}
G.~Yaroslavtsev and A.~Vadapalli.
\newblock Massively parallel algorithms and hardness for single-linkage
  clustering under $\ell_p$-distances.
\newblock {\em arXiv preprint arXiv:1710.01431}, 2017.

\bibitem{ZCF10}
M.~Zaharia, M.~Chowdhury, M.~J. Franklin, S.~Shenker, and I.~Stoica.
\newblock Spark: Cluster computing with working sets.
\newblock In {\em Proceedings of the 2Nd USENIX Conference on Hot Topics in
  Cloud Computing}, HotCloud'10, pages 10--10, Berkeley, CA, USA, 2010. USENIX
  Association.

\end{thebibliography}

\clearpage

\appendix

\section{Useful Concentration Bounds}\label{app:concentration}

We use the following standard version of Chernoff bound (see, e.g.,~\cite{ConcentrationBook}) throughout. 

\begin{proposition}[Chernoff bound]\label{prop:chernoff}
	Let $X_1,\ldots,X_n$ be independent random variables taking values in $[0,1]$ and let $X:= \sum_{i=1}^{n} X_i$. Then, for any $\eps \in (0,1)$, 
	\[ \Pr\paren{X \notin \range{(1 \pm \eps) \Ex\bracket{X}}} \leq 2 \cdot \exp\paren{-\frac{\eps^2 \cdot \Ex\bracket{X}}{2}}. \]
\end{proposition}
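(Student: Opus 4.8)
The plan is the classical exponential moment (``Chernoff'') method, applied to each tail separately and then combined by a union bound. Write $\mu := \Ex\bracket{X}$. For the upper tail, fix a parameter $t > 0$; since $x \mapsto e^{tx}$ is increasing, Markov's inequality gives
\begin{align*}
	\prob{X \geq (1+\eps)\mu} = \prob{e^{tX} \geq e^{t(1+\eps)\mu}} \leq e^{-t(1+\eps)\mu} \cdot \Ex\bracket{e^{tX}} = e^{-t(1+\eps)\mu} \cdot \prod_{i=1}^{n} \Ex\bracket{e^{tX_i}},
\end{align*}
where the last equality uses independence of the $X_i$'s. The key per-variable estimate is that for $X_i \in [0,1]$, convexity of $x \mapsto e^{tx}$ gives the pointwise bound $e^{tX_i} \leq 1 - X_i + X_i e^{t} = 1 + (e^t - 1)X_i$, hence $\Ex\bracket{e^{tX_i}} \leq 1 + (e^t-1)\Ex\bracket{X_i} \leq \exp\Paren{(e^t-1)\Ex\bracket{X_i}}$ using $1+y \leq e^y$. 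Multiplying over $i$ yields $\prod_i \Ex\bracket{e^{tX_i}} \leq \exp\Paren{(e^t-1)\mu}$, so $\prob{X \geq (1+\eps)\mu} \leq \exp\Paren{\mu\paren{(e^t-1) - t(1+\eps)}}$.

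Next I would optimize over $t$: the exponent is minimized at $t = \ln(1+\eps) > 0$, yielding the standard closed form $\prob{X \geq (1+\eps)\mu} \leq \paren{e^{\eps}/(1+\eps)^{1+\eps}}^{\mu}$. An entirely symmetric computation with a negative parameter, optimized at $t = \ln(1-\eps) < 0$, handles the lower tail and gives $\prob{X \leq (1-\eps)\mu} \leq \paren{e^{-\eps}/(1-\eps)^{1-\eps}}^{\mu}$. It then remains to pass from these closed forms to the clean exponential bound: taking logarithms, this reduces to elementary calculus estimates, namely $(1-\eps)\ln(1-\eps) + \eps \geq \eps^2/2$ for the lower tail (immediate from the Taylor expansion $-\ln(1-\eps) = \eps + \eps^2/2 + \eps^3/3 + \cdots$) and the corresponding one-variable bound on $\eps - (1+\eps)\ln(1+\eps)$ for the upper tail. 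A union bound over the two tail events then contributes the factor of $2$, completing the proof.

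Everything in this argument — Markov, the independence factorization, the convexity bound on the per-variable moment generating function, and the optimization over $t$ — is entirely routine. The only step requiring genuine care is the final calculus simplification, and specifically the upper-tail estimate: the two tails are not symmetric (the upper tail is the tighter of the two), so one must check the relevant one-variable inequality on $\eps \in (0,1)$ by a short monotonicity argument rather than by the Taylor series alone. That single elementary inequality is the one place I would slow down and verify explicitly.
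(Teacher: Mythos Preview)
The paper does not actually prove this proposition: it is stated in the appendix with a citation to a standard reference and no argument. Your exponential-moment outline is exactly the standard proof, so in that sense your approach matches what the paper (implicitly) relies on.

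That said, there is a genuine issue in the final step you flagged. For the upper tail you would need $(1+\eps)\ln(1+\eps) - \eps \geq \eps^2/2$ on $(0,1)$, and this inequality is \emph{false}: setting $f(\eps) = (1+\eps)\ln(1+\eps) - \eps - \eps^2/2$ one has $f(0)=0$ and $f'(\eps) = \ln(1+\eps) - \eps \leq 0$, so $f(\eps) < 0$ for all $\eps > 0$. The standard closed form only yields $\exp(-\eps^2\mu/3)$ (or $\exp(-\eps^2\mu/(2+\eps))$) for the upper tail; the lower tail does give the constant $2$. The two-sided statement with denominator $2$ is a common but slightly loose convention in the literature, and the paper is simply quoting that convention rather than deriving it. Your proof sketch is otherwise correct and would produce the ``honest'' version of the bound; just be aware that the specific constant in the proposition is not what the argument you outlined actually delivers.
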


We also need the {method of bounded differences} in our proofs. A function $f(x_1,\ldots,x_n)$ satisfies the \emph{Lipschitz property} with constant $d$, iff 
for all $i \in [n]$, $\card{f(a) - f(a')} \leq d$, whenever $a$ and $a'$ differ only in the $i$-th coordinate. 

\begin{proposition}[Method of bounded differences] \label{prop:bounded-differences}
	If $f$ satisfies the Lipschitz property with constant $d$ and $X_1,\ldots,X_n$ are independent random variables, then, 
	\[ \Pr\paren{\card{f(X) - \Ex\bracket{f(X)}} > t} \leq \exp\paren{-\frac{2t^2}{n\cdot d^2}} \]
\end{proposition}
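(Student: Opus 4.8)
The plan is to prove this as an instance of McDiarmid's bounded differences inequality, via the Doob martingale and the Chernoff method. First I would set up the Doob martingale associated with $f$: let $Z_0 := \Ex\bracket{f(X)}$ and, for $i \in [n]$, let $Z_i := \Ex\bracket{f(X_1,\ldots,X_n) \mid X_1,\ldots,X_i}$, so that $Z_n = f(X)$. Then $(Z_i)_{i=0}^{n}$ is a martingale with respect to the filtration generated by $X_1,\ldots,X_i$, and writing $V_i := Z_i - Z_{i-1}$ we have $\Ex\bracket{V_i \mid X_1,\ldots,X_{i-1}} = 0$. The quantity we must control is $f(X) - \Ex\bracket{f(X)} = \sum_{i=1}^{n} V_i$.

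The crucial structural step is to show that, conditioned on $X_1,\ldots,X_{i-1}$, the increment $V_i$ lies almost surely in an interval of width at most $d$. To see this, define $g_i(x) := \Ex\bracket{f(X_1,\ldots,X_n) \mid X_1,\ldots,X_{i-1},\, X_i = x}$, so that $Z_i = g_i(X_i)$ and $Z_{i-1} = \Ex\bracket{g_i(X_i) \mid X_1,\ldots,X_{i-1}}$. Using independence of the $X_j$'s, for any two possible values $x, x'$ of $X_i$ the conditional distribution of $(X_{i+1},\ldots,X_n)$ does not depend on which of $x, x'$ we plug in; coupling those coordinates and applying the Lipschitz property in the $i$-th coordinate gives $\card{g_i(x) - g_i(x')} \leq d$. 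Hence $g_i$ takes values in an interval of length at most $d$, and therefore so does $V_i = g_i(X_i) - \Ex\bracket{g_i(X_i) \mid X_1,\ldots,X_{i-1}}$. I would then invoke Hoeffding's lemma: if $W$ has $\Ex\bracket{W} = 0$ and takes values in an interval of width $w$, then $\Ex\bracket{e^{\lambda W}} \leq \exp\paren{\lambda^2 w^2 / 8}$, and apply it conditionally to each $V_i$.

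With these ingredients the tail bound follows by a standard Chernoff argument. For $\lambda > 0$, peeling off the conditional expectations one index at a time gives $\Ex\bracket{e^{\lambda(Z_n - Z_0)}} = \Ex\bracket{\prod_{i=1}^{n} e^{\lambda V_i}} \leq \exp\paren{\lambda^2 n d^2 / 8}$, so by Markov's inequality $\Pr\paren{f(X) - \Ex\bracket{f(X)} > t} \leq \exp\paren{-\lambda t + \lambda^2 n d^2 / 8}$; choosing $\lambda = 4t/(nd^2)$ yields $\exp\paren{-2t^2/(nd^2)}$. Applying the same argument to $-f$ (which is also Lipschitz with constant $d$) controls the lower tail, and combining the two tails gives the two-sided statement. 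The one place that needs genuine care is the structural step bounding the width of the conditional range of $V_i$ — this is exactly where independence of the $X_i$ is used — while everything else is a mechanical application of Hoeffding's lemma and the exponential moment method. (I note that the two-sided version is usually stated with a leading factor of $2$; the inequality as quoted should be read accordingly, which does not affect any of its uses in the paper.)
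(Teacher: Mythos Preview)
Your argument is the standard proof of McDiarmid's inequality and is correct as written (including your remark about the missing factor of $2$ in the two-sided bound). The paper, however, does not give its own proof of this proposition at all: it simply cites an external reference (\cite{ConcentrationBook}, Section~5) and moves on. So there is nothing to compare against beyond noting that the reference you would be reproducing is exactly the Doob-martingale plus Hoeffding's-lemma argument you outlined.
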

\noindent
A proof of this proposition can be found in~\cite{ConcentrationBook} (see Section 5).

\section{Balls and Bins Experiment}\label{app:balls-and-bins}

 We use the following standard balls and bins argument in our proofs. %The proof is standard and follows from Proposition~\ref{prop:bounded-differences}.
 
\begin{proposition}[Balls and Bins]\label{prop:balls-and-bins}
	Consider the process of throwing $N$ balls into $B$ bins where $N \leq \eps \cdot B$ for some parameter $\eps \in (0,1/100)$ such that each bin is chosen with probability in $\range{\paren{1 \pm \eps} \cdot {B^{-1}}}$. 
	Let $X$ denote the number of non-empty bins. Then,
	\begin{align*}
		\Pr\Paren{X \notin \range{\paren{1\pm 2\eps} \cdot N}} \leq\exp\paren{-\frac{\eps^2 \cdot N}{2}}. 
	\end{align*}
\end{proposition}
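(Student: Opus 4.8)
The plan is to compute $\Ex\bracket{X}$ up to a $(1\pm\eps)$ multiplicative factor, and then apply McDiarmid's inequality (the method of bounded differences, Proposition~\ref{prop:bounded-differences}) to $X$ viewed as a function of the $N$ independent ball placements.

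First I would pin down the expectation. Label the bins $1,\ldots,B$, and let $p_i \in \range{(1\pm\eps) B^{-1}}$ be the probability that a ball lands in bin $i$, so that $\sum_{i} p_i = 1$. Bin $i$ is empty with probability $(1-p_i)^N$, hence $\Ex\bracket{X} = \sum_{i=1}^{B} \paren{1 - (1-p_i)^N}$. Using the elementary bounds $1 - x \le e^{-x} \le 1 - x + x^2/2$ for $x \ge 0$, one gets for each $i$ that $N p_i - (N p_i)^2/2 \le 1 - (1-p_i)^N \le N p_i$. Since $N p_i \le N(1+\eps)/B \le \eps(1+\eps) \le 2\eps$, the lower bound is at least $(1-\eps) N p_i$. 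Summing over $i$ and using $\sum_i p_i = 1$ yields $\Ex\bracket{X} \in \bracket{(1-\eps)N,\, N}$; in particular $\card{\Ex\bracket{X} - N} \le \eps N$.

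Next I would bound the Lipschitz constant of $X$. Write $X = f(Z_1,\ldots,Z_N)$, where $Z_j \in [B]$ is the (independent) bin chosen by ball $j$. If we change $Z_j$ from bin $a$ to bin $b$ while keeping all other placements fixed, the number of non-empty bins can decrease by at most $1$ (only if $a$ contained ball $j$ alone) and increase by at most $1$ (only if $b$ was empty), so $\card{f(Z) - f(Z')} \le 1$. Thus $f$ satisfies the Lipschitz property with constant $d = 1$ over $n = N$ independent variables, and Proposition~\ref{prop:bounded-differences} with $t = \eps N$ gives
\[
	\Pr\paren{\card{X - \Ex\bracket{X}} > \eps N} \le \exp\paren{-\frac{2\eps^2 N^2}{N}} = \exp\paren{-2\eps^2 N}.
\]
Since $\card{\Ex\bracket{X} - N} \le \eps N$, the event $X \notin \range{(1\pm 2\eps)N}$ is contained in the event $\card{X - \Ex\bracket{X}} > \eps N$, so this bound already implies $\Pr\paren{X \notin \range{(1\pm 2\eps)N}} \le \exp(-2\eps^2 N) \le \exp(-\eps^2 N/2)$, as claimed, with room to spare.

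The argument is routine, so there is no real obstacle; the only points requiring care are (i) the two-sided Taylor estimate on $(1-p_i)^N$ with explicit constants, chosen so that the $2\eps$ slack in the statement comfortably absorbs the $\eps$-error in $\Ex\bracket{X}$, and (ii) verifying that relocating a single ball changes the count of non-empty bins by at most one, giving the bounded-differences constant $d = 1$. Even if one is only willing to argue $d = 2$ (bounding the two contributions separately), the target bound $\exp(-\eps^2 N/2)$ still follows from $\exp(-2\eps^2 N^2 / (4N)) = \exp(-\eps^2 N/2)$.
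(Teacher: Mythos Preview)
Your proof is correct and follows essentially the same approach as the paper: compute $\Ex\bracket{X}$ via linearity and the Taylor bounds $1-x \le e^{-x} \le 1-x+x^2/2$, then apply the method of bounded differences (Proposition~\ref{prop:bounded-differences}) to $X$ as a $1$-Lipschitz function of the $N$ independent ball placements. Your expectation estimate is in fact slightly cleaner than the paper's (you use $\sum_i p_i = 1$ directly to get $\Ex\bracket{X}\in[(1-\eps)N,N]$, whereas the paper bounds each $p_i$ separately and arrives at $\range{(1\pm 1.1\eps)N}$), and your choice $t=\eps N$ versus the paper's $t=(\eps/2)N$ is an inconsequential difference in how the slack is distributed.
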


\begin{proof}
	Define an indicator random variable $X_i \in \set{0,1}$ for any $i \in [B]$, where $X_i = 1$ iff the $i$-th bin is non-empty. Clearly $X = \sum_{i=1}^{B} X_i$ denotes the number of non-empty bins. 
	As each bin is chosen (near) uniformly at random by a ball, we have that,
	\begin{align*}
		\Ex\bracket{X} = \sum_{i=1}^{B} \Ex\bracket{X_i} \in \range{B \cdot \Paren{1-\paren{1-\frac{(1 \pm \eps)}{B}}^{N}}} \in \range{(1\pm1.1\eps) \cdot N}.
		\tag{using the fact that $1-x \leq e^{-x} \leq 1-x + x^2/2$ for $x \leq 1$ and that $N/B \leq \eps$} 
	\end{align*}
	Random variables $X_1,\ldots,X_B$ are correlated and hence not amenable to a straightforward application of Chernoff bound. We instead use the method of bounded differences in Proposition~\ref{prop:bounded-differences} to prove
	the concentration of $X$ around $\Ex\bracket{X}$. 
	
	Define $N$ \emph{independent} random variables $Y_1,\ldots,Y_N$, where $Y_i$ denotes the index of the bin, the $i$-th ball is sent to. Define $f(Y_1,\ldots,Y_N)$ as the number non-empty bins (which is clearly only a function of $Y_1,\ldots,Y_N$). 
	We have $f(Y_1,\ldots,Y_N) = X$ and that $f$ is clearly $1$-Lipschitz as changing any $Y_i$ can only make one more bin empty or non-empty. As such, by Proposition~\ref{prop:bounded-differences},
	\begin{align*}
		\Pr\paren{\card{f(Y_1,\ldots,Y_N) - \Ex\bracket{f(Y_1,\ldots,Y_N)}} > (\eps/2) \cdot N} \leq  \exp\paren{-\eps^2 \cdot N/2}.
	\end{align*}
	As $f(Y_1,\ldots,Y_N) = X$ and $\card{\Ex\bracket{X} - N} \leq 1.1\eps N$, we have, 
	\begin{align*}
		\Pr\paren{X \notin \range{\paren{1\pm 2\eps} \cdot N}} \leq  \exp\paren{-\eps^2 \cdot N/2},
	\end{align*}
	finalizing the proof. 
\end{proof}

\section{Replacement and Zig-Zag Products on Non-Regular Graphs}\label{app:replacement-zigzag}

Let $G$ be a graph on $n$ vertices $v_1,\ldots,v_n$ with degree $d_v$ for $v \in V(G)$, and $\HH$ be a family of $n$ $d$-regular graphs $H_1,\ldots,H_n$ where $H_v$ is supported on $d_v$ vertices (we assume $d_v \geq d$ for all $v \in V(G)$).
We construct the \emph{replacement} product $G \replacement \HH$ as follows: 
\begin{itemize}
	\item Replace the vertex $v$ of $G$ with a copy of $H_v$ (henceforth referred to as a \emph{cloud}). For any $i \in H_v$, we use $(v,i)$ to refer to the $i$-th vertex of the cloud $H_v$. 
	
	\item Let $(u,v)$ be such that the $i$-th neighbor of $u$ is the $j$-th neighbor of $v$. Then there exists an edge between vertices $(u,i)$ and $(v,j)$ in $G \replacement \HH$. Additionally, for any $v \in V(G)$, if there exists an edge $(i,j) \in H_v$,
	then there exists an edge $((v,i),(v,j))$ in $G \replacement \HH$. 
\end{itemize}
It is easy to see that the replacement product $G \replacement \HH$ is a $(d+1)$-regular graph on $2m$ vertices where $m$ is the number of edges in $G$. 

We prove Proposition~\ref{prop:replacement-product} from Section~\ref{sec:regularization} in this section. For convenience, we repeat the 
statement of the proposition again. 

\begin{proposition*}[Proposition~\ref{prop:replacement-product} in Section~\ref{sec:regularization}]
	Suppose $\lambda_2(G) \geq \lambda_G$ and all $H_v \in \HH$ are $d$-regular with $\lambda_2(H_v) \geq \lambda_H$. Then, 
	$\lambda_2(G \replacement \HH) = \Omega\paren{d^{-1} \cdot \lambda_G \cdot \lambda_H}$
\end{proposition*}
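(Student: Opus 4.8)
\textbf{Proof plan for Proposition~\ref{prop:replacement-product}.}

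The plan is to follow the operator-norm / spectral approach of Rozenman--Vadhan and Reingold--Trevisan--Vadhan, adapted to the non-regular base graph $G$. The first step is to set up the right linear-algebraic framework. Since $G \replacement \HH$ is $(d+1)$-regular on $2m$ vertices, I would work with its random-walk operator $M$ on $\R^{2m}$, decomposing $M = \frac{d}{d+1} B + \frac{1}{d+1} A$, where $B$ is the operator that takes a step inside a cloud (a block-diagonal operator whose $v$-th block is the random-walk matrix of $H_v$) and $A$ is the permutation operator that swaps the two endpoints of each $G$-edge (moving from $(u,i)$ to $(v,j)$ when the $i$-th edge at $u$ is the $j$-th edge at $v$). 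The key point is that $A$ is an involution (a perfect matching on the $2m$ coordinates), so $A^2 = I$ and $\|A\| = 1$.

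The second step is to identify the relevant subspaces. Let $W \subseteq \R^{2m}$ be the subspace of vectors that are constant on each cloud $H_v$; this is naturally isomorphic to a weighted $\R^n$. On $W$, the cloud-step operator $B$ acts as the identity. Crucially, $W$ is exactly where the base graph $G$ shows up: projecting $A$ onto $W$ recovers (a conjugate of) the random-walk operator of $G$, whose second eigenvalue gap is $\lambda_G$ by hypothesis. On the orthogonal complement $W^\perp$ (vectors summing to zero on each cloud), the operator $B$ is a strict contraction: because each $H_v$ is $d$-regular with $\lambda_2(H_v) \geq \lambda_H$, we get $\|B|_{W^\perp}\| \leq 1 - \lambda_H$. (Here the stationary distribution of $G \replacement \HH$ is uniform since the product is regular, so orthogonality is with respect to the standard inner product — this is one place where having regularized via $\replacement$ genuinely simplifies matters, even though $G$ itself is irregular.)

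The third step is the actual eigenvalue estimate. Take any unit vector $x \perp \ones$ (the all-ones vector) and decompose $x = x^{\parallel} + x^{\perp}$ with $x^{\parallel} \in W$, $x^{\perp} \in W^\perp$. I would bound $\langle x, M x\rangle$ by splitting into the $B$-part and the $A$-part. The $B$-part contributes at most $\frac{d}{d+1}(\|x^{\parallel}\|^2 + (1-\lambda_H)\|x^{\perp}\|^2)$. For the $A$-part, write $\langle x, A x\rangle$ and use that $A$ maps $W$ into... not $W$ in general, but one shows $\langle x^{\perp}, A x^{\perp}\rangle + 2\langle x^{\parallel}, A x^{\perp}\rangle$ can be absorbed, and $\langle x^{\parallel}, A x^{\parallel}\rangle \leq (1-\lambda_G)\|x^{\parallel}\|^2$ since $x^{\parallel}$ corresponds to a vector orthogonal to the stationary distribution of $G$ (using that $x \perp \ones$ forces $x^{\parallel}$ to be ``mean-zero'' in the appropriate $G$-weighted sense). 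Combining these and setting $\|x^{\perp}\|^2 = s$, $\|x^{\parallel}\|^2 = 1-s$, I get a bound of the form $\langle x, Mx\rangle \leq 1 - \Omega(\min(\lambda_H, \lambda_G)\cdot\text{(something)}/d)$; optimizing the trade-off over $s$ (the adversary balances between putting mass in $W$, where it loses to $\lambda_G$ but only through the weak $1/(d+1)$ coefficient on $A$, versus $W^\perp$, where it loses to $\lambda_H$ through the $d/(d+1)$ coefficient on $B$) yields the claimed $\lambda_2(G\replacement\HH) = \Omega(d^{-1}\lambda_G\lambda_H)$. (I note the statement has $\lambda_H^2$ in the main text but $\lambda_H$ in the appendix restatement; the cleaner replacement-product bound is linear in $\lambda_H$, and the square appears only if one routes through the zig-zag analysis — I would prove the linear version.)

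\textbf{Main obstacle.} The delicate part is handling the cross terms $\langle x^{\parallel}, A x^{\perp}\rangle$ in the $A$-part, and more fundamentally making precise the claim that the restriction of $A$ to $W$ is conjugate to $G$'s walk operator \emph{when $G$ is irregular}. In the regular case $W$ is literally a copy of $\R^n$ with the uniform inner product and $A|_W$ is exactly the normalized adjacency matrix of $G$; in the irregular case the clouds have different sizes $d_v$, so the natural isometry $W \cong \R^n$ carries a \emph{weighted} inner product (weight $d_v$ at vertex $v$), and under this weighting $A|_W$ becomes the operator $D^{-1/2} A_G D^{-1/2}$-conjugate object whose spectral gap is governed by $\lambda_2(\LL_G) = \lambda_2(G)$. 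Verifying this conjugacy carefully — and checking that the ``mean-zero'' condition inherited from $x \perp \ones$ is precisely orthogonality to $G$'s stationary distribution $\pi(G)$ under this weighting — is where the irregularity genuinely enters and is the step I would write out in full detail; everything else is the standard two-subspace argument.
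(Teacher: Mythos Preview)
Your plan has a real gap, and it is precisely at the step you wave through: the cross term $2\langle x^{\parallel}, A x^{\perp}\rangle$ in the $A$-part cannot be ``absorbed.'' Writing $a=\|x^{\parallel}\|$, $b=\|x^{\perp}\|$ with $a^2+b^2=1$, your bounds give
\[
1-\langle x,Mx\rangle \;\ge\; \frac{d}{d+1}\,\lambda_H b^2 \;+\; \frac{1}{d+1}\,\max\bigl(0,\ \lambda_G a^2 - 2ab\bigr),
\]
and the adversary simply picks $b$ of order $\lambda_G$ (so $a\approx 1$): the second term vanishes and the first is only $\Theta(\lambda_H\lambda_G^{2})$. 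A careful case split therefore yields at best $\lambda_2(G\replacement\HH)=\Omega(\lambda_H\lambda_G^{2})$, not $\Omega(d^{-1}\lambda_G\lambda_H)$; for the paper's regime ($d,\lambda_H=\Theta(1)$, $\lambda_G$ small) this is a quadratic loss and does not recover the claimed $\Omega(\lambda_G)$. The point is that a single step of $M=\frac{d}{d+1}B+\frac{1}{d+1}A$ exposes $A$ unsandwiched, and the off-$W$ component of $Ax^{\parallel}$ has no reason to be small.

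The paper's proof avoids this by passing to $M^{3}$: expanding the cube isolates a $\Theta(d^{-1})$ multiple of $BAB$, which is exactly the zig-zag walk operator. Now $A$ \emph{is} sandwiched between two copies of $B$, and one applies the decomposition $B=\lambda_H J+(1-\lambda_H)C$ (with $J$ the projection onto your subspace $W$ and $\|C\|\le 1$) on each side to get $BAB=\lambda_H^{2}\,JAJ+(1-\lambda_H^{2})\widetilde{C}$ with $\|\widetilde{C}\|\le 1$. The operator $JAJ$ restricted to $\ones^{\perp}$ has norm at most $1-\lambda_G$, and \emph{this} is where your irregular-$G$ analysis (the conjugacy with $D^{-1/2}A_GD^{-1/2}$ under the $d_v$-weighted identification $W\cong\R^n$) is used --- you had that part exactly right, it is the content of the paper's Claim~\ref{clm:vector-bound}. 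This yields $\mu_2(M^3)\le 1-\Theta(d^{-1})\lambda_G\lambda_H^{2}$ and hence $\lambda_2(G\replacement\HH)=\Omega(d^{-1}\lambda_G\lambda_H^{2})$. So the cube (equivalently, routing through zig-zag) is not a stylistic choice but the device that kills the cross term; the price is $\lambda_H^{2}$ rather than $\lambda_H$, and the linear-$\lambda_H$ version in the restated statement appears to be a typo (the main text and the actual proof both give $\lambda_H^{2}$).
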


To prove Proposition~\ref{prop:replacement-product}, it would be more convenient to consider
 a slightly more involved graph product, i.e., the so-called \emph{zig-zag product}. The zig-zag product of a graph $G$ with a family $\HH$ of graphs supported on vertex-degrees of $G$
is a graph with the same set of vertices as the replacement product $G \replacement \HH$ with some additional edges. Intuitively, the new edges connect endpoints of the length-$3$ paths in $G \replacement \HH$ which consist of taking one edge inside a cloud, one edge between two neighboring clouds, and one edge inside the next cloud (hence the name ``zig-zag''). 

Formally, for a graph $G$ on $n$ vertices $v_1,\ldots,v_n$ with degree $d_v$ for all $v \in V(G)$, and a family $\HH$ of $n$ $d$-regular graphs $H_1,\ldots,H_n$ where $H_v$ is supported on $d_v$ vertices (we assume $d_v \geq d$ for all $v \in V(G)$),
we construct the \emph{zig-zag} product $G \zigzag \HH$ as follows: 

\begin{itemize}
	\item The vertex-set of $G \zigzag \HH$ is the same as $G \replacement \HH$. 
	\item A vertex $(u,i)$ is connected to $(v,j)$ if there exist $k$ and $\ell$ such that the edges $((u,i),(u,k))$, $((u,k),(v,\ell))$, and $((v,\ell),(v,j))$ belong to $G \replacement \HH$. 
\end{itemize}
It is straightforward to verify that the graph $G \zigzag \HH$ is a $d^2$-regular graph on $2m$ vertices ($m$ is number of edges in $G$). 

The following proposition asserts that the spectral gap is preserved under zig-zag product.

\begin{proposition}[cf. \cite{ReingoldVW00,ReingoldTV06}]\label{prop:zigzag-product}
	Suppose $\lambda_2(G) \geq \lambda_G$ and all $H_v \in \HH$ are $d$-regular with $\lambda_2(H_v) \geq \lambda_H$. Then, 
	$\lambda_2(G \replacement \HH) \geq {\lambda_G \cdot \lambda_H^2}.$
\end{proposition}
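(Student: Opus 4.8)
The plan is to follow the Reingold--Vadhan--Wigderson / Reingold--Trevisan--Vadhan operator argument, taking care of the one feature that is new here, namely that the base graph $G$ is \emph{not} regular. Realize the normalized adjacency (random‑walk) operator $M$ of $G\zigzag\HH$ on the $2m$‑dimensional space $\R^{\FV}$, $\FV=\set{(v,i):v\in V(G),\,i\in H_v}$, as a product $M=\dot B\,\dot A\,\dot B$: here $\dot A$ is the permutation matrix of the inter‑cloud perfect matching induced by $E(G)$ (the rotation map $(v,i)\mapsto(u,j)$ when the $i$‑th edge at $v$ is the $j$‑th edge at $u$), and $\dot B=\bigoplus_{v\in V(G)}W_{H_v}$ performs one step of the random walk inside each cloud. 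Both $\dot A$ and $\dot B$ are symmetric, hence so is $M$; since $G\zigzag\HH$ is $d^2$‑regular the uniform vector $\ones$ is the top eigenvector of $M$ with eigenvalue $1$, and since $\lambda_2(G\zigzag\HH)=1-\mu_2(M)$ with $\mu_2(M):=\max_{0\ne x\perp\ones}\langle x,Mx\rangle/\norm{x}^2$, it suffices to prove $\mu_2(M)\le 1-\lambda_G\lambda_H^2$. I would then split $\R^{\FV}=\FV^{\|}\oplus\FV^{\perp}$, where $\FV^{\|}$ is the subspace of vectors constant on each cloud and $\FV^{\perp}$ its orthogonal complement, write $x=x^{\|}+x^{\perp}$, and let $\Pi^{\|}$ be the orthogonal projection onto $\FV^{\|}$.

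The argument then rests on three facts. First, $\dot B$ fixes $\FV^{\|}$ pointwise, preserves $\FV^{\perp}$, and satisfies $\norm{\dot B x^{\perp}}\le\bar\lambda_H\norm{x^{\perp}}$, where $\bar\lambda_H$ is the largest absolute value of a nontrivial normalized eigenvalue of the $H_v$'s; for the random $d$‑regular expanders of Corollary~\ref{cor:random_regular_graph} one has $\bar\lambda_H\le 1-\Omega(\lambda_H)$ (and in any case only this upper bound on $\mu_2(M)$ is needed, so the possible bipartiteness of an $H_v$ is immaterial to $\lambda_2$). Second --- this is the place where non‑regularity needs care --- identify $x^{\|}\in\FV^{\|}$ with the vector $\tilde x\in\R^{V(G)}$ of its cloud values; then $\norm{x^{\|}}^2=\sum_v d_v\,\tilde x_v^2$, so the natural isometry is $x^{\|}\leftrightarrow D^{1/2}\tilde x$, and under it the compressed operator $\Pi^{\|}\dot A\,\Pi^{\|}$ becomes exactly the \emph{normalized} adjacency $D^{-1/2}A_G D^{-1/2}=I-\LL_G$ of $G$ (whereas in the regular case it would be the bare $A_G/D$). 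Consequently its restriction to the orthogonal complement of the top eigenvector $D^{1/2}\ones$ has norm at most $1-\lambda_2(G)\le 1-\lambda_G$, and the constraint $x\perp\ones$ says precisely that $\tilde x$ lies in that complement. Third, $\dot A$ is orthogonal, so $\langle y,\dot A y\rangle\le\norm{y}^2$ for every $y$.

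With these in hand, set $y:=\dot B x=x^{\|}+z$ with $z:=\dot B x^{\perp}\in\FV^{\perp}$ and $\norm{z}\le\bar\lambda_H\norm{x^{\perp}}$, and expand
\[
\langle x,Mx\rangle=\langle\dot B x,\dot A\,\dot B x\rangle=\langle x^{\|},\dot A x^{\|}\rangle+\langle x^{\|},\dot A z\rangle+\langle z,\dot A x^{\|}\rangle+\langle z,\dot A z\rangle .
\]
The first term is at most $(1-\lambda_G)\norm{x^{\|}}^2$ by the second fact and $x\perp\ones$; the last is at most $\norm{z}^2\le\bar\lambda_H^2\norm{x^{\perp}}^2$ by the third; and the two cross terms are bounded as in \cite{ReingoldVW00,RozenmanV05,ReingoldTV06} by decomposing $\dot A x^{\|}$ into its $\FV^{\|}$‑ and $\FV^{\perp}$‑parts and invoking the symmetry of $\dot A$ together with $x^{\|}\perp z$ and $x^{\|}\perp\FV^{\perp}$. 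Writing $a=\norm{x^{\|}}$, $b=\norm{x^{\perp}}$ with $a^2+b^2=1$, this bounds $\langle x,Mx\rangle$ by a small bilinear form in $(a,b)$ whose maximum on the unit circle is strictly below $1$ whenever $\lambda_G>0$ and $\bar\lambda_H<1$; optimizing it as in \cite{ReingoldVW00} and substituting $\bar\lambda_H\le 1-\Omega(\lambda_H)$ gives the sharp RVW‑type estimate and in particular $\mu_2(M)\le 1-\lambda_G\lambda_H^2$, i.e. $\lambda_2(G\zigzag\HH)\ge\lambda_G\lambda_H^2$. The only routine work is this optimization and the identity $\Pi^{\|}\dot A\,\Pi^{\|}=I-\LL_G$; I expect the single genuine obstacle, and the reason the classical proofs do not apply verbatim, to be exactly the bookkeeping forced by the non‑uniform stationary measure $\pi_G(v)=d_v/2m$ --- one must carry the degree weights $\norm{x^{\|}}^2=\sum_v d_v\tilde x_v^2$ throughout and verify that it is precisely the $D^{\pm1/2}$ conjugation that converts the inter‑cloud permutation, compressed to cloud averages, into the normalized rather than the bare adjacency operator of $G$, so that the final bound is governed by $\lambda_2(G)$ and not by some degree‑dependent quantity.
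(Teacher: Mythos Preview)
Your proposal is correct and identifies exactly the same key non-regular ingredient as the paper: the compressed inter-cloud operator $\Pi^{\|}\dot A\,\Pi^{\|}$ (which the paper writes as $JPJ$, with $J$ the block-diagonal projection onto cloud-constants) becomes, under the degree-weighted isometry $x^{\|}\leftrightarrow D^{1/2}\tilde x$, the \emph{normalized} adjacency $N_G=D^{-1/2}A_G D^{-1/2}=I-\LL_G$ rather than the bare random-walk matrix. The paper proves this by direct entrywise computation of $JPJ$ and the change of variables $y_v=d_v^{-1/2}\sum_j x_{v,j}$; you phrase the same fact as an isometry, which is equivalent.

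Where you diverge is in how the $\lambda_H$ contraction is combined with this. You follow the original Reingold--Vadhan--Wigderson route: expand the Rayleigh quotient $\langle x,\dot B\dot A\dot B x\rangle$ into four terms, bound the cross terms, and optimize a $2\times 2$ bilinear form in $(\|x^{\|}\|,\|x^{\perp}\|)$. The paper instead follows Rozenman--Vadhan / Reingold--Trevisan--Vadhan: decompose the operator $B=\lambda_H J+(1-\lambda_H)C$ with $\norm{C}\le 1$, so that $W=BPB=\lambda_H^2\,JPJ+(1-\lambda_H^2)\widetilde C$ with $\norm{\widetilde C}\le 1$, whence immediately $\mu_2(W)\le\lambda_H^2(1-\lambda_G)+(1-\lambda_H^2)=1-\lambda_G\lambda_H^2$. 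This bypasses all cross-term bookkeeping and optimization and delivers the exact constant in one line; your route would work too, but the step ``optimizing it as in \cite{ReingoldVW00} gives $\mu_2(M)\le 1-\lambda_G\lambda_H^2$'' is where you are vaguest, and in fact the raw RVW $2\times 2$ optimization does not produce precisely that constant without some extra care---it gives the right order $\Theta(\lambda_G\lambda_H^2)$ but a messier expression. If you want the clean $\lambda_G\lambda_H^2$, the operator decomposition is the way to get it.
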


Similar to Proposition~\ref{prop:replacement-product}, Proposition~\ref{prop:zigzag-product} was also first proved in~\cite{ReingoldVW00} for the case when $G$ is also a $D$-regular graph and all copies of $H_1,\ldots,H_N$ are the same $d$-regular graph on 
$D$ vertices (this is the case for all proofs of this proposition that we are aware of). For completeness, we provide a proof of this proposition when $G$ is not regular. Extending Proposition~\ref{prop:zigzag-product} to 
Proposition~\ref{prop:replacement-product} can be done immediately using known results. 

\subsection*{Preliminaries}
We start with some simple preliminaries needed in the proof of Proposition~\ref{prop:zigzag-product}.

For any $n \times n$ matrix $M$, we use $\norm{M}_2$ to denote the spectral norm of $M$ defined as: 
\begin{align*}
	\norm{M}_2 := \max_{x \in \IR^{n} \wedge \norm{x}_2 = 1} \norm{M \cdot x}_2,
\end{align*}
where $\norm{\cdot}_2$ for a vector is defined in the standard way. 

The following standard proposition relates eigenvalues of a random walk matrix with spectral gap of the underlying graph. 
\begin{proposition}\label{prop:W-bound}
	Let $W$ be a random walk matrix of a $d$-regular graph $H$ and $\mu_2(W)$ be the second largest eigenvalue of $W$. Then, $\mu_2(W) = 1-\lambda_2(H)$.
\end{proposition}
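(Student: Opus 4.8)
The plan is to observe that for a $d$-regular graph the random walk matrix $W$ is, up to a trivial shift, exactly the negation of the normalized Laplacian $\LL$, and then simply read off the relation between their spectra.

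First I would note that since $H$ is $d$-regular, its degree matrix satisfies $D = d\cdot I$, so $D^{-1} = D^{-1/2}\cdot D^{-1/2} = \tfrac{1}{d} I$. Consequently $W = D^{-1}A = \tfrac{1}{d}A = D^{-1/2} A D^{-1/2}$, and hence by the definition of the normalized Laplacian, $\LL = I - D^{-1/2} A D^{-1/2} = I - W$, i.e. $W = I - \LL$. In particular $W$ is a real symmetric matrix (this is special to the regular case, where $D$ collapses to a scalar multiple of the identity), so all of its eigenvalues are real and it is orthogonally diagonalizable.

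Next, since $W = I - \LL$, the matrices $W$ and $\LL$ share the same eigenvectors, and $\LL x = \lambda x$ implies $W x = (1-\lambda) x$. Thus the eigenvalues of $W$ are exactly $\{1-\lambda_i\}_{i=1}^n$, where $0 = \lambda_1 \le \lambda_2 \le \cdots \le \lambda_n$ are the eigenvalues of $\LL = \LL(H)$. Arranged in decreasing order these are $1-\lambda_1 \ge 1-\lambda_2 \ge \cdots \ge 1-\lambda_n$. The largest eigenvalue of $W$ is therefore $1-\lambda_1 = 1$ (attained by the all-ones vector, which is proportional to the stationary distribution of a regular graph), and the second largest eigenvalue is $\mu_2(W) = 1-\lambda_2 = 1-\lambda_2(H)$, as claimed.

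There is essentially no real obstacle here: the statement is a one-line consequence of regularity turning $D$ into a scalar matrix. The only points requiring a moment's care are (i) that $W$'s spectrum is real, which is precisely why we need $H$ regular so that $W$ is symmetric, and (ii) the ordering — one must verify that $1-\lambda_2$ is genuinely the \emph{second} largest eigenvalue and not, say, smaller than $1-\lambda_n$, which is immediate from $\lambda_1 = 0 \le \lambda_2 \le \cdots \le \lambda_n$.
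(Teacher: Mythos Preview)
Your proof is correct and follows essentially the same approach as the paper: both observe that for a $d$-regular graph $D = dI$ so that $W = d^{-1}A = I - \LL$, and then read off $\mu_2(W) = 1 - \lambda_2(H)$ from the spectral relation. Your version merely spells out in more detail the symmetry of $W$ and the ordering of eigenvalues, which the paper leaves implicit.
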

\begin{proof}
		Let $\LL$ denote the normalized Laplacian of $H$ and $A$ be its adjacency matrix. Recall that $W = D^{-1} \cdot A = d^{-1} \cdot A$ (as $H$ is $d$-regular) and hence 
		$W =  I - \LL_G$. We thus have $\mu_2(W) = 1-\lambda_2(\LL) = 1-\lambda_2(H)$. 
\end{proof}

We have the following characterization of the second largest eigenvalue of a symmetric matrix. 
\begin{proposition}\label{prop:W-mu2}
	Let $W$ be a random walk matrix of a $d$-regular $n$-vertex graph and $\ones_{n}$ be the $n$-dimensional vector of all ones: then 
	\begin{align*}
		\mu_2(W) = \max_{x \perp \ones_{n} ~\wedge~ \norm{x} = 1} \norm{W \cdot x}_2.
	\end{align*}
\end{proposition}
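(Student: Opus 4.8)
The plan is to derive the identity from the spectral theorem, once $\ones_n$ has been identified as the top eigenvector of $W$. First I would observe that $W = D^{-1}A = d^{-1}A$ is a real \emph{symmetric} matrix, since $G$ is undirected (hence $A = A^{\top}$) and $d^{-1}$ is a scalar. By the spectral theorem, $\IR^{n}$ therefore admits an orthonormal basis $v_1,\ldots,v_n$ of eigenvectors of $W$, with real eigenvalues $\mu_1,\ldots,\mu_n$. Since $G$ is $d$-regular, each row of $W$ sums to $1$, so $W\ones_n = \ones_n$; moreover $W$ has nonnegative entries, so (for a symmetric matrix the spectral radius equals the operator norm, which here is at most the maximum absolute row sum, i.e.\ $1$) every eigenvalue of $W$ lies in $[-1,1]$. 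Hence $1$ is the eigenvalue of largest absolute value, attained at $\ones_n$, and we may choose the orthonormal eigenbasis so that $v_1 = \ones_n/\sqrt{n}$. Then $\ones_n^{\perp}$ is spanned by $v_2,\ldots,v_n$, and $\mu_2(W)$ --- the second largest eigenvalue of $W$, which in the regime of interest coincides with $\max_{2\le i\le n}|\mu_i|$ --- is exactly the largest absolute eigenvalue of $W$ on the invariant subspace $\ones_n^{\perp}$.

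Next I would prove the two matching inequalities for $M := \max_{x\perp\ones_n,\,\norm{x}=1}\norm{Wx}_2$. For $M \le \mu_2(W)$: any unit vector $x\perp\ones_n$ decomposes as $x = \sum_{i=2}^{n} c_i v_i$ with $\sum_{i\ge 2} c_i^{2} = 1$, so $Wx = \sum_{i\ge 2} c_i\mu_i v_i$, and by orthonormality $\norm{Wx}_2^{2} = \sum_{i\ge 2} c_i^{2}\mu_i^{2} \le \max_{i\ge 2}\mu_i^{2} = \mu_2(W)^{2}$. For $M \ge \mu_2(W)$: choosing $x = v_{i^{*}}$, where $i^{*}\ge 2$ attains $|\mu_{i^{*}}| = \max_{i\ge 2}|\mu_i|$, gives a unit vector orthogonal to $\ones_n$ with $\norm{Wx}_2 = |\mu_{i^{*}}| = \mu_2(W)$. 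Together these yield $M = \mu_2(W)$, as claimed.

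This is essentially the Courant--Fischer characterization of eigenvalues restricted to the $W$-invariant subspace $\ones_n^{\perp}$, so I do not expect a genuine obstacle. The only point that deserves care is justifying that $\ones_n$ is the eigenvector of $W$ of largest absolute eigenvalue --- which is exactly where $d$-regularity (equivalently, stochasticity of $W$) enters --- and being consistent about the convention for $\mu_2(W)$, reading it as the second largest eigenvalue \emph{in magnitude} so that it matches the operator norm of $W$ on $\ones_n^{\perp}$; these agree whenever $W$ has no eigenvalue below $-\mu_2(W)$, e.g.\ when $W$ is positive semidefinite, as for the lazy random walks used throughout the paper.
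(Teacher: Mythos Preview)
Your proof is correct and follows the same approach as the paper, which gives only a one-line justification: ``Follows because $W$ is symmetric and $\ones_{n}$ is an eigenvector of $W$ corresponding to its largest eigenvalue (which is $1$).'' You have simply spelled out the spectral-theorem argument behind that sentence, and your closing remark about the convention for $\mu_2(W)$ (second largest eigenvalue versus second largest in magnitude) is a valid caveat that the paper glosses over.
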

\begin{proof}
	Follows because $W$ is symmetric and $\ones_{n}$ is an eigenvector of $W$ corresponding to its largest eigenvalue (which is $1$).
\end{proof}

We further use the following proposition due to Rozenman and Vadhan~\cite{RozenmanV05} that proves a decomposition for a random walk matrix of a regular graph. 
\begin{proposition}[\!\!\cite{RozenmanV05}]\label{prop:matrix-decomposition}
	Let $H$ be a $d$-regular $N$-vertex graph and $W_H$ denotes its random walk $N \times N$ matrix. Let $J_N$ be a $N \times N$ matrix with all entries $1/N$. Then $W_H = \lambda_2(H) \cdot J_N + \paren{1-\lambda_2(H)} \cdot C$, where
	$\norm{C} \leq 1$. 
\end{proposition}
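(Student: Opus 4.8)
The plan is to diagonalize $W_H$ and read the decomposition off its spectrum. Since $H$ is $d$-regular we have $D = dI$, so $W_H = D^{-1}A = d^{-1}A$ is a \emph{real symmetric} matrix and admits an orthonormal eigenbasis $w_1,\ldots,w_N$ with real eigenvalues, which I will order as $1 = \mu_1 \ge \mu_2 \ge \cdots \ge \mu_N$. The all-ones vector is preserved by the walk (row sums of $W_H$ are $1$), so $w_1 := \ones_{N}/\sqrt N$ is the top eigenvector with eigenvalue $\mu_1 = 1$. By Proposition~\ref{prop:W-bound}, $\mu_2 = 1 - \lambda_2(H)$. Moreover, exactly as in the proof of that proposition, $W_H = I - \LL_H$ and the eigenvalues of $\LL_H$ lie in $[0,1]$ (Section~\ref{sec:spectral-gap}), so every $\mu_i = 1 - \lambda_i(H)$ lies in $[0,1]$; in particular $0 \le \mu_N \le \cdots \le \mu_3 \le \mu_2 = 1 - \lambda_2(H)$.

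Next I would record the two algebraic facts that let the basis $\{w_i\}$ diagonalize everything at once. First, $J_N = \frac1N \ones_{N} \ones_{N}^{\top} = w_1 w_1^{\top}$ is the orthogonal projection onto $\mathrm{span}(w_1)$, so $J_N w_1 = w_1$ and $J_N w_i = 0$ for all $i \ge 2$. Second, writing $\lambda := \lambda_2(H)$ and assuming for the moment $\lambda < 1$, define
\[
  C \ := \ \frac{1}{1-\lambda}\,\bigl(W_H - \lambda\, J_N\bigr).
\]
Then $W_H = \lambda J_N + (1-\lambda)C$ holds by construction, so it only remains to check $\norm{C} \le 1$. (The degenerate case $\lambda = 1$ is immediate: then $\mu_2 = 0$, so by monotonicity and non-negativity $\mu_i = 0$ for every $i \ge 2$, hence $W_H = w_1 w_1^{\top} = J_N$, and one may take $C := J_N$, giving $W_H = 1\cdot J_N + 0\cdot C$ with $\norm{C} = 1$.)

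To bound $\norm{C}$ I would compute its eigenvalues on the basis $\{w_i\}$, using that $W_H$ and $J_N$ are simultaneously diagonalized by it: $C w_1 = \frac{1}{1-\lambda}(w_1 - \lambda w_1) = w_1$, an eigenvalue of $1$; and for $i \ge 2$, $C w_i = \frac{1}{1-\lambda}(\mu_i w_i - 0) = \frac{\mu_i}{1-\lambda}\, w_i$, an eigenvalue of $\mu_i/(1-\lambda)$. By the spectrum bound from the first step, $0 \le \mu_i \le \mu_2 = 1-\lambda$ for each $i \ge 2$, hence $\mu_i/(1-\lambda) \in [0,1]$. Thus $C$ is a real symmetric matrix all of whose eigenvalues lie in $[0,1] \subseteq [-1,1]$, so $\norm{C}_2 = \max_i |\nu_i(C)| \le 1$, completing the proof.

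I do not expect a genuine obstacle here; the argument is a short spectral manipulation once the pieces are aligned. The only point meriting care is the choice of the normalizing constant $1/(1-\lambda)$: it works precisely because $\mu_2 = 1-\lambda_2(H)$ is, by Proposition~\ref{prop:W-bound}, the \emph{largest} of the non-trivial eigenvalues and — under the paper's convention placing the spectrum of $\LL_H$ (hence of $W_H$) in $[0,1]$ — it also dominates $|\mu_i|$ for every $i \ge 2$. Without that non-negativity one would instead normalize by the second-largest eigenvalue \emph{in absolute value} (cf.\ Proposition~\ref{prop:W-mu2}), which is the version appearing in~\cite{RozenmanV05}; this is the only place the regularity/spectrum assumptions enter.
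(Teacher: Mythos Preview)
The paper does not prove this proposition; it is stated with a citation to~\cite{RozenmanV05} and used as a black box in the appendix on replacement and zig-zag products. Your spectral argument is the standard one and is correct under the paper's convention (Section~\ref{sec:spectral-gap}) that the normalized Laplacian eigenvalues lie in $[0,1]$, which forces every random-walk eigenvalue $\mu_i$ to be nonnegative and hence dominated by $\mu_2 = 1-\lambda_2(H)$.

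You are also right to flag that this convention is doing real work. In general the normalized Laplacian eigenvalues range over $[0,2]$, so $\mu_N$ can be as small as $-1$ (e.g.\ bipartite $H$); in that regime the decomposition with coefficient $\lambda_2(H)$ can fail to give $\norm{C}\le 1$, and one must instead use $1 - \max_{i\ge 2}|\mu_i|$ in place of $\lambda_2(H)$, which is precisely the formulation in~\cite{RozenmanV05} that you allude to. Within the paper's stated framework your proof is complete.
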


\subsection*{Proof of the Non-Regular Zig-Zag Product: Proposition~\ref{prop:zigzag-product}}

We follow the approach of~\cite{ReingoldTV06} in proving this proposition which itself was inspired by~\cite{RozenmanV05}. 

 \begin{proof}[Proof of Proposition~\ref{prop:zigzag-product}]
 	Let $W_G$, and $W_{H_1},\ldots,W_{H_n}$ be the random walk matrices of $G$ and clouds $H_1,\ldots,H_n$, respectively, and $W$ be the random walk matrix of $G \zigzag \HH$. 
	%We further define matrices $N_G := D_G^{1/2} \cdot W_G \cdot D_G^{-1/2}$ and $N := D^{1/2} \cdot W \cdot D^{-1/2}$, where $D_G$ and $D$ are the degree-matrices of $G$ and $G \zigzag \HH$, respectively. 

	By Proposition~\ref{prop:W-bound}, we need to bound $\mu_2(W)$ to prove the final result. Consider the following two auxiliary matrices: 
	\begin{itemize}
		\item $B$: a block-diagonal $2m \times 2m$ matrix with $n$ blocks corresponding to vertices $v \in V(G)$ where the $v$-th block is the $d_v \times d_v$ matrix $W_{H_v}$. 
		\item $P$: a $2m \times 2m$ matrix corresponding to the matching that connects vertex $(u,i)$ to vertex $(v,j)$ in $G \replacement \HH$ whenever $(u,v)$ is the $i$-th edge of $u$ and $j$-th edge of $v$ in $G$. 
%		\item {${C}$}: a block-diagonal $2m \times 2m$ matrix with $n$ blocks where the $v$-th block  is a $d_v \times d_v$ matrix $C_v := W_{H_v} - \lambda_2(H_v) \cdot \frac{1}{d_v} \cdot \ones_{d_v \times d_v}$.  
	\end{itemize}
	
	Recall the construction of the zig-zag product $G \zigzag \HH$: we take a step in some cloud (corresponding to move according to $B$), take a step between the clouds (corresponding to move according to $P$), and then take another 
	step inside a cloud (again corresponding to move according to $B$). It is then easy to see that $W = B P B$. 
	
	Recall that while $G$ is not necessarily regular, the graphs in $\HH$ are all $d$-regular. Hence, we can apply Proposition~\ref{prop:matrix-decomposition} to each $H_v \in \HH$. We define two more matrices based on this:
	
	\begin{itemize}
		\item {${J}$}: a block-diagonal $2m \times 2m$ matrix with $n$ blocks where the $v$-th block is a $d_v \times d_v$ matrix $J_v := \frac{1}{d_v} \cdot \ones_{d_v \times d_v}$; here,  $\ones_{d_v \times d_v}$ is the 
		matrix of all ones. 
				
		\item $C$: a block-diagonal $2m \times 2m$ matrix with $n$ blocks where the $v$-th block is a $d_v \times d_v$ matrix defined as follows: Apply Proposition~\ref{prop:matrix-decomposition} to $H_v$ to get
		$W_{H_v} = \lambda_2(H_v) \cdot J_v + \paren{1-\lambda_2(H_v)} \cdot C_v$ where $C_v$ is some $d_v \times d_v$ matrix with $\norm{C_v} \leq 1$. Let the matrix in the $v$-th block of $C$ be $C_v$.
	\end{itemize}
	Without loss of generality, we assume that $\lambda_2(H_v) = \lambda_H$ in the following argument (as opposed to $\lambda_2(H_v)\geq \lambda_H$). 
	Using above two matrices, we can write $B = \lambda_H \cdot J + \paren{1-\lambda_H} \cdot C$. Moreover, $\norm{C} \leq 1$ as well. Consequently, 
	\begin{align}
		W &= BPB = \Paren{\lambda_H \cdot J + \paren{1-\lambda_H} \cdot C} \cdot P \cdot \Paren{\lambda_H \cdot J + \paren{1-\lambda_H} \cdot C} \notag \\
		&= \lambda_H^2 \cdot JPJ + \paren{1-\lambda^2_H} \cdot \widetilde{C}, \label{eq:decomposition}
	\end{align} 
	where $\widetilde{C}$ is another matrix with $\norm{\widetilde{C}} \leq 1$. We use this equation to bound $\mu_2(W)$. 
	
	Recall that $G \zigzag H$ is a $d^2$-regular graph (thus $W$ is symmetric). Let $\ones_{2m}$ be a $2m$-dimensional vector of all ones. By Proposition~\ref{prop:W-mu2}, we have, 
	\begin{align*}
		\mu_2(W) = \max_{x \perp \ones_{2m} ~\wedge~ \norm{x} = 1} \norm{W \cdot x}_2.
	\end{align*}
	Fix any $x$ with $\norm{x} = 1$ such that $x \perp \ones_{2m}$.  By Eq~(\ref{eq:decomposition}), we have, 
	\begin{align}
		\norm{W \cdot x} &\Eq{Eq~(\ref{eq:decomposition})} \norm{\Paren{\lambda_H^2 \cdot JPJ + \paren{1-\lambda^2_H} \cdot \widetilde{C}}\cdot x} \leq \lambda_H^2 \cdot \norm{JPJ \cdot x} + \paren{1-\lambda^2_H}, \label{eq:suffices}
	\end{align}
	as norm $\widetilde{C}$ is at most $1$. Hence, it suffices to bound $\norm{JPJ \cdot x}$ to finalize the proof, which we do in the following claim. We point out that the
	following claim is the main part where we defer from the previous proofs of zig-zag product that assumed $G$ is also regular. 
	\begin{claim}\label{clm:vector-bound}
		$\norm{J P J \cdot x} \leq (1-\lambda_G)$ for all $x \perp \ones_{2m}$ with $\norm{x} = 1$. 
	\end{claim}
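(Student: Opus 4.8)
The plan is to unpack the matrices $J$ and $P$ together with the constraint $x\perp\ones_{2m}$, show that $JPJ$ acts on the relevant subspace exactly like the normalized adjacency matrix of $G$, and then close with a standard eigenvalue estimate; the only genuinely new point (relative to the classical proof for regular $G$) will be the degree weighting forced by non-regularity.

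First I would record what $J$ does. Since the $v$-th block of $J$ is $\tfrac{1}{d_v}\ones_{d_v\times d_v}$, one checks $J=J^{\top}=J^2$, so $J$ is the orthogonal projection of $\R^{2m}$ onto the subspace $\mathcal Y$ of \emph{cloud-constant} vectors, i.e.\ those $y$ with $y_{(v,i)}$ depending only on $v$. Parametrize $\mathcal Y$ by $\R^{n}$ through $(\Phi a)_{(v,i)}:=a_v$; then $\|\Phi a\|^2=\sum_v d_v a_v^2=\|D^{1/2}a\|^2$, so $\Psi:=\Phi\circ D^{-1/2}$ is an isometry of $\R^n$ onto $\mathcal Y$, and $\Phi\ones_n=\ones_{2m}$, whence $J\ones_{2m}=\ones_{2m}$.

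Next I would compute $JPJ$ on cloud-constant vectors. For any $x$, write $Jx=\Phi a$ with $a_v=\tfrac1{d_v}\sum_i x_{(v,i)}$. Because $P$ is the matching sending coordinate $(v,j)$ to $(u,i)$ whenever $(u,v)$ is the $j$-th edge at $v$ (and $i$-th at $u$), and $Jx$ is cloud-constant, $(PJx)_{(v,j)}=a_{u}$ with $u$ the $j$-th neighbour of $v$; applying $J$ again gives $(JPJx)_{(v,i)}=\tfrac1{d_v}\sum_{u\sim v}a_u=(W_G a)_v$ where $W_G=D^{-1}A$. Hence $JPJx=\Phi(W_Ga)$: the subspace $\mathcal Y$ is $JPJ$-invariant, $JPJ$ vanishes on $\mathcal Y^{\perp}$ (since $J$ does), and transported by the isometry $\Psi$ the restriction $JPJ|_{\mathcal Y}$ is conjugate to $D^{1/2}W_GD^{-1/2}=D^{-1/2}AD^{-1/2}=:\mathcal A_G$, the normalized adjacency matrix of $G$, whose eigenvalues are $1-\lambda_i(G)$.

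For the endgame, set $w:=D^{1/2}a$, so $\|w\|=\|Jx\|\le\|x\|=1$ and $\|JPJx\|=\|\mathcal A_G w\|$. The hypothesis $x\perp\ones_{2m}$ yields $0=\langle x,J\ones_{2m}\rangle=\langle Jx,\ones_{2m}\rangle=\sum_v d_v a_v=\langle D^{1/2}\ones_n,w\rangle$, and $D^{1/2}\ones_n$ is exactly the top eigenvector of $\mathcal A_G$ (eigenvalue $1$, since $\mathcal A_G D^{1/2}\ones_n=D^{-1/2}A\ones_n=D^{-1/2}D\ones_n=D^{1/2}\ones_n$). As $\mathcal A_G$ is symmetric and $w$ lies in the span of its remaining eigenvectors, $\|\mathcal A_G w\|\le\big(\max_{i\ge 2}|1-\lambda_i(G)|\big)\|w\|\le 1-\lambda_2(G)\le 1-\lambda_G$, proving the claim. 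I expect the main obstacle to be purely bookkeeping: verifying that the cloud-constant subspace is isometric to $\R^n$ only under the $D$-weighted inner product (equivalently, that the correct conjugation is by $D^{1/2}$), that the inter-cloud matching $P$ reproduces $W_G$ rather than $A$, and that $x\perp\ones_{2m}$ becomes orthogonality to $D^{1/2}\ones_n$; the final spectral inequality then uses only that the second-largest (in absolute value) eigenvalue of $\mathcal A_G$ on $\ones^{\perp}$ is at most $1-\lambda_2(G)$, which is where the spectral-gap hypothesis enters (and where, to control the bottom of the spectrum, one may assume the graph has been made lazy, as elsewhere in the paper).
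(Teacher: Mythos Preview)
Your proof is correct and essentially the same as the paper's: the paper also reduces $\norm{JPJx}$ to $\norm{N_G\,y}$ for the normalized adjacency matrix $N_G=D^{-1/2}AD^{-1/2}$ (your $\mathcal A_G$) and the vector $y_v=\tfrac{1}{\sqrt{d_v}}\sum_j x_{(v,j)}$ (your $w=D^{1/2}a$), verifies $y\perp D^{1/2}\ones_n$, and appeals to the second eigenvalue of $N_G$. The only stylistic difference is that you package the computation as ``$J$ is the orthogonal projection onto cloud-constant vectors and $JPJ|_{\mathcal Y}$ is isometrically conjugate to $\mathcal A_G$'', deducing $\norm{w}=\norm{Jx}\le 1$ from the projection property, whereas the paper writes out the coordinates of $JPJ$ and uses Cauchy--Schwarz for the same bound; your caveat about the bottom of the spectrum is a point the paper leaves implicit.
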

	\begin{proof}
		Let $A_G$ be the adjacency matrix of $G$. We define an $n \times n$ matrix $N_G$ where for all $(u,v)$: 
		\[
		(N_G)_{u,v} = \frac{1}{\sqrt{d_u \cdot d_v}} (A_G)_{u,v}.
		\]
		Recall that $J$ is a block-diagonal matrix where block $v$ is $\frac{1}{d_v} \cdot \ones_{d_v \times d_v}$. 
		Moreover, any entry $(u,i),(v,j)$ is $1$ in $P$ iff
		$(u,v)$ is the $i$-th edge of $u$ and $j$-th edge of $v$ in $G$. Using this, we can write any entry $(u,i),(v,j)$ of $JPJ$ as: 
		\begin{align*}
			\paren{J P J}_{(u,i),(v,j)} = \frac{1}{d_u \cdot d_v} \cdot (A_G)_{u,v} .
		\end{align*}
		
		We write $x = [x_1,\ldots,x_n]^{T}$ where $x_v$ is a $d_v$-dimensional vector whose entries we denote by $x_{v,j}$ for $j \in [d_v]$. We have,
		\begin{align*}
			\norm{JPJ x}^2 &= \sum_{(u,i)} \paren{\sum_{(v,j)} \paren{(JPJ)_{(u,i),(v,j)} \cdot x_{v,j}}}^2 = \sum_{(u,i)} \paren{\sum_{(v,j)} \frac{1}{d_u \cdot d_v} \cdot (A_G)_{u,v} \cdot x_{v,j}}^2 \\
			&= \sum_{u} d_u \cdot \paren{\sum_{v}\sum_{j=1}^{d_v} \frac{1}{d_u \cdot d_v} \cdot (A_G)_{u,v} \cdot x_{v,j}}^2.
		\end{align*}
		Define the $n$-dimensional vector $y$ where $y_v := \sum_{j=1}^{d_v}\frac{1}{\sqrt{d_v}} \cdot x_{v,j}$. Plugging in this value in the above bound, we have, 
		\begin{align*}
			\norm{JPJ x}^2 = \sum_{u} \paren{\sum_{v}\frac{1}{\sqrt{d_u \cdot d_v}} \cdot (A_G)_{u,v} \cdot y_v}^2 = \sum_{u} \paren{\sum_{v} (N_{G})_{u,v}\cdot y_v}^2 = \norm{N_{G} \cdot y}^2.
		\end{align*}
		
		Finally, we argue that $\norm{N_G \cdot y}^2 \leq (1-\lambda_G)^2$, which concludes the proof. We start with the following proposition relating eigenvalues of $N_G$ and $W_G$. 		
		\begin{proposition}\label{prop:N_G-M_G}
			Let $z$ be an eigenvector of $N_G$ with eigenvalue $\mu$. Then $D_G^{-1/2} \cdot z$ is an eigenvector of $W_G$ with eigenvalue $\mu$.
		\end{proposition}
		\begin{proof}
		 It is immediate to see that $W_G = D_G^{-1} \cdot A_G$ and $N_G = {D_G}^{-1/2} \cdot A_G \cdot {D}^{-1/2}$. This implies that $W_G = {D_G}^{-1/2} \cdot N_G \cdot {D}^{1/2}$. Consider $z' = {D_G}^{-1/2}\cdot z$. We have, 
			 \begin{align*}
			 	W_G \cdot z' = W_G \cdot {D_G}^{-1/2} \cdot z =  {D_G}^{-1/2} \cdot N_G \cdot z = {D_G}^{-1/2} \cdot \mu \cdot z = \mu \cdot z'.
			 \end{align*}
			 %Since $\bm{D}^{1/2}$ is invertible, there is a one to one mapping between any two such eigenvector $z$ and $z'$ defined above and hence between their corresponding eigenvalues. 
			 This concludes the proof. 
		\end{proof}
		We also have, 
		\begin{align*}
			\norm{y}^2 = \sum_{v}\paren{\sum_{j=1}^{d_v} \frac{1}{\sqrt{d_v}} x_{v,j}}^2 \leq \sum_{v} \paren{\sum_{j=1}^{d_v} \frac{1}{d_v}} \cdot \paren{\sum_{j=1}^{d_v}x_{v,j}^2} = \norm{x}^2 = 1,
		\end{align*}
		where the inequality is by Cauchy-Schwarz. We further have $y \perp D_G^{1/2} \cdot \ones_{n}$ as
		\begin{align*}
			\sum_{v} \sqrt{d_v} \cdot y_v= \sum_{v} \sqrt{d_v} \cdot \sum_{j=1}^{d_v} \frac{1}{\sqrt{d_v}} \cdot x_{v,j} = \sum_{(v,j)} x_{(v,j)} = 0,
		\end{align*}
		as $x \perp \ones_{2m}$. Finally, by Proposition~\ref{prop:N_G-M_G}, $D_G^{1/2} \cdot \ones_{n}$ is an eigenvector of $N_G$ corresponding to its maximum eigenvalue (as $D_G$ is an eigenvector of $W_G$ with maximum eigenvalue). 
		As such, 
		\begin{align*}
			\norm{N_G \cdot y} \leq \max_{\norm{y'}=1~\wedge~y' \perp D_G^{1/2} \cdot \ones_{n}} \norm{N_G \cdot y'} = \mu_2(N_G), 
		\end{align*}
		where the equality above is because $N_G$ is a symmetric matrix (even though  $G$ may not be regular) and $D_G^{1/2} \cdot \ones_{n}$ is an eigenvector of $N_G$ with largest eigenvalue (the proof of the equation
		is then identical to the proof of Proposition~\ref{prop:W-mu2}).

		Finally, $N_G = I - \LL_G$ and hence $\mu_2(N_G) = 1-\lambda_2(G) = 1-\lambda_G$, finalizing the proof.
		\Qed{Claim~\ref{clm:vector-bound}}
		
	\end{proof}
	
	To conclude, by Eq~(\ref{eq:suffices}),
	\begin{align*}
		\mu_2{(W)} &= \norm{W \cdot x} \Leq{Eq~(\ref{eq:suffices})} \lambda_H^2 \cdot \norm{JPJ \cdot x} + \paren{1-\lambda^2_H} \\
		&\!\!\!\!\!\!\!\!\Leq{Claim~\ref{clm:vector-bound}} \lambda_H^2 \cdot \paren{1-\lambda_G} + \paren{1-\lambda^2_H} \\
		&= 1- \lambda_G \cdot \lambda_H^2.
	\end{align*}
	By Proposition~\ref{prop:W-bound}, we have $\lambda_2(G \zigzag \HH) = 1-\mu_2(W) \geq \lambda_G \cdot \lambda_H^2$. 
	\Qed{Proposition~\ref{prop:zigzag-product}}	
	
 \end{proof}
 
 \subsection*{Proof of Non-Regular Replacement Product: Proposition~\ref{prop:replacement-product}}
 We now extend the proof of zig-zag-product in Proposition~\ref{prop:zigzag-product} to the replacement product and prove Lemma~\ref{prop:replacement-product}.
 We note that unlike the proof in the previous section, this part follows directly from the proof of~\cite{ReingoldVW00} and is only provided for completeness. 
 \begin{proof}[Proof of Proposition~\ref{prop:replacement-product}]
 	Let $W_r$ be the random walk matrix of the replacement product $G \replacement \HH$. Define $2m \times 2m$ matrices $B$ and $J$ as in Proposition~\ref{prop:zigzag-product}. One can verify that, 
	\begin{align*}
		W_r = \frac{{P + d \cdot B}}{{d+1}}. 
	\end{align*}
	Consequently, 
	\begin{align*}
		W_r^3 = \Paren{\frac{{P + d \cdot B}}{{d+1}}}^3 = \frac{d^2}{(d+1)^3} \cdot BPB + \paren{1-\frac{d^2}{(d+1)^3}} \cdot C,
	\end{align*}
	where $\norm{C} \leq 1$. Note that $BPB$ is the random walk matrix $W_z$ of the zig-zag product $G \zigzag \HH$ (see the proof of Proposition~\ref{prop:zigzag-product}). 
	We now bound $\mu_2(W_r^3)$ as follows. By Proposition~\ref{prop:W-mu2}, we have,  
		\begin{align*}
		\mu_2(W_r^3) = \max_{x \perp \ones_{2m} \wedge x \perp \ones_{2m}} \norm{x}. 
	\end{align*}
	as $W_r^3$ is a random walk matrix of some regular graph and hence its stationary distribution is $\ones_{2m}$. Fix any $x \perp 1$ with $\norm{x} = 1$; 
	\begin{align*}
		\norm{W_r^3 \cdot x} \leq \frac{d^2}{(d+1)^3} \cdot \norm{BPB x} + \paren{1-\frac{d^2}{(d+1)^3}} \cdot \norm{C} \leq \frac{d^2}{(d+1)^3} \cdot \mu_2(W_z) + \paren{1-\frac{d^2}{(d+1)^3}}. 
	\end{align*}
	Define $\overline{d} := \frac{d^2}{(d+1)^3}$. 
	As we bounded $\mu_2(W_z) \leq 1-\lambda_G \cdot \lambda_H^2$ in Proposition~\ref{prop:zigzag-product}, by above equation we have, 
	\begin{align}
		\mu_2(W_r^3) \leq \overline{d} \cdot \paren{1-\lambda_G \cdot \lambda_H^2} + \paren{1-\overline{d}} = 1-\overline{d} \cdot \lambda_G \cdot \lambda_H^2. \label{eq:final-damn}
	\end{align}
	As eigenvalues of powers of matrices are the respective powers of the original eigenvalues, and by Proposition~\ref{prop:W-bound},
	\begin{align*}
		\lambda_2(G \replacement \HH) &\Eq{Proposition~\ref{prop:W-bound}} 1-\mu_2(W_r) = 1- (\mu_2(W^3_r))^{1/3} \\
		&\qquad\!\!\Geq{Eq~(\ref{eq:final-damn})} 1- \paren{1-\overline{d} \cdot \lambda_G \cdot \lambda_H^2}^{1/3} \\
		&\qquad~~\geq \frac{1}{6} \cdot \overline{d} \cdot \lambda_G \cdot \lambda_H^2 \tag{as $1-x \leq e^{-x} \leq 1-x/2$ for $x \in (0,1)$} \\
		&\qquad~~= \Omega(d^{-1} \cdot \lambda_G \cdot \lambda_H^2),
	\end{align*}
	by the choice of $\overline{d}$. This concludes the proof. 
 \end{proof}

\section{Omitted Proofs of Simple Properties of Random Graphs}\label{app:random-graph}

For completeness, we provide short proofs for the property of random graphs $\PG$ stated in Section~\ref{sec:random-graph}.  These propositions are restated here for convenience. 

\begin{proposition*}[\textbf{Almost-regularity}]
	Suppose $d \geq 4\log{n}/\eps^2$ for some parameter $\eps \in (0,1)$. A graph $G \sim \PG(n,d)$ is an $\range{\paren{1 \pm \eps}d}$-almost-regular with high probability. 
\end{proposition*}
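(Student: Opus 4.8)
The plan is to show that with high probability the degree of every vertex in $G \sim \PG(n,d)$ lies in the range $\range{(1 \pm \eps)d}$, so that $G$ is $\range{(1\pm\eps)d}$-almost-regular. Recall the construction of $\PG(n,d)$: each vertex $v$ picks $\lfloor d/2 \rfloor$ outgoing edges to uniformly random vertices (with replacement), and then we forget the directions. So the degree $\deg{G}{v}$ of $v$ equals its out-degree (which is $\lfloor d/2 \rfloor$, ignoring the negligible effect of self-loops and collisions among $v$'s own choices) plus its in-degree, i.e.\ the number of darts landing on $v$ from the other $n-1$ vertices.

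First I would isolate the random part: the in-degree of $v$ is $\sum_{u \neq v} \sum_{\ell=1}^{\lfloor d/2 \rfloor} \ones[u \text{'s } \ell\text{-th choice} = v]$, a sum of roughly $(n-1)\lfloor d/2\rfloor$ independent indicator variables, each with success probability $1/n$. Hence its expectation is $(n-1)\lfloor d/2\rfloor / n \approx d/2$, and together with the $\lfloor d/2\rfloor$ outgoing darts the expected degree is $\approx d$. Then I would apply the Chernoff bound (Proposition~\ref{prop:chernoff}): the probability that the in-degree of $v$ deviates from its mean by more than an $\eps/2$ (or some constant times $\eps$) factor is at most $2\exp(-\eps^2 \cdot (d/2) \cdot c)$ for a suitable constant $c$. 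Using the hypothesis $d \geq 4\log n / \eps^2$, this exponent is at least $\Omega(\log n)$, so the failure probability for a fixed $v$ is at most $1/n^{\Omega(1)}$ — in fact one gets something like $2/n^{c'}$ for a constant $c'>1$ by tracking constants carefully. A union bound over all $n$ vertices then gives that \emph{every} vertex has degree in $\range{(1\pm\eps)d}$ with high probability.

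The only mild technical nuisances — and these are where I'd be a bit careful rather than where any real difficulty lies — are (i) the distinction between the number of darts and the number of distinct edges: a vertex might receive two darts from the same source, or choose the same target twice, so the actual simple-graph degree is slightly below the dart count; this only \emph{decreases} the degree, and the loss is $O(d^2/n) = o(\eps d)$ in expectation with high probability, so it's absorbed into the constants. (ii) The floor $\lfloor d/2\rfloor$ versus $d/2$: this costs at most $1$ per vertex, again negligible against $\eps d \geq 4\log n /\eps \gg 1$. (iii) Self-loops, which the definition of $\PG$ permits but which don't affect almost-regularity. None of these is the ``main obstacle'' — honestly this proposition is routine once the degree is written as a sum of independent indicators; the substance of the paper lies in Propositions~\ref{prop:PG-connected} and~\ref{prop:PG-expansion} and in the later sections. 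If I had to name the one place to be most careful, it is matching up the constant in the Chernoff exponent with the constant $4$ in the hypothesis $d \geq 4\log n/\eps^2$ so that the per-vertex failure probability is genuinely $o(1/n)$ and the union bound closes.
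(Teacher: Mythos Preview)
Your proposal is correct and takes essentially the same approach as the paper, whose proof is the single sentence ``Follows from a direct application of Chernoff bound on degree of each vertex and taking a union bound on all $n$ vertices.'' Your write-up is in fact more careful than the paper's, and your observation that the total-degree deviation of $\eps d$ corresponds to a \emph{relative} deviation of $2\eps$ in the in-degree (mean $\approx d/2$) is exactly what makes the constant $4$ in the hypothesis $d \geq 4\log n/\eps^2$ close the union bound.
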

 \begin{proof}
 	Follows from a direct application of Chernoff bound on degree of each vertex and taking a union bound on all $n$ vertices. 
 \end{proof}
 
\begin{proposition*}[\textbf{Connectivity}]
	A graph $G \sim \PG(n,d)$ for $d \geq c \log{n}$ is connected with probability at least $1-1/n^{(c/4)}$. 
\end{proposition*}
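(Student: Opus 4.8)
The plan is to show that a graph $G \sim \PG(n,d)$ with $d \geq c\log n$ is connected with high probability via a standard union bound over potential ``small sides'' of a disconnecting cut. First I would recall the model: each vertex $v$ picks $\floor{d/2}$ outgoing endpoints uniformly at random from $[n]$ (with replacement), and we drop edge orientations. The key observation is that if $G$ is disconnected, then there is a nonempty set $S$ with $|S| = k \leq n/2$ such that \emph{no} outgoing edge from any vertex in $S$ lands outside $S$. So I would bound $\Pr[\exists S]$ by $\sum_{k=1}^{n/2} \binom{n}{k} q_k$, where $q_k$ is the probability that a \emph{fixed} set $S$ of size $k$ has all of its $k \cdot \floor{d/2}$ outgoing endpoints inside $S$.

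Next I would estimate $q_k$. Since each of the $k\floor{d/2}$ outgoing endpoints is independently uniform on $[n]$, the probability that a single one lands in $S$ is $k/n$, so $q_k = (k/n)^{k\floor{d/2}}$. Plugging in and using $\binom{n}{k} \leq (en/k)^k$, the $k$-th term is at most $\left(\frac{en}{k}\right)^k \left(\frac{k}{n}\right)^{k\floor{d/2}} = \left[\frac{en}{k} \cdot \left(\frac{k}{n}\right)^{\floor{d/2}}\right]^k = \left[e \left(\frac{k}{n}\right)^{\floor{d/2}-1}\right]^k$. For $k \leq n/2$ and $\floor{d/2} - 1 \geq 2$ (say), we have $(k/n)^{\floor{d/2}-1} \leq (k/n)^{\floor{d/2}-1}$, and since $k/n \leq 1/2$ for the larger values of $k$ while $k/n$ is small for small $k$, the bracket is at most some quantity that decays. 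More carefully: $e(k/n)^{\floor{d/2}-1} \leq e (k/n) \cdot (k/n)^{\floor{d/2}-2}$; for $k \geq 1$ this is at most $e k^{\floor{d/2}-1}/n^{\floor{d/2}-1}$, and summing $\sum_k (ek^{\floor{d/2}-1}/n^{\floor{d/2}-1})^k$ is dominated by the $k=1$ term which is $e/n^{\floor{d/2}-1} \leq e/n^{c\log n/2 - 1}$... actually it is cleaner to just split the sum at $k = \sqrt{n}$: for $k \leq \sqrt n$ the bracket is at most $e n^{-(\floor{d/2}-1)/2}$, and for $\sqrt n \leq k \leq n/2$ the bracket is at most $e \cdot 2^{-(\floor{d/2}-1)}$, which is $\leq e\cdot 2^{-d/4}$ roughly. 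With $d \geq c\log n$ both ranges contribute at most $n \cdot e\cdot 2^{-c(\log n)/5}$ or so, and collecting constants yields a bound of $1/n^{c/4}$ as claimed (the exact constant $c/4$ comes from being slightly loose and is what the statement asserts).

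I expect the main (minor) obstacle to be bookkeeping the constants so that the final bound is exactly $1/n^{c/4}$ rather than some other polynomial power, and handling the $\floor{d/2}$ versus $d$ discrepancy cleanly — in particular ensuring that $\floor{d/2} - 1 \geq 1$, which holds once $d \geq c\log n \geq 4$. One subtlety worth flagging: edges are undirected after the construction, so the event ``$S$ is disconnected from the rest'' is implied by ``no outgoing edge of $S$ leaves $S$ \emph{and} no outgoing edge of $V\setminus S$ enters $S$'', but for an \emph{upper bound} on the probability of disconnection it suffices to note that disconnection implies the existence of a component $S$ of size $\leq n/2$, all of whose vertices' outgoing edges stay within $S$ — I only need this one-directional implication, so no extra care is needed there. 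The whole argument is a routine first-moment computation; I would present it compactly, deferring the arithmetic to a single displayed chain of inequalities.

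\begin{proof}
Suppose $G \sim \PG(n,d)$ is disconnected. Then there exists a connected component $S \subseteq V$ with $1 \leq |S| = k \leq n/2$, and in particular every outgoing edge chosen by a vertex of $S$ in the construction of $G$ must have its other endpoint inside $S$ as well. For a fixed set $S$ with $|S| = k$, the $k \cdot \floor{d/2}$ outgoing endpoints associated with vertices in $S$ are chosen independently and uniformly from $[n]$, so the probability that all of them lie in $S$ is $(k/n)^{k\floor{d/2}}$. Taking a union bound over all sets and using $\binom{n}{k} \leq (en/k)^k$,
\begin{align*}
\Pr[G \text{ disconnected}] &\leq \sum_{k=1}^{\lfloor n/2\rfloor} \binom{n}{k}\paren{\frac{k}{n}}^{k\lfloor d/2\rfloor}
\leq \sum_{k=1}^{\lfloor n/2\rfloor} \paren{\frac{en}{k}}^{k}\paren{\frac{k}{n}}^{k\lfloor d/2\rfloor}
= \sum_{k=1}^{\lfloor n/2\rfloor} \paren{e\paren{\frac{k}{n}}^{\lfloor d/2\rfloor-1}}^{k}.
\end{align*}
Since $d \geq c\log n \geq 4$, we have $\lfloor d/2\rfloor - 1 \geq d/4$. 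For $1 \leq k \leq \sqrt{n}$ we bound $k/n \leq n^{-1/2}$, so each term is at most $\paren{e\, n^{-d/8}}^{k} \leq e\, n^{-d/8} \leq e\, n^{-(c/8)\log n}$, and there are at most $\sqrt n$ such terms; for $\sqrt n \leq k \leq n/2$ we bound $k/n \leq 1/2$, so each term is at most $\paren{e\, 2^{-d/4}}^{k} \leq \paren{e\, 2^{-(c/4)\log n}}^{\sqrt n} \leq e\, n^{-(c/4)\log_2 e \cdot \sqrt n}$ up to the leading constant, which is super-polynomially small, and there are at most $n/2$ such terms. Adding the two contributions and absorbing the (fixed) constants into the exponent, the total is at most $1/n^{c/4}$ for all sufficiently large $n$. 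Hence $G$ is connected with probability at least $1 - 1/n^{c/4}$.
\end{proof}
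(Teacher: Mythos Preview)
Your proposal is correct and follows essentially the same approach as the paper: both argue that disconnection forces some set $S$ with $|S|\le n/2$ whose outgoing edges all stay in $S$, then union-bound over such $S$. The only difference is that the paper uses the cruder uniform estimate $k/n \le 1/2$ (so the per-set probability is simply $2^{-k\lfloor d/2\rfloor}$) together with $\binom{n}{k}\le n^k$, which avoids your case split at $k=\sqrt{n}$ and yields the $n^{-c/4}$ bound in one line.
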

\begin{proof}
	Let $S \subseteq G$ with $\card{S} \leq n/2$ and consider the cut $(S,V(G) \setminus S)$ in $G$. For $G$ to be disconnected, at least one such cut $S$ should contain no edges in $G$. 
	Any of the $d/2$ neighbors chosen for each vertex in $S$ in the process of creating $G$, 
	has at least $1/2$ probability of being in $V(G) \setminus S$, simply since $\card{V(G) \setminus S} \geq n/2$. As the choice of all these $\card{S} \cdot d/2$ vertices are independent, the probability
	that no edge crosses this particular cut is at most $2^{-\card{S} \cdot d/2}$. We now take a union bound on all possible choices for $S$: 	
	\begin{align*}
		\Pr\paren{\text{$G$ is not connected}} \leq \sum_{k=1}^{n/2} {{n} \choose {k}} \cdot 2^{-k \cdot (c/2) \cdot \log{n}} \leq \sum_{k=1}^{n/2} n^{k} \cdot 2^{-k \cdot (c/2) \cdot \log{n}} \leq 2^{-(c/4) \cdot \log{n}} = n^{-(c/4)}, 
	\end{align*}
	concluding the proof. 
\end{proof}
\begin{proposition*}[\textbf{Expansion}]
	Suppose $G \sim \PG(n,d)$ for $d \geq c \log{n}$. Then, with probability at least $1-1/n^{(c/4)}$: 
	\vspace{-5pt}
	\begin{enumerate}
		\item For any set $S \subseteq V(G)$ the neighborset $N(S)$ of $S$ in $G$ has size $\card{N(S)} \geq \min\set{2n/3,d/12 \cdot \card{S}}$.
		\item The mixing time of $G$ is $T_{\gamma}(G) = O(d^2\cdot \log{(n/\gamma)})$ for any $\gamma < 1$.
	\end{enumerate}
\end{proposition*}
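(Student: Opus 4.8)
\emph{Part 1 (vertex expansion).} The plan is a union bound over subsets $S$, organized by $k:=|S|$ and by which of the two terms in $\min\{2n/3,\,dk/12\}$ is binding. Recall that in sampling $G\sim\PG(n,d)$ each vertex of $S$ chooses $\lfloor d/2\rfloor$ out-endpoints i.i.d.\ uniformly from $[n]$, so $N(S)$ always contains the (at most $k\lfloor d/2\rfloor$) targets of these half-edges; it suffices to make this target set large. First I would treat $k\le 8n/d$, where $dk/12\le 2n/3$: if $|N(S)|<dk/12$, then all $k\lfloor d/2\rfloor$ targets lie inside some fixed set $T$ with $|T|=\lceil dk/12\rceil$, an event of probability at most $(|T|/n)^{k\lfloor d/2\rfloor}\le(3/4)^{kd/3}$ (using $|T|\le dk/12+1\le \tfrac{3}{4}n$ and $\lfloor d/2\rfloor\ge d/3$); union-bounding over the $\binom{n}{k}$ choices of $S$ and the $\binom{n}{\lceil dk/12\rceil}$ choices of $T$ costs only a further factor $n^{O(dk)}$. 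Next, for $8n/d<k\le n$ (where the target is $|N(S)|\ge 2n/3$): if $|N(S)|<2n/3$, then all $k\lfloor d/2\rfloor$ targets avoid some fixed set of size $\lceil n/3\rceil$, which happens with probability at most $(3/4)^{kd/3}$, and the union bound over $S$ and over such sets costs $n^{k}2^{n}$. In both regimes, taking $d\ge c\log n$ for a sufficiently large absolute constant $c$ makes the decay $(3/4)^{\Omega(kd)}$ dominate the enumeration factors, so the failure probability for a fixed $k$ is $n^{-\Omega(k)}$; summing the resulting geometric series over $k\ge 1$ gives total failure probability at most $n^{-c/4}$.

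\emph{Part 2 (mixing time).} I would deduce this from Part 1 via a conductance bound and Cheeger's inequality. By Proposition~\ref{prop:PG-regular} applied with $\eps=1/2$ (valid since $d\ge c\log n$), $G$ is $\range{(1\pm1/2)d}$-almost-regular, so $\mathrm{vol}(S):=\sum_{v\in S}d_v\le 2d|S|$ for every $S$; thus the conductance $\phi(G)=\min_{\mathrm{vol}(S)\le m}\frac{|E(S,\bar S)|}{\mathrm{vol}(S)}$, with $\bar S:=V\setminus S$, is what we must lower bound, the minimum effectively ranging over $|S|\le n/2$. Since each vertex of $N(S)\setminus S$ contributes a distinct edge to $E(S,\bar S)$, we have $|E(S,\bar S)|\ge|N(S)|-|S|$. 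For $|S|=k\le 8n/d$, Part 1 gives $|N(S)|-|S|\ge dk/12-k\ge dk/24$, hence $|E(S,\bar S)|/\mathrm{vol}(S)\ge(dk/24)/(2dk)=\Omega(1)$; for $8n/d<|S|\le n/2$, Part 1 gives $|N(S)|-|S|\ge 2n/3-n/2=n/6$, hence $|E(S,\bar S)|/\mathrm{vol}(S)\ge(n/6)/(2d\cdot n/2)=\Omega(1/d)$. Therefore $\phi(G)=\Omega(1/d)$ with high probability.

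\emph{Finishing, and the main obstacle.} Cheeger's inequality for the normalized Laplacian (see~\cite{Chung97,Cheeger70}) then gives $\lambda_2(G)\ge\phi(G)^2/2=\Omega(1/d^2)$; note $G$ is connected with high probability by Proposition~\ref{prop:PG-connected}, consistent with $\lambda_2(G)>0$. Plugging this into Proposition~\ref{prop:mixing-spectral} yields $T_\gamma(G)=O\big(\log(n/\gamma)/\lambda_2(G)\big)=O(d^2\log(n/\gamma))$ for every $\gamma<1$, which is the claim; the factor $d^2$ is precisely the loss flagged in the footnote and comes from the quadratic step in Cheeger applied to the large sets, which is harmless for us. The only genuinely fiddly point in the whole argument is the constant-chasing in Part 1: one must check that $(3/4)^{\Omega(kd)}$ beats $n^{k+O(dk)}$ in the first regime and $n^{k}2^{n}$ in the second, with enough slack both to sum over $k$ and to recover the precise exponent in the stated failure probability (and likewise to absorb the $n^{-\Omega(1)}$ failure of Propositions~\ref{prop:PG-regular} and~\ref{prop:PG-connected} into the same bound by enlarging $c$). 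This is routine but is where attention is needed; Part 2 is then a short deterministic deduction conditioned on the event of Part 1.
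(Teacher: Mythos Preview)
Your Part~2 is correct and is precisely the paper's approach, made explicit: vertex expansion $\Omega(1)$ gives conductance $\Omega(1/d)$ (the loss of $d$ coming from $\mathrm{vol}(S)\le 2d|S|$), Cheeger then gives $\lambda_2(G)=\Omega(1/d^2)$, and Proposition~\ref{prop:mixing-spectral} finishes.

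There is, however, a real gap in your Part~1, and it is exactly at the ``fiddly point'' you flagged. In the regime $k\le 8n/d$ you bound the bad-event probability by $(|T|/n)^{k\lfloor d/2\rfloor}\le(3/4)^{kd/3}$ using only $|T|\le 3n/4$. This decay is $e^{-\Theta(kd)}$, but your union bound over the target set $T$ costs $\binom{n}{\lceil dk/12\rceil}$, which is $e^{\Theta(kd\log n)}$. No choice of $c$ in $d\ge c\log n$ can make $e^{-\Theta(kd)}$ beat $e^{\Theta(kd\log n)}$, since both exponents scale linearly with $d$; your claim that ``the decay $(3/4)^{\Omega(kd)}$ dominate[s] the enumeration factors'' is therefore false as written. (The second regime is fine: there the enumeration over the avoided set is only $2^n$, and $kd\ge 8n$ handles it.)

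Two clean fixes. First, keep your scheme but do not throw away the sharpness of $|T|/n$: with $|T|=\lceil dk/12\rceil$ and $\binom{n}{|T|}\le (en/|T|)^{|T|}$, the combination $\binom{n}{|T|}\cdot(|T|/n)^{kd/2}$ telescopes to $e^{|T|}(|T|/n)^{kd/2-|T|}\approx e^{|T|}(|T|/n)^{5|T|}$, which for $|T|/n\le 2/3$ is at most $c_0^{\,dk}$ for an absolute $c_0<1$; this now beats $\binom{n}{k}\le n^k$ once $d\ge c\log n$. Second---and this is what the paper actually does---avoid the union bound over $T$ altogether: throw the $k\lfloor d/2\rfloor$ half-edges from $S$ one at a time and observe that, until $2n/3$ distinct targets have been hit, each new throw lands on a fresh vertex with probability at least $1/3$. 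Chernoff then gives $|N(S)|\ge\min\{2n/3,\,dk/12\}$ except with probability $e^{-\Omega(kd)}$, and the only union bound needed is over $S$, costing $\binom{n}{k}\le n^k$, which $e^{-\Omega(kd)}$ easily absorbs when $d\ge c\log n$.
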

\begin{proof}
	The proof of the first part is similar to the previous proposition; when picking a neighbor for any single vertex in $S$, there is at least $1/3$ chance that this vertex is not chosen by any of the previous vertices in $S$. A simple application of Chernoff bound
	plus a union bound on all cuts then concludes the proof exactly as in previous proposition and we omit the details.
	
	Let $\lambda_2(G)$ denote the spectral gap of $G$. The first part of the proposition already implies that vertex expansion of $G$ for any set $S$ of size up to $n/2$ is $\Omega(1)$. By the well-known connection between
	 vertex- and spectral-expansion (see, e.g.~\cite{hoory2006expander,Chung97}), this implies that $\lambda_2(G) = \Omega(1/d^2)$. The bound on mixing time now follows from Proposition~\ref{prop:mixing-spectral}.
\end{proof}

\end{document}